\documentclass[11pt]{article}

\usepackage[margin=1in]{geometry}

\usepackage[linesnumbered,ruled,vlined,commentsnumbered]{algorithm2e}

\usepackage[utf8]{inputenc} %
\usepackage[T1]{fontenc}    %
\usepackage{url}            %
\usepackage{booktabs}       %
\usepackage{amsfonts}       %
\usepackage{nicefrac}       %
\usepackage{microtype}      %

\usepackage[usenames,dvipsnames]{xcolor}
\definecolor{Gred}{RGB}{219, 50, 54}
\definecolor{Ggreen}{RGB}{60, 186, 84}
\definecolor{Gblue}{RGB}{72, 133, 237}
\definecolor{Gyellow}{RGB}{247, 178, 16}
\definecolor{ToCgreen}{RGB}{0, 128, 0}
\definecolor{myGold}{RGB}{231,141,20}
\definecolor{myBlue}{rgb}{0.19,0.41,.65}
\definecolor{myPurple}{RGB}{175,0,124}

\usepackage{graphicx}
\usepackage{hyperref}
\hypersetup{
			colorlinks=true,
	citecolor=ToCgreen,
	linkcolor=Sepia,
	filecolor=Gred,
	urlcolor=Gred
	}

\usepackage{apptools}
\usepackage{chngcntr}
\usepackage{commath}
\usepackage{amssymb,amsmath,amsthm}

\title{Learning Mixtures of Gaussians Using the DDPM Objective}
\author{
    Kulin Shah\thanks{Email: \texttt{kulinshah@utexas.edu}. Supported by the NSF AI Institute for Foundations of Machine Learning (IFML). } \\
    UT Austin
        \and
    Sitan Chen\thanks{Email: \texttt{sitan@seas.harvard.edu}. Supported by NSF Award 2103300.} \\
    UC Berkeley
        \and
    Adam Klivans\thanks{Email: \texttt{klivans@cs.utexas.edu}.  Supported by the NSF AI Institute for Foundations of Machine Learning (IFML).} \\
    UT Austin
}

\newcommand{\citep}{\cite}
\renewcommand{\epsilon}{\varepsilon}

\usepackage{mathtools}
\mathtoolsset{showonlyrefs}

\begin{document}

\newcommand{\dtp}[2]{\langle #1, #2 \rangle}
\newcommand{\mcT}{\mathcal{T}}
\newcommand{\bt}{\Bar{t}}
\newcommand{\brdg}[1]{z_{#1}^x}
\newcommand{\rb}[1]{\left( #1 \right)}
\newcommand{\h}[1]{\hat{#1}}

\newcommand{\muhp}{\hat{\mu}^{\perp}}  %
\newcommand{\wnorm}[1]{\| #1 \|} %
\newcommand{\sgnorm}[1]{\wnorm{#1}_{\psi_2}}

\makeatletter
\newcommand*{\rom}[1]{\expandafter\@slowromancap\romannumeral #1@}
\makeatother

\newcommand{\mc}[1]{\mathcal{#1}}
\newcommand{\mbb}[1]{\mathbb{#1}}
\newcommand{\mbf}[1]{\mathbf{#1}}
\newcommand{\PM}[1]{\mbb{#1}}
\newcommand{\mean}[3]{\mu_{#1, #2}^{#3}}

\newtheorem{theorem}{Theorem}
\newtheorem{lemma}[theorem]{Lemma}
\newtheorem{definition}[theorem]{Definition}
\newtheorem{fact}[theorem]{Fact}
\newtheorem{assumption}[theorem]{Assumption}
\newtheorem{remark}[theorem]{Remark}

\newcommand{\kulin}[1]{\textcolor{red}{#1}}
\newcommand{\TODO}[1]{\textcolor{blue}{TODO: #1}}

\newcommand{\calN}{\mathcal{N}}
\newcommand{\R}{\mathbb{R}}
\newcommand{\Id}{\mathrm{Id}}

\maketitle

\begin{abstract}

Recent works have shown that diffusion models can learn essentially any distribution provided one can perform score estimation.
Yet it remains poorly understood under what settings score estimation is possible, let alone when practical gradient-based algorithms for this task can provably succeed. 

In this work, we give the first provably efficient results along these lines for one of the most fundamental distribution families, Gaussian mixture models.
We prove that gradient descent on the denoising diffusion probabilistic model (DDPM) objective can efficiently recover the ground truth parameters of the mixture model in the following two settings:
\begin{enumerate}
    \item We show gradient descent with random initialization learns mixtures of two spherical Gaussians in $d$ dimensions with $1/\text{poly}(d)$-separated centers.
    \item We show gradient descent with a warm start learns mixtures of $K$ spherical Gaussians with $\Omega(\sqrt{\log(\min(K,d))})$-separated centers.
\end{enumerate}
A key ingredient in our proofs is a new connection between score-based methods and two other approaches to distribution learning, the EM algorithm and spectral methods.

\end{abstract}

\tableofcontents

\section{Introduction}

In recent years diffusion models~\cite{song2020score,sohl_thermodynamics,ncsn} have emerged as a powerful framework for generative modeling and now form the backbone of notable image generation systems like DALL$\cdot$E 2~\cite{dalle2}, Imagen~\cite{imagen}, and Stable Diffusion~\cite{latent_diffusion}. At the heart of this framework is a reduction from \emph{distribution learning} to \emph{denoising} or \emph{score estimation}. That is, in order to generate new samples from a data distribution $q$ given a collection of independent samples, it suffices to learn the score function, i.e., the gradient of the log-density of the data distribution when convolved with varying levels of noise (see Section~\ref{subsec:preliminaries}).  A popular and well-studied objective for score matching is the \emph{denoising diffusion probabilistic model (DDPM) objective} due to \cite{ho2020denoising}. Optimizing this objective amounts to solving the following type of problem: given a noisy observation $\widetilde{x}$ of a sample $x$ from $q$, estimate the mean of the posterior distribution over $x$. 

While a number of theoretical works \cite{debetal2021scorebased,BloMroRak22genmodel,chen2022improved, DeB22diffusion, leelutan22sgmpoly, liu2022let, Pid22sgm, WibYan22sgm, chen2023sampling, chen2023restoration, lee2023convergence,chen2023probability,li2023towards,benton2023error} have established rigorous convergence guarantees for diffusion models under mild assumptions on the data distribution, these works assume the existence of an oracle for score estimation and leave open whether one can actually provably implement such an oracle for interesting families of data distributions. In practice, the algorithm of choice for score estimation is simply to train a student network via gradient descent (GD) to fit a set of examples $(x,\widetilde{x})$. We thus ask:

\begin{center}
    \emph{Are there natural data distributions under which GD provably achieves accurate score estimation?}
\end{center}

In this work, we consider the setting where $q$ is given by a \emph{mixture of Gaussians}. Concretely, we assume that there exist centers $\mu_1^*,\ldots,\mu_K^*\in\R^d$ such that
\begin{equation}
    q = \frac{1}{K}\sum^K_{i=1} \calN(\mu_i^*, \Id)\,.
\end{equation}
We answer the above question in the affirmative for this class of distributions:

\begin{theorem}[Informal, see Theorems~\ref{thm:mo2g-const-sep} and~\ref{thm:2-mog-small-sep}]\label{thm:informal_2} 
    Gradient descent on the DDPM objective with random initialization efficiently learns the parameters of an unknown mixture of two spherical Gaussians with $1/\text{poly}(d)$-separated centers.
\end{theorem}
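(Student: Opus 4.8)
The idea is to reduce the problem to estimating one unknown vector, write the population DDPM objective for the two‑component mixture in closed form, and then track gradient descent by splitting each iterate into a ``signal'' component along the true direction and an orthogonal remainder --- the early dynamics being a spectral (power‑method) phenomenon and the later dynamics an EM‑type contraction. Since the overall mean $\tfrac12(\mu_1^* + \mu_2^*)$ is estimated to accuracy $\epsilon$ from $\mathrm{poly}(d/\epsilon)$ samples and subtracted, I may assume $q = \tfrac12\calN(\mu^*,\Id) + \tfrac12\calN(-\mu^*,\Id)$ with $\rho := \|\mu^*\| \ge 1/\mathrm{poly}(d)$, so that it suffices to recover $\pm\mu^*$. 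At a diffusion time whose noise level I encode through $y = x_0 + \gamma g$, $g\sim\calN(0,\Id)$, and writing $\beta := 1/(1+\gamma^2)$, the Bayes‑optimal denoiser is $\mathbb{E}[x_0\mid y] = \beta y + \gamma^2\beta\,\mu^*\tanh(\beta\langle\mu^*,y\rangle)$; I parametrize the student denoiser by a vector $\mu$ through the same formula (the form a sufficiently expressive network would fit on this data), and after the bias--variance decomposition its population DDPM loss becomes $L(\mu) = c_\gamma\,\mathbb{E}_y\|\mu\tanh(\beta\langle\mu,y\rangle) - \mu^*\tanh(\beta\langle\mu^*,y\rangle)\|^2 + \mathrm{const}$ with $c_\gamma = \gamma^4\beta^2 > 0$ (if the objective aggregates a band of noise levels, $L$ is a positive mixture of such terms, all with the same signal direction $\mu^*$, leaving the structure below intact). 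Its critical points are $0$ and $\pm\mu^*$; the latter are the global minima and $0$ is a saddle whose only descent directions lie along $\pm\mu^*$, so the goal is to show GD from a random start escapes $0$ and converges to one of $\pm\mu^*$ in $\mathrm{poly}(d)$ steps.

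Write $\mu_t = \lambda_t\widehat\mu^* + \mu_t^\perp$ with $\widehat\mu^* = \mu^*/\rho$ and $\mu_t^\perp\perp\mu^*$. I would evaluate $\nabla L(\mu)$ in closed form by Gaussian integration and Stein's lemma (using that $\langle y,\widehat\mu^*\rangle$ and $\langle y,\mu_t^\perp\rangle$ are the only relevant one‑dimensional projections) and read off the induced one‑step maps for $(\lambda_t,\|\mu_t^\perp\|)$. The invariants to prove are: (i) the orthogonal part contracts, $\|\mu_{t+1}^\perp\| \le (1-\kappa)\|\mu_t^\perp\|$, as long as $|\lambda_t|$ is not larger than $\rho$ by more than a fixed constant; and (ii) the signal moves monotonically toward $\rho$ and, while $|\lambda_t|\ll\rho$, grows multiplicatively, $|\lambda_{t+1}| \ge (1+\kappa)|\lambda_t|$; here $\kappa = 1/\mathrm{poly}(d)$, of order $\eta\rho^2$ up to $\gamma$‑dependent factors, and $\eta$ is the step size. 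Fact (ii) is precisely the power‑method behavior for $\mathbb{E}[yy^\top] = (1+\gamma^2)\Id + \mu^*\mu^{*\top}$: when all arguments of $\tanh$ are small, $L$ reduces to leading order to the rank‑one PCA objective $\mathbb{E}\|\langle y,\mu\rangle\mu - \langle y,\mu^*\rangle\mu^*\|^2$, whose gradient flow is a normalized power iteration toward the top eigenvector $\mu^*$ --- this is the spectral ingredient.

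The argument then splits into two phases. In Phase I, starting at $\mu_0\sim\calN(0,d^{-1}\Id)$ one has, with high probability over $\mu_0$, $|\lambda_0| = \widetilde\Theta(1/\sqrt d)$ and $\|\mu_0^\perp\| = \Theta(1)$, so the ratio $\|\mu_0^\perp\|/|\lambda_0|$ is $\widetilde\Theta(\sqrt d)$; iterating (i)--(ii) drives this ratio down to a small constant after $T_1 = O(\mathrm{poly}(d))$ iterations (of order $\rho^{-2}\log d$ up to $\gamma$‑ and step‑size factors), leaving $\mu_{T_1}$ in the half‑space $\{\langle\cdot,\mu^*\rangle > 0\}$ within a bounded multiple of $\mu^*$. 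In Phase II, on this region I would show $-\nabla L(\mu)$ equals, up to a positive scalar and a lower‑order remainder, $M_\gamma(\mu) - \mu$, where $M_\gamma(\mu) := \mathbb{E}_y[y\tanh(\beta\langle y,\mu\rangle)]$ is the population EM operator for the $\gamma$‑convolved two‑component mixture (and $M_\gamma(\mu^*) = \mathbb{E}[y\,\mathbb{E}[\epsilon\mid y]] = \mathbb{E}[\epsilon y] = \mu^*$, with $\epsilon\in\{\pm1\}$ the component label); adapting the global‑contraction analysis of population EM for mixtures of two Gaussians (Xu--Hsu--Maleki, Daskalakis--Tzamos--Zampetakis, Wu--Zhou) to the presence of the convolution, $M_\gamma$ is a contraction toward $\mu^*$ on $\{\langle\mu,\mu^*\rangle > 0\}$, so GD contracts geometrically and reaches $\|\mu_t-\mu^*\| \le \epsilon\rho$ in $\mathrm{poly}(d,1/\epsilon)$ further steps. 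Concatenating the phases, replacing population gradients by empirical ones over $\mathrm{poly}(d,1/\epsilon)$ fresh samples via a uniform concentration bound along the trajectory, and undoing the centering reduction yields the theorem, with $\mu_1^*$ and $\mu_2^*$ read off from $\pm\mu^*$ and the estimated mean.

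I expect the main obstacle to be Phase I in the genuinely small‑separation regime $\rho = 1/\mathrm{poly}(d)$. There the effective spectral gap $\rho^2\beta$ is $1/\mathrm{poly}(d)$, the horizon $T_1$ is $\mathrm{poly}(d)$, and over this whole horizon one must certify that the higher‑order Taylor remainders of $\tanh$, \emph{together with} the sampling noise in the empirical gradient, stay a $1/\mathrm{poly}(d)$ factor below the already‑tiny signal drift --- this forces a careful choice of the noise level $\gamma$ (trading the useful signal magnitude $\gamma^2\beta\rho$ against the gradient variance) and requires tight invariants on both $\lambda_t$ and $\|\mu_t^\perp\|$ to be maintained simultaneously at every step. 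A secondary point is checking that aggregating the per‑noise‑level gradients preserves, rather than smears, the signal/perp structure; this holds because every level contributes a gradient with the same signal direction $\mu^*$, but it must be verified.
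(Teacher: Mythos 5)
Your plan for the constant-separation part tracks the paper closely: reduce to the symmetric mixture $\frac{1}{2}\calN(\mu^*,\Id)+\frac{1}{2}\calN(-\mu^*,\Id)$, use the $\tanh$-parametrized denoiser, show that at high noise the population gradient is a power-iteration step for $(1-\eta r)\Id + c\,\mu^*\mu^{*\top}$ so the angle to $\mu^*$ contracts from a random start, then switch to a regime where the negative gradient is the EM map plus a remainder and invoke EM contraction. That is essentially the paper's two-stage argument for $\|\mu^*\|=\Omega(1)$.

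The gap is in how you handle the $1/\mathrm{poly}(d)$-separation regime, which is the actual content of the statement. Your Phase II rests on the claim that on $\{\langle\mu,\mu^*\rangle>0\}$ the negative DDPM gradient equals, ``up to a positive scalar and a lower-order remainder,'' $M_\gamma(\mu)-\mu$, and that one can then import global population-EM contraction. When $\rho=\|\mu^*\|=1/\mathrm{poly}(d)$ this remainder is \emph{not} lower order: the deviation term (the analogue of the paper's $G(\mu,\mu^*)$, coming from the $\tanh'$ and $\tanh''$ terms generated by differentiating the weight through $\mu$) vanishes at $\mu=\mu^*$ but has Lipschitz constant of order $\rho^2$ near $\mu^*$ (a direct Taylor expansion gives $G(a,b)\approx ab^2-a^3=a(b-a)(b+a)$ in one dimension), which is exactly the same order as the EM contraction margin $1-\lambda_1=\Theta(\rho^2)$ in the low-SNR regime. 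So citing Xu--Hsu--Maleki/Daskalakis et al./Wu--Zhou does not close the argument; you would need a constant-sharp comparison of the remainder against the EM gain, plus control of the dimension-dependent higher-order Taylor and sampling errors at the scale $\rho^2\|\mu-\mu^*\|$, none of which is sketched. Notably, the paper's own bound on the corresponding remainder (its Lemma on $G$-contraction) requires $\|\mu\|$ to exceed a large absolute constant, which is precisely why the paper does \emph{not} run the EM phase at small separation. Instead it handles that regime by a different mechanism you never use: projected gradient descent at high noise, projecting onto a ball of radius $R$ with $R^2=\frac{1}{n}\sum_i\|x_i\|^2-d$ an empirical estimate of $\|\mu^*\|^2$, so that angular convergence from the power-iteration analysis plus the norm constraint yields $\|\mu-\mu^*\|\le\epsilon$. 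Your proposal contains no estimate of $\|\mu^*\|$ and no substitute for it, and you identify Phase I (rather than the EM-approximation step) as the main obstacle, so as written the small-separation half of the theorem is not established.
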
 

\begin{theorem}[Informal, see Theorem \ref{thm:mog-k-main}]\label{thm:informal_K}
    When there is a warm start of the centers, gradient descent on the DDPM objective efficiently learns the parameters an unknown mixture of $K$ spherical Gaussians with $\Omega(\sqrt{\log(\min(K,d))})$-separated centers.  
\end{theorem}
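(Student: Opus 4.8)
The plan is to exploit the connection, advertised in the abstract, between the DDPM objective and the EM algorithm. Recall that optimizing the DDPM objective at a fixed noise level $t$ amounts to producing, for a noisy observation $\widetilde x = \sqrt{1-\sigma_t^2}\, x + \sigma_t g$, the posterior mean $\mathbb{E}[x \mid \widetilde x]$. When $q$ is a uniform mixture of $\calN(\mu_i^*, \Id)$, this posterior mean has a closed form: it is a convex combination of affine functions of $\widetilde x$, with weights given by the posterior responsibilities $w_i(\widetilde x)$ that $\widetilde x$ was generated from component $i$. If we parametrize the student score network by a candidate set of centers $\mu_1,\ldots,\mu_K$ (plugging these into the same closed form), then the population DDPM loss becomes an explicit function $L(\mu_1,\ldots,\mu_K)$, and the first step is to compute $\nabla_{\mu_i} L$ and show that a GD step on $L$ is, up to controllable error, exactly one EM iteration on the Gaussian mixture (after an appropriate change of variables undoing the $\sqrt{1-\sigma_t^2}$ scaling, and choosing the noise level $\sigma_t$ so the separation is preserved). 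This is the ``new connection'' the abstract promises, and establishing it cleanly is the conceptual heart of the argument.

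Given that reduction, the second step is to invoke (a quantitative, finite-sample-robust version of) the known local convergence analysis of EM for well-separated spherical Gaussian mixtures — e.g.\ the line of work showing that from a warm start where each $\mu_i$ is within a small constant fraction of the separation from $\mu_i^*$, EM contracts geometrically to the true centers, provided the separation is $\Omega(\sqrt{\log \min(K,d)})$. Concretely: (i) show a one-step contraction $\|\mu_i^{(k+1)} - \mu_i^*\| \le \tfrac12 \|\mu_i^{(k)} - \mu_i^*\| + (\text{sampling error})$ for the population update, using that with this much separation the responsibilities $w_i(\widetilde x)$ are exponentially close to $0/1$ indicators outside a negligible-probability region; (ii) control the deviation between the empirical gradient (on $n$ samples of $(x,\widetilde x)$) and the population gradient via a uniform concentration argument over the relevant ball of candidate centers — sub-Gaussian/sub-exponential tail bounds plus a net, yielding error $\widetilde O(\sqrt{dK/n})$; (iii) pick the step size so that the GD update matches the EM update and iterate, so that after $O(\log(1/\epsilon))$ steps the error is $\epsilon$ plus the sampling floor. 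Choosing $n = \widetilde\Theta(dK/\epsilon^2)$ then gives the claimed efficient (poly in $d, K, 1/\epsilon$) guarantee.

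The main obstacle, I expect, is not the EM contraction per se — that is essentially off-the-shelf once separation is $\Omega(\sqrt{\log K})$ — but rather making the GD-equals-EM correspondence \emph{exact enough to be useful}: the DDPM gradient $\nabla_{\mu_i} L$ picks up extra terms (from the dependence of every responsibility $w_j(\widetilde x)$ on $\mu_i$, and from the noise-scaling Jacobian) that are absent in vanilla EM, and one must argue these are higher-order — quadratically small in the current error, or exponentially suppressed by the separation — so they do not spoil the contraction factor. A secondary difficulty is that one does not run GD at a single noise level but must argue the guarantee is robust to the choice of $t$ (or averages appropriately over $t$ as in the true DDPM objective), and that the warm-start radius required is compatible with what a spectral or moment-based initializer could plausibly supply (so the ``warm start'' hypothesis is not vacuous). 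I would handle the first by a careful Taylor expansion of the responsibilities around the hard-assignment limit, bounding the correction terms by $e^{-\Omega(\text{sep}^2)}$ times polynomial factors, and the second by fixing a single convenient $t = \Theta(1)$ and noting the loss at that scale already identifies the parameters.
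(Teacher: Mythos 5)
Your proposal follows essentially the same route as the paper: compute the DDPM gradient for the mixture-parametrized student network, show it equals the (gradient) EM update up to extra terms involving products of responsibilities $w_iw_j$ and $w_i(1-w_i)$ that are suppressed to $1/\mathrm{poly}(d)$ by the $\Omega(\sqrt{\log\min(K,d)})$ separation and warm start, then invoke the existing local convergence analysis of (gradient) EM \cite{segol2021improved,kwon2020algorithm} together with sub-exponential concentration of the empirical gradients at a fixed low noise scale $t=O(1)$. This matches the paper's Lemmas~\ref{lemma:GD-update-k-mog}, \ref{lemma:population-GD-gradient-EM}, and \ref{lemma:sample-complexity-k-mog} and the proof of Theorem~\ref{thm:k-mog-appendix}, so no substantive divergence or gap to flag.
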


\noindent The DDPM objective is described in Algorithm \ref{alg:denoise}.  The term ``efficiently'' above means that both the running time and sample complexity of our algorithm is polynomial in the dimension $d$, the inverse accuracy $1/\epsilon$, and the number of components $K$. In the informal discussion, we often work with population gradients for simplicity, but in our proofs we show that empirical estimates of the gradient suffice (full details can be found in the Appendix). 

\begin{algorithm}
\DontPrintSemicolon
\caption{\textsc{GMMDenoiser}($t, \{ \mu_i^{(0)} \}_{i=1}^K, H$)}
\label{alg:denoise}
	\KwIn{Noise scale $t$, initialization $\{ \mu_i^{(0)} \}_{i=1}^K$, number of gradient descent steps $H$}
    Initialize the parameters for the score estimate at $\theta_t^{(0)} = \{ \mu_{i,t}^{(0)} \}_{i=1}^K$ (see Eq.~\eqref{eq:score-function-definition} for how the estimate $s_{\theta}$ depends on the parameters $\theta$, and Eq.~\eqref{eq:mog-k-pdf-time-t} for the definition of $\mu^{(0)}_{i,t}$)\;
    Run gradient descent on the DDPM objective $L_t(s_{\theta_t})$ for $H$ steps where
	\begin{equation}
	    L_t(s_{\theta_t}) = \mbb{E} \Big[ \Big\| s_{\theta_t}(X_t) + \frac{Z_t}{ \sqrt{1 - \exp(-2t)} } \Big\|^2 \Big]\,,
	\end{equation}\;\vspace{-\baselineskip}
    \Return  $\theta_t^{(H)} = \{ \mu_{i, t}^{(H)} \}_{i=1}^K$ where $\theta_t^{(H)}$ denotes the parameters after $H$ steps of GD. 
\end{algorithm}

\noindent We refer to Section~\ref{subsec:preliminaries} for a formal description of the quantities used in Algorithm \ref{alg:denoise}. Note that there are by now a host of different algorithms for provably learning mixtures of Gaussians (see Section~\ref{sec:related}).
For instance, it is already known that expectation-maximization (EM) achieves the quantitative guarantees of Theorems~\ref{thm:informal_2} and \ref{thm:informal_K} \cite{daskalakis2017ten,xu2016global,kwon2020algorithm,segol2021improved}, and in fact even stronger guarantees are known via the method of moments. Unlike works based on the method of moments however, our algorithm is practical. And unlike works based on EM, it is based on an approach which is empirically successful for a wide range of realistic data distributions. Furthermore, as we discuss in Section~\ref{sec:overview}, the analysis of Algorithm~\ref{alg:denoise} leverages an intriguing and, to our knowledge, novel connection from score estimation to EM, as well as to another notable approach for learning mixture models, namely spectral methods. Roughly speaking, at large noise levels, the gradient updates in Algorithm~\ref{alg:denoise} are essentially performing a type of power iteration, while at small noise levels, the gradient updates are performing the ``M'' step in the EM algorithm. %

\subsection{Related work}
\label{sec:related}

\paragraph{Theory for diffusion models.} A number of works have given convergence guarantees for DDPMs and variants~\cite{debetal2021scorebased,BloMroRak22genmodel,chen2022improved, DeB22diffusion, leelutan22sgmpoly, liu2022let, Pid22sgm, WibYan22sgm, chen2023sampling, chen2023restoration, lee2023convergence,li2023towards,benton2023error,chen2023probability}. These results show that, given an oracle for accurate score estimation, diffusion models can learn essentially any distribution over $\R^d$ (e.g. \cite{chen2023sampling,lee2023convergence,chen2022improved} show this for arbitrary compactly supported distributions). Additionally, two recent works \cite{el2022sampling,montanari2023posterior} have used Eldan's stochastic localization~\cite{eldan2013thin,eldan2020taming}, which is a reparametrization in time and space of the reverse SDE for DDPMs, to give sampling algorithms for certain distributions arising in statistical physics. 
As we discuss next, these works are end-to-end in that they also give provable algorithms for score estimation via approximate message passing, though the statistical task they address is not distribution learning. 

\paragraph{Provable score estimation.} There is a rich literature giving Bayes-optimal algorithms for various natural denoising problems via methods inspired by statistical physics, like approximate message passing (AMP) (e.g.~\cite{montanari2021estimation,celentano2021local,bayati2011dynamics,kabashima2003cdma,donoho2009message,donoho2010message}) and natural gradient descent (NGD) on the TAP free energy~\cite{celentano2021local,el2022sampling,celentano2022sudakov}. The abovementioned works \cite{el2022sampling,montanari2023posterior} (see also \cite{celentano2022sudakov}) build on these techniques to give algorithms for the denoising problems that arise in their implementation of stochastic localization. These works on denoising via AMP or NGD are themselves part of a broader literature on variational inference, a suitable literature review would be beyond the scope of this work, see e.g.~\cite{blei2017variational,wainwright2008graphical,mezard2009information}. 

We are not aware of any provable algorithms for score estimation explicitly in the context of \emph{distribution learning}. That said, it may be possible to extract a distribution learning result from~\cite{el2022sampling}. While their algorithm was for sampling from the Sherrington-Kirkpatrick (SK) model given the Hamiltonian rather than training examples as input, if one is instead given training examples drawn from the SK measure, then at sufficiently high temperature one can approximately recover the Hamiltonian~\cite{alaoui2022bounds}. In this case, a suitable modification \cite{el2022sampling} should be able to yield an algorithm for approximately generating fresh samples from the SK model given training examples.

\paragraph{Learning mixtures of Gaussians.} The literature on provable algorithms for learning Gaussian mixture models is vast, dating back to the pioneering work of Pearson~\cite{pearson1894contributions}, and we cannot do justice to it here. We mention only works whose quantitative guarantees are closest in spirit to ours and refer to the introduction of \cite{liu2022clustering} for a comprehensive overview of recent works in this direction. For mixtures of identity-covariance Gaussians in high dimensions, the strongest existing guarantee is a polynomial-time algorithm~\cite{liu2022clustering} for learning the centers as long as their pairwise separation slightly exceeds $\Omega(\sqrt{\log K})$ based on a sophisticated instantiation of method of moments inspired by the quasipolynomial-time algorithms of~\cite{diakonikolas2018list,hopkins2018mixture,kothari2018robust}. By the lower bound in \cite{regev2017learning}, this is essentially optimal. In contrast, our Theorem~\ref{thm:informal_K} only applies given one initializes in a neighborhood of the true parameters of the mixture. We also note the exponential-time spectral algorithm of \cite{suresh2014near} and quasipolynomial-time tensor-based algorithm of \cite{diakonikolas2020small}, which achieve \emph{density estimation} even in the regime where the centers are arbitrarily closely spaced and learning the centers is information-theoretically impossible.

A separate line of work has investigated the ``textbook'' algorithm for learning Gaussian mixtures, namely the EM algorithm \cite{balakrishnan2017statistical,dasgupta2007probabilistic,daskalakis2017ten,xu2016global,yan2017convergence,zhao2020statistical,kwon2020algorithm,segol2021improved}. Notably, for balanced mixtures of two Gaussians with the same covariance, \cite{daskalakis2017ten} showed that finite-sample EM with random initialization converges exponentially quickly to the true centers. For mixtures of $K$ Gaussians with identity covariance, \cite{kwon2020algorithm,segol2021improved} showed that from an initialization sufficiently close to the true centers, finite-sample EM converges exponentially quickly to the true centers as long as their pairwise separation is $\Omega(\sqrt{\log K})$. In particular, \cite{segol2021improved} establish this local convergence as long as every center estimate is initialized at distance at most $\Delta/2$ away from the corresponding true center, where $\Delta$ is the minimum separation between any pair of true centers; this radius of convergence is provably best possible for EM.

Lastly, we note that there are many works giving parameter recovery algorithms mixtures of Gaussians with general mixing weights and covariances, all of which are based on method of moments~\cite{kalai2010efficiently,hardt2015tight,kane2021robust,belkin2015polynomial,moitra2010settling,liu2023robustly,bakshi2022robustly,diakonikolas2020robustly}. Unfortunately, for general mixtures of $K$ Gaussians, these algorithms run in time at least $d^{O(K)}$, and there is strong evidence~\cite{diakonikolas2017statistical,bruna2021continuous} that this is unavoidable for computationally efficient algorithms.

\subsection{Technical overview}
\label{sec:overview}

We begin by describing in greater detail the algorithm we analyze in this work. For the sake of intuition, in this overview we will focus on the case of mixtures of two Gaussians $(K=2)$ where the centers are well-separated and symmetric about the origin, that is, the data distribution is given by
\begin{equation}
    q = \frac{1}{2}\mathcal{N}(\mu^*,\Id) + \frac{1}{2}\mathcal{N}(-\mu^*,\Id)\,. \label{eq:mixture_simple}
\end{equation}
At the end of the overview, we briefly discuss the key challenges for handling smaller separation and general $K$. 

\paragraph{Loss function, architecture of the score function and student network.} The algorithmic task at the heart of score estimation is that of denoising. Formally, for some noise level $t > 0$, we are given a noisy sample 
\begin{equation}
    X_t = \exp(-t) X_0 + \sqrt{1 - \exp(-2t)} Z_t\,, \label{eq:denoise_overview}
\end{equation}
where $X_0$ is a clean sample drawn from the data distribution $q$, and $Z_t\sim\calN(0,\Id)$. Conditioning on $X_t$ induces some posterior distribution over the noise $Z_t$, and our goal is to form an estimate $s$ for the mean of this posterior which achieves small error on average over the randomness of $X_0$ and $Z_t$. That is, we would like to minimize the \emph{DDPM objective}, which up to rescaling is
given by\footnote{The real DDPM objective is slightly different, see~\eqref{eq:diffusion-loss}. The latter is what we actually consider in this paper, but this distinction is unimportant for the intuition in this overview.}
\begin{equation}
    L_t(s) = \mathbb{E}_{X_0, Z_t} \|s(X_t) - Z_t\|^2\,.
\end{equation}
As discussed in the introduction, the algorithm of choice for minimizing this objective in practice is gradient descent on some student network. To motivate our choice of architecture, note that when the data distribution is given by~\eqref{eq:mixture_simple}, the true minimizer of $L_t$ is, up to scaling,
\begin{equation}
    \tanh(\langle \mu^*_t, x\rangle)\mu^*_t - x\,, \;\;\; \text{where} \ \mu^*_t \triangleq \mu^* \exp(-t)\,. \label{eq:simple_score}
\end{equation}
See Appendix \ref{sec:proof-in-preliminaries} for the derivation. Notably, Eq.~\eqref{eq:simple_score} is exactly a two-layer neural network with $\tanh$ activation. As a result, we use the same architecture for our student network when running gradient descent. 
That is, given weights $\mu\in\R^d$, our student network is given by $s_{\mu}(x) \triangleq \tanh(\mu^\top x)\mu - x$. The exact gradient updates on $\mu$ are given in Lemma~\ref{lemma:update}. 

As we discuss next, depending on whether the noise level $t$ is large or small, this update closely approximates the update in one of two well-studied algorithms for learning mixtures of Gaussians: power method and EM respectively.

\paragraph{Learning mixtures of two Gaussians.} We first provide a brief overview of the analysis and then go into the details of the analysis. We start with mixtures of two Gaussians of the form~\eqref{eq:mixture_simple} where $\|\mu^*\|$ is $\Omega(1)$. In this case, we analyze the following two-stage algorithm. We first use gradient descent on the DDPM objective with large $t$ starting from random initialization. We show that gradient descent in this ``high noise'' regime resembles a type of power iteration and gives $\mu$ that has a nontrivial correlation with $\mu^*_t$. Starting from this $\mu$, we then run gradient descent with small $t$. We show that the gradient descent in this ``small noise'' regime corresponds to the EM algorithm and converges exponentially quickly to the ground truth.

\paragraph{Large noise level: connection to power iteration.} When $t$ is large, we show that gradient descent on the DDPM objective is closely approximated by power iteration. More precisely, in this regime, the negative gradient of $L_t(s_{\mu})$ is well-approximated by
\begin{equation}
\label{eq:large-noise-grad-equivalence-intro}
    -\nabla_{\mu} L_t(s_{\mu}) \approx (2\mu_t^* \mu_t^{*\top} - r \Id)\, \mu\,,
\end{equation}
where $r$ is a scalar that depends on $\mu$ (See Lemma \ref{lemma:grad-equivalence}). So the result of a single gradient update with step size $\eta$ starting from $\mu$ is given by
\begin{equation}
    \mu' \triangleq \mu - \eta \nabla_{\mu} L_t(s_{\mu}) \approx ((1 - \eta r )\,\Id + 2 \eta \mu_t^* \mu_t^{*\top} ) \mu\,. \label{eq:overview_power}
\end{equation}
This shows us that each gradient step can be approximated by one step of power iteration (without normalization) on the matrix $(1 - \eta r )\,\Id + 2 \eta \mu_t^* \mu_t^{*\top}$. It is know that running enough iterations of the latter from a random initialization will converge in angular distance to the top eigenvector, which in this case is given by $\mu^*_t$. This suggests that if we can keep the approximation error in \eqref{eq:overview_power} under control, then gradient descent on $\mu$ will also allow us to converge to a neighborhood of the ground truth. We implement this strategy in Lemma \ref{lemma:large-inner-product-main-paper}. Next, we argue that once we are in a neighborhood of the ground truth, we can run GD on the DDPM objective at \emph{low noise level} to refine our estimate.

\paragraph{Low noise level: connection to the EM algorithm.} When $t$ is small, we show that gradient descent on the DDPM objective is closely approximated by EM. Here, our analysis uses the fact that $\mu^*$ is sufficiently large and requires that we initialize $\mu$ to have sufficiently large correlation with the true direction $\mu^*_t$. We can achieve the latter using the large-$t$ analysis in the previous section.

Provided we have this, when $t$ is small it turns out that the negative gradient is well-approximated by
\begin{align}
    -\nabla_{\mu} L_t(s_\mu) \approx \mbb{E}_{X \sim \mc{N}(\mu_t^*, \Id) }[ \tanh ( \langle \mu, X\rangle ) X ] - \mu\,.
\end{align}
Note that the expectation is precisely the ``M''-step in the EM algorithm for learning mixtures of two Gaussians (see e.g. Eq. (2.2) of \cite{daskalakis2017ten}).
We conclude that a single gradient update with step size $\eta$ starting from $\mu$ is given by mixing the old weights $\mu$ with the result of the ``M''-step in EM:
\begin{align*}
    \mu' \triangleq \mu - \eta \nabla_{\mu} L_t(s_\mu) \approx (1 - \eta) \mu + \eta \underbrace{\mbb{E}_{X \sim \mc{N}(\mu_t^*, \Id) }[ \tanh ( \langle \mu, X \rangle ) X ]}_{\text{``M'' step in the EM algorithm}}\,.
\end{align*}
\cite{xu2016global} and \cite{daskalakis2017ten} showed that EM converges exponentially quickly to the ground truth $\mu^*_t$ from a warm start, and we leverage ingredients from their analysis to prove the same guarantee for gradient descent on the DDPM objective at small noise level $t$ (see Lemma \ref{lemma:l2-contraction-main-paper}).

\paragraph{Extending to small separation.} Next, suppose we instead only assume that $\|\mu^*\|$ is $\Omega({1}/{\mathrm{poly}(d)})$, i.e. the two components in the mixture may have small separation. The above analysis breaks down for the following reason: while it is always possible to show that gradient descent at large noise level converges in \emph{angular distance} to the ground truth, if $\|\mu^*\|$ is small, then we cannot translate this to convergence in Euclidean distance. 

We circumvent this as follows. Extending the connection between gradient descent at large $t$ and power iteration, we show that a similar analysis where we instead run \emph{projected} gradient descent over the ball of radius $\|\mu^*\|$ yields a solution arbitrarily close to the ground truth, even without the EM step.\footnote{Note that although $\mu^*$ is unknown, we can estimate its norm from samples.} The projection step can be thought of as mimicking the normalization step in power iteration. %

It might appear to the reader that this projected gradient-based approach is strictly superior to the two-stage algorithm described at the outset. However, in addition to obviating the need for a projection step when separation is large, our analysis for the two-stage algorithm has the advantage of giving much more favorable statistical rates. Indeed, we can show that the sample complexity of the two-stage algorithm has optimal dependence on the target error ($1/\epsilon^2$), whereas we can only show a suboptimal dependence ($1/\epsilon^8$) for the single-stage algorithm.

\paragraph{Extending to general $K$.} The connection between gradient descent on the DDPM objective at small $t$ and the EM algorithm is sufficiently robust that for general $K$, our analysis for $K = 2$ can generalize once we replace the ingredients from~\cite{xu2016global} and \cite{daskalakis2017ten} with the analogous ingredients in existing analyses for EM with $K$ Gaussians. For the latter, it is known that if the centers of the Gaussians have separation $\Omega(\sqrt{\log \min(K,d)})$, then EM will converge from a warm start \cite{kwon2020algorithm, segol2021improved}. By carefully tracking the error in approximating the negative gradient with the ``M''-step in EM, we are able to show that gradient descent on the DDPM objective at small $t$ achieves the same guarantee. %

\subsection{Preliminaries}
\label{subsec:preliminaries}

\paragraph{Diffusion models.}
Throughout the paper, we use either $q$ or $q_0$ to denote the data distribution and $X$ or $X_0$ to denote the corresponding random variable on $\mbb{R}^d$. The two main components in diffusion models are the \emph{forward process} and the \emph{reverse process}. The forward process transforms samples from the data distribution into noise, for instance via the \emph{Ornstein-Uhlenbeck (OU) process}: 
\begin{align*}
    \mathrm{d} X_t = - X_t \, \mathrm{d} t + \sqrt{2} \,\mathrm{d} W_t \;\;\;\text{with} \;\;\; X_0 \sim q_0\,,
\end{align*}
where $(W_t)_{t\ge 0}$ is a standard Brownian motion in $\R^d$.
We use $q_t$ to denote the law of the OU process at time $t$. Note that for $X_t \sim q_t$,
\begin{equation}
\label{eq:Xt-density}
    X_t = \exp(-t) X_0 + \sqrt{1 - \exp(-2t)} Z_t \;\;\; \text{with} \;\; X_0 \sim q_0, \;\; Z_t \sim \mc{N}(0, \Id)\,.
\end{equation}

The reverse process then transforms noise into samples, thus performing generative modeling. Ideally, this could be achieved by running the following stochastic differential equation for some choice of terminal time $T$:
\begin{equation}
    \mathrm{d} X^\leftarrow_t = \{X^\leftarrow_t + 2\nabla_x \ln q_{T-t}(X^\leftarrow_t)\}\, \mathrm{d}t + \sqrt{2}\,\mathrm{d}W_t \;\;\;\text{with} \;\;\; X^\leftarrow_0 \sim q_T\,,
\end{equation}
where now $W_t$ is the reversed Brownian motion. In this reverse process, the iterate $X^\leftarrow_t$ is distributed acccording to $q_{T - t}$ for every $t\in[0,T]$, so that the final iterate $X^\leftarrow_T$ is distributed according to the data distribution $q_0$. The function $\nabla_x \ln q_t$ is called the \emph{score function}, and because it depends on $q$ which is unknown, in practice one estimates it by minimizing the \emph{score matching loss}
\begin{align}
\label{eq:score-matching-obj}
    \min_{s_t} \;\; \mbb{E}_{X_t \sim q_t}[ \| s_t(X_t) - \nabla_x \ln q_t(X_t) \|^2 ]\,.
\end{align}
A standard calculation (see e.g. Appendix A of~\cite{chen2023sampling}) shows that this is equivalent to minimizing the \emph{DDPM objective} in which one wants to predict the noise $Z_t$ from the noisy observation $X_t$, i.e.
\begin{align}
\label{eq:diffusion-loss}
    \min_{s_t} \;\; L_t(s_t) = \mbb{E}_{X_0, Z_t} \Big[ \Big\| s_t(X_t) + \frac{Z_t}{ \sqrt{1 - \exp(-2t)} } \Big\|^2 \Big]\,.
\end{align}
While we have provided background on diffusion models for context, in this work we focus specifically on the optimization problem~\eqref{eq:diffusion-loss}.

\paragraph{Mixtures of Gaussians.}
We consider the case of learning mixtures of $K$ equally weighted Gaussians:
\begin{align}
\label{eq:mog-k-pdf}
    q = q_0 = \frac{1}{K} \sum_{i=1}^K \mc{N}(\mu_i^*, \Id),
\end{align}
where $\mu_i^*$ denotes the mean of the $i^{\text{th}}$ Gaussian component. We define $\theta^* = \{ \mu_1^*, \mu^*_2 \ldots , \mu_K^* \}$. 
For the mixtures of two Gaussians, we can simplify the data distribution as
\begin{align}
\label{eq:mog-2-pdf}
    q = q_0 = \frac{1}{2}\mc{N}(\mu^*, \Id) + \frac{1}{2}\mc{N}(-\mu^*, \Id).
\end{align}
Note that distribution in Eq.~\eqref{eq:mog-2-pdf} is equivalent to the distribution Eq.~\eqref{eq:mog-k-pdf} with $K=2$ because shifting the latter by its mean will give the former distribution, and furthermore the necessary shift can be estimated from samples. The following is immediate:

\begin{lemma}
\label{lemma:Xt-mog-pdf}
If $q_0$ is a mixture of $K$ Gaussians as in Eq.~\eqref{eq:mog-k-pdf}, then for any $t > 0$, $q_t$ is the mixture of $K$ Gaussians given by
\begin{equation}
\label{eq:mog-k-pdf-time-t}
    q_t = \frac{1}{K} \sum_{i=1}^K \mc{N}(\mu_{i, t}^*, \Id) \;\;\; \text{where} \;\; \mu_{i, t}^* \triangleq \mu_i^* \exp(-t)\,.
\end{equation}
\end{lemma}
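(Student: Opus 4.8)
The plan is to unwind the definition of $q_t$ via the explicit representation of the OU marginal in Eq.~\eqref{eq:Xt-density} and to exploit the fact that an affine push-forward of a Gaussian mixture, followed by convolution with an independent Gaussian, is again a Gaussian mixture. Concretely, I would write $X_t = \exp(-t) X_0 + \sqrt{1 - \exp(-2t)}\, Z_t$ with $X_0 \sim q_0$ and $Z_t \sim \calN(0,\Id)$ independent of $X_0$, and introduce a latent component index $I$ uniform on $\{1,\dots,K\}$ so that $X_0 \mid (I = i) \sim \calN(\mu_i^*, \Id)$. This reduces the claim to identifying the conditional law of $X_t$ given $I=i$ and then averaging.

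Next I would compute that conditional law. Scaling gives $\exp(-t) X_0 \mid (I=i) \sim \calN(\exp(-t)\mu_i^*, \exp(-2t)\Id)$, and the independent term satisfies $\sqrt{1 - \exp(-2t)}\, Z_t \sim \calN(0, (1 - \exp(-2t))\Id)$, so by independence and additivity of Gaussian laws, $X_t \mid (I=i) \sim \calN\bigl(\exp(-t)\mu_i^*,\; \exp(-2t)\Id + (1 - \exp(-2t))\Id\bigr) = \calN(\mu_{i,t}^*, \Id)$ with $\mu_{i,t}^* = \mu_i^* \exp(-t)$. The only computation worth doing carefully is that the two variance contributions telescope exactly to $\Id$; this is precisely the variance-preserving normalization built into Eq.~\eqref{eq:Xt-density}.

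Finally, marginalizing over $I$ with the uniform weights $1/K$ yields $q_t = \frac{1}{K}\sum_{i=1}^K \calN(\mu_{i,t}^*, \Id)$, which is the asserted form. There is no real obstacle here: the only point requiring a word of justification is the interchange of the mixture over $i$ with the action of the OU channel, and this is immediate because the representation in Eq.~\eqref{eq:Xt-density} is valid for an \emph{arbitrary} law of $X_0$, in particular for $X_0$ drawn from a single component, so the conclusion follows componentwise and then by linearity of the pushforward on densities.
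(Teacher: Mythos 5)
Your proposal is correct and follows essentially the same route as the paper's proof: scale the mixture under $\exp(-t)$, convolve with the independent Gaussian noise, and note that the variances $\exp(-2t)\Id + (1-\exp(-2t))\Id$ sum exactly to $\Id$. Your explicit conditioning on the latent component index is just a slightly more formal packaging of the same componentwise argument.
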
 
\noindent See Appendix~\ref{sec:proof-in-preliminaries} for a proof of this fact. We can see that the means of $q_t$ get rescaled according to the noise level $t$. We also define $\theta_t^* = \{ \mu_{1,t}^*, \mu_{2,t}^*, \ldots, \mu_{K,t}^* \}$. 

\begin{lemma} 
\label{lemma:score-fn-calculation}
The score function for distribution $q_t$, for any $t > 0$, is given by
\begin{align*}
    \nabla_x \ln q_t(x) = \sum_{i=1}^K w^*_{i, t}(x) \mu_{i, t}^* - x\,, \hspace{5mm}  \text{ where } \hspace{5mm} w_{i, t}^*(x) = \frac{ \exp(-\| x-\mu_{i, t}^* \|^2/2 ) }{ \sum_{j=1}^K \exp(-\| x-\mu_{j, t}^* \|^2/2 ) }.
\end{align*}
For a mixture of two Gaussians, the score function simplifies to 
\begin{align*}
    \nabla_x \log q_t(x) = \tanh( \mu^{*\top}_t x ) \mu_t^* - x\,, \hspace{5mm} \text{where} \hspace{5mm} \mu_t^* \triangleq \mu^* \exp(-t)
\end{align*}
\end{lemma}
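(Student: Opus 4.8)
The plan is to compute $\nabla_x \ln q_t$ directly from the closed-form density of $q_t$ provided by Lemma~\ref{lemma:Xt-mog-pdf}. Since $q_t = \frac{1}{K}\sum_{i=1}^K \mathcal{N}(\mu_{i,t}^*, \Id)$ is a finite mixture of Gaussians, its density is smooth and strictly positive, so $\ln q_t$ is differentiable everywhere and $\nabla_x \ln q_t(x) = \nabla_x q_t(x)/q_t(x)$. The only calculus needed is the gradient of a single Gaussian density, $\nabla_x \exp(-\|x-\mu\|^2/2) = (\mu - x)\exp(-\|x-\mu\|^2/2)$.

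First I would write $q_t(x) = \frac{1}{K(2\pi)^{d/2}}\sum_{j=1}^K \exp(-\|x-\mu_{j,t}^*\|^2/2)$ and differentiate term by term to get $\nabla_x q_t(x) = \frac{1}{K(2\pi)^{d/2}}\sum_{i=1}^K (\mu_{i,t}^* - x)\exp(-\|x-\mu_{i,t}^*\|^2/2)$. Dividing by $q_t(x)$, the constants $\frac{1}{K(2\pi)^{d/2}}$ cancel, the ratio of exponentials produces exactly the weights $w_{i,t}^*(x)$, and since these weights sum to one the $-x$ term factors out of the normalized sum. This yields $\nabla_x \ln q_t(x) = \sum_i w_{i,t}^*(x)\mu_{i,t}^* - x$, the first claim.

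For the two-component case I would substitute $\mu_{1,t}^* = \mu_t^*$ and $\mu_{2,t}^* = -\mu_t^*$ and expand $\|x \mp \mu_t^*\|^2 = \|x\|^2 + \|\mu_t^*\|^2 \mp 2\langle x, \mu_t^*\rangle$. The common factor $\exp(-(\|x\|^2 + \|\mu_t^*\|^2)/2)$ cancels between numerator and denominator of each $w_{i,t}^*$, leaving $w_{1,t}^*(x) = e^{\langle \mu_t^*, x\rangle}/(e^{\langle \mu_t^*, x\rangle} + e^{-\langle \mu_t^*, x\rangle})$ and $w_{2,t}^*(x) = 1 - w_{1,t}^*(x)$. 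Then $\sum_i w_{i,t}^*(x)\mu_{i,t}^* = (w_{1,t}^*(x) - w_{2,t}^*(x))\mu_t^* = \tanh(\langle \mu_t^*, x\rangle)\mu_t^*$ by the definition of $\tanh$, giving $\nabla_x \ln q_t(x) = \tanh(\mu_t^{*\top} x)\mu_t^* - x$.

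There is essentially no real obstacle: the statement is an exact, unconditional identity, and the proof reduces to one differentiation followed by routine algebraic simplification. The only things to handle carefully are verifying that the Gaussian normalizing constants and the shared exponential factor cancel cleanly, and observing that termwise differentiation of the finite sum is trivially valid. Accordingly, I expect this to be deferred to Appendix~\ref{sec:proof-in-preliminaries} alongside the proof of Lemma~\ref{lemma:Xt-mog-pdf}.
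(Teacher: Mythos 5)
Your proposal is correct and matches the paper's own derivation in Appendix~\ref{subsec:score-function-derivation}: differentiate the mixture density, divide by $q_t$ to obtain $\sum_i w_{i,t}^*(x)\mu_{i,t}^* - x$, then specialize to $K=2$ and simplify the posterior weights (the paper routes through $w_{1,t}^*(x)=\sigma(2\mu^{*\top}x)$ and $(2w_{1,t}^*-1)=\tanh$, while you cancel the shared exponential factor and use the definition of $\tanh$ directly, which is the same computation). No gaps.
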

\noindent See Appendix \ref{sec:proof-in-preliminaries} for the calculation.

Recall that $\nabla_x \log q_t(x)$ is the minimizer for the score-matching objective given in Eq.~\eqref{eq:score-matching-obj}. Therefore, we parametrize our student network architecture similarly to the optimal score function. Our student architecture for mixtures of $K$ Gaussians is 
\begin{align}
\label{eq:score-function-definition}
    s_{\theta_t}(x) = \sum_{i=1}^K w_{i, t}(x) \mu_{i, t} - x\,, \hspace{5mm}
    \text{ where } \hspace{5mm} w_{i, t}(x) &\triangleq \frac{ \exp(- \| x-\mu_{i, t} \|^2 / 2 )} { \sum_{j=1}^K \exp(- \| x-\mu_{j, t} \|^2 / 2 ) } \\  \mu_{i, t} &\triangleq \mu_i \exp(-t).
\end{align}
where $\theta_t = \{ \mu_{1,t}, \mu_{2,t}, \ldots, \mu_{K, t} \}$ denotes the set of parameters at the noise scale $t$. 
For mixtures of two Gaussians, we simplify the student architecture as follows: 
\begin{equation}
\begin{aligned}
    s_{\theta_t}(x) = \tanh(\mu_t^\top x)\mu_t - x\,, 
    \;\; \text{ where } \;\; \mu_{t} \triangleq \mu \exp(-t).
\end{aligned}
\end{equation}
As $\theta_t$ only depends on $\mu_t$ in the case of mixtures of two Gaussians, we simplify the notation of the score function from $s_{\theta_t}(x)$ to $s_{\mu_t}(x)$ in that case. We use $\hat{\mu}_t$ and $\hat{\mu}_t^*$ to denote the unit vector along the direction of $\mu_t$ and $\mu_t^*$ respectively.  Note that we often use $\mu_t$ (or $\theta_t$) to denote the current iterate of gradient descent on the DDPM objective and $\mu'_t$ to denote the iterate after taking a gradient descent step from $\mu_t$. 

\paragraph{Expectation-Maximization (EM) algorithm.} The EM algorithm is composed of two steps: the E-step and the M-step. For mixtures of Gaussians, the E-step computes the expected log-likelihood based on the current mean parameters and the M-step maximizes this expectation to find a new estimate of the parameters. 

\begin{fact}[See e.g., \cite{daskalakis2017ten, yan2017convergence, kwon2020algorithm} for more details]
\label{fact:em-update}
When $X$ is the mixture of $K$ Gaussian and $\{ \mu_1, \mu_2, \ldots, \mu_K \}$ are current estimates of the means,  the population EM update for all $i \in \{1,2,\ldots,K\}$ is given by
\begin{align*}
    \mu_i' = \frac{\mbb{E}_X[w_i(X) X]}{ \mbb{E}_X[w_i(X)] }, \;\;\; \text{where} \;\; w_i(X) = \frac{ \exp(- \| X-\mu_{i} \|^2 / 2 ) }{ \sum_{j=1}^K \exp(- \| X-\mu_{j} \|^2 / 2 ) }.
\end{align*}
The EM update for mixtures of two Gaussians given in Eq.~\eqref{eq:mog-2-pdf} simplifies to 
\begin{align}
\label{eq:mo2g-EM}
    \mu' = \mbb{E}_{X \sim \mc{N}(\mu^*, \Id)}[ \tanh(\mu^\top X) X].
\end{align}
\end{fact}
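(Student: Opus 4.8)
The plan is to derive both update rules directly from the definition of one EM iteration: the E-step forms the posterior responsibilities $w_i(x)=\Pr[\text{component } i\mid X=x]$, and the M-step maximizes the expected complete-data log-likelihood (the $Q$-function) over the mean parameters with the responsibilities frozen at the current iterate; the population version is obtained by replacing the empirical average over samples with the expectation over $X\sim q$.

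First, for general $K$. Since each component is $\mathcal N(\mu_i,\Id)$ with prior weight $1/K$, Bayes' rule gives the responsibility of component $i$ at $x$ as $w_i(x)=\exp(-\|x-\mu_i\|^2/2)\big/\sum_j\exp(-\|x-\mu_j\|^2/2)$, where the $\|x\|^2$ term and the common normalizing constants cancel between numerator and denominator (matching the definition in the statement). Dropping terms independent of the means, the $Q$-function is $\mathbb{E}_X\big[\sum_i w_i(X)\big(-\tfrac12\|X-\mu_i\|^2\big)\big]$, which is strictly concave in each $\mu_i$ since its Hessian in $\mu_i$ is $-\mathbb{E}_X[w_i(X)]\,\Id$. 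Setting $\nabla_{\mu_i}Q=\mathbb{E}_X[w_i(X)(X-\mu_i)]=0$ and solving yields $\mu_i'=\mathbb{E}_X[w_i(X)X]\big/\mathbb{E}_X[w_i(X)]$, the first claim. (Holding the mixing weights fixed at $1/K$ is exactly EM on the equally weighted model, and no weight parameters enter the mean update.)

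Next, for the symmetric two-Gaussian model \eqref{eq:mog-2-pdf} we instantiate this with the \emph{tied} parametrization $\mu_1=\mu$, $\mu_2=-\mu$, so there is a single free vector $\mu$. Expanding the squared norms, the responsibilities collapse to $w_1(x)=\sigma(2\mu^\top x)$ and $w_2(x)=\sigma(-2\mu^\top x)=1-w_1(x)$, where $\sigma$ is the logistic function. Because the means are tied, the $Q$-function is $\mathbb{E}_X\big[w_1(X)\big(-\tfrac12\|X-\mu\|^2\big)+w_2(X)\big(-\tfrac12\|X+\mu\|^2\big)\big]$ with the $w_i$ frozen; differentiating in $\mu$ and using $w_1+w_2\equiv 1$ gives the stationarity condition $\mu'=\mathbb{E}_X[(w_1(X)-w_2(X))X]$. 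The elementary identity $\sigma(a)-\sigma(-a)=\tanh(a/2)$ with $a=2\mu^\top x$ turns this into $\mu'=\mathbb{E}_{X\sim q}[\tanh(\mu^\top X)\,X]$. To reach the stated form \eqref{eq:mo2g-EM}, note that $x\mapsto\tanh(\mu^\top x)\,x$ is invariant under $x\mapsto -x$ (a product of two odd maps), and this reflection exchanges the components $\mathcal N(\pm\mu^*,\Id)$, so each contributes equally; hence $\mathbb{E}_{X\sim q}[\tanh(\mu^\top X)X]=\mathbb{E}_{X\sim\mathcal N(\mu^*,\Id)}[\tanh(\mu^\top X)X]$.

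The computation is routine; the only place that warrants care is the M-step for the tied symmetric model, where one must differentiate $Q$ with respect to the single shared parameter $\mu$ so that \emph{both} Gaussian terms contribute, rather than treating $\mu_1,\mu_2$ as independent and imposing symmetry afterward (one can check the two routes agree). Everything else is bookkeeping with Gaussian densities and sigmoid/$\tanh$ algebra.
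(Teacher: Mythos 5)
Your derivation is correct. Note that the paper does not prove this statement at all: it is recorded as a ``Fact'' with pointers to the EM literature (\cite{daskalakis2017ten, yan2017convergence, kwon2020algorithm}), so there is no in-paper argument to compare against. Your self-contained route --- Bayes' rule for the responsibilities (the equal $1/K$ weights and normalizing constants cancel), strict concavity of the $Q$-function in each $\mu_i$ giving $\mu_i' = \mbb{E}[w_i(X)X]/\mbb{E}[w_i(X)]$, and then the tied parametrization $\mu_1=\mu$, $\mu_2=-\mu$ for the symmetric model --- is exactly the standard derivation in those references, and the two points you flag as needing care are handled correctly: differentiating $Q$ in the single shared parameter yields $\mu' = \mbb{E}_{X\sim q}[(w_1(X)-w_2(X))X] = \mbb{E}_{X \sim q}[\tanh(\mu^\top X)X]$ via $\sigma(a)-\sigma(-a)=\tanh(a/2)$, and the evenness of $x\mapsto \tanh(\mu^\top x)x$ together with the symmetry of $q$ converts the mixture expectation into the expectation under $\mc{N}(\mu^*,\Id)$, which is the form in Eq.~\eqref{eq:mo2g-EM} (and consistent with the paper's own sigmoid computation in its score-function derivation). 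What your write-up buys over the paper's treatment is simply a complete, verifiable argument in place of a citation; no gaps.
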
 
An analogous version of the EM algorithm, called the gradient EM algorithm, takes a gradient step in the direction of the M-step instead of optimizing the objective in the M-step fully. 
\begin{fact}[See e.g., \cite{yan2017convergence, segol2021improved} for more details] 
\label{fact:gradient-EM}
For all $i \in \{1,2,\ldots,K\}$,
 the gradient EM-update for mixtures of $K$ Gaussian is given by
\begin{align*}
    \mu_i' = \mu_i + \eta \,\mbb{E}_X[ w_i(X)(X - \mu_i) ],
\end{align*}
where $\eta$ is the learning rate.
\end{fact}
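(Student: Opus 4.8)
The plan is to derive the update directly from the majorization/variational view of EM, so that the claimed formula falls out of one gradient computation. Recall that EM maximizes the expected complete-data log-likelihood (the ``$Q$-function''): from current center estimates $\theta = \{\mu_i\}_{i=1}^K$, the E-step forms the responsibilities $w_i(X)$, which are exactly the posterior probabilities that $X$ was generated by component $i$ under $\theta$, and the M-step maximizes over new centers $\theta' = \{\mu_i'\}$ the surrogate
\[
    Q(\theta' \mid \theta) = \mbb{E}_X\Big[ \textstyle\sum_{i=1}^K w_i(X)\, \log\big( \tfrac{1}{K}\, \mc{N}(X;\, \mu_i', \Id) \big) \Big]\,.
\]
Gradient EM simply replaces the full maximization of $Q(\cdot \mid \theta)$ by a single gradient-ascent step on $Q(\cdot \mid \theta)$ started at $\theta' = \theta$, so it suffices to compute $\nabla_{\mu_i'} Q(\theta' \mid \theta)$ and evaluate it at the base point.

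First I would differentiate. The mixing weight $1/K$ and the Gaussian normalizing constant do not depend on $\mu_i'$, and among the $K$ summands only the $i$-th involves $\mu_i'$, so the per-component terms decouple and
\[
    \nabla_{\mu_i'} Q(\theta' \mid \theta) = \mbb{E}_X\Big[ w_i(X)\, \nabla_{\mu_i'}\big( -\tfrac{1}{2}\|X - \mu_i'\|^2 \big) \Big] = \mbb{E}_X\big[ w_i(X)\,(X - \mu_i') \big]\,.
\]
Here it is essential that the responsibilities $w_i(X)$ are held fixed at their E-step values (evaluated at $\theta$, not $\theta'$): this is the defining feature of the EM surrogate, and it is also what makes the identity-covariance assumption convenient, since then $\nabla_{\mu_i'}\log\mc{N}(X;\mu_i',\Id)$ is affine in $\mu_i'$. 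Evaluating at $\mu_i' = \mu_i$ and taking a step of size $\eta$ in the gradient direction gives $\mu_i' = \mu_i + \eta\,\mbb{E}_X[w_i(X)(X - \mu_i)]$, which is exactly the stated update.

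The only real obstacle is bookkeeping — confirming that the weights are not differentiated and that $\nabla_{\mu_i'}$ picks up only the $i$-th summand — so this is more a verification than a proof. For completeness one could also cross-check against Fact~\ref{fact:em-update}: since $\mbb{E}_X[w_i(X)(X-\mu_i)] = \mbb{E}_X[w_i(X)]\big( \mbb{E}_X[w_i(X)X]/\mbb{E}_X[w_i(X)] - \mu_i \big)$, the gradient-EM step moves $\mu_i$ toward the full M-step target $\mbb{E}_X[w_i(X)X]/\mbb{E}_X[w_i(X)]$ scaled by the nonnegative factor $\mbb{E}_X[w_i(X)]$, i.e. it is a damped M-step, which is the expected relationship between EM and gradient EM.
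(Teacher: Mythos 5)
Your derivation is correct and is exactly the standard one: the paper states this as a Fact imported from the cited literature (where gradient EM is defined as a single gradient-ascent step on the EM surrogate $Q$-function with responsibilities frozen at the E-step values), which is precisely the computation you carry out. Your cross-check against Fact~\ref{fact:em-update}, showing the step is a damped M-step scaled by $\mbb{E}_X[w_i(X)]$, is also accurate.
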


\section{Warmup: mixtures of two Gaussians with constant separation}
\label{sec:mo2g-const-sep}

In this section, we formally state our result for learning mixtures of two Gaussians with constant separation. This case highlights the main proof techniques, namely viewing gradient descent on the DDPM objective as power iteration and as the EM algorithm.

\subsection{Result and algorithm}

\begin{theorem}
\label{thm:mo2g-const-sep}
    There is an absolute constant $c > 0$ such that the following holds. Suppose a mixture of two Gaussians with the mean parameter $\mu^*$ satisfies $\norm{\mu^*} > c$. Then, for any $\epsilon > 0$, there is a procedure that calls Algorithm~\ref{alg:denoise} at two different noise scales $t$ and outputs $\Tilde{\mu}$ such that $\norm{ \Tilde{\mu} - \mu^* } \leq \epsilon$ with high probability. Moreover, the algorithm has time and sample complexity $\mathrm{poly}(d) / \epsilon^2$ (see Theorem~\ref{thm:mo2g-const-sep-appendix} for more precise quantitative bounds).
\end{theorem}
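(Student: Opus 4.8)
The plan is to implement the two-stage algorithm sketched in the technical overview: a ``large noise'' phase realizing power iteration, followed by a ``small noise'' phase realizing the EM update, and to track the approximation error between the DDPM gradient and the idealized update at each stage.

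\emph{Stage 1 (large $t$, power iteration).} First I would instantiate Algorithm~\ref{alg:denoise} at a noise scale $t$ chosen large enough that $\|\mu^*_t\| = \|\mu^*\|\exp(-t)$ is small (so the $\tanh$ in the score is well-approximated by its linearization $\tanh(z)\approx z$), but not so large that the signal $\mu^*_t\mu_t^{*\top}$ is swamped by the sampling error in the empirical gradient. Using Lemma~\ref{lemma:update} for the exact gradient and Lemma~\ref{lemma:grad-equivalence} for its approximation, I would show $-\nabla_\mu L_t(s_\mu) \approx (2\mu_t^*\mu_t^{*\top} - r\,\Id)\mu$ as in~\eqref{eq:large-noise-grad-equivalence-intro}, with $r = r(\mu)$ a controllable scalar. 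A single GD step then acts as multiplication by $(1-\eta r)\Id + 2\eta\mu_t^*\mu_t^{*\top}$ up to a bounded error term; iterating $H$ times and invoking a standard noisy-power-iteration argument (random initialization has $\mathrm{poly}(1/d)$ correlation with the top eigenvector $\h\mu_t^*$ with high probability, then the spectral gap amplifies it) yields a $\mu$ whose normalized direction $\h\mu$ satisfies $\langle\h\mu,\h\mu_t^*\rangle \ge 1 - \delta$ for a small constant $\delta$. This is the content of Lemma~\ref{lemma:large-inner-product-main-paper}/Lemma~\ref{lemma:large-inner-product-main-paper}-type statements; the key quantitative point is that, because $\|\mu^*\| = \Omega(1)$, constant angular accuracy already gives a warm start in Euclidean distance after rescaling to the (sample-estimable) norm $\|\mu^*\|$.

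\emph{Stage 2 (small $t$, EM contraction).} Starting from the warm-start $\mu$ from Stage 1, I would switch to a small noise scale $t$ and again expand the DDPM gradient via Lemma~\ref{lemma:update}, showing $-\nabla_\mu L_t(s_\mu) \approx \mbb{E}_{X\sim\calN(\mu_t^*,\Id)}[\tanh(\langle\mu,X\rangle)X] - \mu$, so that a GD step is $\mu' \approx (1-\eta)\mu + \eta\cdot(\text{EM M-step})$. I would then port the EM contraction estimates of \cite{xu2016global,daskalakis2017ten}: from a warm start with $\|\mu^*\| = \Omega(1)$, the population EM map is a contraction toward $\mu_t^*$ with a constant rate, hence so is the convex combination with the identity for small enough $\eta$, giving $\|\mu'_t - \mu_t^*\| \le (1-\Omega(\eta))\|\mu_t - \mu_t^*\| + (\text{approx.\ error} + \text{sampling error})$. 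Iterating for $O(\log(1/\epsilon))$ steps drives the Euclidean error to $\epsilon$; rescaling by $\exp(t)$ recovers $\|\Tilde\mu - \mu^*\|\le\epsilon$. The optimal $1/\epsilon^2$ sample complexity comes from the fact that, in this contraction regime, the only $\epsilon$-dependent requirement is that the empirical gradient concentrate to additive error $O(\epsilon)$, which needs $\mathrm{poly}(d)/\epsilon^2$ samples by a standard sub-exponential concentration / covering argument over the relevant ball of radius $O(\|\mu^*\|)$.

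\emph{Main obstacle.} The routine parts are the gradient computation and the two idealized-algorithm analyses (power iteration, EM), which are essentially available off the shelf. The real work is uniformly controlling the approximation error $\|\nabla_\mu L_t(s_\mu) - (\text{idealized update})\|$ — both the bias from linearizing/expanding $\tanh$ and the variance from using empirical rather than population gradients — simultaneously over all iterates that GD visits, and showing these errors are small enough not to destroy the angular progress in Stage 1 or the contraction in Stage 2. In Stage 1 this is delicate because the signal strength $\|\mu_t^*\|^2 = \|\mu^*\|^2\exp(-2t)$ is itself small, so the error must be bounded relative to this small quantity, forcing a careful choice of $t$ and of the number of samples; in Stage 2 one must ensure the warm start handed off from Stage 1 actually lies in the basin where the EM contraction estimates apply. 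Handling the empirical gradient rigorously (uniform concentration over a net, plus a Lipschitz/boundedness argument to pass from the net to all of the ball) is where most of the technical effort in the appendix will go.
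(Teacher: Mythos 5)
Your two-stage plan is the same route the paper takes (high-noise GD as noisy power iteration to get $\Omega(1)$ angular correlation, then low-noise GD as an EM-style contraction), and Stage 1 as you describe it matches the paper's Lemmas~\ref{lemma:2-mog-high-noise-grad-equivalence}--\ref{lemma:2-mog-const-sep-inner-product-const} (your extra step of rescaling to an estimated $\|\mu^*\|$ is not in the paper's constant-separation analysis --- the norm estimate only appears in the small-separation algorithm --- but it is a harmless variation). The genuine gap is in Stage 2, in how you handle the discrepancy between the DDPM gradient and the EM update. You fold it into an additive ``approx.\ error'' alongside the sampling error and then iterate the contraction. But at a fixed low noise scale $t=O(1)$ this residual, the term the paper calls $G(\mu_t,\mu_t^*)$, is a fixed function of the current iterate: it is (roughly) exponentially small in the separation but it does not shrink with $\epsilon$ and cannot be tuned to $O(\epsilon)$ by any choice of $t$ or sample size. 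Treating it as an additive floor therefore only yields accuracy down to $\sup\|G\|/\Omega(\eta)$, which contradicts the theorem's ``for any $\epsilon>0$'' guarantee with $\mathrm{poly}(d)/\epsilon^2$ samples. The missing idea is structural: $G(\mu^*,\mu^*)=0$ (because $\mu^*_t$ is a stationary point of the DDPM objective and a fixed point of EM), and by a mean-value argument exploiting the exponential decay of $\tanh',\tanh'',\tanh'''$ one gets $\|G(\mu_t,\mu_t^*)\|\le 0.01\,\|\mu_t-\mu_t^*\|$ --- i.e.\ the bias itself is proportional to the distance to the optimum. This is exactly Lemma~\ref{lemma:G-contraction} and its multi-dimensional extension Lemma~\ref{lemma:multi-dimension-G-contraction}, and it is what makes the per-step bound $\|\mu_t'-\mu_t^*\|\le(1-\eta(1-\lambda_1-\lambda_2))\|\mu_t-\mu_t^*\|+\eta\epsilon$ close to arbitrary accuracy with only the statistical error contributing the $\epsilon$-dependence.

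A secondary point you should not gloss over: the bounds $\lambda_1,\lambda_2<1$ only hold in a region of the form $\|\mu_t\|\in[c,\tfrac{4}{3}\langle\hat\mu_t,\mu_t^*\rangle]$ with $c$ a large constant, and you must show every Stage-2 iterate stays in this region, not just the hand-off from Stage 1. In the paper this is a separate induction (Lemma~\ref{lemma:mu-in-interval}, supported by Lemmas~\ref{lemma:angle-decrease}--\ref{lemma:update-bound}) showing the norm stays in the interval and the angle does not increase; your proposal only addresses the initialization lying in the basin, so this maintenance argument would still need to be supplied.
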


\paragraph{Algorithm.}
The algorithm has two stages. In the first stage we run gradient descent on the DDPM objective described in Algorithm~\ref{alg:denoise} from a random Gaussian initialization and noise scale $t_1$ for a fixed number of iterations $H$ where $t_1 = O(\log d)$ (``high noise'') and $H = \mathrm{poly}(d, 1/\epsilon)$. In the second stage, the procedure uses the output of the first step as initialization and runs Algorithm~\ref{alg:denoise} at a ``low noise'' scale of $t_2 = O(1)$.

\subsection{Proof outline of Theorem~\ref{thm:mo2g-const-sep}}

We provide a proof sketch of correctness of the above algorithm and summarize the main technical lemmas here.  All proofs of the following lemmas can be found in Appendix~\ref{appsec:proof-2-mog-const-sep}.

\paragraph{Part I: Analysis of high noise regime and connection to power iteration.}  We show that in the large noise regime, the negative gradient $-\nabla L_t(s_t)$ is well-approximated by $2\mu_t^* \mu_t^{*\top} \mu_t - 3\norm{\mu_t}^2 \mu_t$. Recall that this result is the key to showing the resemblance between gradient descent and power iteration. Concretely, we show the following lemma: 

\begin{lemma}[See Lemma \ref{lemma:2-mog-high-noise-grad-equivalence} for more details]
\label{lemma:grad-equivalence}
For $t=O(\log d)$, the gradient descent update on the DDPM objective $L_t(s_t)$ can be approximated with $2\mu_t^* \mu_t^{*\top} \mu_t - 3\norm{\mu_t}^2 \mu_t$:
    \begin{align*}
        \Big\| \rb{-\nabla L_t(s_t) } - \rb{ 2\mu_t^* \mu_t^{*\top} \mu_t - 3\norm{\mu_t}^2 \mu_t } \Big\| \leq \mathrm{poly}(1/d).
    \end{align*}
\end{lemma}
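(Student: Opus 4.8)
The plan is to start from the \emph{exact} population gradient and then exploit that, at noise level $t=\Theta(\log d)$, all the relevant inner products are tiny. By the equivalence between the DDPM objective and the score-matching loss recalled in Section~\ref{subsec:preliminaries} and by Lemma~\ref{lemma:score-fn-calculation} (the $-x$ terms in $s_{\mu_t}$ and in $\nabla_x\ln q_t$ cancel), we may write, up to an additive constant independent of $\mu_t$, $L_t(s_{\mu_t}) = \mbb{E}_{X\sim q_t}\big[\|\tanh(\langle\mu_t,X\rangle)\mu_t - \tanh(\langle\mu_t^*,X\rangle)\mu_t^*\|^2\big]$. Differentiating in $\mu_t$ (this is the content of the exact-update computation, Lemma~\ref{lemma:update}) expresses $-\nabla_{\mu_t}L_t$ as a sum of $q_t$-expectations of terms such as $\tanh(\langle\mu_t,X\rangle)\,\mathrm{sech}^2(\langle\mu_t,X\rangle)\,\langle\mu_t^*,X\rangle\,X$, $\,\tanh^2(\langle\mu_t,X\rangle)\,\mu_t$, and $\tanh(\langle\mu_t,X\rangle)\tanh(\langle\mu_t^*,X\rangle)\,\mu_t^*$: smooth odd/even functions of $\langle\mu_t,X\rangle$ and $\langle\mu_t^*,X\rangle$, sometimes weighted by the vector $X$.

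\textbf{Extracting the leading term.} The structural fact I would lean on is that choosing the constant in $t=\Theta(\log d)$ large enough forces $\|\mu_t^*\| = e^{-t}\|\mu^*\|\le d^{-\Omega(1)}$, while along the trajectory the iterate stays comparably small, $\|\mu_t\|\le d^{-\Omega(1)}$ (this is maintained together with the power-iteration analysis of Lemma~\ref{lemma:large-inner-product-main-paper}: the $-3\|\mu_t\|^2\mu_t$ term supplies the damping and the $\mu_t^*\mu_t^{*\top}\mu_t$ growth term is itself tiny). Then $\langle\mu_t,X\rangle$ and $\langle\mu_t^*,X\rangle$ are $d^{-\Omega(1)}$ up to $\mathrm{polylog}(d)$ factors except on an event of probability $d^{-\omega(1)}$, and all their low moments are correspondingly small, since $q_t$ is subgaussian with $\mbb{E}_{q_t}[X]=0$ and $\mbb{E}_{q_t}[XX^\top] = \Id + \mu_t^*\mu_t^{*\top}$ (immediate from Lemma~\ref{lemma:Xt-mog-pdf}). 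I would then Taylor-expand $\tanh y = y + O(y^3)$ and $\mathrm{sech}^2 y = 1 + O(y^2)$ inside each expectation; after discarding the higher-order tails, every surviving term is quadratic in $X$ contracted with (or scaling) a fixed vector, so its expectation is governed entirely by $\Id + \mu_t^*\mu_t^{*\top}$. Collecting these and using $\mu_t^{*\top}\mu_t = \langle\mu_t^*,\mu_t\rangle$, the result collapses to an expression of the form $c_1\,\mu_t^*\mu_t^{*\top}\mu_t + c_2\,\|\mu_t\|^2\mu_t$ for explicit constants (computable from one-dimensional Gaussian integrals), plus lower-order monomials in $\|\mu_t\|,\|\mu_t^*\|$ such as $\langle\mu_t,\mu_t^*\rangle\|\mu_t^*\|^2\,\mu_t^*$.

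\textbf{The error bound — the main obstacle.} The remaining, genuinely delicate step is to show the total discarded contribution has norm $\le\mathrm{poly}(1/d)$. It splits into: (i) Taylor remainders, which on taking expectations become quantities like $\mbb{E}_{q_t}[|\langle\mu_t,X\rangle|^3\|X\|]$ and $\mbb{E}_{q_t}[|\langle\mu_t^*,X\rangle|^3]$, each $\le d^{-\Omega(1)}$ by Cauchy--Schwarz and subgaussianity of $q_t$; (ii) contributions from the rare event where $\langle\mu_t,X\rangle$ or $\langle\mu_t^*,X\rangle$ is atypically large, handled via $|\tanh|\le 1$, $0\le\mathrm{sech}^2\le 1$, the $d^{-\omega(1)}$ tail probability, and a moment bound on $\|X\|$; and (iii) the lower-order monomials above. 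Since in the regime $\|\mu_t\|,\|\mu_t^*\|\le d^{-\Omega(1)}$ every constant-coefficient combination of $\mu_t^*\mu_t^{*\top}\mu_t$ and $\|\mu_t\|^2\mu_t$ is itself $\le\mathrm{poly}(1/d)$, replacing $(c_1,c_2)$ by the claimed $(2,-3)$ only perturbs the estimate by $\le\mathrm{poly}(1/d)$, which gives the stated bound. The crux is the bookkeeping: one must track exactly which powers of $\|\mu_t\|$ and $\|\mu_t^*\|$ multiply each error term and verify not merely that the sum is $\le\mathrm{poly}(1/d)$ but that it is of sufficiently high order that it does not disturb the subsequent power-iteration analysis — this is precisely what pins down how large the constant in $t=\Theta(\log d)$ must be taken. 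I expect this error accounting, rather than the leading-order calculation, to be the main obstacle. (The appendix version, Lemma~\ref{lemma:2-mog-high-noise-grad-equivalence}, should record the sharper quantitative bound; passing from population to empirical gradients additionally requires a uniform concentration bound over the iterates, which is routine given the moment control.)
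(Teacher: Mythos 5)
Your overall route is the same as the paper's: start from the exact population gradient (Lemma~\ref{lemma:update}), Taylor-expand $\tanh$, $\tanh'$, $\tanh''$ around $0$ (justified because at $t=\Theta(\log d)$ both $\|\mu_t^*\|=e^{-t}\|\mu^*\|$ and the iterate $\|\mu_t\|$ are polynomially small), evaluate the surviving low-degree terms by exact Gaussian moment identities / Stein's lemma using $\mbb{E}[XX^\top]=\Id+\mu_t^*\mu_t^{*\top}$, and control the remainders by Cauchy--Schwarz and moment bounds. This is exactly what Lemma~\ref{lemma:2-mog-high-noise-grad-equivalence} does (the paper works under $\mc{N}(\mu_t^*,\Id)$ after a symmetry reduction rather than under $q_t$ with the score-matching rewriting, but that difference is cosmetic).

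The genuine gap is in how you treat the leading-order coefficients. You stop at ``the result collapses to $c_1\,\mu_t^*\mu_t^{*\top}\mu_t + c_2\,\|\mu_t\|^2\mu_t$ plus lower-order monomials'' and then argue that replacing $(c_1,c_2)$ by $(2,-3)$ only perturbs things by $\mathrm{poly}(1/d)$ because every such combination is itself $\mathrm{poly}(1/d)$ in this regime. Under that reading the lemma is vacuous: by the same logic the zero vector ``approximates'' $-\nabla L_t$ to $\mathrm{poly}(1/d)$, and, as you yourself point out in your error discussion, what the downstream power-iteration argument (Lemmas~\ref{lemma:projection-angle-decrease} and~\ref{lemma:2-mog-const-sep-inner-product-const}) actually needs is that the approximation error be of \emph{strictly higher order} in $\|\mu_t\|,\|\mu_t^*\|$ than the retained terms, since $\kappa_1<1$ hinges on the error being dominated by the signal $\|\mu_t^*\|^2\|\mu_t\|$. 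A discrepancy in the coefficients, e.g.\ $(c_1-2)\,\mu_t^*\mu_t^{*\top}\mu_t$, has norm of the same order as that signal, so it cannot be absorbed into the higher-order remainder; establishing that the coefficients are exactly $2$ and $-3$ (and that the leftover monomials such as $\langle\mu_t,\mu_t^*\rangle^3\mu_t^*$ and $\langle\mu_t,\mu_t^*\rangle^2\mu_t$ are genuinely higher order) is the substantive content of the lemma. The paper carries this out explicitly: Stein's lemma converts $\mbb{E}[\tanh'(\mu_t^\top x)\,\mu_t^\top x\, x]$ and $\mbb{E}[(\mu_t^\top x)^3 x]$ into closed form, the cubic Taylor terms combine to give precisely $2\mu_t^*\mu_t^{*\top}\mu_t-3\|\mu_t\|^2\mu_t$, and the remainder is bounded by $250\sqrt{d}\|\mu_t\|^5+10\|\mu_t\|^3\|\mu_t^*\|^2$ (plus the empirical-gradient error), which is the quantitative form that feeds $\kappa_1,\kappa_2$. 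So the bookkeeping you deferred is not optional fine-tuning of the constant in $t=\Theta(\log d)$; without it the proof does not establish the statement in the sense in which it is used.
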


\noindent From Lemma~\ref{lemma:grad-equivalence}, it immediately follows that $\mu't$, the result of taking a single gradient step starting from $\mu_t$, is well-approximated by the result of taking a single step of power iteration for a matrix whose leading eigenvector is $\mu^*_t$:
\begin{align*}
    \mu'_t = \mu_t - \eta \nabla L_t(s_{\mu}) \approx (\Id(1 - 3\eta \| \mu_t \|^2 ) + 2\mu_t^* \mu_t^{*\top} ) \mu_t\,.
\end{align*}

The second key element is to show that as a consequence of the above power iteration update, the gradient descent converges in \emph{angular distance} to the leading eigenvector.  Concretely, we show the following lemma:

\begin{lemma}[Informal, see Lemma~\ref{lemma:projection-angle-decrease} for more details]
\label{lemma:angle-decrease-main-paper}
    Suppose $\mu_t'$ is the iterate after one step of gradient descent on the DDPM objective from $\mu_t$. Denote the angle between $\mu_t$ and $\mu_t^*$ to be $\theta$ and between $\mu'_t$ and $\mu_t^*$ to be $\theta'$. In this case, we show that
    \begin{align*}
        \tan \theta'= \max \rb{ \kappa_1 \tan \theta, \kappa_2 },
    \end{align*}
    where $\kappa_1 < 1$ and $\kappa_2 \le 1 / \mathrm{poly}(d)$. 
\end{lemma}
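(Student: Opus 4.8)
The plan is to track the tangent of the angle between the gradient-descent iterate and $\mu_t^*$ directly through the (approximate) power-iteration update. Decompose the current iterate as $\mu_t = a\,\h{\mu}_t^* + b\,v$ where $v \perp \mu_t^*$ is a unit vector, so that $\tan\theta = b/a$. By Lemma~\ref{lemma:grad-equivalence}, the update is $\mu_t' = (\Id(1 - 3\eta\|\mu_t\|^2) + 2\eta\,\mu_t^*\mu_t^{*\top})\mu_t + E$, where $\|E\| \le \mathrm{poly}(1/d)$ after scaling by the step size $\eta$. Writing $c_\parallel \triangleq 1 - 3\eta\|\mu_t\|^2 + 2\eta\|\mu_t^*\|^2$ for the multiplier in the $\mu_t^*$ direction and $c_\perp \triangleq 1 - 3\eta\|\mu_t\|^2$ for the multiplier in the orthogonal complement, the exact (error-free) power step would give $\tan\theta'_{\text{exact}} = (c_\perp/c_\parallel)\tan\theta$. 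The first step is therefore to show that $\kappa_0 \triangleq c_\perp/c_\parallel < 1$: this follows because $2\eta\|\mu_t^*\|^2 > 0$, provided we also verify $c_\parallel > 0$, i.e. the step size $\eta$ is small enough that $3\eta\|\mu_t\|^2 < 1 + 2\eta\|\mu_t^*\|^2$; under the norm control maintained elsewhere in the high-noise analysis ($\|\mu_t\|$ comparable to $\|\mu_t^*\| = \Omega(1)$, both $\le \mathrm{poly}(d)$) this holds for the prescribed $\eta$, and in fact $\kappa_0 \le 1 - \Omega(\eta)$ bounded away from $1$.

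The second step is to propagate the additive error $E$ into the angle. Projecting $\mu_t'$ onto $\h{\mu}_t^*$ and onto its orthogonal complement, the parallel component is at least $c_\parallel a - \|E\|$ and the perpendicular component is at most $c_\perp b + \|E\|$, so
\begin{align*}
    \tan\theta' \le \frac{c_\perp b + \|E\|}{c_\parallel a - \|E\|}.
\end{align*}
Now I would split into two cases according to whether the ``signal'' $a = \|\mu_t\|\cos\theta$ is large or small relative to $\|E\|$. If $a$ is not too small — say $c_\parallel a \ge 2\|E\|$ — then the denominator is at least $\tfrac12 c_\parallel a$ and one gets $\tan\theta' \le \kappa_0\tan\theta + O(\|E\|/(c_\parallel a)) \le \kappa_1\tan\theta$ for a constant $\kappa_1<1$ obtained by slightly enlarging $\kappa_0$, using that $\tan\theta$ itself is bounded below once the iterate has any nontrivial correlation (and absorbing the residual into the $\kappa_2$ branch when it is not). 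If instead $a$ is small, then $\mu_t'$ is dominated by its behaviour in a regime where the perpendicular part is at most $c_\perp\|\mu_t\| + \|E\|$ and one shows directly that $\tan\theta' \le \kappa_2$ with $\kappa_2 = 1/\mathrm{poly}(d)$, because the $\mu_t^*\mu_t^{*\top}$ term injects a contribution of order $\eta\|\mu_t^*\|^2 a$ into the parallel direction that, even when $a$ is as small as $\mathrm{poly}(1/d)$, still dominates $\|E\|$; quantitatively this is where the $1/\mathrm{poly}(d)$ floor comes from, and choosing the polynomial degrees appropriately (the error in Lemma~\ref{lemma:grad-equivalence} is a higher power of $1/d$ than the floor) makes the two estimates consistent. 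Combining, $\tan\theta' \le \max(\kappa_1\tan\theta, \kappa_2)$.

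The main obstacle I anticipate is the bookkeeping around the case where $a$ is small: one must make sure that a random Gaussian initialization indeed has $\cos\theta = \Omega(1/\mathrm{poly}(d))$ with high probability (so the iterate never actually starts below the floor), that the norm $\|\mu_t\|$ stays in the range where $c_\parallel > 0$ and $\kappa_0$ is bounded away from $1$ throughout all $H$ iterations (this likely needs a companion norm-tracking lemma, presumably Lemma~\ref{lemma:large-inner-product-main-paper}), and that the polynomial in the error bound of Lemma~\ref{lemma:grad-equivalence} is genuinely smaller than the polynomial defining $\kappa_2$ — otherwise the recursion does not close. The trigonometric identity manipulations (passing between $\tan\theta'$ and the ratio of projected components) are routine, so the real content is the quantitative separation of scales between the approximation error and the convergence floor.
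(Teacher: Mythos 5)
Your overall strategy -- treat the step as power iteration on $(1-3\eta\|\mu_t\|^2)\Id + 2\eta\,\mu_t^*\mu_t^{*\top}$ plus an additive error $E$, and bound $\tan\theta'$ by the ratio of the projections onto $\hat{\mu}_t^*$ and its orthogonal complement -- is exactly the paper's route (Lemma~\ref{lemma:projection-angle-decrease}), and your first case, where the additive error is either absorbed multiplicatively into $\kappa_1\tan\theta$ or deferred to the $\kappa_2$ branch, reproduces the paper's convex-combination/max argument in different bookkeeping. The genuine gap is your second case. You claim that when the parallel component $a=\|\mu_t\|\cos\theta$ is small one can ``show directly that $\tan\theta'\le\kappa_2 = 1/\mathrm{poly}(d)$.'' That is false: both multipliers $c_\parallel$ and $c_\perp$ are $1\pm O(\eta\|\mu_t^*\|^2,\eta\|\mu_t\|^2)$, i.e.\ extremely close to $1$ in the high-noise regime, so a single step changes the tangent only by the factor $c_\perp/c_\parallel\approx 1$; if the iterate is nearly orthogonal to $\mu_t^*$ (large $\tan\theta$), one step cannot produce $\tan\theta'\le 1/\mathrm{poly}(d)$, and the observation that $\eta\|\mu_t^*\|^2 a$ dominates $\|E\|$ only keeps the denominator positive, it does not make the ratio small. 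You have also mixed up which regime the $\kappa_2$ floor belongs to: it is the \emph{additive floor} that binds when $\tan\theta$ is already tiny (i.e.\ $a$ close to $\|\mu_t\|$), not a bound attainable from a nearly orthogonal iterate.

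The paper resolves the problematic small-$a$ regime not by proving anything there but by excluding it: Lemma~\ref{lemma:projection-angle-decrease} takes $|\langle\hat{\mu}_t,\hat{\mu}_t^*\rangle|\ge\frac{1}{2d}$ as a hypothesis (supplied by the random initialization, Lemma~\ref{lemma:random-init-similarity}, and preserved since the angle only decreases), divides the approximation error of Lemma~\ref{lemma:2-mog-high-noise-grad-equivalence} by $\langle\mu_t,\hat{\mu}_t^*\rangle$, and checks -- together with the norm control of Lemma~\ref{lemma:gradient-update-bound} -- that the resulting relative error stays below the eigenvalue gap $\eta\|\mu_t^*\|^2$. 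To repair your argument you should do the same: assume the $1/\mathrm{poly}(d)$ correlation lower bound, drop the second case entirely, and verify quantitatively that $\|E\|/(c_\parallel a)\ll \eta\|\mu_t^*\|^2$ under that hypothesis. Separately, your quantitative claims $\kappa_0\le 1-\Omega(\eta)$ and ``$\|\mu_t^*\|=\Omega(1)$'' are wrong at noise scale $t=O(\log d)$, where $\mu_t^*=\mu^*e^{-t}$ is polynomially small; the true contraction factor is only $1-\Omega(\eta\|\mu_t^*\|^2)$, which is why the paper needs $\mathrm{poly}(d)$ iterations. This does not affect the single-step statement $\kappa_1<1$, but it matters for how the lemma is used.
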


\noindent Note $\tan \theta' < \tan \theta$ implies that $\theta' < \theta$ or equivalently $\dtp{\hat{\mu}_t'}{ \hat{\mu}_t^* } > \dtp{ \hat{\mu}_t }{ \hat{\mu}_t^* }$.
Thus, the above lemma shows that by taking a gradient step in the DDPM objective, the angle between $\mu_t$ and $\mu_t^*$ decreases. By iterating this, we obtain the following lemma:
\begin{lemma}[Informal, see Lemma~\ref{lemma:2-mog-const-sep-inner-product-const} for more details]
\label{lemma:large-inner-product-main-paper}
    Running gradient descent from a random initialization on the DDPM objective $L_t(s_\mu)$ for $t = O(\log d)$ gives $\mu_t$ for which $\dtp{\hat{\mu}_t}{\hat{\mu}_t^*}$ is $\Omega(1)$. %
\end{lemma}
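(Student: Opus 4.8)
The plan is to feed the two one-step facts already isolated — the gradient approximation $-\nabla L_t(s_t)\approx 2\mu^*_t\mu_t^{*\top}\mu_t-3\norm{\mu_t}^2\mu_t$ (Lemma~\ref{lemma:grad-equivalence}) and the angle contraction $\tan\theta'\le\max(\kappa_1\tan\theta,\kappa_2)$ (Lemma~\ref{lemma:angle-decrease-main-paper}) — into an induction along the whole GD trajectory, using Gaussian anti-concentration for the random start and a separate argument to keep $\norm{\mu_t}$ in a range where those lemmas stay valid. Concretely, take $\mu^{(0)}\sim\calN(0,\sigma^2\Id)$ with $\sigma$ chosen (via a sample estimate of $\norm{\mu^*}$) so that $\norm{\mu^{(0)}_t}\asymp\norm{\mu^*_t}$. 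The component of $\mu^{(0)}$ along $\h\mu^*_t$ is a one-dimensional Gaussian independent of the orthogonal part, so with probability $1-1/\mathrm{poly}(d)$ we get $\abs{\dtp{\h\mu^{(0)}_t}{\h\mu^*_t}}\ge 1/\mathrm{poly}(d)$, i.e.\ $\tan\theta^{(0)}\le\mathrm{poly}(d)$, while $\norm{\mu^{(0)}_t}$ concentrates around its mean; thus w.h.p.\ the first iterate lies in a ``good region'' $\mathcal{R}=\{\tan\theta\le\mathrm{poly}(d),\ \norm{\mu_t}\in[\ell,u]\}$.

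The next step is to show $\mathcal{R}$ is (essentially) forward invariant. By Lemma~\ref{lemma:grad-equivalence}, on $\mathcal{R}$ one step of GD gives $\mu'_t\approx\rb{(1-3\eta\norm{\mu_t}^2)\Id+2\eta\,\mu^*_t\mu_t^{*\top}}\mu_t$ up to an additive $\mathrm{poly}(1/d)$ error — an unnormalized power-iteration step on a matrix whose top eigenvector is $\h\mu^*_t$. Decomposing $\mu_t$ along and orthogonal to $\h\mu^*_t$, the norm evolves as $\norm{\mu'_t}\approx\rb{1-3\eta\norm{\mu_t}^2+2\eta\norm{\mu^*_t}^2\dtp{\h\mu_t}{\h\mu^*_t}^2}\norm{\mu_t}$: for a small step size $\eta$ (calibrated to $\norm{\mu^*_t}^2$) it moves by at most a $1\pm o(1)$ factor per step, and while $\theta$ is still bad the $-3\eta\norm{\mu_t}^2$ term makes it drift downward — but only slowly, since that term itself shrinks quadratically in $\norm{\mu_t}$, so over the $\mathrm{poly}(d)$ steps we run the total shrinkage stays within a $\mathrm{poly}(d)$ factor. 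Taking $[\ell,u]$ to be a correspondingly wide window around $\norm{\mu^*_t}$, and taking $t=O(\log d)$ large enough that the $\mathrm{poly}(1/d)$ error is negligible next to the genuine update of scale $\Theta(\eta\norm{\mu^*_t}^2\norm{\mu_t})$ even at the bottom of the window, makes $\mathcal{R}$ invariant and makes Lemma~\ref{lemma:angle-decrease-main-paper} applicable at every step.

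It then remains to iterate the angle recursion. From $\tan\theta^{(h+1)}\le\max\rb{\kappa_1\tan\theta^{(h)},\kappa_2}$ with $\kappa_1<1$ and $\kappa_2\le1/\mathrm{poly}(d)$ we get $\tan\theta^{(h)}\le\max\rb{\kappa_1^h\tan\theta^{(0)},\kappa_2}$; since $\tan\theta^{(0)}\le\mathrm{poly}(d)$, after $h=\Theta(\log d/\log(1/\kappa_1))=\mathrm{poly}(d)$ steps we reach $\tan\theta^{(h)}=O(1)$, i.e.\ $\dtp{\h\mu^{(h)}_t}{\h\mu^*_t}=\cos\theta^{(h)}=\Omega(1)$, and this persists for all $h\le H$ (in fact the correlation climbs to $1-1/\mathrm{poly}(d)$). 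To pass from the population gradient used above to the empirical one, a uniform concentration bound over the bounded trajectory shows $\mathrm{poly}(d)/\epsilon^2$ samples make the empirical-gradient error $\mathrm{poly}(1/d)$ as well, so nothing in the argument changes.

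I expect the main obstacle to be exactly this coupled control of norm and angle. The angle contraction rate $\kappa_1$ is only of the form $1-\Theta(\eta\norm{\mu^*_t}^2)$, so driving $\tan\theta$ from $\mathrm{poly}(d)$ down to $O(1)$ takes $\approx\log d/(\eta\norm{\mu^*_t}^2)$ steps, and during all of them the norm is bleeding downward, so one must rule out $\norm{\mu_t}$ collapsing out of the usable window — ``usable'' meaning the $\mathrm{poly}(1/d)$ gradient error is still dominated by the real update. This is genuinely delicate because at noise scale $t=O(\log d)$ the scale $\norm{\mu^*_t}=\norm{\mu^*}e^{-t}$ may itself be as small as $1/\mathrm{poly}(d)$, so the step size, the window radius, and the choice of $t$ (which controls both $\norm{\mu^*_t}$ and the size of the gradient error) all have to be calibrated against this polynomially small quantity rather than against a constant.
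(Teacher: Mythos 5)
Your proposal is correct and follows essentially the same route as the paper: a $1/d$ correlation at random initialization (Lemma~\ref{lemma:random-init-similarity}), the power-iteration approximation of Lemma~\ref{lemma:2-mog-high-noise-grad-equivalence}, the one-step angle recursion of Lemma~\ref{lemma:projection-angle-decrease} iterated for roughly $\log d/(\eta\norm{\mu^*_t}^2)$ steps, with empirical-gradient concentration folded in --- exactly the skeleton of Lemma~\ref{lemma:2-mog-const-sep-inner-product-const}. The only real divergence is the norm bookkeeping: the paper neither maintains a two-sided window $\norm{\mu_t}\asymp\norm{\mu^*_t}$ nor rescales the initialization; it only shows the upper bound $\norm{\mu_t^{(h)}}\le 1/B^2$ is preserved by each step (Lemma~\ref{lemma:gradient-update-bound}) and anchors the contraction rate $\kappa_1\le(1+\eta/(2B^6))^{-1}$ to the lower bound $\norm{\mu^*_t}\ge 1/B^3$ fixed by the choice of noise scale $t$, absorbing the statistical error $\Tilde{\epsilon}\lesssim d\epsilon/\norm{\mu_t}$ by taking $\epsilon\lesssim 1/(d^2B^9)$; your ``slow bleeding'' analysis of the downward drift of $\norm{\mu_t}$ is a more explicit justification of the lower bound on $\norm{\mu_t}$ that this last absorption step implicitly uses, but is not otherwise needed for the argument.
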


Note that we cannot keep running gradient descent at this high noise scale and hope to achieve $\mu$ such that $\norm{ \mu - \mu^* }$ is $O(\epsilon)$. This is because Lemma~\ref{lemma:angle-decrease-main-paper} can only guarantee that the angle between $\mu_t$ and $\mu_t^*$ is $O(\epsilon)$, but this does not imply $\norm{ \mu - \mu^* }$ is $O(\epsilon)$.  Instead, as described in Part II, we will proceed with a smaller noise scale.

\paragraph{Part II: Analysis of low noise regime and connection to EM.} In the low noise regime, we run Algorithm~\ref{alg:denoise} using the output from Part I as our initialization.  Our analysis here shows that whenever the initialization $\mu_t$ satisfies the condition of $\dtp{\hat{\mu}_t}{\hat{\mu}_t^*}$ being $\Omega(1)$, $\norm{\mu_t - \mu_t^*}$ contracts after every gradient step. To start with, we show that the result of a \emph{population} gradient step on the DDPM objective $L_t(s_\mu)$ results in the following: 
\begin{align*}
    \mu'_t &= (1 - \eta) \mu_t + \eta \,\mbb{E}_{x \sim \mc{N}(\mu_t^*, \Id)}[ \tanh (\mu_t^\top x) x ] + \eta G(\mu_t, \mu_t^*),
\end{align*}
where $\mu'_t$ is the parameter after a gradient step, $\eta$ is the learning rate, and function $G$ is given by 
\begin{align*}
    G(\mu, \mu^*) = \mbb{E}_{x \sim \mc{N}(\mu^*, \Id)} [ - \frac{1}{2} \tanh''( \mu^\top x ) \norm{ \mu }^2 x + \tanh'( \mu^\top x ) \mu^\top x x  -  \tanh'( \mu^\top x ) \mu ].
\end{align*}
Note we use the population gradient here only for simplicity; in the Appendix we show that empirical estimates of the gradient suffice. 
After some calculation, we can show that
\begin{align}
\label{eq:large-noise-grad-contract}
    \norm{ \mu'_t - \mu^*_t } \leq (1 - \eta) \norm{ \mu_t - \mu^*_t } + \eta \| \mbb{E}_{x \sim \mc{N}(\mu_t^*, \Id)}[ \tanh (\mu_t^\top x) x ] - \mu_t^* \| + \eta \norm{ G(\mu_t, \mu^*_t) }\,.
\end{align}
Using Fact~\ref{fact:em-update}, we know that $\mbb{E}_{x \sim \mc{N}(\mu_t^*, \Id)}[ \tanh (\mu_t^\top x) x ]$ is precisely the result of one step of EM starting from $\mu_t$, and it is known~\cite{daskalakis2017ten} that the EM update contracts the distance between $\mu_t$ and $\mu_t^*$ as follows:
\begin{align}
\label{eq:em-contraction}
    \| \mbb{E}_{x \sim \mc{N}(\mu_t^*, \Id)}[ \tanh (\mu_t^\top x) x ] - \mu_t^* \| \leq \lambda_1 \norm{ \mu_t - \mu_t^* } \;\;\; \text{for some } \; \lambda_1 < 1
\end{align} 
It remains to control the second term in Eq.~\eqref{eq:large-noise-grad-contract}, for which we prove the following: %
\begin{lemma}[Informal, see Lemma~\ref{lemma:multi-dimension-G-contraction} for more details]
\label{lemma:G-contraction-main}
When $\norm{\mu^*} = \Omega(1)$ and the noise scale $t = O(1)$, then for every $\mu$ with $\dtp{\hat{\mu}}{\hat{\mu}^*}$ being $\Omega(1)$, the following inequality holds:
    \begin{align*}
        \| G(\mu_t, \mu_t*) \| \leq \lambda_2 \norm{ \mu_t - \mu_t^* } \;\;\; \text{for some} \;\; \lambda_2 < 1\,.
    \end{align*}
\end{lemma}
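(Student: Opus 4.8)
The plan is to show that the error function $G(\cdot,\mu_t^*)$ \emph{vanishes} at the ground truth and is Lipschitz with an exponentially small Lipschitz constant on the region of interest, so that the claimed inequality is simply the mean value inequality $\norm{G(\mu_t,\mu_t^*)} = \norm{G(\mu_t,\mu_t^*) - G(\mu_t^*,\mu_t^*)} \le \lambda_2\,\norm{\mu_t - \mu_t^*}$. For the first point I would use the exact identity from the main text, $\mu_t' = (1-\eta)\mu_t + \eta\, M(\mu_t) + \eta\, G(\mu_t,\mu_t^*)$, where $M(\mu) = \mathbb{E}_{x\sim\calN(\mu_t^*,\Id)}[\tanh(\mu^\top x)x]$ is exactly the EM M-step of Fact~\ref{fact:em-update}. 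On the one hand, $s_{\mu_t^*} = \nabla_x\ln q_t$ is the \emph{global} minimizer of $L_t$ (Lemma~\ref{lemma:score-fn-calculation} and the discussion after it), so the population gradient step fixes $\mu_t^*$; on the other hand, population EM fixes $\mu_t^*$, i.e.\ $M(\mu_t^*) = \mu_t^*$, equivalently $\mathbb{E}[\tanh u + \tanh' u] = 1$ for $u\sim\calN(\norm{\mu_t^*}^2,\norm{\mu_t^*}^2)$ (a one-line Stein's-lemma computation). Substituting $\mu_t = \mu_t^*$ into the identity and cancelling then forces $G(\mu_t^*,\mu_t^*) = 0$.

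\textbf{Bounding the Jacobian.} Every summand of $G$ carries a factor $\tanh^{(k)}(\mu^\top x)$ with $k\in\{1,2\}$, and differentiating in $\mu$ only raises $k$ by one or adds one more factor of $x$; since $\tanh',\tanh'',\tanh'''$ decay like $e^{-2\abs{z}}$, the key point is that $u := \mu^\top x \sim \calN(\mu^\top\mu_t^*,\,\norm{\mu}^2)$ is concentrated far from the origin. Under the lemma's hypotheses, $\langle\hat\mu,\hat\mu_t^*\rangle = \Omega(1)$ and $\norm{\mu} = \Theta(\norm{\mu_t^*})$, so $\mu^\top\mu_t^* = \norm{\mu}\norm{\mu_t^*}\langle\hat\mu,\hat\mu_t^*\rangle = \Omega(\norm{\mu_t^*}^2)$, whence $\mathbb{E}[\abs{\tanh^{(k)}(u)}\cdot\abs{u}^j] \le e^{-\Omega(\norm{\mu_t^*}^2)}$ for any fixed $j$. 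The one subtlety is the summands containing $x$ or $xx^\top$: to avoid a spurious factor of $d$, I would decompose $x = \tfrac{u}{\norm{\mu}^2}\mu + x^\perp$ with $x^\perp\perp\mu$, and note that conditioned on $u$ the vector $x^\perp$ equals a fixed vector ($\mu_t^*$ projected onto $\mu^\perp$) plus a mean-zero Gaussian $g^\perp$ independent of $u$, so $g^\perp$ enters only through $\mathbb{E}[g^\perp g^{\perp\top}] = \Pi_{\mu^\perp}$, whose operator norm is $1$ (not $d$). Combining, $\norm{\nabla_\mu G(\mu,\mu_t^*)}_{\mathrm{op}} \le e^{-\Omega(\norm{\mu_t^*}^2)}\cdot\mathrm{poly}(\norm{\mu_t^*}) =: \lambda_2$, which is $<1$ once $\norm{\mu^*} > c$ for a large enough absolute constant $c$; crucially this is dimension-free.

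\textbf{Conclusion.} Both $\mu_t^*$ and $\mu_t$ lie in the region $\{\norm{\mu} = \Theta(\norm{\mu_t^*}),\ \langle\hat\mu,\hat\mu_t^*\rangle = \Omega(1)\}$; since $\langle\mu_t,\mu_t^*\rangle \ge 0$, a short computation shows the whole segment $\mu_t^* + s(\mu_t - \mu_t^*)$, $s\in[0,1]$, stays in (a mild enlargement of) this region, so the Jacobian bound applies along it. The mean value inequality then gives $\norm{G(\mu_t,\mu_t^*)} \le \lambda_2\norm{\mu_t - \mu_t^*}$ with $\lambda_2 < 1$. Feeding this into~\eqref{eq:large-noise-grad-contract} together with the EM contraction~\eqref{eq:em-contraction} yields $\norm{\mu_t' - \mu_t^*} \le (1 - \eta(1 - \lambda_1 - \lambda_2))\norm{\mu_t - \mu_t^*}$, where $\lambda_1 + \lambda_2 < 1$ for $c$ large, which is the contraction at the heart of the low-noise analysis (Lemma~\ref{lemma:l2-contraction-main-paper}).

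\textbf{Main obstacle.} The delicate step is the Jacobian bound: I must track the polynomial-in-$\norm{\mu_t^*}$ prefactors carefully enough to be certain the $e^{-\Omega(\norm{\mu_t^*}^2)}$ decay wins \emph{uniformly} over the allowed ranges of $\langle\hat\mu,\hat\mu_t^*\rangle$ and $\norm{\mu}$, and — most importantly — to confirm that no factor of $d$ enters through the $x$-dependent summands, which is precisely what the parallel/orthogonal splitting of $x$ relative to $\mu$ takes care of. If controlling the entire segment turns out to be awkward, an alternative is to split into the case $\norm{\mu_t - \mu_t^*}$ bounded below by a constant — handled by the cruder estimate $\norm{G(\mu_t,\mu_t^*)} \le e^{-\Omega(\norm{\mu_t^*}^2)}\mathrm{poly}(\norm{\mu_t^*})$ obtained directly from the same decomposition — and $\norm{\mu_t - \mu_t^*}$ small, where $\norm{\mu_t} = \Theta(\norm{\mu_t^*})$ automatically and the Jacobian argument applies verbatim.
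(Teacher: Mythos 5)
Your proposal is correct in substance but takes a genuinely different route from the paper's. Both arguments hinge on the same two ingredients — $G$ vanishes on the diagonal (stationarity of the DDPM objective at $\mu_t^*$ plus the EM fixed-point identity), and all terms of $G$ carry a factor $\tanh^{(k)}$, $k\ge 1$, which is exponentially small because $u=\mu^\top x$ concentrates $\Omega(\norm{\mu_t^*})$ standard deviations away from the origin — but they deploy them differently. The paper (Lemmas~\ref{lemma:G-contraction} and~\ref{lemma:multi-dimension-G-contraction}) first reduces to one dimension: it projects $G(\mu_t,\mu_t^*)$ onto $\hat\mu_t$, onto $\muhp_t$, and onto the remaining directions (where the projection vanishes identically); the $\hat\mu_t$-component is handled by anchoring at $G(a,a)=0$ and applying the mean value theorem \emph{in the second argument} $\xi\in[a,b]$ with explicit Gaussian-integral bounds on $\abs{dG(a,\xi)/d\xi}$, while the $\muhp_t$-component is handled directly by factoring out $\dtp{\muhp_t}{\mu_t^*}=-\dtp{\muhp_t}{\mu_t-\mu_t^*}$ and bounding the remaining scalar expectation (Lemma~\ref{lemma:small-expectation}). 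You instead anchor at $G(\mu_t^*,\mu_t^*)=0$ and apply the mean value inequality \emph{in the first argument} along the segment from $\mu_t^*$ to $\mu_t$, bounding the operator norm of $\nabla_\mu G$ in $\R^d$; your parallel/orthogonal splitting of $x$ relative to $\mu$ plays the same dimension-killing role as the paper's choice of basis $\{\hat\mu_t,\muhp_t,v_3,\dots\}$, and your check that the segment stays in the good region (norm $\Omega(\norm{\mu_t^*})$, correlation $\Omega(\norm{\mu_t^*}^2)$ with $\mu_t^*$) is the analogue of the paper's Lemma~\ref{lemma:mu-in-interval}-style region control. Your route buys a cleaner conceptual statement (local Lipschitzness of $G$ in $\mu$ near the diagonal, dimension-free) at the price of having to bound a full Jacobian and control the whole segment; the paper's route avoids any segment argument and only ever estimates scalar 1D integrals, at the price of the somewhat ad hoc two-direction decomposition. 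One caveat: your hypothesis $\norm{\mu}=\Theta(\norm{\mu_t^*})$ is not implied by the informal statement's $\dtp{\hat\mu}{\hat\mu^*}=\Omega(1)$ alone, but it matches the condition $\norm{\mu_t}\in[c,\tfrac{4}{3}\dtp{\hat\mu_t}{\mu_t^*}]$ imposed in the paper's formal version, so this is a presentational rather than a mathematical gap; the remaining work in your plan (the explicit $e^{-\Omega(\norm{\mu_t^*}^2)}\cdot\mathrm{poly}(\norm{\mu_t^*})$ estimates) is exactly the computation the paper carries out in 1D.
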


Combining Eq.~\eqref{eq:em-contraction} and Lemma~\ref{lemma:G-contraction-main} with Eq.~\eqref{eq:large-noise-grad-contract}, we have
\begin{align}
\label{eq:one-step-contraction}
    \norm{ \mu'_t - \mu^*_t } \leq (1 - \eta(1 - \lambda_1 - \lambda_2) ) \norm{ \mu_t - \mu^*_t }.
\end{align}
We can set parameters to ensure that $\lambda_1 + \lambda_2 < 1$ and therefore that $\norm{ \mu_t - \mu^*_t }$ contracts with each gradient step. Applying Lemma~\ref{lemma:G-contraction-main} and Eq.~\eqref{eq:one-step-contraction}, we obtain the following lemma summarizing the behavior of gradient descent on the DDPM objective in the low noise regime. 
\begin{lemma}[Informal]
\label{lemma:l2-contraction-main-paper}
    For any $\epsilon > 0$ and for the noise scale $t = O(1)$, starting from an initialization $\mu_t$ for which $\dtp{\hat{\mu}_t}{\hat{\mu}_t^*}=\Omega(1)$, running gradient descent on the DDPM objective $L_t(s_{\mu})$ will give us mean parameter $\Tilde{\mu}$ such that $\| \Tilde{\mu} - \mu^* \| \leq O(\epsilon)$.
\end{lemma}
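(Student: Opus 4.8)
The plan is to iterate the one-step bound in Eq.~\eqref{eq:one-step-contraction} while maintaining, as an inductive invariant, that the current iterate $\mu_t$ satisfies $\dtp{\hat{\mu}_t}{\hat{\mu}_t^*} = \Omega(1)$ together with $\norm{\mu_t} = O(1)$, so that both the EM contraction of Eq.~\eqref{eq:em-contraction} (borrowed from \cite{daskalakis2017ten,xu2016global}) and the error bound of Lemma~\ref{lemma:G-contraction-main} apply at every step. Granting the invariant, Eq.~\eqref{eq:one-step-contraction} gives $\norm{\mu_t^{(k+1)} - \mu_t^*} \le (1 - \eta(1-\lambda_1-\lambda_2))\,\norm{\mu_t^{(k)} - \mu_t^*}$ with $1-\lambda_1-\lambda_2 = \Omega(1)$, so after $H = O\big(\tfrac{1}{\eta(1-\lambda_1-\lambda_2)}\log(1/\epsilon)\big) = \mathrm{poly}(d)\log(1/\epsilon)$ steps we reach $\norm{\mu_t^{(H)} - \mu_t^*} \le \epsilon'$ for any target $\epsilon' > 0$. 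Since $t = O(1)$ and $\mu = \mu_t\exp(t)$, rescaling gives $\norm{\Tilde{\mu} - \mu^*} = \exp(t)\,\norm{\mu_t^{(H)} - \mu_t^*} = O(\epsilon')$; taking $\epsilon' = \Theta(\epsilon)$ proves the population-gradient version.

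First I would establish the base case and the persistence of the invariant. At initialization $\dtp{\hat{\mu}_t}{\hat{\mu}_t^*} = \Omega(1)$ is given by hypothesis (it is the output of Part~I), and the bound $\norm{\mu_t} = O(1)$ can be arranged from that output or built into the initialization, using that $\norm{\mu^*}$, hence $\norm{\mu_t^*} = \Theta(1)$, is estimable from samples. To propagate the invariant, I would split into two regimes. Once $\norm{\mu_t - \mu_t^*} \le \tfrac12\norm{\mu_t^*}$ the invariant is self-sustaining, since then $\dtp{\hat{\mu}_t}{\hat{\mu}_t^*} \ge \tfrac{\norm{\mu_t^*} - \norm{\mu_t - \mu_t^*}}{\norm{\mu_t^*} + \norm{\mu_t - \mu_t^*}} \ge \tfrac13$ and $\norm{\mu_t} = \Theta(1)$ automatically; and $\norm{\mu_t - \mu_t^*}$ only shrinks, so this regime is absorbing. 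For the earlier iterations I would use that the gradient step is, up to the controlled terms, the convex combination $(1-\eta)\mu_t + \eta\,(\text{EM update}) + \eta\, G(\mu_t,\mu_t^*)$, together with the fact that the angle to $\mu_t^*$ is non-increasing under the population EM map for two Gaussians; the mixing weight $(1-\eta)$ preserves the angle and the perturbation $G$ moves the direction by only $O(\eta\,\mathrm{poly}(1/d))$ per step, which over $H = \mathrm{poly}(d,1/\epsilon)$ steps is dominated by the $\Omega(1)$ slack in the invariant.

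Next I would pass from population to empirical gradients. Over the compact region of parameter space cut out by the invariant, a covering-number argument combined with sub-exponential concentration of the per-sample DDPM gradient shows that with $n = \mathrm{poly}(d)/\epsilon^2$ samples the empirical gradient is within additive error $\xi = O(\epsilon)$ of $-\nabla L_t(s_\mu)$ at all iterates simultaneously, with high probability. The empirical step then satisfies $\norm{\mu_t^{(k+1)} - \mu_t^*} \le (1-\eta(1-\lambda_1-\lambda_2))\norm{\mu_t^{(k)} - \mu_t^*} + \eta\,\xi$, whose iterates converge to an $O(\xi/(1-\lambda_1-\lambda_2)) = O(\epsilon)$ ball around $\mu_t^*$; since $\norm{\mu_t^*} = \Omega(1) \gg \epsilon$ this extra error does not break the invariant. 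Rescaling by $\exp(t) = \Theta(1)$ yields $\norm{\Tilde{\mu} - \mu^*} = O(\epsilon)$, giving the full statement.

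The step I expect to be the main obstacle is the persistence of $\dtp{\hat{\mu}_t}{\hat{\mu}_t^*} = \Omega(1)$ through the intermediate iterations, before $\norm{\mu_t - \mu_t^*}$ has contracted into the self-sustaining regime: the $\ell_2$-contraction of Eq.~\eqref{eq:one-step-contraction} on its own does not obviously prevent the direction $\hat{\mu}_t$ from drifting (for example if $\norm{\mu_t}$ shrank), so one must genuinely invoke the monotonicity of the angle under the EM map and carefully account for how the $G$-error and the $(1-\eta)$-mixing perturb it. This is precisely where the low-noise analysis cannot treat Eq.~\eqref{eq:one-step-contraction} as a black box and must instead import structural facts from the existing EM analyses of \cite{daskalakis2017ten,xu2016global}.
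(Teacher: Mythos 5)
Your overall skeleton matches the paper's: decompose the gradient step as $(1-\eta)\mu_t + \eta(\text{EM update}) + \eta G$, borrow the EM contraction from \cite{daskalakis2017ten,xu2016global}, bound $G$ via Lemma~\ref{lemma:G-contraction-main}, iterate the resulting one-step contraction, and absorb an $O(\epsilon)$ empirical-gradient error (the paper does this with per-step sub-exponential concentration, Lemma~\ref{lemma:sample-complexity-k-mog}, rather than your covering argument, but that difference is immaterial). The gap is exactly at the step you flag: maintaining the warm-start condition along the trajectory. Two problems. First, the invariant you propagate (angle $\Omega(1)$ and $\norm{\mu_t}=O(1)$) is weaker than what the contraction of $G$ actually needs: the formal version of Lemma~\ref{lemma:G-contraction-main} (Lemma~\ref{lemma:multi-dimension-G-contraction}) requires $\norm{\mu_t}\in[c,\tfrac{4}{3}\dtp{\hat\mu_t}{\mu_t^*}]$ for a large constant $c$ — both a \emph{lower} bound on $\norm{\mu_t}$ (so the $\tanh$-derivative terms decay) and an upper bound relative to the projection onto $\mu_t^*$. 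You note the danger that $\norm{\mu_t}$ could shrink but give no mechanism ruling it out. Second, your mechanism for the intermediate phase rests on the claim that $G$ perturbs the direction by only $O(\eta\,\mathrm{poly}(1/d))$ per step; at low noise this is false — the only available bound is $\norm{G(\mu_t,\mu_t^*)}\le 0.01\norm{\mu_t-\mu_t^*}$, a constant-order quantity (the $\mathrm{poly}(1/d)$ gradient-approximation error belongs to the \emph{high-noise} regime), and even granting it, a per-step drift of that size summed over $H=\mathrm{poly}(d,1/\epsilon)$ steps need not be dominated by an $\Omega(1)$ slack.

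The paper closes this gap differently: rather than treating the update as ``EM plus a small directional perturbation,'' it proves angle monotonicity directly for the \emph{full} gradient-descent update (Lemma~\ref{lemma:angle-decrease}, via showing $\dtp{\hat\mu^{(r)\perp}}{\mu^{(r+1)}}\ge 0$ and $\cot\alpha_r>\cot\beta_r$), and combines this with explicit control of the step size and of $\norm{\mu_t^{(h)}}$ (Lemmas~\ref{lemma:update-bound} and~\ref{lemma:positive-bound}) to establish the inductive invariant $\norm{\mu_t^{(h)}}\in[c,\tfrac{4}{3}\dtp{\hat\mu_t^{(h)}}{\mu_t^*}]$ (Lemma~\ref{lemma:mu-in-interval}). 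This invariant-maintenance argument is the bulk of the technical work in the low-noise analysis, and your proposal asserts it on the basis of an unsupported quantitative estimate rather than proving it; your ``absorbing regime'' once $\norm{\mu_t-\mu_t^*}\le\tfrac12\norm{\mu_t^*}$ is also slightly too weak to imply the needed upper bound $\norm{\mu_t}\le\tfrac43\dtp{\hat\mu_t}{\mu_t^*}$ (one needs roughly $\norm{\mu_t-\mu_t^*}\lesssim\tfrac14\norm{\mu_t}$), though that is a fixable constant-factor issue.
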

Combining Lemma~\ref{lemma:large-inner-product-main-paper} and Lemma~\ref{lemma:l2-contraction-main-paper}, we obtain our first main result, Theorem~\ref{thm:mo2g-const-sep}, for learning mixtures of two Gaussians with constant separation. For the full technical details, see Appendix~\ref{appsec:proof-2-mog-const-sep}.  

\section{Extensions: small separation and more Components}

\subsection{Mixtures of two Gaussians with small separation}

In this section, we briefly sketch how the ideas from Section~\ref{sec:mo2g-const-sep} can be extended to give our second main result, namely on learning mixtures of two Gaussians even with \emph{small separation}. We defer the full technical details to Appendix~\ref{appsec:proof-2-mog-small-sep}.

\begin{theorem}
\label{thm:2-mog-small-sep}
    Suppose a mixture of two Gaussians has mean parameter $\mu^*$ that satisfies $\| \mu^* \| = \Omega( \frac{1}{ \mathrm{poly} (d)} )$. Then, for any $\epsilon > 0$, there exists a modification of Algorithm~\ref{alg:denoise} that provides an estimate $\mu$ such that $\| \mu - \mu^* \| \leq O(\epsilon)$ with high probability. Moreover, the algorithm has time and sample complexity $\mathrm{poly}(d)/\epsilon^8$ (see Theorem~\ref{thm:mo2g-small-sep-appendix} for more precise quantitative bounds).
\end{theorem}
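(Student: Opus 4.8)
The plan is to dispense with the two-stage (power iteration followed by EM) scheme and instead run a single pass of \emph{projected} gradient descent at a low-but-nonzero noise level, using the projection onto a ball of the correct radius to play the role of the normalization step in power iteration; this is exactly what upgrades angular convergence into Euclidean convergence even when $\|\mu^*\|$ is only $\Omega(1/\mathrm{poly}(d))$. Concretely, first estimate $R \triangleq \|\mu^*\|$ from samples, e.g. via $\mathbb{E}_{X\sim q}\|X\|^2 = d + R^2$ (or from the top eigenvalue of the empirical covariance); under the assumed lower bound on $R$ this needs only $\mathrm{poly}(d)/\epsilon^2$ samples to yield $\widehat R$ with $|\widehat R - R|$ a small enough constant times $\epsilon$, and we set $\widehat R_{t_1} \triangleq \widehat R\exp(-t_1)$ as an estimate of $\|\mu^*_{t_1}\|$. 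Then, for a noise level $t_1 = O(1)$, start from a random Gaussian initialization projected onto the centered ball $B(0,\widehat R_{t_1})$ and run gradient descent on the empirical DDPM objective $L_{t_1}$, projecting each iterate back onto $B(0,\widehat R_{t_1})$. (We may assume $\epsilon \le 2R$, since otherwise the all-zero estimate already satisfies the theorem.)

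The analysis reuses the machinery of Section~\ref{sec:mo2g-const-sep}. By Lemma~\ref{lemma:grad-equivalence}, a population step is $\mu_{t_1}' \approx \bigl((1-3\eta\|\mu_{t_1}\|^2)\,\Id + 2\mu_{t_1}^*\mu_{t_1}^{*\top}\bigr)\mu_{t_1}$ up to a small additive error, i.e. one step of unnormalized power iteration on a PSD matrix whose top eigenvector is $\hat\mu^*_{t_1}$. The key point is that projection onto the centered ball only rescales a vector and so leaves its direction unchanged; hence the angular contraction of Lemma~\ref{lemma:angle-decrease-main-paper} ($\tan\theta' \le \max(\kappa_1\tan\theta,\kappa_2)$ with $\kappa_1 < 1$ and $\kappa_2$ small, which holds throughout because the matrix has a strictly dominant positive top eigenvalue) carries over verbatim to the projected iteration, while the expansive direction $2\mu_{t_1}^*\mu_{t_1}^{*\top}$ guarantees that once the angle is small the iterate is pushed to the boundary, pinning its norm at $\widehat R_{t_1}$. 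Iterating from the $\Omega(1/\sqrt d)$ correlation of the random initialization drives the angle below $\theta_{\mathrm{final}} \triangleq \Theta(\epsilon/R)$ within $\mathrm{poly}(d)\cdot\mathrm{polylog}(1/\epsilon)$ steps, and then a chord-length bound gives $\|\mu_{t_1} - \mu^*_{t_1}\| \le \widehat R_{t_1}\theta_{\mathrm{final}} + |\widehat R_{t_1} - \|\mu^*_{t_1}\|| = O(\exp(-t_1)\,\epsilon)$; since $t_1 = O(1)$, rescaling by $\exp(t_1)$ outputs $\mu$ with $\|\mu - \mu^*\| \le \epsilon$.

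The real work --- and the main obstacle --- is error control in the regime where the signal is polynomially small in $d$. The top eigengap of the power-iteration matrix is only $\Theta(\eta\|\mu^*_{t_1}\|^2) = \eta/\mathrm{poly}(d)$, so to drive the angular floor $\kappa_2$ below $\theta_{\mathrm{final}} = \Theta(\epsilon/R)$ one needs the per-step gradient error to be below roughly $\epsilon\cdot\mathrm{poly}(1/d)$, simultaneously over all iterates. This requires (i) sharpening the deterministic bias in Lemma~\ref{lemma:grad-equivalence} at the chosen $t_1$ --- here one exploits that $\mu^{*\top}_{t_1}x$ is typically $\mathrm{poly}(1/d)$-small, so $\tanh$ is nearly linear and the Taylor remainders governing the bias are themselves $O(\|\mu^*_{t_1}\|^3)$ and shrink with the signal --- and (ii) a uniform concentration bound for the empirical gradient of the $\tanh$-network $s_\mu(x) = \tanh(\mu^\top x)\mu - x$ over $B(0,\widehat R_{t_1})$; the heavier-than-Gaussian tails of the relevant integrand (products of coordinates of $x$ with bounded $\tanh$-derivatives) degrade the statistical rate, and propagating the dependence on $d$, $1/\epsilon$, and the failure probability through a union bound over the $\mathrm{poly}(d,1/\epsilon)$ iterations is what yields the stated $\mathrm{poly}(d)/\epsilon^8$ bound rather than the optimal $1/\epsilon^2$. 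I expect (i)--(ii) to be the technically demanding part; the geometric skeleton is essentially the power-iteration analysis of Section~\ref{sec:mo2g-const-sep} with the projection step substituted for the (now absent) EM phase.
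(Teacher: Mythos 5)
Your overall skeleton matches the paper's: estimate $R \approx \|\mu^*\|$ from $\frac1n\sum_i\|x_i\|^2 - d$, run \emph{projected} gradient descent on the DDPM objective, reuse the power-iteration angular contraction (Lemma~\ref{lemma:projection-angle-decrease}, which indeed survives projection since projection onto a centered ball preserves direction), and convert small angle plus matched norm into small Euclidean error. However, there is a genuine gap in the choice of noise scale. You run the projected iteration at a \emph{fixed} low noise level $t_1 = O(1)$, whereas the paper runs it at the high, accuracy-dependent scale $t = \log(d/\varepsilon)$ (Lemma~\ref{lemma:convergence-small-sep-2-mog}), and this difference is not cosmetic. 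The deviation between the true negative gradient and the power-iteration surrogate in Lemma~\ref{lemma:2-mog-high-noise-grad-equivalence} is a \emph{deterministic bias} of order $\sqrt{d}\,\|\mu_t\|^5 + \|\mu_t\|^3\|\mu_t^*\|^2$ coming from the higher-order Taylor terms of $\tanh$; it cannot be reduced by taking more samples or by ``sharpening constants.'' In the angular recursion this bias produces the floor $\kappa_2 \gtrsim \mathrm{poly}(d)\,\|\mu_t\|^2$, which at a fixed $t_1$ is a quantity depending only on $d$ and $\|\mu^*_{t_1}\|$ and does \emph{not} shrink as $\epsilon \to 0$. Consequently the achievable angular error, and hence the Euclidean error $\approx \|\mu^*\|\cdot\kappa_2 \sim \mathrm{poly}(d)\,\|\mu^*\|^3$, is bottlenecked at a fixed value; your target $\theta_{\mathrm{final}} = \Theta(\epsilon/R)$ is unreachable for small $\epsilon$. (Your remark that the remainders ``shrink with the signal'' is true of their absolute size, but what matters is their size relative to the eigengap $\|\mu^*_{t_1}\|^2$, and that ratio is independent of $\epsilon$.) A secondary issue: the theorem only assumes $\|\mu^*\| = \Omega(1/\mathrm{poly}(d))$ with a priori bound $B = \mathrm{poly}(d)$, and at $t_1=O(1)$ with $\|\mu^*\|$ of constant or larger size the Taylor expansion of $\tanh$ around $0$ underlying Lemma~\ref{lemma:grad-equivalence} is simply invalid, so the power-iteration approximation you invoke need not hold at all.

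The paper's fix is precisely to couple the noise scale to the target accuracy: at $t = \log(d/\varepsilon)$ one has $\|\mu_t\|, \|\mu^*_t\| \lesssim \varepsilon B/d$, so the bias terms (which are higher order in these norms) shrink \emph{faster} than the signal term $\|\mu^*_t\|^2\mu_t$, giving $\kappa_2 \lesssim d^2B^2\varepsilon$; then the angle can be driven to $O(\mathrm{poly}(d,B)\varepsilon)$, the projection pins the norm, and a final rescaling of $\varepsilon$ gives error $\epsilon$. If you replace your fixed $t_1=O(1)$ with such an $\epsilon$-dependent high noise scale (and keep your projection and concentration steps), your argument essentially becomes the paper's proof; as written, though, the fixed-noise version does not prove the stated theorem.
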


\paragraph{Algorithm modification.} The algorithm that we analyze runs \emph{projected} gradient descent on the DDPM objective but only in the high noise scale regime where $t = O(\log d)$. At each step, we project the iterate $\mu$ to the ball of radius $R$, where $R$ is an empirical estimate for $\norm{\mu^*}$ obtained by drawing samples $x_1,\ldots,x_n$ from the data distribution and forming $R\triangleq (\frac{1}{n}\sum_{i=1}^n \norm{x_i}^2 - d)^{1/2}$.

\paragraph{Proof sketch.}
Lemma~\ref{lemma:angle-decrease-main-paper} and Lemma~\ref{lemma:large-inner-product-main-paper} apply even when the components of the mixture have small separation, and they show that running gradient descent on the DDPM objective results in $\mu_t$ and $\mu_t^*$ being $O(1)$ close in angular distance.  Although our analysis can be extended to show that gradient descent can achieve $O(\epsilon)$ angular distance, this does not guarantee that $\norm{\mu_t - \mu_t^*}$ is $O(\epsilon)$.  If in addition to being $O(\epsilon)$ close in angular distance, we also have that $\norm{\mu_t} \approx \norm{\mu_t^*}$, then it is easy to see that $\norm{\mu_t - \mu_t^*}$ is indeed $O(\epsilon)$.

Observe that if $R$ is approximately equal to $\norm{\mu_t^*}$, then the projection step in our algorithm ensures that our final estimate $\mu_t$ satisfies this additional condition of $\norm{\mu_t} \approx \norm{\mu^*_t}$. It is not hard to show that $R^2$ is an unbiased estimate of $\norm{\mu^*_t}^2$, so standard concentration shows that taking $n = \mathrm{poly}(d, \frac{1}{\epsilon})$ suffices to ensure that $R$ is sufficiently close to $\norm{\mu_t^*}$.

\subsection{Mixtures of $K$ Gaussians, from a warm start}

In this section, we state our third main result, namely for learning mixtures of $K$ Gaussians given by Eq.~\eqref{eq:mog-k-pdf} from a warm start, and provide an overview of how the ideas from Section~\ref{sec:mo2g-const-sep} can be extended to obtain this result.

\begin{assumption}
\label{asm:mog-k-seperation}
(Separation)
    For a mixture of $K$ Gaussians given by Eq.~\eqref{eq:mog-k-pdf}, for every pair of components $i, j \in \{1,2, \ldots, K\}$ with $i \neq j$, we assume that the separation between their means $\| \mu_{i}^* - \mu_{j}^* \|\ge C\sqrt{\log (\min(K, d)})$ for sufficiently large absolute constant $C > 0$.
\end{assumption}

\begin{assumption}
\label{asm:mog-k-initialization}
(Initialization)
    For each component $i \in \{1,2,\ldots,K\}$, we have an initialization $\mu_i^{(0)}$ with the property that $\|  \mu_i^{(0)} - \mu_i^* \| \leq C'\sqrt{\log (\min(K, d) ) }$ for sufficiently small absolute constant $C' > 0$.
\end{assumption}

\begin{theorem}
\label{thm:mog-k-main}
    Suppose a mixture of $K$ Gaussians satisfies Assumption~\ref{asm:mog-k-seperation}. Then, for any $\epsilon = \Theta(1/\mathrm{poly}(d))$, running gradient descent on the DDPM objective (Algorithm~\ref{alg:denoise}) at low noise scale $t=O(1)$ and with initialization satisfying Assumption~\ref{asm:mog-k-initialization} results in mean parameters $\{ \mu_i \}_{i=1}^K$ such that with high probability, the mean parameters satisfy $\| \mu_i - \mu_i^* \| \leq O(\epsilon)$ for each $i \in \{1, 2, \ldots, K\}$. Additionally, the runtime and sample complexity of the algorithm is $\mathrm{poly}(d, 1/\epsilon)$ (see Theorem~\ref{thm:k-mog-appendix} for more precise quantitative bounds).
\end{theorem}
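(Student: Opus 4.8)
The plan is to exhibit the low-noise DDPM gradient step as a perturbation of one step of gradient EM on the time-$t$ mixture $q_t = \frac{1}{K}\sum_{i=1}^K \mathcal{N}(\mu_{i,t}^*,\mathrm{Id})$, to bound the perturbation using the separation and warm-start assumptions, and then to invoke the known local convergence of gradient EM for well-separated mixtures of $K$ Gaussians \cite{kwon2020algorithm,segol2021improved}. Since $t = O(1)$, the rescaling $\mu_{i,t} = e^{-t}\mu_i$ only rescales the effective learning rate, so it suffices to prove contraction of the parameters $\mu_{i,t}$ toward $\mu_{i,t}^*$. Concretely, I would first compute the population gradient $\nabla_{\mu_i} L_t(s_{\theta_t})$ in closed form, differentiating the residual $s_{\theta_t}(X_t) + Z_t/\sqrt{1-e^{-2t}}$ through both the softmax weights $w_{j,t}(\cdot)$ and the linear dependence $\mu_{j,t} = e^{-t}\mu_j$, exactly as in the $K=2$ derivation of Section~\ref{sec:mo2g-const-sep} (Lemma~\ref{lemma:update}). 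Using the standard softmax Jacobian, the negative gradient simplifies to $-\nabla_{\mu_{i,t}}L_t(s_{\theta_t}) = \mathbb{E}_{X\sim q_t}[w_{i,t}(X)(X-\mu_{i,t})] + G_i(\theta_t,\theta_t^*)$, where the leading term is precisely the population gradient-EM direction for component $i$ (Fact~\ref{fact:gradient-EM}) and $G_i$ collects the curvature-of-weights terms together with the contribution of the $-x$ term in the architecture; $G_i$ is the $K$-component analogue of the function $G$ from the $K=2$ analysis.

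The heart of the argument is a bound $\max_i \|G_i(\theta_t,\theta_t^*)\| \le \lambda_2 \max_i \|\mu_{i,t}-\mu_{i,t}^*\|$ with $\lambda_2 < 1$, the analogue of Lemma~\ref{lemma:G-contraction-main}. Here Assumption~\ref{asm:mog-k-seperation} is used to show that, for $X$ drawn from $q_t$, the weight $w_{i,t}(X)$ is exponentially (in the squared separation) close to the indicator that $X$ lies in the Voronoi cell of $\mu_{i,t}^*$, so that all cross-component quantities $w_{i,t}(X)w_{j,t}(X)$ with $i\ne j$ are $\mathrm{poly}(1/\min(K,d))$-small; Assumption~\ref{asm:mog-k-initialization} is used to keep each iterate within $O(\sqrt{\log\min(K,d)})$ of the corresponding true center, a property maintained by induction over the gradient steps. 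The bulk of this step is a set of Gaussian tail and moment estimates controlling the residual $G_i$ once these two localizations are in force.

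By the local convergence theorems for gradient EM on $\Omega(\sqrt{\log\min(K,d)})$-separated mixtures \cite{kwon2020algorithm,segol2021improved}, under Assumptions~\ref{asm:mog-k-seperation} and~\ref{asm:mog-k-initialization} the gradient-EM step contracts: $\max_i\|\mu_{i,t}' - \mu_{i,t}^*\| \le (1-\eta(1-\lambda_1))\max_i\|\mu_{i,t}-\mu_{i,t}^*\|$ for some $\lambda_1 < 1$. Adding the previous bound via the triangle inequality yields a one-step contraction of the DDPM update with factor $1 - \eta(1-\lambda_1-\lambda_2)$; choosing $C$ large and $C'$ small in the assumptions makes $\lambda_1 + \lambda_2 < 1$ and simultaneously preserves the warm-start radius, which closes the induction. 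Iterating $H = O(\tfrac{1}{\eta}\log(1/\epsilon))$ times brings $\max_i\|\mu_{i,t}-\mu_{i,t}^*\|$ below $\epsilon\,e^{-t}$, hence $\max_i\|\mu_i - \mu_i^*\|\le O(\epsilon)$. Finally I would replace the population gradients by empirical averages over $n = \mathrm{poly}(d,1/\epsilon)$ samples: the summands $w_{i,t}(x)(x-\mu_{i,t})$ and the integrand of $G_i$ are sub-exponential, so uniform concentration over the relevant parameter region holds up to a $\mathrm{poly}(1/d)$-scale error, which is consistent with the statement restricting to $\epsilon = \Theta(1/\mathrm{poly}(d))$.

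I expect the contraction bound on $G_i$ to be the crux. For $K > 2$ the softmax ties all components together, so one cannot decouple into pairwise subproblems as cleanly as in the $K=2$ case, and it does not suffice merely to bound $G_i$ — one must show it is a genuine contraction in the joint quantity $\max_i\|\mu_{i,t}-\mu_{i,t}^*\|$ while keeping every iterate inside the warm-start ball throughout the run. Making the constants $\lambda_1,\lambda_2$ and the radius constraint mutually consistent is precisely where the $\sqrt{\log\min(K,d)}$ separation (rather than $\Omega(1)$) is needed, so that all cross-component error terms are polynomially small in $\min(K,d)$.
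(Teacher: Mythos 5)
Your proposal follows essentially the same route as the paper: write the population DDPM gradient as the gradient-EM direction plus residual terms (Lemma~\ref{lemma:GD-update-k-mog}), control the residuals using the smallness of the cross-weights $w_{i,t}w_{j,t}$ and $w_{i,t}(1-w_{i,t})$ under Assumptions~\ref{asm:mog-k-seperation} and~\ref{asm:mog-k-initialization} (Lemmas~\ref{lemma:EM-initialization-properties} and~\ref{lemma:population-GD-gradient-EM}), invoke the local convergence of gradient EM from \cite{kwon2020algorithm,segol2021improved}, and finish with sub-exponential concentration of the empirical gradient (Lemma~\ref{lemma:sample-complexity-k-mog}). The one caveat is that your stated crux, a \emph{relative} bound $\|G_i\| \le \lambda_2 \max_i\|\mu_{i,t}-\mu_{i,t}^*\|$, is stronger than what the localization and cross-weight estimates you describe actually deliver (they give only an \emph{absolute} $1/\mathrm{poly}(d)$ bound, since they do not exploit that $G_i$ vanishes at $\theta^*$); this is also all the paper proves, and it suffices because the additive $1/\mathrm{poly}(d)$ error is absorbed by the restriction $\epsilon = \Theta(1/\mathrm{poly}(d))$ in Theorem~\ref{thm:k-mog-appendix}, so your argument goes through once you replace the claimed contraction-type bound on $G_i$ with this additive-error bookkeeping.
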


\noindent We provide a brief overview of the proof here. The full proof can be found in Appendix~\ref{appsec:proof-k-mog}.

\paragraph{Proof sketch.} For learning mixtures of two Gaussians, we have already established the connection between gradient descent on the DDPM objective and the EM algorithm. For mixtures of $K$ Gaussians, however, in a local neighborhood around the ground truth parameters $\theta^*$, we show an equivalence between {\em gradient} EM (recall gradient EM performs one-step of gradient descent on the ``M'' step objective) and gradient descent on the DDPM objective. In particular, our main technical lemma (Lemma \ref{lemma:population-GD-gradient-EM}) shows that for noise scale $t=O(1)$ and for any $\mu_i$ that satisfies $\| \mu_{i} - \mu_{i}^* \| \leq O(\sqrt{\log (\min(K, d) ) } )$, we have 
\begin{align*}
    -\nabla_{\mu_{i,t}} L_t( s_{\theta_t} ) \approx \mbb{E}_{X_t}[ w_{i, t}(X_t)(X_t - \mu_{i, t}) ]. 
\end{align*}
Therefore, the iterate $\mu_{i, t}'$ resulting from a single gradient step on the DDPM objective $L_t( s_{\theta_t} )$ with learning rate $\eta$ is given by
\begin{align}
\label{eq:GD-mog-k}
    \mu_{1, t}' = \mu_{1,t} - \eta \nabla_{\mu_{1,t}} L_t( s_{\theta_t} ) \approx \mu_{1,t} + \eta \,\mbb{E}_{X_t}[ w_{1, t}(X_t)(X_t - \mu_{1, t}) ].
\end{align}
Comparing Fact~\ref{fact:gradient-EM} with Eq.~\eqref{eq:GD-mog-k}, we see the correspondence in this regime between gradient descent on the DDPM objective to gradient EM. Using this connection and an existing local convergence guarantee from the gradient EM literature \citep{segol2021improved, kwon2020algorithm}, we obtain our main theorem for mixtures of $K$ Gaussians. Full details can be found in Appendix~\ref{appsec:proof-k-mog}. 

\section*{Acknowledgments}

SC would like to thank Sinho Chewi, Khashayar Gatmiry, Frederic Koehler, and Holden Lee for enlightening discussions on sampling and score estimation.

\bibliographystyle{alpha}
\bibliography{references}

\newcommand{\etalchar}[1]{$^{#1}$}
\begin{thebibliography}{SDWMG15}

\bibitem[AG22]{alaoui2022bounds}
Ahmed~El Alaoui and Jason Gaitonde.
\newblock Bounds on the covariance matrix of the sherrington-kirkpatrick model.
\newblock {\em arXiv preprint arXiv:2212.02445}, 2022.

\bibitem[BDD23]{benton2023error}
Joe Benton, George Deligiannidis, and Arnaud Doucet.
\newblock Error bounds for flow matching methods.
\newblock {\em arXiv preprint arXiv:2305.16860}, 2023.

\bibitem[BDJ{\etalchar{+}}22]{bakshi2022robustly}
Ainesh Bakshi, Ilias Diakonikolas, He~Jia, Daniel~M Kane, Pravesh~K Kothari,
  and Santosh~S Vempala.
\newblock Robustly learning mixtures of k arbitrary gaussians.
\newblock In {\em Proceedings of the 54th Annual ACM SIGACT Symposium on Theory
  of Computing}, pages 1234--1247, 2022.

\bibitem[BKM17]{blei2017variational}
David~M Blei, Alp Kucukelbir, and Jon~D McAuliffe.
\newblock Variational inference: A review for statisticians.
\newblock {\em Journal of the American statistical Association},
  112(518):859--877, 2017.

\bibitem[BM11]{bayati2011dynamics}
Mohsen Bayati and Andrea Montanari.
\newblock The dynamics of message passing on dense graphs, with applications to
  compressed sensing.
\newblock {\em IEEE Transactions on Information Theory}, 57(2):764--785, 2011.

\bibitem[BMR22]{BloMroRak22genmodel}
Adam Block, Youssef Mroueh, and Alexander Rakhlin.
\newblock Generative modeling with denoising auto-encoders and {L}angevin
  sampling.
\newblock {\em arXiv preprint 2002.00107}, 2022.

\bibitem[BRST21]{bruna2021continuous}
Joan Bruna, Oded Regev, Min~Jae Song, and Yi~Tang.
\newblock Continuous lwe.
\newblock In {\em Proceedings of the 53rd Annual ACM SIGACT Symposium on Theory
  of Computing}, pages 694--707, 2021.

\bibitem[BS15]{belkin2015polynomial}
Mikhail Belkin and Kaushik Sinha.
\newblock Polynomial learning of distribution families.
\newblock {\em SIAM Journal on Computing}, 44(4):889--911, 2015.

\bibitem[BWY17]{balakrishnan2017statistical}
Sivaraman Balakrishnan, Martin~J Wainwright, and Bin Yu.
\newblock Statistical guarantees for the em algorithm: From population to
  sample-based analysis.
\newblock 2017.

\bibitem[CCL{\etalchar{+}}23a]{chen2023probability}
Sitan Chen, Sinho Chewi, Holden Lee, Yuanzhi Li, Jianfeng Lu, and Adil Salim.
\newblock The probability flow ode is provably fast.
\newblock {\em arXiv preprint arXiv:2305.11798}, 2023.

\bibitem[CCL{\etalchar{+}}23b]{chen2023sampling}
Sitan Chen, Sinho Chewi, Jerry Li, Yuanzhi Li, Adil Salim, and Anru Zhang.
\newblock Sampling is as easy as learning the score: theory for diffusion
  models with minimal data assumptions.
\newblock In {\em The Eleventh International Conference on Learning
  Representations}, 2023.

\bibitem[CDD23]{chen2023restoration}
Sitan Chen, Giannis Daras, and Alexandros~G Dimakis.
\newblock Restoration-degradation beyond linear diffusions: A non-asymptotic
  analysis for ddim-type samplers.
\newblock {\em arXiv preprint arXiv:2303.03384}, 2023.

\bibitem[Cel22]{celentano2022sudakov}
Michael Celentano.
\newblock Sudakov-fernique post-amp, and a new proof of the local convexity of
  the tap free energy.
\newblock {\em arXiv preprint arXiv:2208.09550}, 2022.

\bibitem[CFM21]{celentano2021local}
Michael Celentano, Zhou Fan, and Song Mei.
\newblock Local convexity of the tap free energy and amp convergence for
  z2-synchronization.
\newblock {\em arXiv preprint arXiv:2106.11428}, 2021.

\bibitem[CLL22]{chen2022improved}
Hongrui Chen, Holden Lee, and Jianfeng Lu.
\newblock Improved analysis of score-based generative modeling: user-friendly
  bounds under minimal smoothness assumptions.
\newblock {\em arXiv preprint arXiv:2211.01916}, 2022.

\bibitem[DB22]{DeB22diffusion}
Valentin De~Bortoli.
\newblock Convergence of denoising diffusion models under the manifold
  hypothesis.
\newblock {\em Transactions on Machine Learning Research}, 2022.

\bibitem[DBTHD21]{debetal2021scorebased}
Valentin De~Bortoli, James Thornton, Jeremy Heng, and Arnaud Doucet.
\newblock Diffusion {S}chr\"{o}dinger bridge with applications to score-based
  generative modeling.
\newblock In M.~Ranzato, A.~Beygelzimer, Y.~Dauphin, P.S. Liang, and J.~Wortman
  Vaughan, editors, {\em Advances in Neural Information Processing Systems},
  volume~34, pages 17695--17709. Curran Associates, Inc., 2021.

\bibitem[DHKK20]{diakonikolas2020robustly}
Ilias Diakonikolas, Samuel~B Hopkins, Daniel Kane, and Sushrut Karmalkar.
\newblock Robustly learning any clusterable mixture of gaussians.
\newblock {\em arXiv preprint arXiv:2005.06417}, 2020.

\bibitem[DK20]{diakonikolas2020small}
Ilias Diakonikolas and Daniel~M Kane.
\newblock Small covers for near-zero sets of polynomials and learning latent
  variable models.
\newblock In {\em 2020 IEEE 61st Annual Symposium on Foundations of Computer
  Science (FOCS)}, pages 184--195. IEEE, 2020.

\bibitem[DKS17]{diakonikolas2017statistical}
Ilias Diakonikolas, Daniel~M Kane, and Alistair Stewart.
\newblock Statistical query lower bounds for robust estimation of
  high-dimensional gaussians and gaussian mixtures.
\newblock In {\em 2017 IEEE 58th Annual Symposium on Foundations of Computer
  Science (FOCS)}, pages 73--84. IEEE, 2017.

\bibitem[DKS18]{diakonikolas2018list}
Ilias Diakonikolas, Daniel~M Kane, and Alistair Stewart.
\newblock List-decodable robust mean estimation and learning mixtures of
  spherical gaussians.
\newblock In {\em Proceedings of the 50th Annual ACM SIGACT Symposium on Theory
  of Computing}, pages 1047--1060, 2018.

\bibitem[DMM09]{donoho2009message}
David~L Donoho, Arian Maleki, and Andrea Montanari.
\newblock Message-passing algorithms for compressed sensing.
\newblock {\em Proceedings of the National Academy of Sciences},
  106(45):18914--18919, 2009.

\bibitem[DMM10]{donoho2010message}
David~L Donoho, Arian Maleki, and Andrea Montanari.
\newblock Message passing algorithms for compressed sensing: I. motivation and
  construction.
\newblock In {\em 2010 IEEE information theory workshop on information theory
  (ITW 2010, Cairo)}, pages 1--5. IEEE, 2010.

\bibitem[DS07]{dasgupta2007probabilistic}
Sanjoy Dasgupta and Leonard~J Schulman.
\newblock A probabilistic analysis of em for mixtures of separated, spherical
  gaussians.
\newblock {\em Journal of Machine Learning Research}, 8:203--226, 2007.

\bibitem[DTZ17]{daskalakis2017ten}
Constantinos Daskalakis, Christos Tzamos, and Manolis Zampetakis.
\newblock Ten steps of em suffice for mixtures of two gaussians.
\newblock In {\em Conference on Learning Theory}, pages 704--710. PMLR, 2017.

\bibitem[EAMS22]{el2022sampling}
Ahmed El~Alaoui, Andrea Montanari, and Mark Sellke.
\newblock Sampling from the sherrington-kirkpatrick gibbs measure via
  algorithmic stochastic localization.
\newblock In {\em 2022 IEEE 63rd Annual Symposium on Foundations of Computer
  Science (FOCS)}, pages 323--334. IEEE, 2022.

\bibitem[Eld13]{eldan2013thin}
Ronen Eldan.
\newblock Thin shell implies spectral gap up to polylog via a stochastic
  localization scheme.
\newblock {\em Geometric and Functional Analysis}, 23(2):532--569, 2013.

\bibitem[Eld20]{eldan2020taming}
Ronen Eldan.
\newblock Taming correlations through entropy-efficient measure decompositions
  with applications to mean-field approximation.
\newblock {\em Probability Theory and Related Fields}, 176(3-4):737--755, 2020.

\bibitem[HJA20]{ho2020denoising}
Jonathan Ho, Ajay Jain, and Pieter Abbeel.
\newblock Denoising diffusion probabilistic models.
\newblock {\em Advances in Neural Information Processing Systems},
  33:6840--6851, 2020.

\bibitem[HL18]{hopkins2018mixture}
Samuel~B Hopkins and Jerry Li.
\newblock Mixture models, robustness, and sum of squares proofs.
\newblock In {\em Proceedings of the 50th Annual ACM SIGACT Symposium on Theory
  of Computing}, pages 1021--1034, 2018.

\bibitem[HP15]{hardt2015tight}
Moritz Hardt and Eric Price.
\newblock Tight bounds for learning a mixture of two gaussians.
\newblock In {\em Proceedings of the forty-seventh annual ACM symposium on
  Theory of computing}, pages 753--760, 2015.

\bibitem[Kab03]{kabashima2003cdma}
Yoshiyuki Kabashima.
\newblock A cdma multiuser detection algorithm on the basis of belief
  propagation.
\newblock {\em Journal of Physics A: Mathematical and General}, 36(43):11111,
  2003.

\bibitem[Kan21]{kane2021robust}
Daniel~M Kane.
\newblock Robust learning of mixtures of gaussians.
\newblock In {\em Proceedings of the 2021 ACM-SIAM Symposium on Discrete
  Algorithms (SODA)}, pages 1246--1258. SIAM, 2021.

\bibitem[KC20]{kwon2020algorithm}
Jeongyeol Kwon and Constantine Caramanis.
\newblock The em algorithm gives sample-optimality for learning mixtures of
  well-separated gaussians.
\newblock In {\em Conference on Learning Theory}, pages 2425--2487. PMLR, 2020.

\bibitem[KMV10]{kalai2010efficiently}
Adam~Tauman Kalai, Ankur Moitra, and Gregory Valiant.
\newblock Efficiently learning mixtures of two gaussians.
\newblock In {\em Proceedings of the forty-second ACM symposium on Theory of
  computing}, pages 553--562, 2010.

\bibitem[KSS18]{kothari2018robust}
Pravesh~K Kothari, Jacob Steinhardt, and David Steurer.
\newblock Robust moment estimation and improved clustering via sum of squares.
\newblock In {\em Proceedings of the 50th Annual ACM SIGACT Symposium on Theory
  of Computing}, pages 1035--1046, 2018.

\bibitem[LL22]{liu2022clustering}
Allen Liu and Jerry Li.
\newblock Clustering mixtures with almost optimal separation in polynomial
  time.
\newblock In {\em Proceedings of the 54th Annual ACM SIGACT Symposium on Theory
  of Computing}, pages 1248--1261, 2022.

\bibitem[LLT22]{leelutan22sgmpoly}
Holden Lee, Jianfeng Lu, and Yixin Tan.
\newblock Convergence for score-based generative modeling with polynomial
  complexity.
\newblock In Alice~H. Oh, Alekh Agarwal, Danielle Belgrave, and Kyunghyun Cho,
  editors, {\em Advances in Neural Information Processing Systems}, 2022.

\bibitem[LLT23]{lee2023convergence}
Holden Lee, Jianfeng Lu, and Yixin Tan.
\newblock Convergence of score-based generative modeling for general data
  distributions.
\newblock In {\em International Conference on Algorithmic Learning Theory},
  pages 946--985. PMLR, 2023.

\bibitem[LM23]{liu2023robustly}
Allen Liu and Ankur Moitra.
\newblock Robustly learning general mixtures of gaussians.
\newblock {\em Journal of the ACM}, 2023.

\bibitem[LWCC23]{li2023towards}
Gen Li, Yuting Wei, Yuxin Chen, and Yuejie Chi.
\newblock Towards faster non-asymptotic convergence for diffusion-based
  generative models.
\newblock {\em arXiv preprint arXiv:2306.09251}, 2023.

\bibitem[LWYL22]{liu2022let}
Xingchao Liu, Lemeng Wu, Mao Ye, and Qiang Liu.
\newblock Let us build bridges: understanding and extending diffusion
  generative models.
\newblock {\em arXiv preprint arXiv:2208.14699}, 2022.

\bibitem[MM09]{mezard2009information}
Marc Mezard and Andrea Montanari.
\newblock {\em Information, physics, and computation}.
\newblock Oxford University Press, 2009.

\bibitem[MV10]{moitra2010settling}
Ankur Moitra and Gregory Valiant.
\newblock Settling the polynomial learnability of mixtures of gaussians.
\newblock In {\em 2010 IEEE 51st Annual Symposium on Foundations of Computer
  Science}, pages 93--102. IEEE, 2010.

\bibitem[MV21]{montanari2021estimation}
Andrea Montanari and Ramji Venkataramanan.
\newblock Estimation of low-rank matrices via approximate message passing.
\newblock {\em The Annals of Statistics}, 49(1), 2021.

\bibitem[MW23]{montanari2023posterior}
Andrea Montanari and Yuchen Wu.
\newblock Posterior sampling from the spiked models via diffusion processes.
\newblock {\em arXiv preprint arXiv:2304.11449}, 2023.

\bibitem[Pea94]{pearson1894contributions}
Karl Pearson.
\newblock Contributions to the mathematical theory of evolution.
\newblock {\em Philosophical Transactions of the Royal Society of London. A},
  185:71--110, 1894.

\bibitem[Pid22]{Pid22sgm}
Jakiw Pidstrigach.
\newblock Score-based generative models detect manifolds.
\newblock In S.~Koyejo, S.~Mohamed, A.~Agarwal, D.~Belgrave, K.~Cho, and A.~Oh,
  editors, {\em Advances in Neural Information Processing Systems}, volume~35,
  pages 35852--35865. Curran Associates, Inc., 2022.

\bibitem[RBL{\etalchar{+}}22]{latent_diffusion}
Robin Rombach, Andreas Blattmann, Dominik Lorenz, Patrick Esser, and Bj{\"o}rn
  Ommer.
\newblock High-resolution image synthesis with latent diffusion models.
\newblock In {\em Proceedings of the IEEE/CVF Conference on Computer Vision and
  Pattern Recognition}, pages 10684--10695, 2022.

\bibitem[RDN{\etalchar{+}}22]{dalle2}
Aditya Ramesh, Prafulla Dhariwal, Alex Nichol, Casey Chu, and Mark Chen.
\newblock Hierarchical text-conditional image generation with clip latents.
\newblock {\em arXiv preprint arXiv:2204.06125}, 2022.

\bibitem[RV17]{regev2017learning}
Oded Regev and Aravindan Vijayaraghavan.
\newblock On learning mixtures of well-separated gaussians.
\newblock In {\em 2017 IEEE 58th Annual Symposium on Foundations of Computer
  Science (FOCS)}, pages 85--96. IEEE, 2017.

\bibitem[SCS{\etalchar{+}}22]{imagen}
Chitwan Saharia, William Chan, Saurabh Saxena, Lala Li, Jay Whang, Emily
  Denton, Seyed Kamyar~Seyed Ghasemipour, Burcu~Karagol Ayan, S~Sara Mahdavi,
  Rapha~Gontijo Lopes, et~al.
\newblock Photorealistic text-to-image diffusion models with deep language
  understanding.
\newblock {\em arXiv preprint arXiv:2205.11487}, 2022.

\bibitem[SDWMG15]{sohl_thermodynamics}
Jascha Sohl-Dickstein, Eric Weiss, Niru Maheswaranathan, and Surya Ganguli.
\newblock Deep unsupervised learning using nonequilibrium thermodynamics.
\newblock In {\em International Conference on Machine Learning}, pages
  2256--2265. PMLR, 2015.

\bibitem[SE19]{ncsn}
Yang Song and Stefano Ermon.
\newblock Generative modeling by estimating gradients of the data distribution.
\newblock {\em Advances in Neural Information Processing Systems}, 32, 2019.

\bibitem[SN21]{segol2021improved}
Nimrod Segol and Boaz Nadler.
\newblock Improved convergence guarantees for learning gaussian mixture models
  by em and gradient em.
\newblock {\em Electronic journal of statistics}, 15(2):4510--4544, 2021.

\bibitem[SOAJ14]{suresh2014near}
Ananda~Theertha Suresh, Alon Orlitsky, Jayadev Acharya, and Ashkan Jafarpour.
\newblock Near-optimal-sample estimators for spherical gaussian mixtures.
\newblock {\em Advances in Neural Information Processing Systems}, 27, 2014.

\bibitem[SSDK{\etalchar{+}}20]{song2020score}
Yang Song, Jascha Sohl-Dickstein, Diederik~P Kingma, Abhishek Kumar, Stefano
  Ermon, and Ben Poole.
\newblock Score-based generative modeling through stochastic differential
  equations.
\newblock {\em arXiv preprint arXiv:2011.13456}, 2020.

\bibitem[Ver]{vershynin-HDP-book}
Roman Vershynin.
\newblock {\em High-Dimensional Probability: An Introduction with Applications
  in Data Science}.
\newblock Number~47 in Cambridge Series in Statistical and Probabilistic
  Mathematics. {Cambridge University Press}.

\bibitem[VW04]{VEMPALA2004841}
Santosh Vempala and Grant Wang.
\newblock A spectral algorithm for learning mixture models.
\newblock {\em Journal of Computer and System Sciences}, 68(4):841--860, 2004.
\newblock Special Issue on FOCS 2002.

\bibitem[WJ{\etalchar{+}}08]{wainwright2008graphical}
Martin~J Wainwright, Michael~I Jordan, et~al.
\newblock Graphical models, exponential families, and variational inference.
\newblock {\em Foundations and Trends{\textregistered} in Machine Learning},
  1(1--2):1--305, 2008.

\bibitem[WY22]{WibYan22sgm}
Andre Wibisono and Kaylee~Y. Yang.
\newblock Convergence in {KL} divergence of the inexact {L}angevin algorithm
  with application to score-based generative models.
\newblock {\em arXiv preprint 2211.01512}, 2022.

\bibitem[XHM16]{xu2016global}
Ji~Xu, Daniel~J Hsu, and Arian Maleki.
\newblock Global analysis of expectation maximization for mixtures of two
  gaussians.
\newblock {\em Advances in Neural Information Processing Systems}, 29, 2016.

\bibitem[YYS17]{yan2017convergence}
Bowei Yan, Mingzhang Yin, and Purnamrita Sarkar.
\newblock Convergence analysis of gradient em for multi-component gaussian
  mixture.
\newblock {\em arXiv preprint arXiv:1705.08530}, 2017.

\bibitem[ZLS20]{zhao2020statistical}
Ruofei Zhao, Yuanzhi Li, and Yuekai Sun.
\newblock Statistical convergence of the em algorithm on gaussian mixture
  models.
\newblock 2020.

\end{thebibliography}

\newpage
\appendix

\numberwithin{equation}{section}
\counterwithin{theorem}{section}

\paragraph{Roadmap.} In Appendix~\ref{sec:proof-in-preliminaries}, we provide proofs of some simple lemmas from Section~\ref{subsec:preliminaries} and some basic inequalities. In Appendix~\ref{app:prelims} we give additional notation and preliminaries. In Appendix~\ref{appsec:proof-2-mog-const-sep}, we provide the proof details for Theorem~\ref{thm:mo2g-const-sep}, our result on learning mixtures of two Gaussians with constant separation. In Appendix~\ref{appsec:proof-2-mog-small-sep}, we extend this analysis to give a proof of Theorem~\ref{thm:2-mog-small-sep}, our result on learning mixtures of two Gaussians with small separation. In Appendix~\ref{appsec:proof-k-mog}, we provide the proof details for Theorem~\ref{thm:mog-k-main}, our result on learning mixtures of $K$ Gaussians. Finally, in Appendix~\ref{sec:additional} we give further deferred proofs. %

\section{Proofs from Section~\ref{subsec:preliminaries}}
\label{sec:proof-in-preliminaries}

\subsection{$X_t$ is a mixture of Gaussians}
\label{subsec:xt-mog}

\begin{proof}[Proof of Lemma \ref{lemma:Xt-mog-pdf}]
Suppose $X_0$ is mixture of $K$ Gaussians with density function given by
\begin{align*}
    q_0 &= \frac{1}{K} \sum_{i=1}^K \mc{N}(\mu_{i, 0}^*, \Id)
\end{align*}
We know that $X_t = \exp(-t) X_0 + \sqrt{1 - \exp(-2t)} Z_t$ where $Z_t \sim \mc{N}(0, \Id).$ Then, by change of variable of probability density, we have
\begin{align*}
    \text{pdf of } \exp(-t) X_0 &= \frac{1}{K} \sum_{i=1}^K \mc{N}( \mu_{i, 0}^* \exp(-t) , \exp(-2t)\cdot\Id ) \\
     \text{pdf of } \sqrt{1 - \exp(-2t)} Z_t &= \mc{N}( 0,  (1 - \exp(-2t))\cdot \Id)\,. 
\end{align*}
Combining these, we have
\begin{align*}
    q_t(X_t) = \frac{1}{K} \sum_{i=1}^K \mc{N}( \mu_{i, t}^* , I) \hspace{1cm} \text{where} \hspace{1cm} \mu_{i, t}^* = \mu_{i, 0}^* \exp(-t)\,,
\end{align*}
as claimed.
\end{proof}

\subsection{Derivation of score function}
\label{subsec:score-function-derivation}

\begin{proof}[Proof of Lemma \ref{lemma:score-fn-calculation}]
    For mixtures of $K$ Gaussians in the form of Eq.~\eqref{eq:mog-k-pdf}, the score function at time $t$ is given by
    \begin{align*}
        \nabla \log q_t(x) &= -\frac{ \sum_{i=1}^K  e^{ -\frac{\| x - \mu_{i, t}^* \|^2 }{2} } (x - \mu_{i, t}^* ) }{ \sum_{j=1}^K e^{ -\frac{\|x - \mu_{j, t}^* \|^2 }{2} } } \\ 
        &= \sum_{i=1}^K w_{i, t}^*(x) \mu_{i, t}^* - x \;\; \text{ where } \;\; w_{i, t}^*(x) = \frac{ e^{ -\frac{\| x - \mu_{i, t}^* \|^2 }{2} } }{ \sum_{j=1}^K e^{ -\frac{\|x - \mu_{j, t}^* \|^2 }{2} } }. \\
    \end{align*}
    For mixtures of two Gaussians in the form of Eq.~\eqref{eq:mog-2-pdf}, the score function is given by
    \begin{align}
        \label{eq:original-score-2-mog}
            \nonumber \nabla \log q_t(x) &= w_{1, t}^*(x) \mu_{1,t}^* + w_{2, t}^*(x) \mu_{2, t}^* - x \\ 
            \nonumber &= w_{1, t}^*(x) \mu^* - (1 - w_{1, t}^*(x)) \mu^* - x \\ 
            &= (2w_{1, t}^*(x) - 1) \mu^* - x
    \end{align}
By simplifying $w_{1, t}^*(x)$, we obtain
\begin{align}
    w_{1, t}^*(x) = & \frac{1}{1 + \exp( \frac{\norm{x - \mu^* }^2}{2} -\frac{\norm{x + \mu^* }^2}{2} ) } \\ 
    = & \; \frac{1}{1 + \exp( -2 \mu^{*\top} x )} \\ 
    = & \; \sigma( 2 \mu^{*\top} x ) \label{eq:w1-sigmoid}
\end{align}
where $\sigma( \cdot )$ denotes the sigmoid function. Using Eq.~\eqref{eq:w1-sigmoid} in Eq.~\eqref{eq:original-score-2-mog}, we obtain
\begin{align*}
    \nabla \log q_t(x) = \tanh( \mu^{*\top} x ) \mu^* - x.
\end{align*}
\end{proof} 

\section{Additional notations and preliminaries}
\label{app:prelims}

In this section, we provide additional notations and preliminaries for the proofs to follow. Recall that we use $L_t(s_{\theta_t})$ to denote the population denoising loss at noise scale $t$.
\begin{align*}
    L_t(s_{\theta_t}) = \mbb{E} \Big[ \Big\| s_{\theta_t}(X_t) + \frac{Z_t}{ \sqrt{1 - \exp(-2t)} } \Big\|^2 \Big].
\end{align*}
We use $L_t(s_{\theta_t}(x_0, z_t))$ to denote the denoising loss at noise scale $t$ on a sample $x_0$ from the data distribution and $z_t$ from the standard Gaussian distribution: 
\begin{align*}
    L_t(s_{\theta_t}(x_0, z_t)) =  \Big\| s_{\theta_t}(x_t) + \frac{z_t}{ \sqrt{1 - \exp(-2t)} } \Big\|^2,
\end{align*}
where $x_t = \exp(-t) x_0 + \sqrt{1 - \exp(-2t)} z_t$. We use $\alpha_t$ as shorthand notation for $\exp(-t)$ and $\beta_t$ as shorthand notation for $\sqrt{1 - \exp(-2t)}$. 

For mixtures of two Gaussians, we use $B$ to denote the upper bound on $\| \mu^* \|^2$, that is,
\begin{equation}
    \| \mu^* \|^2 \leq B\,.
\end{equation}
Throughout, we assume that $B = \mathrm{poly}(d)$.

For any vector $v$, we use $\hat{v}$ to denote the unit vector along the direction of $v$. For a vector $v$, we use $[v]_i$ to denote the $i^{th}$ coordinate of $v$. Similarly, for a matrix $X$, we use $[X]_i$ to denote the $i^{th}$ row of the matrix. For any positive integer $n$, we use $[n]$ to denote the set $\{1, 2, \ldots, n\}$. We use $\mc{N}(\mu, \sigma^2 \cdot \Id)$ to denote the standard Gaussian with mean $\mu$ and covariance $\sigma^2 \cdot \Id$. Sometimes, we use a shorter notation $\mc{N}_{\mu}$ to denote $\mc{N}(\mu, \Id)$. For any two quantities $X$ and $Y$ that are both implicitly functions of some parameter $a$ over $\R_{\ge 0}$, we use the shorthand $X \lesssim Y$ and $X = O(Y)$ interchangeably to denote that there exists absolute constant $C > 0$ such that for all $a$ sufficiently large,  $X(a) \leq C Y(a)$. We also use the shorthand $X \gtrsim Y$ and $X = \Omega(Y)$, defined in the obvious way.

Finally, we will use the following standard bounds.

\begin{lemma}[Sub-Gaussian norm, see e.g. \cite{vershynin-HDP-book}]
\label{def:subgaussian}
The sub-Gaussian norm of a random variable $X \in \mbb{R}$, denoted by $\sgnorm{X}$ is defined as 
\begin{align*}
    \sgnorm{X} = \inf \{ t > 0 \; : \; \mbb{E}[ \exp(X^2 / t^2) ] \leq 2 \}.
\end{align*} 
The sub-Gaussian norm has the following properties: 
\begin{enumerate}
    \item (Bounded): Any bounded random variable $X$ (i.e., there is a finite $A$ for which $|X| \leq A$ with probability 1) is sub-Gaussian: $$ \sgnorm{X} \leq \frac{ A }{ \sqrt{\ln 2} }$$
    \item (Centering): If $X$ is a sub-Gaussian random variable, then $X - \mbb{E}[X]$ is also a sub-Gaussian random variable. Specifically, the following holds for some absolute constant $C$.
    $$ \sgnorm{ X - \mbb{E}[X] } \leq C \sgnorm{X}$$
    \item (Moment generating function bound): If $X$ is a sub-Gaussian random variable with $E[X] = 0$, then
    $$ \mbb{E}[ \exp( \lambda X ) ] \leq \exp( C \lambda^2 \sgnorm{X}^2 ) \hspace{5mm} \text{for all $\lambda \in \mbb{R}$},$$ 
    where $C$ is some absolute constant. 
    \item (Sum of sub-Gaussian random variables): If $X_1$ and $X_2$ are mean zero sub-Gaussian random variables, then 
    $$ \sgnorm{X_1 + X_2} \leq  \; \sgnorm{X_1} + \sgnorm{X_2} \,. $$
    \item (Product with a bounded random variable): If $X$ is a sub-Gaussian random variable and $Y$ is a bounded random variable $Y \in [0, 1]$, then 
    $$ \sgnorm{X Y} \leq \sgnorm{X} \,.$$
\end{enumerate}
    
\end{lemma}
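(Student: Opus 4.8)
The plan is to treat $\sgnorm{\cdot}$ as the Orlicz norm associated with the convex function $\psi(u) = \exp(u^2) - 1$, so that $\sgnorm{X} = \inf\{t>0 : \mbb{E}[\exp(X^2/t^2)]\le 2\}$. The first thing I would record is that this infimum is \emph{attained}: the map $t \mapsto \mbb{E}[\exp(X^2/t^2)]$ is nonincreasing on $(0,\infty)$, so the admissible set is an up-interval, and by monotone convergence as $t\downarrow\sgnorm{X}$ one gets $\mbb{E}[\exp(X^2/\sgnorm{X}^2)]\le 2$. This attainment fact is used throughout.

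The \textbf{bounded} and \textbf{product} properties are pointwise arguments. If $|X|\le A$ almost surely, then taking $t = A/\sqrt{\ln 2}$ gives $X^2/t^2\le\ln 2$ pointwise, hence $\mbb{E}[\exp(X^2/t^2)]\le\exp(\ln 2)=2$, so the infimum is at most $A/\sqrt{\ln 2}$. For the product property, when $Y\in[0,1]$ we have $(XY)^2\le X^2$ pointwise (no independence needed), so $\exp((XY)^2/t^2)\le\exp(X^2/t^2)$, and taking expectations shows every admissible $t$ for $X$ is admissible for $XY$; hence $\sgnorm{XY}\le\sgnorm{X}$.

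For the \textbf{sum} property I would apply convexity twice. Writing $a=\sgnorm{X_1}$ and $b=\sgnorm{X_2}$, Jensen for $u\mapsto u^2$ with weights $\tfrac{a}{a+b},\tfrac{b}{a+b}$ gives $\tfrac{(X_1+X_2)^2}{(a+b)^2}\le\tfrac{a}{a+b}\tfrac{X_1^2}{a^2}+\tfrac{b}{a+b}\tfrac{X_2^2}{b^2}$, and a second application of Jensen to $\exp$ yields $\exp\!\rb{\tfrac{(X_1+X_2)^2}{(a+b)^2}}\le\tfrac{a}{a+b}\exp(X_1^2/a^2)+\tfrac{b}{a+b}\exp(X_2^2/b^2)$. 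Taking expectations and using attainment bounds the right side by $2$, so $\sgnorm{X_1+X_2}\le a+b$; note this needs no mean-zero hypothesis. The \textbf{centering} property then follows: by the sum property $\sgnorm{X-\mbb{E}[X]}\le\sgnorm{X}+\sgnorm{\mbb{E}[X]}$, and since $\mbb{E}[X]$ is a constant the bounded property gives $\sgnorm{\mbb{E}[X]}=|\mbb{E}[X]|/\sqrt{\ln 2}$. Finally Jensen applied to $u\mapsto\exp(u^2/\sgnorm{X}^2)$ gives $\exp(\mbb{E}[|X|]^2/\sgnorm{X}^2)\le\mbb{E}[\exp(X^2/\sgnorm{X}^2)]\le 2$, i.e. $|\mbb{E}[X]|\le\mbb{E}[|X|]\le\sqrt{\ln 2}\,\sgnorm{X}$, whence $\sgnorm{X-\mbb{E}[X]}\le 2\sgnorm{X}$ and $C=2$ works.

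The step I expect to be the main obstacle is the \textbf{moment generating function} bound, which requires passing from the Orlicz condition to exponential control of the MGF. I would proceed in three stages. First, Markov's inequality applied to $\exp(X^2/\sgnorm{X}^2)$ yields the tail bound $\Pr[|X|>s]\le 2\exp(-s^2/\sgnorm{X}^2)$. Second, integrating via $\mbb{E}[|X|^p]=\int_0^\infty p s^{p-1}\Pr[|X|>s]\,ds$ gives moment bounds $\mbb{E}[|X|^p]\le(C_1\sqrt{p}\,\sgnorm{X})^p$ for all $p\ge 1$ and an absolute $C_1$. Third, I would Taylor-expand $\mbb{E}[\exp(\lambda X)]=1+\lambda\mbb{E}[X]+\sum_{p\ge 2}\tfrac{\lambda^p\mbb{E}[X^p]}{p!}$, use the mean-zero hypothesis to drop the linear term, bound $|\mbb{E}[X^p]|\le\mbb{E}[|X|^p]$, and apply $p!\ge(p/e)^p$ to show the series is dominated by a geometric series in $\lambda^2\sgnorm{X}^2$, giving $\mbb{E}[\exp(\lambda X)]\le\exp(C\lambda^2\sgnorm{X}^2)$. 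The delicate points are keeping the absolute constants uniform across the three stages and justifying the term-by-term integration and expansion, both routine once the moment bound is in hand.
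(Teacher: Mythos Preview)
Your proposal is correct. The paper does not actually prove this lemma; it simply states the properties with a citation to Vershynin's book and provides no argument of its own, so there is nothing to compare against. Your arguments are sound and follow the standard Orlicz-norm route one finds in that reference: the pointwise proofs of (1) and (5), the two-fold Jensen argument for the triangle inequality in (4) (and you are right that the mean-zero hypothesis stated in the paper is not needed there), the derivation of (2) from (1) and (4) with $C=2$, and the tail--moment--Taylor chain for (3).
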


\begin{lemma}[Sub-exponential norm, see e.g. \cite{vershynin-HDP-book}]
    The sub-exponential norm of a random variable $X \in \mbb{R}$, denoted by $\wnorm{X}_{\psi_1}$ is defined as 
\begin{align*}
    \wnorm{X}_{\psi_1} = \inf \{ t > 0 \; : \; \mbb{E}[ \exp( |X| / t ) ] \leq 2 \}.
\end{align*} 
    The sub-exponential norm has the following properties: 
    \begin{enumerate}
        \item (Sum of sub-exponential distributions): If $X_1$ and $X_2$ are mean-zero sub-exponential random variables, then $X_1 + X_2$ is also a mean-zero sub-exponential variable. Specifically, 
        $$\wnorm{ X_1 + X_2 }_{\psi_1} \leq \sqrt{2} ( \wnorm{ X_1 }_{\psi_1} + \wnorm{ X_2 }_{\psi_1} ) \,.$$ 
        \item (Centering) If $X$ is a sub-exponential random variable, then $X - \mbb{E}[X]$ is sub-exponential with 
        \begin{align*}
            \wnorm{X - \mbb{E}[X]}_{\psi_1} \leq C \wnorm{X}_{\psi_1},
        \end{align*}
        where $C$ is some absolute constant. 
    \end{enumerate}
\end{lemma}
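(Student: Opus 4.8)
The plan is to derive both properties directly from the definition of the Orlicz $\psi_1$-norm, using only H\"older's inequality, Jensen's inequality, and an elementary tail-integration bound; no deeper probabilistic machinery is required. (One could instead simply invoke Propositions~2.7.1 and~2.7.10 of the Vershynin reference, but the self-contained argument is short and also covers the sub-Gaussian lemma stated just before.)

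\emph{Property 1 (subadditivity up to a $\sqrt2$ factor).} Write $a=\wnorm{X_1}_{\psi_1}$, $b=\wnorm{X_2}_{\psi_1}$, and set $t=\sqrt2\,(a+b)$. Using $|X_1+X_2|\le|X_1|+|X_2|$ and then H\"older's inequality with the conjugate exponents $p=(a+b)/a$ and $q=(a+b)/b$,
\begin{align*}
    \mbb{E}\big[\exp(|X_1+X_2|/t)\big]\le \mbb{E}\big[\exp(p|X_1|/t)\big]^{1/p}\,\mbb{E}\big[\exp(q|X_2|/t)\big]^{1/q}.
\end{align*}
Since $p/t=1/(\sqrt2\,a)\le 1/a$ and likewise $q/t\le 1/b$, the two factors are at most $2^{1/p}$ and $2^{1/q}$, and $2^{1/p}\cdot 2^{1/q}=2^{1/p+1/q}=2$. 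Hence $\mbb{E}[\exp(|X_1+X_2|/t)]\le2$, i.e. $\wnorm{X_1+X_2}_{\psi_1}\le t=\sqrt2(a+b)$, while $\mbb{E}[X_1+X_2]=0$ by linearity. (If one wants strict inequality inside the infimum defining $\wnorm{\cdot}_{\psi_1}$, replace $a,b$ by $a+\delta,b+\delta$ and let $\delta\to0$.)

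\emph{Property 2 (centering).} First bound $|\mbb{E}[X]|\le\mbb{E}|X|$ by Jensen. The $\psi_1$ bound gives the tail estimate $\Pr[|X|>s]\le 2\exp(-s/\wnorm{X}_{\psi_1})$, which integrates to $\mbb{E}|X|\lesssim\wnorm{X}_{\psi_1}$ with an explicit absolute constant. A deterministic constant $c$ satisfies $\wnorm{c}_{\psi_1}=|c|/\ln2$, so $\wnorm{\mbb{E}[X]}_{\psi_1}\lesssim\wnorm{X}_{\psi_1}$. Applying Property~1 to $X$ and $-\mbb{E}[X]$ yields $\wnorm{X-\mbb{E}[X]}_{\psi_1}\le\sqrt2\big(\wnorm{X}_{\psi_1}+\wnorm{\mbb{E}[X]}_{\psi_1}\big)\le C\wnorm{X}_{\psi_1}$ for an absolute constant $C$, and $X-\mbb{E}[X]$ is then sub-exponential (indeed mean-zero).

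The only real work is bookkeeping the universal constants so that the displayed $\sqrt2$ in Property~1 and the existential $C$ in Property~2 match the statement; nothing conceptually deep is involved, and the corresponding sub-Gaussian facts in the preceding lemma are handled by the identical template with $\exp(|X|/t)$ replaced throughout by $\exp(X^2/t^2)$.
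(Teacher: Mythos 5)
Your proof is correct, but it follows a different route from the paper's. The paper proves only the sum property, and it does so by passing to the moment-generating-function (Bernstein-condition) characterization of sub-exponential variables: it bounds $\mbb{E}[\exp(\lambda(X_1+X_2))]$ by Cauchy--Schwarz, obtaining an MGF bound with variance proxy $2(\wnorm{X_1}_{\psi_1}^2+\wnorm{X_2}_{\psi_1}^2)$ valid for $\abs{\lambda}\lesssim (\wnorm{X_1}_{\psi_1}+\wnorm{X_2}_{\psi_1})^{-2}$, and then relies on the (constant-losing) equivalence between this MGF condition and the Orlicz norm; the centering property is not proved there at all and is effectively cited from the reference. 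You instead work directly with the Orlicz definition, using $|X_1+X_2|\le|X_1|+|X_2|$ and H\"older with exponents $p=(a+b)/a$, $q=(a+b)/b$; this is more elementary, never invokes the MGF equivalence, never uses the mean-zero hypothesis, and yields the stated constant $\sqrt2$ explicitly (in fact your argument gives genuine subadditivity, constant $1$, since $\wnorm{\cdot}_{\psi_1}$ is a norm — the $\sqrt2$ slack merely spares you the infimum-attainment issue). You also supply a proof of the centering property, which the paper omits. Two cosmetic points: when you "apply Property 1 to $X$ and $-\mbb{E}[X]$," the literal statement of Property 1 assumes mean-zero inputs, so you should say you are applying the subadditivity your H\"older argument establishes (which needs no centering); and the H\"older exponents degenerate when $\wnorm{X_1}_{\psi_1}=0$ or $\wnorm{X_2}_{\psi_1}=0$, a trivial case worth a sentence. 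Neither affects correctness.
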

\begin{proof}
    The proof follows from following the equivalent definition of a sub-exponential random variable: If any random variable $X$ satisfies
    \begin{align*}
        \mbb{E}[ \exp(\lambda X) ] \leq \exp( C \wnorm{ X }_{\psi_1}^2 \lambda^2 ) \;\; \text{for all $\lambda$ such that $\abs{\lambda} \leq \frac{1}{ C \wnorm{ X }_{\psi_1}^2 }$}, 
    \end{align*}
    for some constant $C$, then $X$ is sub-exponential random variable with sub-exponential norm $\wnorm{X}_{\psi_1}$.  
    Then, for any $\abs{\lambda} \leq \frac{1}{2C \max( \wnorm{ X_1 }_{\psi_1}^2, \wnorm{ X_2 }_{\psi_1}^2 ) }$, the MGF of $X_1 + X_2$ is given by
    \begin{align*}
        \mbb{E}[ \exp( \lambda(X_1 + X_2) ) ] &\leq \mbb{E}[ \exp(2 \lambda X_1) ]^{1/2} \mbb{E}[ \exp(2 \lambda X_2) ]^{1/2} \\ 
        &\leq \exp( C \wnorm{ X_1 }_{\psi_1}^2 2\lambda^2   ) \exp( C \wnorm{ X_2 }_{\psi_1}^2 2\lambda^2   ) \\
        &\leq \exp( C \lambda^2 (2 \wnorm{ X_1 }_{\psi_1}^2 + 2 \wnorm{ X_2 }_{\psi_1}^2 ) )\,. \qedhere
    \end{align*}
    Using $\wnorm{ X_1 }_{\psi_1} + \wnorm{ X_2 }_{\psi_1} \geq \max( \wnorm{ X_1 }_{\psi_1}, \wnorm{ X_2 }_{\psi_1} )$, we know that above inequality is true for any $\lambda$ with $| \lambda | \leq \frac{1}{2C ( \wnorm{ X_1 }_{\psi_1} + \wnorm{ X_2 }_{\psi_1} )^2 } \leq \frac{1}{2C \max( \wnorm{ X_1 }_{\psi_1}^2, \wnorm{ X_2 }_{\psi_1}^2 ) }$. This completes the proof. 
\end{proof}

\begin{lemma}[Corollary 2.8.4 in \cite{vershynin-HDP-book}]
\label{lemma:bernstein-inequality}
 (Bernstein's inequality for sub-exponential random variable) Let $X_1, X_2, \ldots, X_N$ be independent, mean zero, sub-exponential random variables. Then, for every $\epsilon \geq 0$, we have
\begin{align*}
    \Pr  \Bigg[ \bigg| \frac{1}{N} \sum_{i=1}^N X_i \bigg| \geq \epsilon \Bigg] \leq 2 \exp \bigg[ -c N \min \Big( \frac{\epsilon}{ \max_i \wnorm{X_i}_{\psi_1} }, \frac{\epsilon^2}{ (\max_i \wnorm{X_i}_{\psi_1} )^2 } \Big) \bigg]
\end{align*}
where $c > 0$ is some absolute constant. 
    
\end{lemma}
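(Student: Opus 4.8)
The plan is the standard Chernoff-bound argument, leveraging the moment generating function (MGF) control for sub-exponential random variables recorded just above. Write $K \triangleq \max_i \wnorm{X_i}_{\psi_1}$ and $S \triangleq \sum_{i=1}^N X_i$. First I would recall the equivalent characterization of sub-exponentiality already invoked in the proof of the sub-exponential-sum lemma: since each $X_i$ is mean zero, there is an absolute constant $C$ with $\mbb{E}[\exp(\lambda X_i)] \le \exp(C\lambda^2 \wnorm{X_i}_{\psi_1}^2) \le \exp(C\lambda^2 K^2)$ for all $\lambda$ with $|\lambda| \le 1/(CK)$. By independence, for such $\lambda$,
\begin{align*}
    \mbb{E}[\exp(\lambda S)] = \prod_{i=1}^N \mbb{E}[\exp(\lambda X_i)] \le \exp\rb{ C N \lambda^2 K^2 }.
\end{align*}

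Next I would apply Markov's inequality to the upper tail: for $\lambda \in (0, 1/(CK)]$,
\begin{align*}
    \Pr\rb{ \tfrac{1}{N}\sum_{i=1}^N X_i \ge \epsilon } = \Pr[S \ge N\epsilon] \le e^{-\lambda N \epsilon}\, \mbb{E}[\exp(\lambda S)] \le \exp\rb{ -N(\lambda \epsilon - C\lambda^2 K^2) }.
\end{align*}
Then I would optimize the exponent $\lambda\epsilon - C\lambda^2 K^2$ over the admissible interval. The unconstrained maximizer is $\lambda^\star = \epsilon/(2CK^2)$, and the argument splits into two regimes: (i) if $\epsilon \le 2K$ (up to constants) then $\lambda^\star$ is admissible, and substituting gives an exponent $\Omega(N\epsilon^2/K^2)$; (ii) if $\epsilon$ is larger, the exponent is increasing on the admissible interval, so I take $\lambda = 1/(CK)$, which yields $N(\epsilon/(CK) - 1/C) = (N\epsilon/K)\cdot\tfrac{1}{C}(1 - K/\epsilon) = \Omega(N\epsilon/K)$ since $K/\epsilon$ is bounded away from $1$ in this regime. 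Combining, in both cases the bound is $\exp\rb{ -c N \min(\epsilon/K,\, \epsilon^2/K^2) }$ for a suitable absolute constant $c > 0$.

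Finally I would run the identical argument on the variables $-X_i$, which are still mean zero with the same sub-exponential norms, to control the lower tail $\Pr[\tfrac1N\sum_i X_i \le -\epsilon]$, and a union bound over the two tails produces the factor of $2$. The only point requiring care is bookkeeping the absolute constants through the two optimization regimes — and reconciling them with the two branches of the $\min$ in the statement — so that a single $c$ works uniformly; near the crossover $\epsilon/K \asymp 1$ the two quantities $\epsilon/K$ and $\epsilon^2/K^2$ agree up to a constant factor, so this is absorbed. No step is a genuine obstacle: the result is a textbook fact, and the MGF estimate it rests on has already been established in the excerpt.
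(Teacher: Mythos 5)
The paper does not prove this lemma at all—it is quoted verbatim from Corollary 2.8.4 of Vershynin's book—and your Chernoff/MGF argument (bound each $\mbb{E}[\exp(\lambda X_i)]$ by $\exp(C\lambda^2 K^2)$ on $|\lambda|\le 1/(CK)$, multiply by independence, optimize $\lambda$ in two regimes, symmetrize for the lower tail) is exactly the standard proof of that corollary and is correct. The only bookkeeping point, which you handle correctly, is that your admissible range $|\lambda|\le 1/(CK)$ is the right normalization (the paper's own restatement of the MGF characterization has a typo with $\wnorm{X}_{\psi_1}^2$ in the denominator), and the constant-factor mismatch at the crossover $\epsilon\asymp K$ is absorbed into $c$ as you note.
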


\section{Learning mixtures of two Gaussians with constant separation}
\label{appsec:proof-2-mog-const-sep}

In this section, we provide the details and proofs for learning mixtures of two Gaussians with constant separation. Our results in this section can be summarized in the following theorem statement. 

\begin{theorem}[Formal version of Theorem \ref{thm:mo2g-const-sep}]
\label{thm:mo2g-const-sep-appendix}
Let $q$ be a mixture of two Gaussians (in the form of Eq.~\eqref{eq:mog-2-pdf}) with mean parameter $\mu^*$ satisfying $\norm{\mu^*} > c$ for some absolute constant $c > 0$. Recalling that $B$ denotes an \emph{a priori} upper bound on $\norm{\mu^*}$, we have that for any $\epsilon \leq \epsilon'$ where $\epsilon' \lesssim \frac{1}{d^2 B^9}$, there exists a procedure satisfying the following. If the procedure is run for at least $\Omega(B^6\log(d/\epsilon))$ iterations with at least $\mathrm{poly}(d,B)/\epsilon^2$ samples from $q$, then it outputs $\Tilde{\mu}$ such that $\wnorm{ \Tilde{\mu} - \mu^* } \leq \epsilon$ with high probability.
\end{theorem}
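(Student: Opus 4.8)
The plan is to analyze the two-stage procedure sketched in Section~\ref{sec:mo2g-const-sep}: first run Algorithm~\ref{alg:denoise} at a high noise scale $t_1 = \Theta(\log d)$ from a random Gaussian initialization $\mu^{(0)}\sim\calN(0,\Id/d)$ for $H_1 = \mathrm{poly}(d)$ gradient steps, then feed its (time-rescaled) output as the warm start for a second run at a low noise scale $t_2 = \Theta(1)$ for $H_2 = O(B^6\log(d/\epsilon))$ steps, and finally report $\widetilde\mu \triangleq \exp(t_2)\,\mu^{(H_2)}_{t_2}$. Since $\exp(t_2)=O(1)$, it suffices to show the final iterate satisfies $\wnorm{\mu_{t_2} - \mu^*_{t_2}} \le \Theta(\epsilon)$. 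Every claim about gradient steps is first established for the population gradient and then transferred to the empirical gradient; I defer the concentration argument to the last step. Recall that the time rescaling only changes vectors by a positive scalar, so the unit vector $\h\mu$ — and hence the angle with $\h\mu^*$ — is identical at all noise scales.

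\textbf{Stage I (high noise $\to$ constant angular correlation).} By Lemma~\ref{lemma:grad-equivalence}, at $t_1 = \Theta(\log d)$ a single population gradient step from $\mu_{t_1}$ with step size $\eta$ agrees, up to error $\mathrm{poly}(1/d)$, with one unnormalized power-iteration step applied to $M \triangleq (1 - 3\eta\wnorm{\mu_{t_1}}^2)\,\Id + 2\eta\,\mu^*_{t_1}\mu^{*\top}_{t_1}$, whose leading eigenvector is $\h\mu^*_{t_1}$. Lemma~\ref{lemma:angle-decrease-main-paper} then shows that each step contracts the tangent of the angle $\theta$ between $\mu_{t_1}$ and $\mu^*_{t_1}$ via $\tan\theta' \le \max(\kappa_1\tan\theta,\,\kappa_2)$ with $\kappa_1<1$ and $\kappa_2\le 1/\mathrm{poly}(d)$; here one must check that the $\mathrm{poly}(1/d)$ gradient-approximation error does not swamp the $\mathrm{poly}(1/d)$-scale signal $\wnorm{\mu^*_{t_1}}^2 = \wnorm{\mu^*}^2\exp(-2t_1)$, which is what pins down the admissible range of $\eta$ and $t_1$. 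A random Gaussian initialization has $\dtp{\h\mu^{(0)}}{\h\mu^*_{t_1}} = \Theta(1/\sqrt d)$ with high probability, so iterating this contraction for $H_1 = \mathrm{poly}(d)$ steps (Lemma~\ref{lemma:large-inner-product-main-paper}) gives an iterate with $\dtp{\h\mu_{t_1}}{\h\mu^*_{t_1}} = \Omega(1)$, which by the rescaling remark is also the correlation of the Stage II starting point with $\h\mu^*_{t_2}$.

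\textbf{Stage II (low noise $\to$ Euclidean contraction).} Starting from an iterate with $\dtp{\h\mu_{t_2}}{\h\mu^*_{t_2}} = \Omega(1)$ and using $\wnorm{\mu^*}=\Omega(1)$, I expand the population step as in~\eqref{eq:large-noise-grad-contract}: $\mu'_{t_2} = (1-\eta)\mu_{t_2} + \eta\,\mbb{E}_{X\sim\calN(\mu^*_{t_2},\Id)}[\tanh(\mu_{t_2}^\top X)X] + \eta\,G(\mu_{t_2},\mu^*_{t_2})$. The bracketed term is exactly one EM step (Fact~\ref{fact:em-update}), which by~\cite{daskalakis2017ten} contracts toward $\mu^*_{t_2}$ with rate $\lambda_1<1$ as in~\eqref{eq:em-contraction}, and Lemma~\ref{lemma:G-contraction-main} bounds the residual by $\wnorm{G(\mu_{t_2},\mu^*_{t_2})} \le \lambda_2\wnorm{\mu_{t_2}-\mu^*_{t_2}}$ with $\lambda_1+\lambda_2<1$; adding these gives the one-step contraction~\eqref{eq:one-step-contraction}. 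One also checks that $\dtp{\h\mu_{t_2}}{\h\mu^*_{t_2}}$ stays $\Omega(1)$ along the trajectory — immediate since $\wnorm{\mu_{t_2}-\mu^*_{t_2}}$ is non-increasing — so the hypothesis of Lemma~\ref{lemma:G-contraction-main} persists, and $H_2 = O(B^6\log(d/\epsilon))$ steps drive $\wnorm{\mu_{t_2}-\mu^*_{t_2}}$ below $\Theta(\epsilon)$ (Lemma~\ref{lemma:l2-contraction-main-paper}).

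\textbf{From population to empirical gradients.} Each per-sample gradient of $L_t(s_{\mu_t})$ is (Lemma~\ref{lemma:update}) a combination of $\tanh$ and its first two derivatives — all bounded by $1$ — times low-degree polynomials in the sub-Gaussian vector $X_t$, so after centering each coordinate of the per-sample gradient is sub-exponential with norm $\mathrm{poly}(d,B)$ (using the properties in Lemma~\ref{def:subgaussian} and the sub-exponential bounds). Bernstein's inequality (Lemma~\ref{lemma:bernstein-inequality}) with $\mathrm{poly}(d,B)/\epsilon^2$ samples then brings the empirical gradient within $\epsilon/\mathrm{poly}(d,B)$ of the population gradient, in each of the $H_1+H_2 = \mathrm{poly}(d,B,\log(1/\epsilon))$ steps, with high probability after a union bound (and a covering argument over the bounded region explored by the iterates). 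This perturbation is small enough to preserve the angular contraction of Stage I and to spoil the contraction factor in~\eqref{eq:one-step-contraction} only by lower-order terms, so the target accuracy survives and $\wnorm{\widetilde\mu - \mu^*}\le\epsilon$ follows. I expect the main obstacle to be Stage I: at $t_1 = \Theta(\log d)$ the ``signal'' eigengap of $M$ is only $\mathrm{poly}(1/d)$, so the gradient-approximation error, the statistical error, and the iteration count must all be balanced simultaneously against this tiny gap — and it is precisely this balance that forces the quantitative restriction $\epsilon \lesssim \frac{1}{d^2 B^9}$ in the theorem.
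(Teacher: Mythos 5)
Your two-stage plan (power-iteration phase at $t_1=\Theta(\log d)$ from random initialization, then EM-like contraction at $t_2=\Theta(1)$, with Bernstein-type concentration transferring population steps to empirical ones) is exactly the route the paper takes, and Stages I and the concentration step match the paper's Lemmas~\ref{lemma:2-mog-high-noise-grad-equivalence}--\ref{lemma:2-mog-const-sep-inner-product-const} and \ref{lemma:sample-complexity-k-mog} in substance. However, there is one genuine gap, in Stage II: you claim that the hypothesis of the $G$-contraction lemma persists along the trajectory because it is ``immediate since $\wnorm{\mu_{t_2}-\mu^*_{t_2}}$ is non-increasing.'' This does not work. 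First, the formal hypothesis (Lemma~\ref{lemma:multi-dimension-G-contraction}) is not merely an angular condition: it requires $\wnorm{\mu_t}\in[c,\tfrac{4}{3}\dtp{\hat\mu_t}{\mu^*_t}]$ for a large constant $c$, i.e.\ a two-sided control on the \emph{norm} of the iterate relative to its projection onto $\mu^*_t$, and monotonicity of the distance says nothing about the lower bound $\wnorm{\mu_t}\ge c$ nor about the upper bound when the Stage-I output (which is only controlled in direction, not in norm) starts far from $\mu^*_{t_2}$ in Euclidean distance. Second, the argument is circular: the one-step contraction~\eqref{eq:one-step-contraction} at step $h$ is only valid \emph{assuming} the region condition at step $h$, so you cannot invoke ``distance is non-increasing'' to re-derive the condition at step $h+1$; and even granting monotone distance, when the initial distance is comparable to or larger than $\wnorm{\mu^*}$ (which is all Stage I guarantees), a non-increasing distance does not imply $\dtp{\hat\mu_t}{\hat\mu^*_t}=\Omega(1)$.

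The paper closes precisely this gap with a separate inductive argument (Lemma~\ref{lemma:mu-in-interval}, proved via Lemmas~\ref{lemma:angle-decrease}, \ref{lemma:positive-bound} and \ref{lemma:update-bound}): it tracks the component of the update along $\hat\mu_t$ and along the in-plane orthogonal direction $\muhp_t$, shows the angle to $\mu^*_t$ does not increase under the low-noise gradient step, and bounds the update magnitude so that the iterate's norm stays inside the interval $[c,\tfrac{4}{3}\dtp{\hat\mu_t}{\mu^*_t}]$ throughout. Your proposal needs an argument of this kind (or some other mechanism, e.g.\ an explicit projection/renormalization of the Stage-II iterates) before the per-step contraction can be iterated; as written, the Stage-II induction does not go through.
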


\noindent As described earlier, the procedure first runs gradient descent on the DDPM objective described in Algorithm \ref{alg:denoise} from a random Gaussian initialization in a high noise scale regime with noise scale $t_1 = O(\log d)$. It then uses the output of the first step as initialization and runs the Algorithm \ref{alg:denoise} in a low noise scale regime with noise scale $t_2 = O(1)$.

We begin by calculating the form of the gradient updates:

\begin{lemma}
\label{lemma:update}
    For any noise scale $t > 0$, the gradient update for the mixture of two Gaussians on the DDPM objective is given by
    \begin{align*}
        -\nabla_{\mu_t} L_t(s_{\mu_t}) = & \; \mbb{E}_{x \sim \mc{N}(\mu_t^*, \Id)} \Big[ \big( \tanh (\mu_t^\top x) - \frac{1}{2} \tanh''( \mu_t^\top x ) \norm{ \mu_t }^2 + \tanh'( \mu_t^\top x ) \mu_t^\top x \big) x \Big] \\
    &- \mu_t - \mbb{E}_{x \sim \mc{N}(\mu_t^*, \Id)} \sbr{\tanh'( \mu_t^\top x ) \mu_t }\,.
    \end{align*}
\end{lemma}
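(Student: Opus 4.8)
The plan is to compute the population gradient $\nabla_{\mu_t} L_t(s_{\mu_t})$ directly by differentiating the DDPM objective with respect to the student parameter. Recall that for mixtures of two Gaussians, $s_{\mu_t}(x) = \tanh(\mu_t^\top x)\mu_t - x$, and the loss is $L_t(s_{\mu_t}) = \mbb{E}_{X_0,Z_t}\big[\big\|s_{\mu_t}(X_t) + \beta_t^{-1} Z_t\big\|^2\big]$ with $X_t = \alpha_t X_0 + \beta_t Z_t$. Since the score-matching and DDPM objectives differ from the objective $\mbb{E}_{X_t}\big[\|s_{\mu_t}(X_t) - \nabla_x \ln q_t(X_t)\|^2\big]$ only by a constant independent of $\mu_t$, I would first replace the DDPM loss by this equivalent form so that the gradient computation is cleaner: $-\tfrac12\nabla_{\mu_t} L_t = \mbb{E}_{X_t}\big[(\partial_{\mu_t} s_{\mu_t}(X_t))^\top(\nabla_x\ln q_t(X_t) - s_{\mu_t}(X_t))\big]$, and here $X_t \sim q_t = \tfrac12\calN(\mu_t^*,\Id) + \tfrac12\calN(-\mu_t^*,\Id)$. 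By the symmetry of $q_t$ about the origin and the fact that the integrand transforms appropriately under $x \mapsto -x$, the expectation over $q_t$ collapses to an expectation over a single component $\calN(\mu_t^*,\Id)$, which accounts for the form of the claimed answer.

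Next I would carry out the differentiation. Writing $f(x) \triangleq \tanh(\mu_t^\top x)$, we have $s_{\mu_t}(x) = f(x)\mu_t - x$, so $\partial_{\mu_t}\big(f(x)\mu_t\big) = \mu_t\, f'(\mu_t^\top x)\, x^\top + f(x)\,\Id$, i.e. the Jacobian in $\mu_t$ is $f'(\mu_t^\top x)\,\mu_t x^\top + \tanh(\mu_t^\top x)\Id$. Plugging this into the chain rule and using $\nabla_x\ln q_t(x) = \tanh(\mu_t^{*\top}x)\mu_t^* - x$ (Lemma~\ref{lemma:score-fn-calculation}) gives a raw expression for $-\tfrac12\nabla_{\mu_t}L_t$ as an expectation of several terms in $x$. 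The terms $\tanh(\mu_t^\top x)\, x$ and $-\mu_t$ fall out cleanly. The remaining terms involve products of $\tanh$, $\tanh'$ with factors of $\mu_t^\top x$ and $\|\mu_t\|^2$; to massage them into the stated form one uses Gaussian integration by parts (Stein's identity): for $x\sim\calN(\mu_t^*,\Id)$, $\mbb{E}[g(x)(x-\mu_t^*)] = \mbb{E}[\nabla_x g(x)]$. Applying this to $g(x) = \tanh'(\mu_t^\top x)$ introduces a $\tanh''$ term, which is exactly where the $-\tfrac12\tanh''(\mu_t^\top x)\|\mu_t\|^2 x$ and the $\mbb{E}[\tanh'(\mu_t^\top x)\mu_t]$ correction terms come from. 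I would track the algebra carefully, collecting all coefficients of $x$ inside a single expectation and peeling off the additive $-\mu_t$ and $-\mbb{E}[\tanh'(\mu_t^\top x)\mu_t]$ pieces.

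I expect the main obstacle to be bookkeeping rather than any conceptual difficulty: getting the symmetrization step right (verifying that every term arising from the cross terms $\langle s_{\mu_t}(X_t), \nabla_x\ln q_t(X_t)\rangle$ and $\|s_{\mu_t}(X_t)\|^2$ is even under $x\mapsto -x$ once the Jacobian is included, so that the two mixture components contribute equally), and then correctly applying Stein's identity to convert the $\mbb{E}[\tanh'(\mu_t^\top x)\,\mu_t^\top x\, x]$-type terms without dropping or double-counting the boundary contributions that produce the second derivative. One subtlety worth flagging: the identity as stated is for the true DDPM objective $L_t$ with the $\beta_t^{-1}Z_t$ term, so I should double-check that the $\beta_t$ factors cancel correctly — this works because $\mbb{E}[Z_t \mid X_t] = -\beta_t\nabla_x\ln q_t(X_t) \cdot (\text{const})$ via Tweedie's formula, reducing the $Z_t$-prediction loss to the score-matching loss up to the stated rescaling. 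Once these two reductions are in place, the final formula follows by collecting terms.
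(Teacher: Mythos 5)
Your high-level framing is a legitimate alternative to the paper's route: you trade the DDPM loss for the score-matching loss and plan to compute $\mbb{E}_{X_t\sim q_t}\big[(\partial_{\mu_t}s_{\mu_t}(X_t))^\top(\nabla_x\ln q_t(X_t)-s_{\mu_t}(X_t))\big]$, finishing with Gaussian integration by parts and symmetry. The paper instead never introduces the explicit score: it differentiates the DDPM loss with the $Z_t/\beta_t$ term still present and applies Stein's lemma to the Gaussian noise $Z_t$ (Lemma~\ref{lemma:gradient-simplifying-zterms}) to turn the $Z_t$-terms into the $\tanh''$ and $\tanh'$ corrections, symmetrizing to a single component only at the very end. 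Your early symmetrization is fine (the integrand is a product of two odd factors, hence even under $x\mapsto-x$), and the factor of $2$ you track with your $\tfrac12$ is a shared convention issue --- the paper silently drops it too --- and immaterial.

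The genuine gap is in the middle of your plan. After substituting $\nabla_x\ln q_t(x)=\tanh(\mu_t^{*\top}x)\mu_t^*-x$, the $-x$ cancels against the $-x$ in $s_{\mu_t}$, and the raw integrand is $\tanh'(\mu_t^\top x)\tanh(\mu_t^{*\top}x)(\mu_t^\top\mu_t^*)\,x+\tanh(\mu_t^\top x)\tanh(\mu_t^{*\top}x)\,\mu_t^*-\tanh(\mu_t^\top x)\tanh'(\mu_t^\top x)\|\mu_t\|^2 x-\tanh^2(\mu_t^\top x)\,\mu_t$. So no term of the form $\tanh(\mu_t^\top x)x$ or $-\mu_t$ ``falls out cleanly'' (note $-\mbb{E}[\tanh^2(\mu_t^\top x)]\mu_t=-\mu_t+\mbb{E}[\tanh'(\mu_t^\top x)]\mu_t$, whose $\tanh'$ piece even has the opposite sign to the target's $-\mbb{E}[\tanh'(\mu_t^\top x)]\mu_t$), and --- more importantly --- the target formula contains no $\tanh(\mu_t^{*\top}x)$ at all while your raw expression does. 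Eliminating these factors is the crux of your route and is not accomplished by Stein applied to $g=\tanh'(\mu_t^\top x)$: integration by parts under $\mc{N}(\mu_t^*,\Id)$ only generates further $\tanh'(\mu_t^{*\top}x)$ terms. You need an extra ingredient exploiting the two-component structure, e.g.\ that $\tanh(\mu_t^{*\top}x)$ is the posterior mean of the component sign under $q_t$, so that $\mbb{E}_{X\sim q_t}[\tanh(\mu_t^{*\top}X)h(X)]=\mbb{E}_{X\sim\mc{N}(\mu_t^*,\Id)}[h(X)]$ for odd $h$ (applied with $h(x)=\tanh'(\mu_t^\top x)x$ and $h(x)=\tanh(\mu_t^\top x)$); only after that do Stein identities over $\mc{N}(\mu_t^*,\Id)$ finish the bookkeeping and recover the $-\tfrac12\tanh''$ and $\mbb{E}[\tanh'(\mu_t^\top x)]\mu_t$ terms. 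Alternatively, you can sidestep the issue entirely by following the paper: apply Stein to $Z_t$ before ever introducing the explicit score, so the $\mu_t^*$-dependent nonlinearity never appears.
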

\noindent The proof of Lemma \ref{lemma:update} is given in Appendix \ref{subsec:update-proof}. %

\subsection{High noise regime--connection to power iteration}

Here we show that running population gradient descent on the DDPM objective at \emph{high} noise scale behaves like power iteration on the covariance matrix of the data and thus reaches an iterate $\mu$ with constant correlation with $\mu^*$.

\begin{lemma} 
\label{lemma:2-mog-high-noise-grad-equivalence}
For any noise scale $t > t'$ and number of samples $n > n'$ where $t' \lesssim \log d$ and $n'=\Theta \big( \frac{d^4 B^3}{\epsilon^2} \big)$, with high probability, the negative gradient of the diffusion model objective $L_t(s_t)$ can be approximated by $2\mu_t^* \mu_t^{*\top} \mu_t - 3\norm{\mu_t}^2 \mu_t$. More precisely, given independent samples $\{x_{i,t}\}_{i=1,\ldots,n}$ from $q_t$ generated using noise vectors $\{z_{i,t}\}_{i=1,\ldots,n}$ sampled from $\mc{N}(0,\Id)$, we have
    \begin{align*}
        \bigg\| -\nabla \Big( \frac{1}{n} \sum_{i=1}^n L_t(s_{\mu_t}(x_{i, t}, z_{i, t})) \Big) - \rb{ 2\mu_t^* \mu_t^{*\top} \mu_t - 3\norm{\mu_t}^2 \mu_t } \bigg\| \leq 250 \sqrt{d} \norm{ \mu_t }^5 + 10  \norm{\mu_t}^3 \norm{\mu_t^*}^2 + \epsilon\,.
    \end{align*} %
\end{lemma}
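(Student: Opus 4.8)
The plan is to start from the exact expression for the population negative gradient in Lemma~\ref{lemma:update} and carry out a Taylor-type expansion of the quantities $\tanh$, $\tanh'$, $\tanh''$ around $0$, exploiting that at high noise scale $t = O(\log d)$ we have $\|\mu_t\| = \|\mu\|\exp(-t)$ small, so that the argument $\mu_t^\top x$ of the hyperbolic functions is concentrated near $0$ under $x \sim \mc N(\mu_t^*,\Id)$. Since $\tanh(u) = u - u^3/3 + O(u^5)$, $\tanh'(u) = 1 - u^2 + O(u^4)$, and $\tanh''(u) = -2u + O(u^3)$, the leading contribution of $\mbb E_{x}[(\tanh(\mu_t^\top x) - \tfrac12\tanh''(\mu_t^\top x)\|\mu_t\|^2 + \tanh'(\mu_t^\top x)\mu_t^\top x)x] - \mu_t - \mbb E_x[\tanh'(\mu_t^\top x)\mu_t]$ is, after substituting the linear terms $\tanh(u)\approx u$, $\tanh'(u)\approx 1$, $\tanh''(u)\approx 0$: $\mbb E_x[2(\mu_t^\top x)x] - \mu_t - \mu_t = 2(\mu_t^*\mu_t^{*\top} + \Id)\mu_t - 2\mu_t = 2\mu_t^*\mu_t^{*\top}\mu_t$, using $\mbb E_x[xx^\top] = \mu_t^*\mu_t^{*\top} + \Id$. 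The $-3\|\mu_t\|^2\mu_t$ term should then emerge from the leading cubic corrections: the $-u^3/3$ term in $\tanh$, the $-u^2$ term in $\tanh'$ (appearing in both the $\tanh'(\mu_t^\top x)\mu_t^\top x\, x$ and the $-\tanh'(\mu_t^\top x)\mu_t$ pieces), and the $-2u$ term in $\tanh''$, each contributing moments of the form $\mbb E_x[(\mu_t^\top x)^k x]$ which, for $x \sim \mc N(\mu_t^*,\Id)$, expand into terms involving $\|\mu_t\|^2\mu_t$, $\|\mu_t\|^2\mu_t^*$, $(\mu_t^{*\top}\mu_t)^2\mu_t^*$, etc. I would compute these Gaussian moments explicitly (Isserlis/Wick), collect the coefficient of $\|\mu_t\|^2\mu_t$ to confirm it is exactly $-3$, and bound all remaining terms.

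The error terms fall into two groups. First, the \emph{truncation error} from replacing $\tanh$, $\tanh'$, $\tanh''$ by their low-order Taylor polynomials: using that the remainders are controlled by the next power ($|\tanh(u) - u + u^3/3| \le |u|^5$, etc.) and that $\mbb E_x[|\mu_t^\top x|^k \|x\|]$ is bounded by $\mathrm{poly}(\|\mu_t\|, \|\mu_t^*\|, \sqrt d)$ via Gaussian moment bounds, these contribute terms like $\sqrt d\,\|\mu_t\|^5$ and $\|\mu_t\|^3\|\mu_t^*\|^2$ — which matches the first two terms on the right-hand side of the claimed bound. I would be somewhat careful to keep the constants explicit (the statement quotes $250\sqrt d\|\mu_t\|^5$ and $10\|\mu_t\|^3\|\mu_t^*\|^2$), which means tracking the combinatorial constants in the moment expansions rather than absorbing them into $O(\cdot)$. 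Second, the \emph{empirical error} from replacing the population gradient by the empirical average over $n$ samples: here I would write each coordinate of the empirical negative gradient as an average of i.i.d.\ terms, check that each summand is sub-exponential (it is a product of a bounded function of $x$ — since $\tanh$ and its derivatives are bounded — with polynomial factors in $x$, hence sub-exponential with norm $\mathrm{poly}(\|\mu_t\|,\sqrt d)$ after the relevant cancellations), apply Bernstein's inequality (Lemma~\ref{lemma:bernstein-inequality}) coordinatewise, and union-bound over $d$ coordinates; choosing $n \gtrsim d^4 B^3/\epsilon^2$ drives this contribution below $\epsilon$ with high probability. One subtlety: the raw summands $\tanh'(\mu_t^\top x)(\mu_t^\top x) x$ etc.\ are only sub-exponential (not sub-Gaussian) because of the unbounded polynomial factor, so Bernstein rather than Hoeffding is the right tool, and the $\min(\epsilon,\epsilon^2)$ form of Bernstein forces the $1/\epsilon^2$ sample dependence.

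The main obstacle I anticipate is the bookkeeping in the cubic moment expansion: getting the coefficient of $\|\mu_t\|^2\mu_t$ to come out to exactly $-3$ requires correctly combining contributions from four distinct sources (the cubic term of $\tanh$, the quadratic terms of $\tanh'$ in two different places, and the linear term of $\tanh''$), each weighted by a Gaussian moment $\mbb E_x[(\mu_t^\top x)^k x]$ that itself splits into several monomials in $\mu_t$ and $\mu_t^*$. The terms proportional to $\mu_t^*$ (rather than $\mu_t$) — e.g.\ those involving $\|\mu_t\|^2\mu_t^*$ and $(\mu_t^{*\top}\mu_t)\|\mu_t\|^2$-type quantities — need to be folded into the error bound, and confirming they are dominated by $\sqrt d\|\mu_t\|^5 + \|\mu_t\|^3\|\mu_t^*\|^2$ (in particular, that cross terms like $(\mu_t^{*\top}\mu_t)^2\mu_t^*$ are $\lesssim \|\mu_t\|^2\|\mu_t^*\|^2 \cdot \|\mu_t^*\| \le \|\mu_t\|^3\|\mu_t^*\|^2$ only when $\|\mu_t^*\| \gtrsim \|\mu_t\|$, which holds in the high-noise regime since $\|\mu_t^*\| = \|\mu^*\|\exp(-t)$ and we can assume the iterate norm stays comparable) is where I expect to spend the most care. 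Everything else — the Gaussian moment computations, the sub-exponential norm estimates, and the Bernstein union bound — is routine given the lemmas already available in Appendix~\ref{app:prelims}.
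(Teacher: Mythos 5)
Your proposal follows essentially the same route as the paper's proof: Taylor-expand $\tanh,\tanh',\tanh''$ at high noise, evaluate the resulting Gaussian moments (the paper does this via Stein's lemma) to recover $2\mu_t^*\mu_t^{*\top}\mu_t - 3\|\mu_t\|^2\mu_t$ with truncation errors of order $\sqrt{d}\,\|\mu_t\|^5$ and $\|\mu_t\|^3\|\mu_t^*\|^2$, and then control the empirical-vs.-population gradient by sub-exponential norms, Bernstein's inequality, and a union bound over coordinates (this is exactly the paper's Lemma~\ref{lemma:sample-complexity-k-mog}), giving the $n \gtrsim d^4B^3/\epsilon^2$ sample requirement. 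One small slip in your anticipated bookkeeping: $\|\mu_t\|^2\|\mu_t^*\|^3 \le \|\mu_t\|^3\|\mu_t^*\|^2$ requires $\|\mu_t^*\|\lesssim\|\mu_t\|$ (not the reverse); in the paper the cross terms involving extra powers of $\mu_t^*$ are absorbed using that both $\|\mu_t\|$ and $\|\mu_t^*\|$ are $O(1/B^2)\le 1$ at noise scale $t=\Theta(\log d)$, so this does not change the structure of the argument.
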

\begin{proof}
    Recall that the population gradient update on the DDPM objective is given by 
    \begin{align*}
        -\nabla L_t(s_{\mu_t}) = & \; \mbb{E}_{x \sim \mc{N}(\mu_t^*, \Id) } \big[ \tanh (\mu_t^\top x) x - \frac{1}{2} \tanh''( \mu_t^\top x ) \norm{ \mu_t }^2 x + \tanh'( \mu_t^\top x ) \mu_t^\top x x \big] \\
    & \quad - \mu_t  - \mbb{E}_{x \sim \mc{N}(\mu_t^*, \Id) } [\tanh'( \mu_t^\top x ) \mu_t] \\
    = &  \; \mbb{E}_{x \sim \mc{N}(\mu_t^*, \Id) } \big[  \tanh (\mu_t^\top x) x - \frac{1}{2} \tanh''( \mu_t^\top x ) \norm{ \mu_t }^2 x + \tanh'( \mu_t^\top x ) \mu_t^\top x \mu_t^* \\ 
    & \quad + \tanh''( \mu_t^\top x ) \mu_t^\top x \mu_t \big] - \mu_t\,,
    \end{align*}
    where the last equality follows from the Stein's lemma on $\mbb{E}_{x \sim \mc{N}(\mu_t^*, \Id) } [ \tanh'( \mu_t^\top x ) \mu_t^\top x x ]$, as
    \begin{equation}
        \mbb{E}_{x \sim \mc{N}(\mu_t^*, \Id) } [ \tanh'( \mu_t^\top x ) \mu_t^\top x x ] =   \mbb{E}_{x \sim \mc{N}(\mu_t^*, \Id) } [ \tanh'( \mu_t^\top x ) \mu_t^\top x \mu_t^* + \tanh'( \mu_t^\top x ) \mu_t + \tanh''( \mu_t^\top x ) \mu_t^\top x \mu_t ]\,.    
    \end{equation}
    Using Taylor's theorem, we know that
    \begin{align*}
    & \tanh(\mu_t^\top x) = \mu_t^\top x - \frac{2}{3} ( \mu_t^\top x )^3 + O( \xi(x)^5 ) \hspace{1cm} \text{ where $\xi (x) \in [0, \mu_t^\top x]$ } \\
    \implies & \tanh(\mu^\top x) x = \mu^\top x x - \frac{2}{3} ( \mu_t^\top x )^3 x + O( \xi(x)^5 x ) \\
    \implies & \Big\| \mbb{E}_{x \sim \mc{N}(\mu_t^*, \Id) }[ \tanh (\mu_t^\top x) x ]  - \mbb{E}_{x \sim \mc{N}(\mu_t^*, \Id) } \big[ \mu_t^\top x x - \frac{2}{3} ( \mu_t^\top x )^3 x \big] \Big\| \leq \| \mbb{E}[ \xi(x)^5 x ] \| \lesssim  \sqrt{d} \norm{\mu_t}^5 
    \end{align*}
    where the last inequality follows from $\norm{ \mbb{E}[ \xi(x)^5 x ] } \leq \mbb{E}[ | \mu_t^\top x |^5 \norm{x} ] \leq \rb{ \mbb{E}[ | \mu_t^\top x|^{10} ] }^{1/2} \rb{ \mbb{E}[ \; \wnorm{x}^2 ] }^{1/2 } \lesssim \wnorm{\mu_t}^5 \sqrt{d + \norm{\mu_t^*}^2} \lesssim \sqrt{d} \norm{\mu_t}^5$. Similarly, using Taylor's theorem, we get
    \begin{align*}
        & \tanh''(\mu_t^\top x) = -2 \mu_t^\top x + O( \xi(x)^3 ) \hspace{1cm} \text{ where $\xi (x) \in [0, \mu_t^\top x]$ } \\
        \implies & \tanh''(\mu_t^\top x) \rb{ - \frac{1}{2}\wnorm{\mu_t}^2 x + \mu_t^\top x \mu_t } = \rb{ -2 \mu_t^\top x + O( \xi(x)^3 ) } \rb{ -\frac{1}{2} \norm{\mu_t}^2 x + \mu_t^\top x \mu_t } \\ 
        \implies & \Big\| \mbb{E}[ \tanh'' (\mu_t^\top x) \big( -\frac{1}{2} \norm{\mu_t}^2 x + \mu_t^\top x \mu_t \big) ]  - \mbb{E} \Big[ -2 \mu_t^\top x  \rb{ -\frac{1}{2} \norm{\mu_t}^2 x + \mu_t^\top x \mu_t } \Big] \Big\| \\ 
        & \leq \big\|  -\frac{1}{2} \norm{\mu_t}^2 \mbb{E}_{x \sim \mc{N}( \mu_t^*, I )} [  O( \xi(x)^3 )  x ] + \mbb{E}_{x \sim \mc{N}( \mu_t^*, I )} [ O( \xi(x)^3 ) \mu_t^\top x \mu_t ] \big\| \\ 
        & \leq \frac{1}{2} \norm{\mu_t}^2 \mbb{ E } [ | \mu_t^\top x |^3 \norm{x} ] + \norm{\mu_t} \mbb{E}[ | \mu_t^\top x |^4 ] \\
        &\leq \frac{1}{2} \norm{\mu_t}^2 \sqrt{\mbb{ E } [ | \mu_t^\top x |^6  ] \mbb{E}[ \norm{x}^2 ] } + \norm{\mu_t}  \mbb{E}[ | \mu_t^\top x |^4 ] \\ 
        &\leq 10 \norm{\mu_t}^5 \sqrt{d} + 6 \norm{\mu_t}^5
    \end{align*}
    Using Taylor's theorem for $\tanh'$, we get
    \begin{align*}
        & \tanh'(\mu_t^\top x) = 1 - (\mu_t^\top x)^2 + O( \xi(x)^4 ) \hspace{1cm} \text{where $\xi(x) \in [0, \mu_t^\top x]$} \\
        \implies & \tanh'(\mu_t^\top x) \mu_t^\top x \mu_t^* = \mu_t^\top x \mu_t^* - (\mu_t^\top x)^3 \mu_t^* + O( \xi(x)^4 \mu_t^\top x \mu_t^* ) \hspace{1cm} \text{where $\xi(x) \in [0, \mu_t^\top x]$} \\
        \implies & \norm{ \mbb{E}[ \tanh'(\mu_t^\top x) \mu_t^\top x \mu_t^* ] - \mbb{E}[ \mu_t^\top x \mu_t^* - (\mu_t^\top x)^3 \mu_t^* ] } \leq \norm{\mbb{E}[ \xi(x)^4 (\mu_t^\top x) \mu_t^* ]} \\ 
        & \hspace{18em} \leq \mbb{E}[ | \mu_t^\top x |^5 ] \norm{\mu_t^*} \lesssim \norm{\mu_t^*} \norm{ \mu_t }^5
    \end{align*}
    Additionally, we have
    \begin{align*}
        & \mbb{E}_{x \sim \mc{N}(\mu_t^*, \Id)}[ x x^\top \mu_t (1 + \norm{\mu_t}^2) - \frac{2}{3} (\mu_t^\top x)^3 x - 2 \mu_t ( \mu_t^\top x )^2 + \mu_t^\top x \mu_t^* - (\mu_t^\top x)^3 \mu_t^* ] \\
        = & \; (I + \mu_t^* \mu_t^{*\top}) \mu_t (1 + \norm{\mu_t}^2) - \frac{5}{3} \mbb{E}[ (\mu_t^\top x)^3 \mu_t^* ] + \mu_t^* \mu_t^{*\top} \mu_t - 4 \mbb{E}[ \mu_t ( \mu_t^\top x )^2  ] \\
        = & \; (I + \mu_t^* \mu_t^{*\top}) \mu_t (1 + \norm{\mu_t}^2) - \frac{5 \mu_t^* }{3} ( ( \mu_t^\top \mu_t^* )^3 + 3 ( \mu_t^\top \mu_t^* ) \norm{\mu_t}^2 ) \\ 
        & \quad\quad + \mu_t^* \mu_t^{*\top} \mu_t - 4 \mu_t ( \; \norm{\mu_t}^2  + ( \mu_t^\top \mu_t^* )^2 )  \\
        = & \; \mu_t^* \mu_t^{*\top} \mu_t (2 - 4 \norm{\mu_t}^2) + \mu_t (1 - 3 \norm{\mu_t}^2) - \frac{5 \mu_t^* ( \mu_t^\top \mu_t^* )^3 }{3}  - 4 \mu_t ( \mu_t^\top \mu_t^* )^2 
    \end{align*}
    where the second equality uses Stein's lemma on $\mbb{E}[ (\mu_t^\top x)^3 x ]$ and $\mbb{E}[xx^\top] = \Id + \mu_t^* \mu_t^{*\top}$ and the third equality uses Gaussian moments for $\mbb{E}[ (\mu_t^\top x)^2 ]$ and $\mbb{E}[ (\mu_t^\top x)^3 ]$.
    Putting it all together and using triangle inequality, we obtain the desired bound on $\| -\nabla L_t(s_{\mu_t})  - (2\mu_t^* \mu_t^{*\top} \mu_t - 3\norm{\mu_t}^2 \mu_t ) \|$. 
    \begin{align*}
        & \| -\nabla L_t(s_{\mu_t})  - (2\mu_t^* \mu_t^{*\top} \mu_t - 3\norm{\mu_t}^2 \mu_t ) \| \\
        \leq & \; \Big\| -\nabla L_t(s_{\mu_t})  - \mbb{E}[ x x^\top \mu_t (1 + \norm{\mu_t}^2) - \frac{2}{3} (\mu_t^\top x)^3 x - 2 \mu_t ( \mu_t^\top x )^2 + \mu_t^\top x \mu_t^* - (\mu_t^\top x)^3 \mu_t^* - \mu_t ] \Big\| \\
        & + \Big\| \mbb{E}[ x x^\top \mu_t (1 + \norm{\mu_t}^2) - \frac{2}{3} (\mu_t^\top x)^3 x - 2 \mu_t ( \mu_t^\top x )^2 + \mu_t^\top x \mu_t^* - (\mu_t^\top x)^3 \mu_t^* - \mu_t ] \\ 
        & \quad\quad - \rb{ 2\mu_t^* \mu_t^{*\top} \mu_t - 3\norm{\mu_t}^2 \mu_t } \Big\| \\
        \leq & \; \rb{ 200 \sqrt{d} \norm{\mu_t}^5 + 10 \norm{\mu_t}^5 \sqrt{d} + 6 \norm{\mu_t}^5 + 20 \norm{\mu_t^*} \norm{ \mu_t }^5 } + 10 \norm{\mu_t}^3 \norm{\mu_t^*}^2 \\
        \leq & \; 250 \sqrt{d} \norm{ \mu_t }^5 + 10 \norm{\mu_t}^3 \norm{\mu_t^*}^2
    \end{align*}
    Using Lemma \ref{lemma:sample-complexity-k-mog} and triangle inequality, we obtain the result. 
\end{proof}

\noindent We will use the following simple bound on the correlation between the ground truth and a random initialization:

\begin{lemma}
\label{lemma:random-init-similarity}
    A randomly initialized $\mu_0 \sim \mc{N}(0, \Id)$ satisfies that $\abs{ \dtp{ \hat{\mu}_0 }{ \hat{\mu}^* } } \geq \frac{1}{2d}$ with probability at least $1 - O(d^{-1/2})$.
\end{lemma}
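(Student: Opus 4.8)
The plan is to reduce the statement to two one-dimensional estimates using rotational invariance. Since $\mu_0\sim\mc{N}(0,\Id)$ is rotationally symmetric and $\hat\mu^*$ is a fixed unit vector, I may assume without loss of generality that $\hat\mu^*=e_1$, so that $\dtp{\hat\mu_0}{\hat\mu^*}=[\mu_0]_1/\wnorm{\mu_0}$ with $[\mu_0]_1\sim\mc{N}(0,1)$ and $\wnorm{\mu_0}^2\sim\chi^2_d$. It then suffices to lower bound the numerator and upper bound the denominator, each on an event of probability $1-O(d^{-1/2})$, and to combine the two by a union bound (no independence between numerator and denominator is needed).

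For the numerator, I would invoke Gaussian anti-concentration: bounding the standard normal density by $(2\pi)^{-1/2}$ on the interval $[-1/\sqrt d,\,1/\sqrt d]$ gives $\Pr\!\big[\,\abs{[\mu_0]_1}\le 1/\sqrt d\,\big]\le \tfrac{2}{\sqrt{2\pi}}\,d^{-1/2}=O(d^{-1/2})$, so $\abs{[\mu_0]_1}\ge 1/\sqrt d$ with probability $1-O(d^{-1/2})$. For the denominator, standard chi-squared concentration (for instance Laurent--Massart, or Bernstein's inequality applied to the mean-zero sub-exponential variables $[\mu_0]_i^2-1$ as in Lemma~\ref{lemma:bernstein-inequality}) yields $\Pr\!\big[\,\wnorm{\mu_0}^2\ge 4d\,\big]\le e^{-\Omega(d)}$, which is negligible next to $d^{-1/2}$; hence $\wnorm{\mu_0}\le 2\sqrt d$ except with exponentially small probability.

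A union bound over these two events shows that with probability $1-O(d^{-1/2})$ both hold, and on this event
\begin{align*}
    \abs{\dtp{\hat\mu_0}{\hat\mu^*}}=\frac{\abs{[\mu_0]_1}}{\wnorm{\mu_0}}\ge\frac{1/\sqrt d}{2\sqrt d}=\frac{1}{2d},
\end{align*}
which is exactly the claim. I do not expect a genuine obstacle here: the only point requiring any care is to pick the numerator threshold ($1/\sqrt d$) and the denominator threshold ($2\sqrt d$) so that their ratio is precisely $1/(2d)$ while the total failure probability remains $O(d^{-1/2})$ — and the Gaussian anti-concentration term is what dictates that $d^{-1/2}$ (rather than something smaller) is the right order.
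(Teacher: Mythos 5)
Your proof is correct and follows essentially the same route as the paper's: Gaussian anti-concentration on the projection $\dtp{\mu_0}{\hat\mu^*}\sim\mc{N}(0,1)$ to get the $1/\sqrt d$ lower bound with failure probability $O(d^{-1/2})$, norm concentration to get $\wnorm{\mu_0}\le 2\sqrt d$ with exponentially small failure probability, and a union bound to conclude $\abs{\dtp{\hat\mu_0}{\hat\mu^*}}\ge 1/(2d)$. The reduction to $\hat\mu^*=e_1$ via rotational invariance is just a cosmetic restatement of the same argument.
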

\begin{proof}
For $\mu_0 \sim \mc{N}(0, I)$, we know that $\dtp{\mu_0}{ \hat{\mu}^* } \sim \mc{N}(0, I)$. Using Gaussian anti-concentration, with probability at least $1 - 1/\sqrt{d}$ , we have $\abs{ \dtp{ \mu_0 }{ \hat{\mu}^* } } \geq 1/\sqrt{d}$. Because the $L_2$ norm of a Gaussian vector is sub-exponential, with probability at least $1 - \exp(-\Omega(d))$, we have $\norm{\mu_0} \leq 2 \sqrt{d}$. Using the norm bound, with probability at least $1 - 1/\sqrt{d} - \exp( - O(d)) = 1 - O(d^{-1/2})$, we obtain the claimed bound on $\abs{ \dtp{ \hat{\mu_0} }{ \hat{\mu}^* } }$.
\end{proof}

\noindent We can now track the correlation between the iterates of gradient descent and the ground truth:

\begin{lemma}
\label{lemma:projection-angle-decrease}
    Suppose that the vector $\mu_t$ satisfies $|\langle \hat{\mu}_t, \hat{\mu}^*_t\rangle| \ge \frac{1}{2d}$, and let $\mu'_t$ denote the iterate resulting from a single empirical gradient step with learning rate $\eta$ starting from $\mu_t$. Suppose that the empirical gradient and the population gradient differ by at most $\epsilon$. Denote the angle between $\mu_t$ (resp. $\mu'_t$) and $\mu_t^*$ by $\theta$ (resp. $\theta'$). Then
    \begin{align*}
        \tan \theta' = \max \rb{ \kappa_1 \tan \theta, \kappa_2 }
    \end{align*}
    for 
    \begin{align*}
        \kappa_1 &= \frac{ 1 - 3 \eta \|\mu_t\|^2 }{ 1 -3\eta \|\mu_t \|^2  + \eta( \|\mu_t^*\|^2   - 500 \sqrt{d^3} \| \mu_t \|^4 - 20 d  \|\mu_t\|^2 \wnorm{\mu_t^*}^2 - \eta \Tilde{\epsilon} ) } \;\; , \\ 
        \kappa_2 &= \frac{ 500 \eta \sqrt{d^3} \| \mu_t \|^4 + 20 \eta d \| \mu_t \|^2 \wnorm{\mu_t^*}^2 + \eta \Tilde{\epsilon} }{ \wnorm{ \mu_t^* }^2 } \;\; \text{and} \;\; \Tilde{\epsilon} \lesssim \frac{d \epsilon}{ \wnorm{\mu_t} }\,.
    \end{align*}
\end{lemma}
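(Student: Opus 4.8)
The plan is to express both $\tan\theta$ and $\tan\theta'$ in terms of the components of $\mu_t$ parallel and perpendicular to $\mu_t^*$, and then to track how a single gradient step transforms these two components. Write $\mu_t = a\,\hat\mu_t^* + b\,v$ where $v\perp\hat\mu_t^*$ is a unit vector, so that $\tan\theta = |b|/|a|$ (up to sign the parallel component is positive, since $|\langle\hat\mu_t,\hat\mu_t^*\rangle|\ge 1/(2d)>0$). From Lemma~\ref{lemma:2-mog-high-noise-grad-equivalence}, the empirical negative gradient is $2\mu_t^*\mu_t^{*\top}\mu_t - 3\|\mu_t\|^2\mu_t + E$ where $\|E\| \le 250\sqrt{d}\,\|\mu_t\|^5 + 10\|\mu_t\|^3\|\mu_t^*\|^2 + \epsilon$. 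Hence the gradient step $\mu_t' = \mu_t - \eta\nabla(\cdots)$ acts as
\begin{align*}
    \mu_t' = (1 - 3\eta\|\mu_t\|^2)\mu_t + 2\eta\|\mu_t^*\|^2 a\,\hat\mu_t^* + \eta E\,.
\end{align*}
So the parallel component becomes $a' = (1 - 3\eta\|\mu_t\|^2 + 2\eta\|\mu_t^*\|^2)a + \eta\langle E,\hat\mu_t^*\rangle$ and the perpendicular component has norm $b' $ with $|b'| \le |1 - 3\eta\|\mu_t\|^2|\,|b| + \eta\|E\|$ (and also a matching lower bound on $|a'|$). The factor $1 - 3\eta\|\mu_t\|^2$ is the "shrink" factor and $2\eta\|\mu_t^*\|^2$ is the extra "pull along $\mu_t^*$" coming from the rank-one term, exactly the power-iteration picture.

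Next I would bound $\|E\|$ in the convenient form appearing in the statement: since the ambient norm bound on $\|\mu_t\|$ lets one replace $\|\mu_t\|^5$ by $\sqrt{d^3}\|\mu_t\|^4$ and $\|\mu_t\|^3\|\mu_t^*\|^2$ by $d\|\mu_t\|^2\|\mu_t^*\|^2$ up to constants, we get $\eta\|E\| \lesssim \eta(500\sqrt{d^3}\|\mu_t\|^4 + 20 d\|\mu_t\|^2\|\mu_t^*\|^2) + \eta\tilde\epsilon$ with $\tilde\epsilon \lesssim d\epsilon/\|\mu_t\|$ (the division by $\|\mu_t\|$ absorbs a factor when one normalizes; this is the place to be careful about exactly which power of $\|\mu_t\|$ to factor out). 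Call this whole error budget $\mathcal{E} := 500\eta\sqrt{d^3}\|\mu_t\|^4 + 20\eta d\|\mu_t\|^2\|\mu_t^*\|^2 + \eta\tilde\epsilon$. Then $|b'| \le (1-3\eta\|\mu_t\|^2)|b| + \mathcal{E}$ and $|a'| \ge (1 - 3\eta\|\mu_t\|^2 + 2\eta\|\mu_t^*\|^2)|a| - \mathcal{E} \ge (1-3\eta\|\mu_t\|^2)|a| + \eta\|\mu_t^*\|^2|a|$ after using $|a|\ge \|\mu_t\|/(2d)$ against $\mathcal{E}$, i.e. $\eta\|\mu_t^*\|^2|a| \ge \mathcal{E}$; actually more carefully one keeps $|a'|\ge (1-3\eta\|\mu_t\|^2 + \eta(\|\mu_t^*\|^2 - 500\sqrt{d^3}\|\mu_t\|^4/|a| - \cdots))|a|$, which is where the denominator of $\kappa_1$ comes from once $|a|\ge\|\mu_t\|/(2d)$ is used to turn $\mathcal{E}/|a|$ into the displayed quantity.

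Finally I would combine these: $\tan\theta' = |b'|/|a'| \le \frac{(1-3\eta\|\mu_t\|^2)|b| + \mathcal{E}}{(1-3\eta\|\mu_t\|^2 + \eta(\|\mu_t^*\|^2 - \cdots))|a|}$, and split into two cases according to whether $\mathcal{E} \le (1-3\eta\|\mu_t\|^2)|b|$ or not. In the first case the numerator is at most $2(1-3\eta\|\mu_t\|^2)|b|$... no—better: bound $\frac{x+y}{z} \le \max(\frac{2x}{z}, \frac{2y}{z})$, or more simply observe $\frac{(1-3\eta\|\mu_t\|^2)|b| + \mathcal{E}}{(\cdots)|a|} \le \max\!\big(\kappa_1\frac{|b|}{|a|},\, \kappa_2\big)$ after noting that when the $|b|$-term dominates we get $\kappa_1\tan\theta$ with $\kappa_1$ as displayed, and when $\mathcal{E}$ dominates we get $2\mathcal{E}/((\cdots)|a|) \le 2\mathcal{E}/(\eta\|\mu_t^*\|^2|a|)$, and using $|a| \ge \|\mu_t\|/(2d) \ge 1/(2d)$-type bounds plus $\tilde\epsilon\lesssim d\epsilon/\|\mu_t\|$ this collapses to $\kappa_2 = \mathcal{E}/\|\mu_t^*\|^2$ up to constants. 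The main obstacle I anticipate is bookkeeping: getting the exact constants ($250 \to 500$, the factor $20d$, the precise $\tilde\epsilon\lesssim d\epsilon/\|\mu_t\|$) to line up requires carefully distinguishing $\|\mu_t\|$ from $|a|$ and from $\|\mu_t^*\|$, using the a priori norm bound $\|\mu_t\|^2 \lesssim d\cdot(\text{something})$ at exactly the right moments, and handling the $\max$ so that no case is lost — the underlying dynamics (shrink-plus-rank-one-pull $\Rightarrow$ angle contracts) is straightforward, but turning the one-line heuristic into the stated two-sided bound with these specific $\kappa_1,\kappa_2$ is the delicate part.
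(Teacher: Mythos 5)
Your proposal follows essentially the same route as the paper: decompose the iterate along $\hat{\mu}^*_t$ and its orthogonal complement in the plane of $\mu_t,\mu^*_t$, treat the gradient step as the power-iteration matrix $(1-3\eta\|\mu_t\|^2)\Id + 2\eta\mu^*_t\mu^{*\top}_t$ plus the error from Lemma~\ref{lemma:2-mog-high-noise-grad-equivalence}, and use $|\langle\hat{\mu}_t,\hat{\mu}^*_t\rangle|\ge\frac{1}{2d}$ to convert the error into the terms appearing in $\kappa_1,\kappa_2$. The only place you are vaguer than the paper is the final combination: the paper avoids your factor-of-2 case split by rewriting $\frac{\sigma_2}{\tilde\sigma_1}\tan\theta + \frac{\mathcal{E}}{\tilde\sigma_1}$ exactly as a convex combination of $\frac{\sigma_2}{\tilde\sigma_1-\eta\|\mu^*_t\|^2}\tan\theta$ and $\frac{\mathcal{E}}{\eta\|\mu^*_t\|^2}$ and bounding it by their maximum, which yields the stated $\kappa_1,\kappa_2$ with no slack.
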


\begin{proof} 
Define $\hat{\mu}^{*\perp}_t$ as the orthogonal vector to $\mu_t^*$ in the plane of $\mu_t$ and $\mu_t^*$. Note that $\mu'_t$ still lies in this plane, so the orthogonal vector to $\mu_t^*$ in the plane of $\mu'_t$ and $\mu_t^*$ is also given by $\hat{\mu}^{*\perp}_t$.

We have
    \begin{align}
        \tan \theta' &= \frac{ \dtp{ \hat{\mu}^{*\perp} }{ \hat{\mu}'_t } }{ \dtp{ \hat{\mu}^*_t }{ \hat{\mu}'_t } } = \frac{ \dtp{ \hat{\mu}^{*\perp}_t }{ \mu'_t } }{ \dtp{ \hat{\mu}^*_t }{ \mu'_t } } \\
        = & \frac{ \dtp{ \hat{\mu}^{*\perp}_t }{ \mu_t + \eta F(\mu_t, \mu_t^*) } +  \dtp{ \hat{\mu}^{*\perp}_t }{ - \eta \nabla L_t(s_t) - \eta F(\mu_t, 
        \mu_t^*) } + \eta \epsilon }{ \dtp{ \hat{\mu}^*_t }{ \mu_t + \eta F(\mu_t, \mu_t^*) }  +  \dtp{ \hat{\mu}^{*\perp}_t }{ - \eta \nabla L_t(s_t) - \eta F(\mu_t, 
        \mu_t^*) } - \eta \epsilon } \\ 
        & \hspace{16em} \text{where} \hspace{5mm} F(\mu, \mu^*) = \rb{ 2\mu^*_t \mu^{*\top}_t \mu_t- 3\norm{\mu_t}^2 \mu_t }  \\
        \leq & \frac{ \sigma_2 \dtp{ \hat{\mu}^{*\perp}_t }{ \mu_t } + \eta \big\|\nabla L_t(s_t) + F(\mu_t, \mu_t^*) \big\| + \eta \epsilon }{ \sigma_1 \dtp{ \hat{\mu}^*_t }{ \mu_t } - \eta \big\| \nabla L_t(s_t) + F(\mu_t, \mu_t^*) \big\| - \eta \epsilon } \label{eq:lastineq}
    \end{align}
    where $\sigma_1$ and $\sigma_2$ are the first and second eigenvalues of $\Id + F(\mu_t,\mu^*_t) = (1-3 \eta \norm{\mu_t}^2)\Id + 2 \eta \mu_t^* \mu_t^{*\top}$,
    given by
    \begin{align*}
        \sigma_1 &= 1 + \eta(2 \wnorm{\mu_t^*}^2 -3\norm{\mu_t}^2) \\
        \sigma_2 &= 1 - 3 \eta \norm{\mu_t}^2 \,.
    \end{align*}
    The last inequality~\eqref{eq:lastineq} follows from the fact that 
    \begin{align}
        \dtp{ \hat{\mu}^{*}_t }{ \mu_t + \eta F(\mu_t, \mu_t^*) } &= \hat{\mu}^{*\top}_t ( (1- 3 \eta \norm{\mu_t}^2) \Id + 2 \eta \mu_t^* \mu_t^{*\top} )\mu_t \\
        &= \mu^\top_t ( (1- 3 \eta \norm{\mu_t}^2) \Id + 2 \eta \mu_t^* \mu_t^{*\top} )\hat{\mu}^{*}_t = \sigma_1 \mu^\top_t \hat{\mu}^{*}_t
    \end{align}
    because $\hat{\mu}^*$ is the first eigenvector of $(1- 3 \eta \norm{\mu_t}^2) \Id + 2 \eta \mu_t^* \mu_t^{*\top}$.
    Recall from Lemma~\ref{lemma:2-mog-high-noise-grad-equivalence} that the deviation between the negative population gradient and the power iteration update $F(\mu_t,\mu^*_t)$ is bounded by
    \begin{align*}
        \frac{\norm{ \nabla L_t(s_t) + F(\mu_t, \mu_t^*) }}{ \dtp{ \mu_t }{ \hat{\mu}^*_t } } \leq \frac{ 250 \eta \sqrt{d} \norm{ \mu_t }^4 + 10 \eta \norm{\mu_t}^2 \norm{\mu_t^*}^2 }{  \dtp{ \hat{\mu}_t }{ \hat{\mu}^*_t } } \leq 500 \eta \sqrt{d^3} \norm{ \mu_t }^4 + 20 d \eta \norm{\mu_t}^2 \norm{\mu_t^*}^2\,.
    \end{align*}
    Substituting this into Eq.~\eqref{eq:lastineq}, we get 
    \begin{align*}
        \tan \theta' %
        &\leq \frac{ \sigma_2 \dtp{ \hat{\mu}^{*\perp}_t }{ \mu_t } + \eta \wnorm{ \nabla L_t(s_t) + F(\mu_t, \mu_t^*) } + \eta \epsilon }{ \dtp{ \hat{\mu}^*_t }{ \mu_t } (\sigma_1  - 500 \eta \sqrt{d^3} \norm{ \mu_t }^4 - 20 d \eta \norm{\mu_t}^2 \norm{\mu^*_t}^2 - \eta \Tilde{\epsilon} ) } \ \ \ \text{where} \ \ \  \Tilde{\epsilon} \lesssim \frac{d \epsilon}{\norm{\mu}}  \\
        &\leq \frac{ \sigma_2 }{ \Tilde{\sigma}_1 } \tan \theta + \frac{1}{ \Tilde{\sigma}_1 } \rb{ 500 \eta \sqrt{d^3} \norm{ \mu }^4 + 20 d \eta \norm{\mu}^2 \wnorm{\mu^*_t}^2 + \eta \Tilde{\epsilon} } \\
        &\qquad \qquad \qquad \qquad \text{where} \ \ \ \Tilde{\sigma}_1 \triangleq \sigma_1  - 500 \eta \sqrt{d^3} \norm{ \mu }^4 - 20 d \eta \norm{\mu}^2 \wnorm{\mu^*_t}^2 - \eta \Tilde{\epsilon}\\
        &\leq \Big( 1 - \frac{\eta \wnorm{\mu^*_t}^2 }{ \Tilde{\sigma}_1 } \Big) \frac{ \sigma_2 }{ \Tilde{\sigma}_1 - \eta \wnorm{ \mu^*_t }^2 } \tan \theta + \Bigl( \frac{ \eta \wnorm{\mu^*_t}^2 }{ \Tilde{\sigma}_1 } \Bigr) \frac{ 500 \eta \sqrt{d^3} \norm{ \mu_t }^4 + 20 d \eta \norm{\mu_t}^2 \wnorm{\mu^*_t}^2 + \eta \Tilde{\epsilon} }{ \eta \wnorm{ \mu^*_t }^2 } \\
        &\leq \max \Big( \frac{ \sigma_2 }{ \Tilde{\sigma}_1 - \eta \wnorm{ \mu^*_t }^2 } \tan \theta, \frac{ 500 \eta \sqrt{d^3} \norm{ \mu_t }^4 + 20 \eta d \,\norm{\mu_t}^2 \wnorm{\mu^*_t}^2 + \eta \Tilde{\epsilon} }{ \wnorm{ \mu^*_t}^2 } \Big)
    \end{align*}
    where the last inequality uses the fact that convex combinations of two values is less than the maximum of two values. 
\end{proof}

\noindent Finally, we obtain the following bound on the correlation between the ground truth and the final iterate of gradient descent: 

\begin{lemma}
\label{lemma:2-mog-const-sep-inner-product-const}
    For any $h \in\mathbb{N}$, let $\mu^{(h)}_t$ denote the iterate after $h$ empirical gradient steps with learning rate $\eta = 1/20$ starting from random initialization, where the empirical gradients are estimated from at least $\Theta(\frac{d^4 B^3}{\epsilon^2})$ samples. Let $\theta^{(h)}$ denote the angle between $\mu^{(h)}_t$ and $\mu^*_t$. For any $\epsilon \lesssim \frac{1}{d^2 B^9}$, there exists $H' \lesssim B^6 \log d$ such that for any $H \ge H'$, if 
    $\frac{1}{B^3} \leq  \norm{\mu_t^{*}} \leq \frac{1}{B^2}$, we have
    \begin{align*}
        \tan \theta^{(H)} \lesssim 1\,.
    \end{align*}
\end{lemma}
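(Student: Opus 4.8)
The plan follows the outline in Part~I of Section~\ref{sec:mo2g-const-sep}: run a joint induction on the pair $(\tan\theta^{(h)},\|\mu_t^{(h)}\|)$, driving the angle down via the one-step recursion of Lemma~\ref{lemma:projection-angle-decrease} while keeping the norm inside the narrow window in which the power-iteration approximation of Lemma~\ref{lemma:2-mog-high-noise-grad-equivalence} — and hence the formulas for $\kappa_1,\kappa_2$ — are meaningful. \textbf{Initialization.} The algorithm sets $\mu_t^{(0)}=\mu^{(0)}\exp(-t_1)$ with $\mu^{(0)}\sim\mathcal{N}(0,\Id)$, so Lemma~\ref{lemma:random-init-similarity} gives $\cos\theta^{(0)}=|\langle\hat\mu_t^{(0)},\hat\mu_t^*\rangle|\ge\frac{1}{2d}$, hence $\tan\theta^{(0)}\le 2d$, and also $\|\mu_t^{(0)}\|\le 2\sqrt d\,\exp(-t_1)$. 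Because $t_1$ is chosen so that $\|\mu_t^*\|=\|\mu^*\|\exp(-t_1)\in[B^{-3},B^{-2}]$ and $B=\mathrm{poly}(d)$ is taken polynomially large in $d$, this forces $\|\mu_t^{(0)}\|\le\rho$ for a threshold $\rho$ with $\rho\ll 1/\mathrm{poly}(d)$ — concretely, $\rho$ small enough that $500\sqrt{d^3}\rho^4+20d\rho^2\|\mu_t^*\|^2\le\tfrac14\|\mu_t^*\|^2$.

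\textbf{Norm stays in the good window.} I would show by induction that $B^{-3}\lesssim\|\mu_t^{(h)}\|\le\rho$ for every $h\le H'$. For the upper bound: when $\|\mu_t\|\le\rho$, Lemma~\ref{lemma:2-mog-high-noise-grad-equivalence} writes the update as $\mu_t'=\big((1-3\eta\|\mu_t\|^2)\Id+2\eta\mu_t^*\mu_t^{*\top}\big)\mu_t+(\text{error})$ with $\|\text{error}\|=\mathrm{poly}(1/d)$; the matrix has spectral norm at most $1+2\eta\|\mu_t^*\|^2\le 1+2\eta B^{-4}$, so $\|\mu_t'\|\le(1+2\eta B^{-4})\|\mu_t\|+\mathrm{poly}(1/d)$, and since $(1+2\eta B^{-4})^{H'}=O(1)$ over the $H'\lesssim B^6\log d$ steps we run, the iterate never leaves the ball of radius $\rho$. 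For the lower bound: the component of $\mu_t$ orthogonal to $\mu_t^*$ is only contracted (factor $1-3\eta\|\mu_t\|^2<1$), while whenever $\|\mu_t\|^2\le\tfrac23\|\mu_t^*\|^2$ the component along $\mu_t^*$ is expanded (growth factor $\sigma_1\ge1$), so the norm cannot collapse below $\Omega(\|\mu_t^*\|)=\Omega(B^{-3})$. These two bounds are precisely the inputs needed to control $\kappa_1,\kappa_2$.

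\textbf{Unrolling the angle recursion.} Substituting $\|\mu_t^{(h)}\|\le\rho$, $\|\mu_t^*\|\in[B^{-3},B^{-2}]$, and $\epsilon\lesssim 1/(d^2B^9)$ (which makes $\tilde\epsilon\lesssim d\epsilon/\|\mu_t^{(h)}\|\le d\epsilon B^3\lesssim 1/(dB^6)$ negligible against $\|\mu_t^*\|^2\ge B^{-6}$) into Lemma~\ref{lemma:projection-angle-decrease} shows that, uniformly over all iterates, $\kappa_1\le 1-c\eta\|\mu_t^*\|^2\le 1-c\eta/B^6$ and $\kappa_2\lesssim(\text{error terms})/\|\mu_t^*\|^2=1/\mathrm{poly}(d)$. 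Unrolling $\tan\theta^{(h+1)}\le\max(\kappa_1\tan\theta^{(h)},\kappa_2)$ gives $\tan\theta^{(h)}\le\max(\kappa_1^h\tan\theta^{(0)},\kappa_2)\le\max(\kappa_1^h\cdot 2d,\ 1/\mathrm{poly}(d))$; taking $H'=\Theta(B^6\log d)$ makes the first term $\le 1$, and since $\kappa_1<1$ and $\kappa_2=1/\mathrm{poly}(d)$ the quantity $\tan\theta^{(h)}$ then remains $\le 1$ for all $h\ge H'$. The hypothesis of Lemma~\ref{lemma:projection-angle-decrease} that the empirical and population gradients differ by at most $\epsilon$ is supplied by Lemma~\ref{lemma:2-mog-high-noise-grad-equivalence} with $n=\Theta(d^4B^3/\epsilon^2)$ samples, after a union bound over the $H'$ rounds.

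\textbf{Main obstacle.} The delicate point is the joint induction in the norm-control step: the angle recursion is only usable while $\|\mu_t^{(h)}\|$ stays polynomially small, yet the only handle on $\|\mu_t^{(h)}\|$ is the very same approximate power-iteration update, so one must show simultaneously that the norm neither inflates nor collapses over all $\Theta(B^6\log d)$ iterations while the $\mathrm{poly}(1/d)$ per-step gradient-approximation and sampling errors accumulate. In particular, keeping $\|\mu_t^{(h)}\|$ above the $B^{-3}$ scale is exactly what lets $\tilde\epsilon\lesssim d\epsilon/\|\mu_t^{(h)}\|$ be absorbed into $\tfrac14\|\mu_t^*\|^2$ and keeps $\kappa_2=o(1)$; getting all of these constants to line up simultaneously is where the real work lies.
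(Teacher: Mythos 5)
Your overall skeleton is the same as the paper's: start from $\tan\theta^{(0)}\le 2d$ (Lemma~\ref{lemma:random-init-similarity}), feed a norm bound on the iterates into the $\kappa_1,\kappa_2$ formulas of Lemma~\ref{lemma:projection-angle-decrease} to get $\kappa_1\le(1+\Theta(\eta/B^6))^{-1}$ and $\kappa_2\lesssim 1/d$, and unroll for $H'\asymp B^6\log d$ steps. However, your norm-control step has a genuine gap. You bound the one-step map by its worst-case spectral norm $1+2\eta\|\mu_t^*\|^2\le 1+2\eta B^{-4}$ and then assert that $(1+2\eta B^{-4})^{H'}=O(1)$; but with $H'\asymp B^6\log d$ this product is $\exp\bigl(\Theta(\eta B^{2}\log d)\bigr)$, which is enormous, so the argument as written does not keep $\|\mu_t^{(h)}\|$ inside the ball of radius $\rho$. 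The point you are discarding is exactly the self-regulating term $-3\eta\|\mu_t\|^2$: the paper's Lemma~\ref{lemma:gradient-update-bound} handles this with a two-case per-step invariant --- if $\|\mu_t\|\le 0.9\|\mu_t^*\|$ the growth factor $1+2\eta\|\mu_t^*\|^2\le 1.05$ still leaves $1.05\cdot 0.9\|\mu_t^*\|+\tfrac{1}{dB^9}\le\tfrac{1}{B^2}$, while if $\|\mu_t\|\ge 0.9\|\mu_t^*\|$ the relevant multiplier is $1+\eta(2\|\mu_t^*\|^2-3\|\mu_t\|^2)\le 1-0.01\|\mu_t^*\|^2<1$ and the norm contracts --- so the bound $\|\mu_t^{(h)}\|\le 1/B^2$ is preserved at every step without any compounding of growth factors. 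Your proof needs this (or an equivalent) case analysis; the crude product bound cannot be repaired by adjusting constants.

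A secondary, smaller issue: your claimed lower bound $\|\mu_t^{(h)}\|\gtrsim B^{-3}$ ``from the start'' is not justified. The initial iterate has $\|\mu_t^{(0)}\|\approx\sqrt{d}\,e^{-t_1}$, whose component along $\hat\mu_t^*$ can be far below $B^{-3}$ when $B\gg d$, and the expansion factor along $\hat\mu_t^*$ is only $1+O(\eta\|\mu_t^*\|^2)$, so it takes on the order of $B^6$ steps before that component reaches the $\|\mu_t^*\|$ scale; your monotonicity observation only shows the norm does not drop below its initial along-component, not that it is $\Omega(B^{-3})$ throughout. You correctly identify that some lower bound is what makes $\Tilde{\epsilon}\lesssim d\epsilon/\|\mu_t\|$ absorbable (a point the paper treats tersely), but as stated this step is asserted rather than proved, and for $\epsilon\lesssim 1/(d^2B^9)$ the bound you actually have on $\|\mu_t^{(h)}\|$ in early iterations is not obviously enough to make $\Tilde{\epsilon}/\|\mu_t^*\|^2$ small; this needs an explicit argument (e.g.\ tracking the nondecreasing along-$\hat\mu_t^*$ component together with a correspondingly sharper choice of the sampling error).
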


\begin{proof}
    Denote the $h$-th iterate of gradient descent by $\mu^{(h)}_t$. In Lemma~\ref{lemma:gradient-update-bound} we show that $\norm{\mu^{(h)}_t} \leq \frac{1}{B^2}$ for all $h$. We would like to apply the bound in Lemma~\ref{lemma:projection-angle-decrease} to argue that the angle with $\mu^*_t$ decreases when going from $\mu^{(h)}_t$ to $\mu^{(h+1)}_t$. Using that $\frac{1}{B^3} \le \wnorm{\mu^*_t} \le \frac{1}{B^2}$ and $\norm{\mu_t} \le \frac{1}{B^2}$, we can bound the quantity $\kappa_1$ that appears in Lemma~\ref{lemma:projection-angle-decrease} by
    \begin{align}
        \kappa_1 &\le \frac{1 - 3\eta\norm{\mu_t}^2}{1 - 3\eta\norm{\mu_t}^2 + \frac{\eta}{B^6}(1 - \frac{500\sqrt{d^3}}{B^2} - \frac{20d}{B^2} - \epsilon d B^9)} \\
        &\le \frac{1}{1 + \frac{\eta}{B^6}(1 - \frac{500\sqrt{d^3}}{B^2} - \frac{20d}{B^2} - \epsilon d B^9)} \le \frac{1}{1 + \frac{\eta}{2B^6}}\,.
    \end{align}
    On the other hand, for $B$ a sufficiently large polynomial in $d$, we can again use that $\frac{1}{B^3} \le \wnorm{\mu^*_t} \le \frac{1}{B^2}$ and $\norm{\mu_t} \le \frac{1}{B^2}$ to bound the quantity $\kappa_2$ that appears in Lemma~\ref{lemma:projection-angle-decrease} by
    \begin{align}
        \kappa_2 \le \frac{500\eta\sqrt{d^3}}{B^2} + \frac{20\eta d}{B^4} + B^9\eta d\epsilon \lesssim \frac{\eta}{d}\,.
    \end{align}
    As $\abs{ \dtp{\hat{\mu}}{ \hat{\mu}^* } } \geq \frac{1}{2d}$, this implies $|\tan \theta^{(h)}| \leq 2d$. Without loss of generality assume that $\tan \theta^{(h)} \leq 2d$. 

    By Lemma~\ref{lemma:projection-angle-decrease}, for any $h$ we either have $\tan\theta^{(h)} \lesssim \eta/d \ll 1$, in which case we are done as this bound will also hold for subsequent iterates, or $\tan\theta^{(h)} \lesssim (1 + \frac{\eta}{2B^6})^{-1} \tan\theta^{(h-1)}$. If the latter happens consecutively for $H \ge \frac{\log d}{\log(1 + \frac{\eta}{2B^6})}$ steps, then because $(1 + \frac{\eta}{2B^6})^{-H} = \frac{1}{d}$, the angle $\theta$ will satisfy $\tan\theta \le 2d\cdot (1/d) \lesssim 1$. The proof is complete because, by hypothesis, $H \ge \frac{4B^6\log d}{\eta} \ge \frac{\log d}{\log(1 + \frac{\eta}{2B^6})}$ (the last inequality follows from $\log (1 + x) \geq \frac{x}{2}$ for any $0 < x < 1$).
\end{proof}

\begin{lemma}
\label{lemma:gradient-update-bound}
    When parameter $\mu_t$ satisfies $\wnorm{\mu_t} \leq \frac{1}{B^2}$ for the noise scale $t = O(\log d)$ and $\mu'_t$ is the new parameter after performing a gradient descent update on the DDPM objective at noise scale $t = O(\log d)$, then parameter $\mu'_t$ satisfies $ \wnorm{\mu'_t} \leq \frac{1}{B^2}$.
\end{lemma}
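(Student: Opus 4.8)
The plan is to recognize the gradient step as an (almost) power-iteration update $\mu'_t \approx M\mu_t$ for a symmetric matrix $M$ that is a tiny perturbation of $\Id$, and then to exploit the cubic self-damping term in $M$ to absorb both the rank-one expansion along $\mu_t^*$ and the approximation error, so that the norm stays inside the ball of radius $1/B^2$.

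First I would apply Lemma~\ref{lemma:2-mog-high-noise-grad-equivalence}. Writing $\mu'_t = \mu_t - \eta\,\nabla L_t(s_{\mu_t})$ with the learning rate $\eta = 1/20$ used in our algorithm, we obtain $\mu'_t = M\mu_t + \eta E$, where $M \triangleq (1-3\eta\norm{\mu_t}^2)\Id + 2\eta\,\mu_t^*\mu_t^{*\top}$ and $\norm{E} \le 250\sqrt d\,\norm{\mu_t}^5 + 10\norm{\mu_t}^3\norm{\mu_t^*}^2 + \epsilon$ (the last term appearing only in the empirical-gradient version, for which we take $n = \Theta(d^4 B^3/\epsilon^2)$ samples). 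Since $M$ is symmetric with eigenvalues $1-3\eta\norm{\mu_t}^2$ on the orthogonal complement of $\mu_t^*$ and $1-3\eta\norm{\mu_t}^2+2\eta\norm{\mu_t^*}^2$ along $\mu_t^*$, both of which are positive because $\eta = 1/20$ and $\norm{\mu_t} \le 1/B^2 \ll 1$, we get $\norm{M\mu_t} \le (1-3\eta\norm{\mu_t}^2+2\eta\norm{\mu_t^*}^2)\norm{\mu_t}$, hence
\[
    \norm{\mu'_t} \le g(\norm{\mu_t}) + \eta\norm{E}, \qquad \text{where}\quad g(s) \triangleq \bigl(1-3\eta s^2 + 2\eta\norm{\mu_t^*}^2\bigr)\,s\,.
\]

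Next I would observe that $g'(s) = 1 + 2\eta\norm{\mu_t^*}^2 - 9\eta s^2 > 0$ for all $s \le 1/B^2$ (as $B$ is a sufficiently large polynomial in $d$), so $g$ is increasing on $[0, 1/B^2]$ and therefore $g(\norm{\mu_t}) \le g(1/B^2) = \frac{1}{B^2} - \frac{3\eta}{B^6} + \frac{2\eta\norm{\mu_t^*}^2}{B^2}$. Using the bound $\norm{\mu_t^*} \le 1/B^2$, valid at the high noise scale $t = \Theta(\log d)$ in question (recall $\norm{\mu_t^*} = \norm{\mu^*}e^{-t}$ with $\norm{\mu^*} \le \sqrt B$, and indeed $\norm{\mu^*_t}\le 1/B^2$ is assumed in Lemma~\ref{lemma:2-mog-const-sep-inner-product-const} where this lemma is invoked), we get $\frac{2\eta\norm{\mu_t^*}^2}{B^2} \le \frac{2\eta}{B^6}$, whence $g(1/B^2) \le \frac{1}{B^2} - \frac{\eta}{B^6}$. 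Finally, plugging $\norm{\mu_t}, \norm{\mu_t^*} \le 1/B^2$ and $\epsilon \lesssim 1/(d^2 B^9)$ into the bound on $\norm{E}$ gives $\norm{E} \lesssim \sqrt d / B^{10} + 1/(d^2 B^9) \ll 1/B^6$, so $\eta\norm{E} < \eta/B^6$ and
\[
    \norm{\mu'_t} \le g(1/B^2) + \eta\norm{E} < \frac{1}{B^2}\,,
\]
as claimed.

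The only genuinely delicate point is the quantitative competition carried out above: at the boundary $\norm{\mu_t} = 1/B^2$ the self-damping contributes $-3\eta/B^6$ while the rank-one term contributes $+2\eta\norm{\mu_t^*}^2/B^2 \le +2\eta/B^6$, so the constant $3$ must strictly beat the effective expansion factor $2\norm{\mu_t^*}^2/\norm{\mu_t}^2$ — which is exactly why the radius $1/B^2$ in the hypothesis has to match, up to constants, the scale of $\norm{\mu_t^*}$. Everything else is routine control of the lower-order error terms from Lemma~\ref{lemma:2-mog-high-noise-grad-equivalence}.
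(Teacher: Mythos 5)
Your proof is correct and follows essentially the same route as the paper: both invoke Lemma~\ref{lemma:2-mog-high-noise-grad-equivalence} to view the step as multiplication by $(1-3\eta\norm{\mu_t}^2)\Id + 2\eta\,\mu_t^*\mu_t^{*\top}$ plus a small error, bound it by the top eigenvalue, and use $\norm{\mu^*_t}\le 1/B^2$ together with $\epsilon \lesssim 1/(d^2B^9)$ to keep the iterate in the ball. The only (harmless) difference is cosmetic: the paper closes with a two-case split on whether $\norm{\mu_t}\le 0.9\norm{\mu^*_t}$, whereas you evaluate the increasing function $g$ at the boundary $1/B^2$ and let the $-3\eta/B^6$ damping absorb the $+2\eta\norm{\mu^*_t}^2/B^2$ expansion and the error uniformly.
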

\begin{proof}
    When $\wnorm{\mu_t} \leq 0.9 \wnorm{\mu_t^*} \leq \frac{0.9}{B^2}$, we have
    \begin{align*}
        \wnorm{ \mu_t' } & \leq \wnorm{ \mu_t + \eta F(\mu_t, \mu_t^* ) } + \eta \wnorm{ ( - \nabla L_t(s_{\mu_t}) - F(\mu, \mu^* ) ) } + \eta \epsilon \leq  (1 + 2 \eta \wnorm{\mu_t^*}^2  )  \norm{\mu_t} + \frac{1}{d B^9} \\ 
        & \quad \leq 1.05 \norm{\mu_t} + \frac{1}{d B^9} \leq \frac{1}{B^2}.
    \end{align*}
    When $\wnorm{\mu_t} \geq 0.9 \wnorm{\mu_t^*}$, then maximum eigenvalue of $F(\mu_t, \mu_t^* )$ is negative. Therefore, $\wnorm{\mu'_t}$ is less than $\frac{1}{B^2}$. Specifically, we have
    \begin{align*}
        \wnorm{ \mu_t' } & \leq \wnorm{ \mu_t + \eta F(\mu_t, \mu_t^* ) } + \eta \wnorm{ ( - \nabla L_t(s_{\mu_t}) - F(\mu, \mu^* ) ) } + \eta \epsilon \\ 
        & \quad \leq (1 + \eta (2 \wnorm{\mu_t^*}^2 - 3 \wnorm{\mu_t}^2 ) ) \norm{\mu_t} + \frac{1}{d B^9} \leq (1 - 0.01 \norm{\mu_t^*}^2 ) \norm{ \mu_t } + \frac{1}{d B^9} \leq \frac{1}{B^2}. \qedhere
    \end{align*}
\end{proof}

\subsection{Low noise regime - connection to EM algorithm}

In the previous section we showed how to obtain a warm start by running gradient descent on the DDPM objective at high noise. We now focus on proving the contraction of $\wnorm{ \mu_t - \mu_t^* }$ starting from this warm start, by running gradient descent at \emph{low} noise. We first prove the contraction for population gradient descent and then, we argue that the empirical gradient descent concentrates well around the population gradient descent. 

As before, we denote $\mu_t$ as the current iterate and $\mu'_t$ as the next iterate obtained by performing (population) gradient descent on the DDPM objective with step size $\eta$. We upper bound $\wnorm{ \mu_t' - \mu_t^* }$ as follows:
\begin{align*}
    \wnorm{ \mu_t' - \mu_t^* } &= \wnorm{ \mu_t - \eta \nabla_{\mu_t} L_t(s_{\mu_t}) - \mu_t^* } \\
    = & \Big\| (1 - \eta) (\mu_t - \mu_t^*) +\eta \,\mbb{E}_{x \sim \mc{N}(\mu_t^*, 1)} \big[ \big( \tanh (\mu_t^\top x) - \frac{1}{2} \tanh''( \mu_t^\top x ) \norm{ \mu_t }^2 \\
    & + \tanh'( \mu_t^\top x ) \mu_t^\top x \big) x \big]  - \eta\, \mbb{E}_{x \sim \mc{N}(\mu_t^*, 1)} [\tanh'( \mu_t^\top x ) \mu_t]  - \eta \mu_t^* \Big\| \\
    \leq &  (1 - \eta)\, \| \mu_t - \mu_t^* \| +\eta \big\| \mbb{E}_{x \sim \mc{N}(\mu_t^*, 1)}  [ \tanh (\mu_t^\top x)  x ] - \mu_t^* \big\| + \eta\, \| G(\mu_t, \mu_t^*) \|\,,
\end{align*}
where 
\begin{align*}
    G(\mu_t, \mu_t^*) \triangleq \mbb{E}_{x \sim \mc{N}(\mu_t^*, \Id)} \Big[ - \frac{1}{2} \tanh''( \mu_t^\top x ) \norm{ \mu_t }^2 x + (\tanh'( \mu_t^\top x ) \mu_t^\top x) x  -  \tanh'( \mu_t^\top x ) \mu_t \Big]\,.
\end{align*}
Recall that $\mbb{E}_{x \sim \mc{N}(\mu_t^*, 1)}  [\tanh (\mu_t^\top x)  x ]$ is the EM update for mixtures of two Gaussians (See Fact \ref{fact:em-update}). If we can show that the $G(\mu_t,\mu^*_t)$ term above is ``contractive'' in the sense that it is decreasing in $\wnorm{\mu_t - \mu^*_t}$, then we can invoke existing results on convergence of EM to show that the distance between the current iterate and $\mu^*_t$ contracts in a single gradient step~\citep{daskalakis2017ten, xu2016global}. Our goal is thus to control $G(\mu_t, \mu_t^*)$.

For this, we start with the 1D case in Lemma~\ref{lemma:G-contraction}. We then extend to the multi-dimensional case in Lemma~\ref{lemma:multi-dimension-G-contraction}.

\begin{lemma}[One-dimensional version]
\label{lemma:G-contraction}
    Let $\mu, \mu^* > 0$, and consider $\mu \in [c, \frac{4\mu^*}{3}]$ for some constant $c$. In this one-dimensional case, the function $G$ specializes to 
    \begin{align}
    \label{eq:one-d-g-def}
        G(\mu, \mu^*) = \mbb{E}_{x \sim \mc{N}(\mu^*, 1)} \Bigl[ - \frac{1}{2} \tanh''( \mu x ) \mu^2 x + \tanh'( \mu x ) \mu x^2  -  \tanh'( \mu x ) \mu \Bigr]\,,
    \end{align}
    and we have
    \begin{align*}
        G(\mu, \mu^*) \leq 0.01 \abs{\mu - \mu^*}
    \end{align*}
\end{lemma}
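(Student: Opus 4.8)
The plan is to reduce the claimed bound to a uniform estimate on $\partial_\mu G$, exploiting the fact that $G$ vanishes on the diagonal $\mu=\mu^*$ (which is what makes a bound \emph{linear} in $|\mu-\mu^*|$ — rather than a bare small-constant bound — possible).

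First I would simplify $G$ using Gaussian integration by parts (Stein's lemma) applied to $\mathbb{E}_{x\sim\mathcal{N}(\mu^*,1)}[x\cdot(\tanh'(\mu x)\,\mu x)]$, with outer $x$ as the integration variable. One of the resulting terms cancels the $-\mathbb{E}[\tanh'(\mu x)\mu]$ summand, leaving the clean form $G(\mu,\mu^*)=\tfrac12\mu^2\,\mathbb{E}_{x\sim\mathcal{N}(\mu^*,1)}[\tanh''(\mu x)\,x]+\mu\mu^*\,\mathbb{E}_{x\sim\mathcal{N}(\mu^*,1)}[\tanh'(\mu x)\,x]$. Next I would observe $G(\mu^*,\mu^*)=0$: since $s_{\mu^*_t}=\nabla\log q_t$ is the exact minimizer of the population loss $L_t$, its gradient vanishes at $\mu_t=\mu^*_t$, and combining this with the decomposition $-\nabla_{\mu_t}L_t(s_{\mu_t})=\mathbb{E}_{x\sim\mathcal{N}(\mu^*_t,1)}[\tanh(\mu_t x)x]-\mu_t+G(\mu_t,\mu^*_t)$ implied by Lemma~\ref{lemma:update}, together with the population-EM fixed-point identity $\mathbb{E}_{x\sim\mathcal{N}(\mu^*,1)}[\tanh(\mu^* x)x]=\mu^*$ \cite{daskalakis2017ten,xu2016global}, gives $G(\mu^*,\mu^*)=\mu^*-\mathbb{E}_{x\sim\mathcal{N}(\mu^*,1)}[\tanh(\mu^* x)x]=0$.

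Given this, the fundamental theorem of calculus yields $G(\mu,\mu^*)=\int_{\mu^*}^{\mu}\partial_s G(s,\mu^*)\,ds$, hence $|G(\mu,\mu^*)|\le|\mu-\mu^*|\cdot\sup_s|\partial_s G(s,\mu^*)|$ where $s$ ranges between $\mu$ and $\mu^*$; since $\mu\in[c,\tfrac43\mu^*]$, every such $s$ satisfies $\tfrac{3c}{4}\le s\le\tfrac43\mu^*$ and $\mu^*\ge\tfrac34 c$. So it suffices to show $|\partial_s G(s,\mu^*)|\le 0.01$ throughout this range. Differentiating the simplified formula gives four terms, each a polynomial of degree $\le 2$ in $(s,\mu^*)$ — hence $\mathrm{poly}(\mu^*)$ since $s$ is within a constant factor of $\mu^*$ — times an expectation $\mathbb{E}_{x\sim\mathcal{N}(\mu^*,1)}[\tanh^{(k)}(sx)\,x^l]$ with $k\in\{1,2,3\}$ and $l\in\{1,2\}$. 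Using $|\tanh^{(k)}(u)|\le C_k e^{-2|u|}$ for $k\ge 1$ and splitting each expectation at $|x|=\mu^*/2$ — on $\{|x|>\mu^*/2\}$ the factor $e^{-2|sx|}\le e^{-3c\mu^*/4}$ is tiny while the remaining moment is $\mathrm{poly}(\mu^*)$; on $\{|x|\le\mu^*/2\}$ the $\mathcal{N}(\mu^*,1)$-density is $\le e^{-\mu^{*2}/8}/\sqrt{2\pi}$ and the integrand is bounded — gives $|\mathbb{E}[\tanh^{(k)}(sx)x^l]|\le\mathrm{poly}(\mu^*)\,(e^{-3c\mu^*/4}+e^{-\mu^{*2}/8})$, and therefore $|\partial_s G(s,\mu^*)|\le\mathrm{poly}(\mu^*)\,(e^{-3c\mu^*/4}+e^{-\mu^{*2}/8})$. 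Since $\mu^*\ge\tfrac34 c$, choosing the absolute constant $c$ large enough (as a function of the fixed polynomial degrees that appear) makes this at most $0.01$, proving the lemma.

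The main obstacle is this last step: verifying that the exponential decay of the derivatives of $\tanh$ — together with the fact that $\mathcal{N}(\mu^*,1)$ places only $e^{-\Omega(\mu^{*2})}$ mass near the origin, the only region where those derivatives fail to be exponentially small — dominates the polynomial prefactors $s^2,\ s\mu^*,\ \mu^*$. Everything else is bookkeeping. A secondary point requiring care is the diagonal vanishing $G(\mu^*,\mu^*)=0$, since this is precisely what converts the estimate into one proportional to $|\mu-\mu^*|$ rather than merely a small absolute constant.
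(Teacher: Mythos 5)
Your proposal is correct, and it rests on the same two pillars as the paper's proof: the diagonal vanishing $G(a,a)=0$ (obtained exactly as in the paper, from stationarity of the population DDPM loss at the true parameter plus the EM fixed-point identity $\mbb{E}_{x\sim\mc{N}(a,1)}[\tanh(ax)x]=a$), followed by a one-variable calculus bound that exploits the exponential decay of $\tanh',\tanh'',\tanh'''$ against a Gaussian centered far from the origin. The execution differs in two respects. First, you pre-simplify $G$ via Stein's lemma to $\tfrac12\mu^2\,\mbb{E}[\tanh''(\mu x)x]+\mu\mu^*\,\mbb{E}[\tanh'(\mu x)x]$ (a correct identity) and then integrate $\partial_\mu G(\cdot,\mu^*)$ from $\mu^*$ to $\mu$, anchoring at $G(\mu^*,\mu^*)=0$; the paper instead anchors at $G(\mu,\mu)=0$ and applies the mean value theorem in the \emph{second} argument (the Gaussian mean), so its derivative is simply $\mbb{E}_{x\sim\mc{N}(\xi,1)}$ of the $x$-derivative of the original integrand. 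Second, the paper bounds the resulting three expectations by explicit Gaussian integral computations, which is what lets it quote a concrete constant (e.g.\ $c=25$); you use a generic split at $\abs{x}=\mu^*/2$, giving a $\mathrm{poly}(\mu^*)\bigl(e^{-3c\mu^*/4}+e^{-\mu^{*2}/8}\bigr)$ bound and an unspecified-but-absolute $c$, which is shorter and more modular but less quantitative—both are consistent with the lemma, which only needs $c$ to be a sufficiently large absolute constant. One small inaccuracy: your parenthetical that $s$ ``is within a constant factor of $\mu^*$'' does not follow from $\mu\in[c,\tfrac43\mu^*]$ (one only gets $\tfrac34 c\le s\le\tfrac43\mu^*$ and $\mu^*\ge\tfrac34 c$, with $s$ possibly much smaller than $\mu^*$); this is harmless because your estimates only use the upper bound $s\lesssim\mu^*$ and the lower bounds $s,\mu^*\gtrsim c$, but the remark should be corrected.
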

The proof uses the fact that the function $G$ only contains first or higher-order derivatives of the $\tanh$ function and all the derivatives of $\tanh$ decay exponential quickly as $\mu$ increases. Therefore, when $\mu$ is at least a constant, we obtain the result. The complete proof of lemma \ref{lemma:G-contraction} is given in Appendix \ref{section:proof-G-contraction-lemma}. 

\begin{lemma}[Multi-dimensional version]
\label{lemma:multi-dimension-G-contraction}
     For any noise scale $t$, when the current parameter at noise scale $t$, $\mu_t$, satisfies $\| \mu_t \| \in [c, \frac{4 \dtp{ \hat{\mu}_t }{ \mu_t^* } }{3}]$ for some sufficiently large constant $c$, then the following inequality holds: 
    \begin{align*}
        \norm{G(\mu_t, \mu_t^*)} \leq 0.01 \norm{ \mu_t - \mu_t^* }
    \end{align*}
\end{lemma}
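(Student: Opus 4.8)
The plan is to reduce the multi-dimensional bound to the one-dimensional bound of Lemma~\ref{lemma:G-contraction} by decomposing $x$ into its component along $\hat{\mu}_t$ and its component orthogonal to $\hat{\mu}_t$. Let $P \triangleq \Id - \hat{\mu}_t\hat{\mu}_t^\top$ denote the orthogonal projection onto $\hat{\mu}_t^\perp$, and for $x \sim \mc{N}(\mu_t^*,\Id)$ write $u \triangleq \hat{\mu}_t^\top x$, so that $u \sim \mc{N}\!\rb{\dtp{\hat{\mu}_t}{\mu_t^*},\,1}$ and $Px$ is independent of $u$ with $\mbb{E}[Px] = P\mu_t^*$. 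The key observation is that in the definition of $G$, every occurrence of $x$ inside a derivative of $\tanh$ enters only through $\mu_t^\top x = \|\mu_t\|\,u$. Substituting $x = u\hat{\mu}_t + Px$ and using the independence of $u$ and $Px$ to factor the expectations, one obtains the orthogonal decomposition
\begin{align*}
    G(\mu_t,\mu_t^*) = A\,\hat{\mu}_t + B\,P\mu_t^*\,,
\end{align*}
where
\begin{align*}
    A &= \mbb{E}_{u}\Big[-\tfrac12\tanh''(\|\mu_t\| u)\|\mu_t\|^2 u + \tanh'(\|\mu_t\| u)\|\mu_t\| u^2 - \tanh'(\|\mu_t\| u)\|\mu_t\|\Big]\,, \\
    B &= \mbb{E}_{u}\Big[-\tfrac12\tanh''(\|\mu_t\| u)\|\mu_t\|^2 + \tanh'(\|\mu_t\| u)\|\mu_t\| u\Big]\,.
\end{align*}
The point of this rewriting is that $A$ is \emph{exactly} the one-dimensional quantity in Eq.~\eqref{eq:one-d-g-def} with scalar mean parameter $\|\mu_t\|$ and scalar ground truth $\dtp{\hat{\mu}_t}{\mu_t^*}$, i.e. $A = G\rb{\|\mu_t\|,\,\dtp{\hat{\mu}_t}{\mu_t^*}}$.

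Next I would bound the two pieces separately. For $A$: the hypothesis $\|\mu_t\| \in [c,\,\tfrac{4}{3}\dtp{\hat{\mu}_t}{\mu_t^*}]$ is precisely the hypothesis of Lemma~\ref{lemma:G-contraction} applied with $\mu = \|\mu_t\|$ and $\mu^* = \dtp{\hat{\mu}_t}{\mu_t^*}$ (note $\dtp{\hat{\mu}_t}{\mu_t^*} \ge \tfrac34\|\mu_t\| > 0$), so that lemma directly gives $|A| \le 0.01\,\big|\|\mu_t\| - \dtp{\hat{\mu}_t}{\mu_t^*}\big|$. For $B$: it involves only $\tanh'$ and $\tanh''$, both of which satisfy $|\tanh'(y)|,\,|\tanh''(y)| \lesssim e^{-2|y|}$ for all $y\in\R$, so the integrand is exponentially small whenever $\|\mu_t\|\,|u|$ is large; since $u$ concentrates around $\dtp{\hat{\mu}_t}{\mu_t^*} \ge \tfrac34 c$ and $\|\mu_t\| \ge c$, a Gaussian tail estimate of exactly the type used in the proof of Lemma~\ref{lemma:G-contraction} yields $|B| \le e^{-\Omega(c^2)} \le 0.01$ for $c$ a sufficiently large absolute constant.

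Finally I would combine the two bounds via Pythagoras. Since $\mu_t = \|\mu_t\|\hat{\mu}_t$ and $\mu_t^* = \dtp{\hat{\mu}_t}{\mu_t^*}\hat{\mu}_t + P\mu_t^*$ with $P\mu_t^* \perp \hat{\mu}_t$, we have $\|\mu_t - \mu_t^*\|^2 = \big(\|\mu_t\| - \dtp{\hat{\mu}_t}{\mu_t^*}\big)^2 + \|P\mu_t^*\|^2$, and likewise $\hat{\mu}_t \perp P\mu_t^*$ gives $\|G(\mu_t,\mu_t^*)\|^2 = A^2 + B^2\|P\mu_t^*\|^2$. Plugging in $|A| \le 0.01\big|\|\mu_t\|-\dtp{\hat{\mu}_t}{\mu_t^*}\big|$ and $|B|\le 0.01$ yields $\|G(\mu_t,\mu_t^*)\|^2 \le (0.01)^2\big(\|\mu_t\|-\dtp{\hat{\mu}_t}{\mu_t^*}\big)^2 + (0.01)^2\|P\mu_t^*\|^2 = (0.01)^2\|\mu_t-\mu_t^*\|^2$, which is the claim. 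The main obstacle is the cross term $B\,P\mu_t^*$: it has no one-dimensional analogue, so one cannot merely invoke Lemma~\ref{lemma:G-contraction}, and bounding $|B|$ requires exploiting that $\tanh'$ and $\tanh''$ decay on \emph{both} sides (a crude moment generating function bound on $\mbb{E}[e^{-2\|\mu_t\|u}]$ is \emph{not} small enough, since $\|\mu_t\|$ and $\dtp{\hat{\mu}_t}{\mu_t^*}$ are only comparable up to the factor $4/3$), together with the fact that both quantities are $\gtrsim c$.
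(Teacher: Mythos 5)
Your proposal is correct and follows essentially the same route as the paper: the paper likewise decomposes $G(\mu_t,\mu_t^*)$ along $\hat{\mu}_t$, the in-plane orthogonal direction $\muhp_t$, and the remaining directions (where $G$ vanishes), identifies the $\hat{\mu}_t$-component with the one-dimensional $G\rb{\|\mu_t\|,\dtp{\hat{\mu}_t}{\mu_t^*}}$ and invokes Lemma~\ref{lemma:G-contraction}, and bounds your scalar $B$ by $0.01$ via the same Gaussian-tail/reflection estimate (this is exactly Lemma~\ref{lemma:small-expectation}), finishing by Pythagoras.
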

\begin{proof}
    Suppose $\{v_1, v_2, \ldots, v_d\}$ are $d$ orthonormal directions such that $v_1 = \hat{\mu}_t$ and $v_2$ is either of the two unit vectors $\muhp_t$ which are orthogonal to $\hat{\mu}_t$ in the plane of $\mu_t$ and $\mu_t^*$. Recall that
    \begin{align*}
        G(\mu_t, \mu_t^*) &= \mbb{E}_{x \sim \mc{N}(\mu_t^*, \Id)} \Big[ - \frac{1}{2} \tanh''( \mu_t^\top x ) \norm{ \mu_t }^2 x + (\tanh'( \mu_t^\top x ) \mu_t^\top x) x  -  \tanh'( \mu_t^\top x ) \mu_t \Big] \\ 
        &= \mbb{E}_{x \sim \mc{N}(0, I)} \Big[ - \frac{1}{2} \tanh''( \mu_t^\top (x + \mu_t^*) ) \norm{ \mu_t }^2 (x+ \mu_t^*) \\
        & \quad\quad{}+ \tanh'( \mu_t^\top (x+ \mu_t^*) ) (\mu_t^\top (x + \mu_t^*)) (x + \mu_t^*)  -  \tanh'( \mu_t^\top (x + \mu_t^*) ) \mu_t \Big] \\
        &= \mbb{E}_{\alpha_1, \alpha_2, \ldots, \alpha_d \sim \mc{N}(0, 1)} \Big[ - \frac{1}{2} \tanh''( \; \norm{ \mu_t } ( \alpha_1 + \hat{\mu}_t^\top \mu_t^*) ) \norm{ \mu_t }^2 \big( \sum\nolimits_i \alpha_i v_i + \mu_t^*\big) \\ 
        & \quad\quad{}+ \tanh'( \; \norm{ \mu_t } ( \alpha_1 + \hat{\mu}_t^\top \mu_t^*) ) \norm{ \mu_t } ( \alpha_1 + \hat{\mu}_t^\top \mu_t^*) \big( \sum\nolimits_i \alpha_i v_i + \mu_t^*\big) \\
        &\quad \quad{}-  \tanh'( \;\norm{ \mu_t } ( \alpha_1 + \hat{\mu}_t^\top \mu_t^*) ) \mu_t \Big]\,,
    \end{align*}
    where in the last equality we rewrote $x \sim \mc{N}(0, I)$ as $\sum_{i=1}^d \alpha_i v_i$ for $\alpha_i \sim \mc{N}(0, 1)$. Therefore, we have
    \begin{align*}
        & \dtp{\hat{\mu_t}}{ G(\mu_t, \mu_t^*) } \\ 
        \; & \quad = \mbb{E}_{\alpha_1, \alpha_2, \ldots, \alpha_d \sim \mc{N}(0, I)}\Big[ - \frac{1}{2} \tanh''( \; \norm{\mu_t} (\alpha_1 + \hat{\mu}_t^\top \mu_t^*) ) \norm{ \mu_t }^2 (\alpha_1 + \hat{\mu_t}^\top \mu_t^*) \\
        & \quad\quad\quad + \tanh'( \; \norm{\mu_t} (\alpha_1 + \hat{\mu_t}^\top \mu_t^*) ) \norm{\mu_t} (\alpha_1 + \hat{\mu}_t^\top \mu_t^*)^2  -  \tanh'( \; \norm{\mu_t} (\alpha_1 + \hat{\mu_t}^\top \mu_t^*) ) \norm{ \mu_t } \Big] \\
         & \quad = \mbb{E}_{\alpha_1 \sim \mc{N}(\hat{\mu}^\top_t \mu_t^*, 1)}\Big[ -\frac{1}{2} \tanh''(\hspace{0.005em} \wnorm{\mu_t} \alpha_1) \norm{\mu_t}^2 \alpha_1 + \tanh'( \;\norm{\mu_t} \alpha_1 ) \norm{\mu_t} \alpha_1^2 - \tanh'( \;\norm{\mu_t} \alpha_1 ) \norm{ \mu_t } \Big]\,.
    \end{align*}
    
    \noindent By taking $\norm{\mu_t}$ to be $\mu$ and $\langle \hat{\mu}_t, \mu^*_t\rangle$ to be $\mu^*$, we observe the similarity between the right side of the above equation and the one-dimensional definition of $G$ defined in Eq.~\eqref{eq:one-d-g-def}. Using Lemma \ref{lemma:G-contraction} and if $\norm{\mu_t} \in [c, \frac{4 \dtp{ \hat{\mu}_t }{ \mu_t^* } }{3}]$, we have
    \begin{align*}
        \dtp{\hat{\mu_t}}{ G(\mu_t, \mu_t^*) } \leq 0.01 \abs{\dtp{ \hat{\mu_t} }{ \mu_t } - \dtp{ \hat{\mu_t} }{ \mu_t^* }}
    \end{align*}
    Taking the dot product of $G(\mu_t, \mu_t^*)$ with $v_2 = \muhp_t$, we have
    \begin{align*}
        \dtp{ \muhp_t }{ G(\mu_t, \mu_t^*) } = \; & \mbb{E}_{\alpha_1, \alpha_2, \ldots, \alpha_d \sim \mc{N}(0, 1)} \Big[ - \frac{1}{2} \tanh''( \; \norm{ \mu_t } ( \alpha_1 + \hat{\mu}_t^\top \mu_t^*) ) \norm{ \mu_t }^2 ( \alpha_2 + \dtp{ \muhp_t }{\mu_t^*} ) \\ 
        &\quad + \tanh'( \; \norm{ \mu_t } ( \alpha_1 + \hat{\mu}_t^\top \mu_t^*) ) \norm{ \mu_t } ( \alpha_1 + \hat{\mu}_t^\top \mu_t^*) ( \alpha_2 + \dtp{ \muhp_t }{\mu_t^*} ) \Big] \\
         = \; & \mbb{E}_{\alpha_1 \sim \mc{N}( \hat{\mu}_t^\top \mu_t^* , 1)} \Big[ - \frac{1}{2} \tanh''( \; \norm{ \mu_t }  \alpha_1 ) \norm{ \mu_t }^2 \dtp{ \muhp_t }{\mu_t^*} \\
         & \quad + \tanh'( \; \norm{ \mu_t } \alpha_1  ) \norm{ \mu_t } \alpha_1 \dtp{ \muhp_t }{\mu_t^*} \Big] \\
         = \; & \dtp{ \muhp_t }{\mu_t^*} \,\mbb{E}_{\alpha_1 \sim \mc{N}( \hat{\mu}_t^\top \mu_t^* , 1)} \Big[ - \frac{1}{2} \tanh''( \; \norm{ \mu_t }  \alpha_1 ) \norm{ \mu_t }^2 + \tanh'( \; \norm{ \mu_t } \alpha_1  ) \norm{ \mu_t } \alpha_1 \Big] \,.
    \end{align*}
    In Lemma \ref{lemma:small-expectation} below, we show that when $\| \mu_t \| \in [c, \frac{4 \dtp{ \hat{\mu}_t }{ \mu_t^* } }{3}]$, the expectation in the last expression is upper bounded by 0.01. Therefore, we have
    \begin{align*}
         \big| \dtp{ \muhp_t }{ G(\mu_t, \mu_t^*) } \big| \leq 0.01  | \dtp{ \muhp_t }{\mu_t^*} | \implies \big| \dtp{ \muhp_t }{ G(\mu_t, \mu_t^*) } \big| \leq 0.01 \big| \dtp{ \muhp_t }{\mu_t - \mu_t^*} \big|
    \end{align*}
    Observe that for $i=3,\ldots,d$, $\dtp{G(\mu_t, \mu_t^*) }{v_i} = 0$. Therefore, we have
    \begin{align*}
        \norm{G(\mu_t, \mu_t^*)}^2 &= \sum_{i=1}^d \dtp{ v_i }{ G(\mu_t, \mu_t^*) }^2 \leq 0.01^2 \norm{ \mu_t - \mu_t^* }^2\,.\qedhere
    \end{align*}
\end{proof}

\noindent The next Lemma ensures that the parameter $\mu_t$ after a few steps of gradient descent on the DDPM objective stays in the region where the function $G$ satisfies $\norm{G(\mu_t, \mu_t^*)} \leq 0.01 \norm{ \mu_t - \mu_t^* }$. Recall that the condition of the Lemma is satisfied because we initialize at the warm start obtained by gradient descent in the high noise regime. 

\begin{lemma}
\label{lemma:mu-in-interval}
    Suppose the angle between initialization $\hat{\mu}_t^{(0)}$ and optimal parameter $\mu_t^*$ is $\Theta(1)$, then for any $h$, we have $\| \mu_t^{(h)} \| \in [c, \frac{4 \dtp{ \hat{\mu}_t^{(h)} }{ \mu_t^* } }{3}]$.
\end{lemma}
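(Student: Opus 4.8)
The plan is to prove the slightly stronger invariant that $\| \mu_t^{(h)} - \mu_t^* \| \le \delta$ for every $h$, where $\delta$ is a small constant fraction of $\| \mu_t^* \|$ (e.g.\ $\delta \le \| \mu_t^* \|/4$ suffices), and then read off the stated conclusion from two elementary geometric facts: (a) if $\| \mu - \mu^* \| \le \delta$ then $\| \mu \| \ge \| \mu^* \| - \delta \ge c$, using that $\| \mu^* \|$ exceeds a sufficiently large absolute constant and that $t = O(1)$, so $\| \mu_t^* \| = e^{-t}\| \mu^* \|$ is still large enough; and (b) writing $\dtp{\mu}{\mu^*} = \tfrac12(\| \mu \|^2 + \| \mu^* \|^2 - \| \mu - \mu^* \|^2)$, one checks that $\| \mu \|^2 \le 2(\| \mu^* \|^2 - \delta^2)$ implies $\| \mu \|^2 \le \tfrac43 \dtp{\mu}{\mu^*}$, i.e.\ $\| \mu \| \le \tfrac43 \dtp{\hat{\mu}}{\mu^*}$, and the premise holds whenever $\| \mu \| \le \| \mu^* \| + \delta$ and $\delta < \| \mu^* \|/3$. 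Thus the interval membership the lemma asserts is, up to constants, equivalent to this $\delta$-closeness invariant, and it is the latter I will actually track by induction on $h$.

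\textbf{Base case.} By Lemma~\ref{lemma:2-mog-const-sep-inner-product-const} — indeed, by the sharper angular contraction established inside its proof — the warm start emerging from the high-noise phase has angle to $\mu_t^*$ bounded by a small constant (in fact $O(1/\mathrm{poly}(d))$). Rescaling it to have norm equal to the empirical estimate of $\| \mu_t^* \|$ then produces $\mu_t^{(0)}$ with $\| \mu_t^{(0)} \| \approx \| \mu_t^* \|$ and small angle, hence $\| \mu_t^{(0)} - \mu_t^* \| \le \delta$.

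\textbf{Inductive step.} Assume $\| \mu_t^{(h)} - \mu_t^* \| \le \delta$. By fact (b) above, $\mu_t^{(h)}$ lies in the interval $[c, \tfrac43 \dtp{\hat{\mu}_t^{(h)}}{\mu_t^*}]$, so Lemma~\ref{lemma:multi-dimension-G-contraction} applies and yields $\| G(\mu_t^{(h)}, \mu_t^*) \| \le 0.01\,\| \mu_t^{(h)} - \mu_t^* \|$. Feeding this, together with the EM contraction bound~\eqref{eq:em-contraction} — valid here because $\mbb{E}_{x \sim \mc{N}(\mu_t^*, \Id)}[\tanh((\mu_t^{(h)})^\top x)\, x]$ is exactly the EM update (Fact~\ref{fact:em-update}) and \cite{daskalakis2017ten} shows it contracts with some $\lambda_1 < 1$ — into the single-step decomposition from Lemma~\ref{lemma:update} (which bounds $\| \mu_t^{(h+1)} - \mu_t^* \|$ by $(1-\eta)\| \mu_t^{(h)} - \mu_t^* \|$, plus $\eta$ times the EM-update error, plus $\eta\| G(\mu_t^{(h)}, \mu_t^*) \|$, plus a negligible empirical-gradient term $\eta\epsilon$), and using $\lambda_1 + 0.01 < 1$, gives $\| \mu_t^{(h+1)} - \mu_t^* \| \le (1 - \eta(1 - \lambda_1 - 0.01))\,\| \mu_t^{(h)} - \mu_t^* \| + \eta\epsilon \le \delta$. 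This closes the induction, and fact (a) then supplies the lower bound $\| \mu_t^{(h)} \| \ge c$.

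\textbf{Main obstacle.} The difficulty is the self-referential structure: Lemma~\ref{lemma:multi-dimension-G-contraction} may only be invoked at iterates that already lie in $[c, \tfrac43 \dtp{\hat{\mu}_t}{\mu_t^*}]$, yet remaining in that region is precisely the conclusion being sought, and it holds only because $G$ is small there, which is in turn what drives the contraction that keeps us in the region. Breaking this loop requires: (i) strengthening the induction hypothesis from the stated interval membership to $\delta$-closeness and pinning down the two geometric implications between them with mutually compatible constants (in particular the slack $\tfrac43$ in the interval against the warm-start radius $\delta$); (ii) verifying that the EM rate $\lambda_1$ leaves strictly more than $0.01$ of room, so that $\lambda_1 + 0.01 < 1$; (iii) ensuring the base case is genuinely in range, which forces the high-noise phase to be run long enough for the initial angle to be small and forces the warm start's norm to be calibrated to an estimate of $\| \mu_t^* \|$; and (iv) checking that the empirical-gradient deviation $\epsilon$ (controlled by the concentration analysis) is small relative to the contraction margin $\eta(1 - \lambda_1 - 0.01)$, so the distance to $\mu_t^*$ genuinely decreases rather than merely staying bounded.
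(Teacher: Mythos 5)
Your inductive step is fine and essentially mirrors the contraction argument the paper runs in the proof of Theorem~\ref{thm:mo2g-const-sep-appendix}, but the base case is a genuine gap: the invariant you induct on, $\|\mu_t^{(h)}-\mu_t^*\|\le\delta$ with $\delta\le\|\mu_t^*\|/4$, is strictly stronger than what the lemma's hypothesis provides, and it cannot be seeded. The hypothesis is only that the angle between $\hat{\mu}_t^{(0)}$ and $\mu_t^*$ is $\Theta(1)$; the warm start actually produced by the high-noise phase (Lemma~\ref{lemma:2-mog-const-sep-inner-product-const}) guarantees only $\tan\theta^{(H)}\lesssim 1$, i.e.\ a constant angle, and says nothing about $\|\mu_t^{(0)}\|$ matching $\|\mu_t^*\|$. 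With a constant angle (say $\cos\theta_0=0.8$) and even a perfectly calibrated norm one already has $\|\mu_t^{(0)}-\mu_t^*\|\approx 0.63\,\|\mu_t^*\|$, far outside your $\delta$-ball; and the lemma must also cover initializations whose norm is as small as $c$ while $\|\mu_t^*\|$ may be as large as $B$, in which case the Euclidean distance is enormous. Your patch — demanding the high-noise phase be run until the angle is $O(1/\mathrm{poly}(d))$ and rescaling the iterate to an empirical estimate of $\|\mu_t^*\|$ — changes the algorithm and the hypotheses: the norm-estimation/projection step belongs to the small-separation variant (Theorem~\ref{thm:mo2g-small-sep-appendix}), not to the procedure this lemma supports, so what you prove is a different (weaker) statement under a much stronger warm start.

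The paper avoids this by keeping exactly the stated interval membership $\|\mu_t^{(h)}\|\in[c,\tfrac{4}{3}\dtp{\hat{\mu}_t^{(h)}}{\mu_t^*}]$ as the induction hypothesis and breaking the circularity with different tools: it shows the component of the update along $\hat{\mu}_t^{(h)\perp}$ is positive and compares $\cot\alpha_h$ with $\cot\beta_h$ to prove the angle to $\mu_t^*$ is non-increasing (Lemma~\ref{lemma:angle-decrease}, via Lemma~\ref{lemma:positive-bound}), bounds the size of a single gradient step (Lemma~\ref{lemma:update-bound}), and tracks the one-dimensional projection $\dtp{\hat{\mu}_t^{(h)}}{\mu_t^{(h+1)}}$ to show it stays in $[c,\tfrac{6}{5}\dtp{\hat{\mu}_t^{(h)}}{\mu_t^*}]$, which then converts back to the claimed interval for $\|\mu_t^{(h+1)}\|$ using the angle monotonicity. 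In other words, the paper never needs initial Euclidean closeness to $\mu_t^*$; if you want to salvage your route, you would have to either prove the lemma under its actual hypotheses with an argument of that kind, or explicitly weaken the lemma and propagate the stronger initialization requirement through Theorem~\ref{thm:mo2g-const-sep-appendix}, which is not how the paper's pipeline is set up.
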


\noindent The proof of Lemma \ref{lemma:mu-in-interval} is given in Appendix \ref{subsec:mu-in-interval-proof}. Finally, we are ready to prove the main result of this section:

\begin{proof}[Proof of Theorem~\ref{thm:mo2g-const-sep-appendix}]
    To obtain the contraction of $\wnorm{ \mu_t^{(h)} - \mu_t^* }$ after a gradient descent step on the DDPM objective, we write $\wnorm{ \mu_t^{(h+1)} - \mu_t^* }$ in terms of $\wnorm{ \mu_t^{(h)} - \mu_t^* }$ as follows:
    \begin{align*}
        & \big\| \mu_t^{(h+1)} - \mu_t^* \big\| = \big\| \mu_t^{(h)} - \eta \nabla L_t(s_{\mu_t^{(h)}}) - \mu_t^* \big\| + \eta \bigg\| \Big( \frac{1}{n} \sum_{i=1}^n \nabla L_t(s_{\mu_{t}^{(h)}} (x_{i}, z_{i}) ) \Big) - \nabla L_t(s_{\mu_{t}^{(h)}} ) \bigg\| \\
     & \leq   (1 - \eta) \| \mu_t^{(h)} - \mu_t^* \| +\eta \,\big\| \mbb{E}_{x \sim \mc{N}(\mu_t^*, 1)}  [ (\tanh (\mu_t^{(h)^\top} x)  ) x ] - \mu_t^* \big\| + \eta \| G(\mu_t^{(h)}, \mu_t^*) \| + \eta \epsilon\,,
    \end{align*}
    where in the last step we used Lemma~\ref{lemma:sample-complexity-k-mog} below to bound the distance between the population and empirical gradient.
    
    Recall that gradient descent in the low noise regime was initialized using the output of the gradient descent in the high noise regime. Therefore, $\dtp{\hat{\mu}_t^{(0)}}{\hat{\mu}_t^*} \gtrsim 1$. Using Lemma \ref{lemma:mu-in-interval}, we know that the condition on Lemma \ref{lemma:G-contraction} is always satisfied. Using the contractivity of $G$ established in Lemma \ref{lemma:G-contraction} combined with \cite[Theorem 2]{daskalakis2017ten}, and choosing $\eta = 0.05$, we conclude that the distance to the ground truth contracts:
    \begin{align*}
        \big\| \mu_t^{(h+1)} - \mu_t^* \big\| & \leq (1 - 0.05) \big\| \mu_t^{(h)} - \mu_t^* \big\| + 0.01 \big\| \mu_t^{(h)} - \mu_t^* \big\| + 0.01 \big\| \mu_t^{(h)} - \mu_t^* \big\| + \eta \epsilon \\
        & \leq 0.97 \big\| \mu_t^{(h)} - \mu_t^* \big\| + \eta \epsilon. \\
    \end{align*}
    Applying the above for all $h \in [H]$, we obtain
    \begin{align*}
        \wnorm{ \mu_t^{(H)} - \mu_t^* } \leq 0.97^H \wnorm{ \mu_t^{(0)} - \mu_t^* } + 50\epsilon.
    \end{align*}
    The choice of $H$ given in the Theorem statement proves the result. 
\end{proof}

\section{Learning mixtures of two Gaussians with small separation}
\label{appsec:proof-2-mog-small-sep}

In this section, we extend the analysis for learning mixtures of two Gaussians with constant separation, provided in Section \ref{appsec:proof-2-mog-const-sep}, to the low-separation regime and prove the following:

\begin{theorem}[Formal version of Theorem \ref{thm:2-mog-small-sep}]
\label{thm:mo2g-small-sep-appendix}
For any $\mc{L} > 0$, let $q$ be a mixture of two Gaussians (in the form of Eq.~\eqref{eq:mog-2-pdf}) with mean parameter $\mu^*$ satisfying $\norm{\mu^*} > \mc{L}$. Recalling that $B$ denotes an \emph{a priori} upper bound on $\norm{\mu^*}$, we have that for any $\epsilon \le \epsilon'$, where $\epsilon' \lesssim \frac{1}{d^2 B^9}$, there exists a procedure satisfying the following. If the procedure is run for at least $\mathrm{poly}(d, B, \frac{1}{\mc{L}})\frac{1}{\varepsilon^3}$ iterations with at least $\mathrm{poly}(d, B, \frac{1}{\mc{L}})*\frac{1}{\varepsilon^{8}}$ samples from $q$, then it outputs $\Tilde{\mu}$ such that $\wnorm{ \Tilde{\mu} - \mu^* } \leq \epsilon$ with high probability.
\end{theorem}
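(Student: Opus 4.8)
The plan is to obtain Theorem~\ref{thm:mo2g-small-sep-appendix} as a quantitative sharpening of the high-noise analysis of Section~\ref{appsec:proof-2-mog-const-sep}, run with the projection modification and with \emph{no} low-noise (EM) phase. First I would fix the high noise scale $t_1$ so that $\|\mu^*_{t_1}\| = \|\mu^*\|\exp(-t_1)$ is a suitably small polynomial in $\epsilon$ and $1/(dB)$; since $\|\mu^*\|\in[\mathcal{L},B]$, a choice $t_1 = \Theta(\log(dB/\epsilon))$ works regardless of the true (unknown) separation. The key structural point is that projecting each iterate onto the ball of radius $R\exp(-t_1)$ leaves its direction unchanged, so the angular recursion of Lemma~\ref{lemma:projection-angle-decrease} --- which only uses the bound $\|\mu_t\|\le R\exp(-t_1)$, now enforced by construction --- continues to govern the dynamics unchanged, and the random-initialization bound Lemma~\ref{lemma:random-init-similarity} is untouched. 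Iterating Lemma~\ref{lemma:projection-angle-decrease} as in Lemma~\ref{lemma:2-mog-const-sep-inner-product-const}, but continuing past the milestone $\tan\theta = O(1)$ until $\tan\theta^{(H)}$ reaches its floor $\kappa_2$, and then checking (for the chosen $t_1$ and for empirical gradients estimated from enough samples via Lemma~\ref{lemma:2-mog-high-noise-grad-equivalence}) that $\kappa_2 \lesssim \epsilon/\|\mu^*\|$, shows that after $H = \mathrm{poly}(d,B,1/\mathcal{L})/\epsilon^3$ steps the angle between $\mu^{(H)}_{t_1}$ and $\mu^*_{t_1}$ is $O(\epsilon/\|\mu^*\|)$.

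Next I would control the norm. An analog of Lemma~\ref{lemma:gradient-update-bound} that also lower-bounds the norm shows that, because a projected gradient step multiplies the component of the iterate along $\hat{\mu}^*_t$ by a factor exceeding $1$ as long as $\|\mu_t\|$ is a bit below the projection cap while contracting the orthogonal component, the norm saturates: $\|\mu^{(H)}_{t_1}\| = R\exp(-t_1)(1 - o(1))$. Separately, $R^2 = \frac{1}{n}\sum_i\|x_i\|^2 - d$ is unbiased for $\|\mu^*\|^2$ since $\mathbb{E}\|X\|^2 = d + \|\mu^*\|^2$, and each $\|x_i\|^2$ is sub-exponential with parameter $O(d+B^2)$, so Bernstein's inequality (Lemma~\ref{lemma:bernstein-inequality}) gives $|R^2 - \|\mu^*\|^2| \le \mathcal{L}\epsilon$ --- hence $|R - \|\mu^*\|| \le \epsilon$ using $R + \|\mu^*\|\ge \mathcal{L}$ --- with $\mathrm{poly}(d,B,1/\mathcal{L})/\epsilon^2$ samples. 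Combining $\tan\theta^{(H)} = O(\epsilon/\|\mu^*\|)$ with $\|\mu^{(H)}_{t_1}\| = R\exp(-t_1)(1\pm o(1))$ and $R\exp(-t_1) = \|\mu^*_{t_1}\|(1 \pm \epsilon/\|\mu^*\|)$, via the identity $\|u-v\|^2 = (\|u\|-\|v\|)^2 + \|u\|\|v\|(2 - 2\cos\angle(u,v))$, gives $\|\mu^{(H)}_{t_1} - \mu^*_{t_1}\| = O(\epsilon)\exp(-t_1)$; outputting $\tilde\mu = \exp(t_1)\mu^{(H)}_{t_1}$ therefore yields $\|\tilde\mu - \mu^*\| = O(\epsilon)$, the $\epsilon$-slack between $R$ and $\|\mu^*\|$ being harmless since projection onto a convex ball is $1$-Lipschitz.

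The main obstacle --- and where essentially all the work goes --- is controlling the error amplification inherent to working at a high noise scale. To make the cubic Taylor error $\sqrt{d}\|\mu_t\|^5$ from Lemma~\ref{lemma:2-mog-high-noise-grad-equivalence} negligible after the $\exp(t_1)$ rescaling, one is forced to take $t_1$ large, which makes $\|\mu^*_{t_1}\|$ polynomially small in $\epsilon$; but $\|\mu^*_{t_1}\|^2$ is exactly the quantity driving the angular contraction $\kappa_1 \approx 1 - \Omega(\eta\|\mu^*_{t_1}\|^2)$, and, worse, the empirical-gradient error enters the floor $\kappa_2$ through a term $\tilde{\epsilon}/\|\mu^*_{t_1}\|^2$ with $\tilde{\epsilon}\lesssim d\,\epsilon_{\mathrm{grad}}/\|\mu_t\|$. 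Tracking these inverse powers of $\|\mu^*_{t_1}\|$ --- themselves polynomials in $1/\epsilon$ --- through Lemma~\ref{lemma:projection-angle-decrease} is what forces the gradient to be estimated to accuracy $\mathrm{poly}(\mathcal{L}/(dB))\cdot\epsilon^{O(1)}$, inflating the sample complexity to $\mathrm{poly}(d,B,1/\mathcal{L})/\epsilon^8$ and the iteration count to $\mathrm{poly}(d,B,1/\mathcal{L})/\epsilon^3$. The crux is verifying that the resulting chain of inequalities closes simultaneously: that $\kappa_1$ is bounded below $1$ by $\Omega(\eta\|\mu^*_{t_1}\|^2)$ and that $\kappa_2 \lesssim \epsilon/\|\mu^*\|$, both with only polynomially many samples and iterations. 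A further subtlety is making the norm-saturation step robust to the same gradient noise, so that the projection genuinely pins $\|\mu^{(H)}_{t_1}\|$ to within $\epsilon\exp(-t_1)$ of $\|\mu^*_{t_1}\|$ rather than merely bounding it from above.
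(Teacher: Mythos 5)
Your proposal is correct and follows essentially the same route as the paper: projected gradient descent at a single high noise scale $t=\Theta(\log(d/\epsilon))$, reusing the angular recursion of Lemma~\ref{lemma:projection-angle-decrease} (unaffected by projection) to drive $\tan\theta$ down to the floor $\kappa_2=O(\mathrm{poly}(d,B)\,\epsilon)$, the norm estimate $R$ from Lemma~\ref{lemma:correct-norm} to pin $\|\mu_t\|\approx\|\mu^*_t\|$, and the law of cosines to convert angle plus norm into Euclidean error, with the same $1/\epsilon^8$ sample and $1/\epsilon^3$ iteration blow-up coming from the inverse powers of $\|\mu^*_t\|$ in $\kappa_1,\kappa_2$. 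Your explicit norm-saturation argument is in fact slightly more careful than the paper, which simply treats the projection onto the ball of radius $R$ as keeping the iterate's norm in the range $[\varepsilon\mathcal{L}/d,\varepsilon B/d]$ throughout.
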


\noindent As described in Section~\ref{sec:overview}, the algorithm is a simple modification of Algorithm~\ref{alg:denoise} in which gradient descent is replaced by projected gradient descent. We start in Lemma~\ref{lemma:correct-norm} by showing that the projection step in the algorithm ensures that the norm of the current iterate $\mu_t$ is approximately that of $\mu^*_t$. Then in Lemma~\ref{lemma:convergence-small-sep-2-mog}, we extend the analysis of Lemma \ref{lemma:projection-angle-decrease} to show that every projected gradient step contracts the distance to the ground truth. Combined with Lemma \ref{lemma:correct-norm}, this allows us to conclude the proof of Theorem \ref{thm:2-mog-small-sep}.

\begin{lemma}
\label{lemma:correct-norm}
    Let $x_1,\ldots,x_n$ be independent samples from $q$, and define radius parameter $R$ by $R^2 \triangleq \frac{1}{n} \sum^n_{i=1} \wnorm{x_i}^2 - d$. For any $\epsilon > 0$, provided that $n\gtrsim \frac{B^4 + d^2}{\epsilon^2 \mc{L}^2}$, 
    we have $|R - \wnorm{\mu^*}| \leq \varepsilon$ with high probability.
\end{lemma}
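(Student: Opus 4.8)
The plan is to show that $R^2$ is an unbiased and sharply concentrated estimator of $\norm{\mu^*}^2$, and then to pass from $R^2$ to $R$ via the factorization $R^2 - \norm{\mu^*}^2 = (R - \norm{\mu^*})(R + \norm{\mu^*})$, exploiting the a priori lower bound $\norm{\mu^*} > \mc{L}$.

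First I would verify unbiasedness. Writing a sample $x \sim q$ as $x = s\mu^* + z$ with $s$ a uniform sign and $z \sim \mc{N}(0, \Id)$ independent of $s$, we get $\norm{x}^2 = \norm{\mu^*}^2 + 2s\langle \mu^*, z\rangle + \norm{z}^2$, hence $\mbb{E}[\norm{x}^2] = \norm{\mu^*}^2 + d$ and therefore $\mbb{E}[R^2] = \norm{\mu^*}^2$.

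Next, concentration. Set $Y_i \triangleq \norm{x_i}^2 - d - \norm{\mu^*}^2 = 2 s_i \langle\mu^*, z_i\rangle + (\norm{z_i}^2 - d)$, which are i.i.d.\ and mean zero. The term $2 s_i\langle\mu^*,z_i\rangle$ is Gaussian with standard deviation $2\norm{\mu^*} \le 2\sqrt{B}$ and hence sub-exponential with $\psi_1$-norm $O(\sqrt{B})$, while $\norm{z_i}^2 - d$ is a centered $\chi^2$ variable with $\psi_1$-norm $O(\sqrt{d})$ (by Bernstein applied to its $d$ coordinates, or \cite{vershynin-HDP-book}); combining, $\norm{Y_i}_{\psi_1} \lesssim \sqrt{B} + \sqrt{d}$. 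Applying Bernstein's inequality (Lemma~\ref{lemma:bernstein-inequality}) with deviation parameter $\epsilon\mc{L}$, and using that $\epsilon\mc{L}$ is small so the quadratic term in the exponent dominates, yields
\[
    \Pr\bigl[\, \abs{R^2 - \norm{\mu^*}^2} \ge \epsilon\mc{L} \,\bigr] \;\le\; 2\exp\Bigl(-\,c\,n\,\tfrac{\epsilon^2\mc{L}^2}{B+d}\Bigr).
\]
Since $B, d \ge 1$ we have $B + d \le B^4 + d^2$, so the hypothesis $n \gtrsim \frac{B^4 + d^2}{\epsilon^2\mc{L}^2}$ drives the exponent to $\Omega(B^3 + d)$, and hence the event $\abs{R^2 - \norm{\mu^*}^2}\le\epsilon\mc{L}$ holds with high probability.

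On this event $R^2 \ge \norm{\mu^*}^2 - \epsilon\mc{L} > 0$ (assuming $\epsilon \le \mc{L}$, which we may do, e.g.\ by taking $\epsilon'$ small enough in Theorem~\ref{thm:mo2g-small-sep-appendix}, since $\norm{\mu^*}>\mc{L}$), so $R \ge 0$ and $R + \norm{\mu^*} \ge \norm{\mu^*} > \mc{L}$, giving
\[
    \abs{R - \norm{\mu^*}} = \frac{\abs{R^2 - \norm{\mu^*}^2}}{R + \norm{\mu^*}} \le \frac{\epsilon\mc{L}}{\mc{L}} = \epsilon,
\]
as claimed. The calculations are routine; the only step deserving care is this square-root conversion, which is exactly why $\mc{L}$ (and not merely $\epsilon$) enters the sample size — one needs $R^2$ accurate to additive error $\epsilon\norm{\mu^*}\ge\epsilon\mc{L}$, i.e.\ relative error $\epsilon/\norm{\mu^*}$, and $\norm{\mu^*}$ can be as small as $\mc{L}$. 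A minor point is that the $\chi^2$ contribution has $\psi_1$-norm $O(\sqrt d)$; even the cruder $O(d)$ bound from naively summing $d$ sub-exponential norms would suffice here, given the generous $d^2$ (and $B^4$) factors in the hypothesis.
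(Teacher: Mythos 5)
Your proposal is correct and follows essentially the same route as the paper: both show $R^2$ is an unbiased estimate of $\norm{\mu^*}^2$, bound a sub-exponential norm of $\norm{x_i}^2$ (you via the decomposition $2s\langle\mu^*,z\rangle + (\norm{z}^2-d)$, the paper via $\norm{\,\norm{X_0}^2}_{\psi_1}\lesssim\norm{\,\norm{X_0}}_{\psi_2}^2$), apply Bernstein at deviation $\epsilon\mc{L}$, and convert to $|R-\norm{\mu^*}|\le\epsilon$ using $R+\norm{\mu^*}\ge\mc{L}$. Your explicit handling of the square-root step (and the slightly sharper $\psi_1$ bound $\sqrt{B}+\sqrt{d}$) is a cosmetic refinement, not a different argument.
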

\begin{proof}
    Observe that we can write the random variable corresponding to the mixture of two Gaussians $X_0 = X = Z + p \mu^*$ where $Z\sim \mc{N}(0, I)$ and $p$ is a Rademacher random variable. Using Theorem 3.1.1 (concentration of norms) from \cite{vershynin-HDP-book}, we know that $\wnorm{ \wnorm{Z} - \sqrt{d} }_{\psi_2} \lesssim 1$. Therefore, sub-Gaussian norm $\norm{ \wnorm{X_0} }_{\psi_2} \lesssim  \norm{ \wnorm{Z} }_{\psi_2} + \norm{ \wnorm{p \mu^*} }_{\psi_2} \lesssim B + \sqrt{d}.$ Using Lemma 2.7.4 from \cite{vershynin-HDP-book}, we have $\norm{ \wnorm{X_0}^2 }_{\psi_1} \lesssim \norm{ \wnorm{X_0} }_{\psi_2}^2 \lesssim B^2 + d$. Therefore, using number of samples $n$ specified in the Lemma statement, with high probability, we have
    \begin{align*}
        \Big| \frac{1}{n} \sum_{i=1}^n \norm{x_i}^2 - \mbb{E}[\wnorm{X_0}^2 ] \Big| \leq \varepsilon \mc{L} \implies \Big| \wnorm{ \mu }^2 - \wnorm{ \mu^* }^2 \Big| \leq  \varepsilon \mc{L} \implies \big| \wnorm{ \mu } - \wnorm{ \mu^* } \big| \leq \varepsilon
    \end{align*}
    where the penultimate implication uses the fact that $\mbb{E}_{X_0}[ \wnorm{X_0}^2 ] = \mbb{E}[ \wnorm{Z}^2 + \wnorm{\mu^*}^2 ] =  d + \norm{\mu^*}^2$.
\end{proof}

\begin{lemma}
\label{lemma:convergence-small-sep-2-mog}
    Assume that $\mc{L} \leq \wnorm{\mu^*} \leq B$. Then, for any small $\varepsilon > 0$, running projected GD on diffusion models with step size $\eta = \frac{1}{20}$ at noise scale $t = \log \frac{d}{\varepsilon}$ for number of steps $H > H'$ and number of samples $n > n'$ steps will achieve 
    \begin{align*}
        \big\| \mu^{(H)} - \mu^* \big\| \lesssim d^2 B^4 \varepsilon,
    \end{align*}
     where $H' = \frac{d^2}{ \mc{L}^2 \varepsilon^3 }$ and $n' = \frac{d^{10} B^{3}}{\epsilon^8 \mc{L}^6}$. 
\end{lemma}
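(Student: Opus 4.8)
The plan is to run projected gradient descent entirely in the high-noise regime $t=\log(d/\varepsilon)$ and exploit two facts already assembled for the constant-separation case. First, radial projection onto a ball centered at the origin does not change the \emph{direction} of an iterate, so the angle analysis of Lemma~\ref{lemma:projection-angle-decrease} applies verbatim to projected GD. Second, at this noise scale $\mu_t^*=\mu^* e^{-t}$, so $\wnorm{\mu_t^*}=\Theta(\wnorm{\mu^*}\varepsilon/d)$ is tiny, which makes the power-iteration approximation of Lemma~\ref{lemma:2-mog-high-noise-grad-equivalence} extremely accurate once $\wnorm{\mu_t}$ is comparably small. The one thing the angle analysis cannot supply is the norm of the iterate, and that is exactly what the projection step provides: projecting onto the ball of radius $R' \triangleq R e^{-t}$, where $R$ is the estimator of Lemma~\ref{lemma:correct-norm}, forces $\bigl|\wnorm{\mu_t^{(h)}}-\wnorm{\mu_t^*}\bigr|\lesssim \varepsilon e^{-t}$ once the dynamics has converged. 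Combining ``direction of $\mu_t^*$'' with ``norm of $\mu_t^*$'' gives $\wnorm{\mu_t^{(H)}-\mu_t^*}$ small, and rescaling by $e^{t}=d/\varepsilon$ gives the claimed bound on $\wnorm{\mu^{(H)}-\mu^*}$.

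Concretely I would proceed in five steps. (1) Apply Lemma~\ref{lemma:correct-norm} (rescaled by $e^{-t}$) with $n\gtrsim (B^4+d^2)/(\varepsilon^2\mathcal{L}^2)$ samples, so that $|R'-\wnorm{\mu_t^*}|\le \varepsilon e^{-t}$ and $R'=\Theta(\wnorm{\mu^*}\varepsilon/d)$. (2) Initialize $\mu_t^{(0)}\sim\calN(0,\Id)$ and immediately project; by Lemma~\ref{lemma:random-init-similarity} the projected iterate still satisfies $|\dtp{\h{\mu}_t^{(0)}}{\h{\mu}_t^*}|\ge 1/(2d)$, and after projection $\wnorm{\mu_t^{(h)}}\le R'=\Theta(\wnorm{\mu^*}\varepsilon/d)$ for all $h$, which keeps the error term $250\sqrt d\,\wnorm{\mu_t}^5+10\wnorm{\mu_t}^3\wnorm{\mu_t^*}^2$ of Lemma~\ref{lemma:2-mog-high-noise-grad-equivalence} negligible against the signal $\Theta(\eta\wnorm{\mu_t^*}^3)$. (3) Iterate Lemma~\ref{lemma:projection-angle-decrease}: plugging $\wnorm{\mu_t},\wnorm{\mu_t^*}=\Theta(\wnorm{\mu^*}\varepsilon/d)$ into $\kappa_1,\kappa_2$ gives $\kappa_1\le 1-\Theta(\eta\wnorm{\mu^*}^2\varepsilon^2/d^2)$ and $\kappa_2$ of order $\mathrm{poly}(d,B,1/\mathcal{L})(\varepsilon^2+\tilde\epsilon/\wnorm{\mu_t^*}^2)$, so after $H'\gtrsim \frac{d^2}{\mathcal{L}^2\varepsilon^2}\log\frac{dB}{\varepsilon}$ steps we get $\tan\theta^{(H)}\lesssim d^2B^3\varepsilon$ (the extra $1/\varepsilon$ in the stated $H'$ is slack). (4) Control the norm: the self-normalizing term of the power-iteration update makes the only fixed points of the projected dynamics in the ball of radius $R'$ be $\mu_t=0$ and $\mu_t=\pm\mu_t^*$, and the origin is escaped thanks to the guaranteed initial correlation, so $\wnorm{\mu_t^{(H)}}$ converges to within $O(\varepsilon e^{-t})$ of $\wnorm{\mu_t^*}$, the projection capping it whenever it would overshoot $R'$. (5) Combine: with $\dtp{\mu_t^{(H)}}{\mu_t^*}=\wnorm{\mu_t^{(H)}}\wnorm{\mu_t^*}\cos\theta^{(H)}$,
\[
\wnorm{\mu_t^{(H)}-\mu_t^*}^2 = \bigl(\wnorm{\mu_t^{(H)}}-\wnorm{\mu_t^*}\bigr)^2 + 2\wnorm{\mu_t^{(H)}}\,\wnorm{\mu_t^*}\,(1-\cos\theta^{(H)}) \lesssim \varepsilon^2 e^{-2t} + \wnorm{\mu_t^*}^2(\tan\theta^{(H)})^2,
\]
so $\wnorm{\mu_t^{(H)}-\mu_t^*}\lesssim \varepsilon e^{-t}+(\wnorm{\mu^*}\varepsilon/d)\cdot d^2B^3\varepsilon\lesssim dB^4\varepsilon^2$; multiplying by $e^{t}=d/\varepsilon$ yields $\wnorm{\mu^{(H)}-\mu^*}\lesssim d^2B^4\varepsilon$.

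The sample-complexity bookkeeping is where the real work lies, and I expect it to be the main obstacle. The empirical gradient error enters Lemma~\ref{lemma:projection-angle-decrease} through $\tilde\epsilon\lesssim d\,\epsilon_{\mathrm{grad}}/\wnorm{\mu_t}$, and in the final combination it is amplified by $1/\wnorm{\mu_t^*}^2=\Theta(d^2/(\wnorm{\mu^*}^2\varepsilon^2))$ and then again by the rescaling factor $e^{t}=d/\varepsilon$; chasing these factors shows that to force $\tan\theta^{(H)}\lesssim d^2B^3\varepsilon$ one needs $\epsilon_{\mathrm{grad}}\lesssim \mathcal{L}^3\varepsilon^4/\mathrm{poly}(d,B)$, and by the concentration estimate of Lemma~\ref{lemma:sample-complexity-k-mog} this requires $n\gtrsim \mathrm{poly}(d,B)/\epsilon_{\mathrm{grad}}^2 = d^{10}B^3/(\varepsilon^8\mathcal{L}^6)$ — exactly the stated $n'$, and the origin of the unfavorable $1/\varepsilon^8$ dependence. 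A secondary subtlety is verifying that the projection genuinely pins $\wnorm{\mu_t^{(H)}}$ to $\wnorm{\mu_t^*}$ rather than to some other multiple of it; this hinges on the coefficient structure of the self-normalization term in Lemma~\ref{lemma:2-mog-high-noise-grad-equivalence}, namely that its nonzero fixed point along $\h{\mu}_t^*$ has norm exactly $\wnorm{\mu_t^*}$, together with the accuracy $|R'-\wnorm{\mu_t^*}|\lesssim\varepsilon e^{-t}$ of the projection radius.
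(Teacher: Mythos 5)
Your overall route matches the paper's: run projected GD at $t=\log(d/\varepsilon)$, reuse Lemma~\ref{lemma:projection-angle-decrease} (projection is radial so it leaves angles untouched), drive $\tan\theta^{(H)}$ down to $\mathrm{poly}(d,B)\,\varepsilon$, pin the norm via the estimate $R$ of Lemma~\ref{lemma:correct-norm}, and combine angle and norm by the law of cosines; your sample-complexity bookkeeping ($\epsilon_{\mathrm{grad}}\lesssim \mathcal{L}^3\varepsilon^4/\mathrm{poly}(d)$, $n\gtrsim \mathrm{poly}(d,B)/(\varepsilon^8\mathcal{L}^6)$) and step count also line up with the paper's choices.

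The gap is in your step (4), the norm control. You justify $\bigl|\wnorm{\mu_t^{(H)}}-\wnorm{\mu_t^*}\bigr|\lesssim\varepsilon e^{-t}$ by claiming that the nonzero fixed point of the drift $2\mu_t^*\mu_t^{*\top}\mu-3\wnorm{\mu}^2\mu$ along $\hat{\mu}_t^*$ has norm exactly $\wnorm{\mu_t^*}$, with the ball projection merely capping overshoot. That premise is false: solving $2\wnorm{\mu_t^*}^2 c=3c^3$ gives $c=\sqrt{2/3}\,\wnorm{\mu_t^*}\approx 0.82\,\wnorm{\mu_t^*}$, which lies strictly \emph{inside} the ball of radius $R'\approx\wnorm{\mu_t^*}$. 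So under projection onto the ball the radial dynamics settles near $\sqrt{2/3}\,\wnorm{\mu_t^*}$, the projection is (eventually) never active, and after rescaling by $e^{t}=d/\varepsilon$ you are left with an error of order $(1-\sqrt{2/3})\wnorm{\mu^*}=\Theta(\wnorm{\mu^*})$ rather than $O(d^2B^4\varepsilon)$ — the claimed bound fails. The paper does not rely on any fixed-point property of the drift at all: the projection is there precisely to play the role of the normalization step in power iteration, i.e.\ the iterate's norm is held at (both above and below) $R\approx\wnorm{\mu^*}$ by the projection itself, and Lemma~\ref{lemma:correct-norm} is invoked only to say $R$ is accurate; the angle bound then does all the remaining work. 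To repair your argument you should either project/normalize onto the sphere of radius $R$ (equivalently, output $R\,\hat{\mu}^{(H)}$ at the end) or replace the fixed-point claim with a correct analysis of the radial dynamics — as written, the mechanism you propose for pinning the norm does not give the statement.
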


\begin{proof}
    Recalling that $\mu^*_t = \mu^*_0 \exp(-t)$, note that for $t = \log \frac{d}{ \varepsilon }$, $\frac{\varepsilon \mc{L}}{d} \leq \norm{\mu_t^*} \leq \frac{\varepsilon B}{d}$.
    We would like to apply Lemma~\ref{lemma:projection-angle-decrease}. Note that we may apply this even though it is only stated for gradient descent (without projection). The reason is that it bounds the change in angle between the iterate and the ground truth after a single gradient step, and this angle is unaffected by projection.
    
    Suppose we take one projected gradient step with learning rate $\eta$ starting from an iterate $\mu_t$. As $\mu_t$ was the result of a projection, by Lemma \ref{lemma:correct-norm} we have $\frac{\varepsilon \mc{L} }{d} \lesssim \big\| \mu_t^{(h)} \big\| \lesssim \frac{\varepsilon B}{d}$. 
    
    We now bound $\kappa_2$ in Lemma \ref{lemma:projection-angle-decrease}:
\begin{align*}
    \kappa_2 &= \frac{ 500 \eta \sqrt{d^3} \| \mu_t \|^4 + 20 \eta d \| \mu_t \|^2 \norm{\mu_t^*}^2 + \eta \Tilde{\epsilon} }{ \norm{ \mu_t^* }^2 } \\ 
    &\lesssim 500 \eta \sqrt{d^7} \| \mu_t \|^2 + 20 \eta d \| \mu_t \|^2 + \frac{d^2 \epsilon}{\norm{\mu_t^{*}}^3 } \\
    &\leq 550  d^{7/2} B^2 \exp(-2t) + \frac{d^5 \epsilon}{\varepsilon^3 \mc{L}^3}\\
    &\lesssim d^2 B^2 \varepsilon,
\end{align*}
where the last inequality follows by choosing population gradient estimation error parameter $\epsilon = \frac{ \varepsilon^4 \mc{L}^3 }{ d^3 }$ with the number of samples $n' = \frac{d^{11} B^{6}}{\epsilon^8 \mc{L}^6}$. Additionally, $\kappa_1$ in Lemma \ref{lemma:projection-angle-decrease} is given by
\begin{align*}
    \kappa_1 &= \frac{ 1 - 3 \eta \|\mu_t\|^2 }{ (1 -3\eta \|\mu_t \|^2)  + \eta( \|\mu_t^*\|^2   - 500 \sqrt{d^3} \| \mu_t \|^4 - 20 d  \|\mu_t\|^2 \norm{\mu_t^*}^2 - \Tilde{\epsilon} ) } \\
    & = \frac{ 1 - 3 \eta \|\mu_t\|^2 }{ (1 -3\eta \|\mu_t \|^2)  + \eta \|\mu_t^*\|^2 ( 1 - \kappa_2 ) } \\
    & \lesssim \frac{ 1 - 3 \eta \|\mu_t^{(h)}\|^2 }{ (1 -3\eta \|\mu_t^{(h)} \|^2)  + \eta \|\mu_t^*\|^2 ( 1 - d^2 B^2\varepsilon ) } \\
    & \leq \frac{ 1 }{ 1  + \frac{ \mc{L}^2 \varepsilon^2}{20 d^2} ( 1 - d^2 B^2\varepsilon ) }\,.
\end{align*}
Using bounds on $\kappa_1$ and $\kappa_2$ and Lemma \ref{lemma:projection-angle-decrease}, we conclude that if $\theta$ (resp. $\theta'$) is the angle between $\mu_t$ (resp. the next iterate of projected gradient descent after $\mu_t$) and $\mu^*_t$
\begin{align*}
    \tan \theta' \leq \max \Big( \; \frac{ 1  }{ 1 + \frac{\mc{L}^2 \varepsilon^2}{20 d^2} ( 1 - B^2\varepsilon ) } \tan \theta, d^2 B^2 \varepsilon \Big)\,.
\end{align*}
Doing projected gradient descent for $H = \frac{20 d^2}{\mc{L}^2 \epsilon^3}$ steps, if $\theta^{(h)}$ denotes the angle between the $h$-th iterate and $\mu^*_t$, we obtain
\begin{align*}
    \tan \theta^{(H)} &\leq \tan \theta^{(h+1)} \leq \max \bigg( \Big( \frac{ 1  }{ 1 + \frac{\mc{L}^2 \varepsilon^2}{20 d^2} ( 1 - d^2 B^2\varepsilon ) } \Big)^H \tan \theta^{(0)}, d^2 B^2 \varepsilon \bigg) \\
    &\leq \max \bigg(  \frac{ \tan\theta^{(0)}  }{ 1 + \frac{H \mc{L}^2 \varepsilon^2}{20 d^2} ( 1 - B^2\varepsilon ) } , d^2 B^2 \varepsilon \bigg) \leq d^2 B^2 \varepsilon\,,
\end{align*}
where the last inequality uses $1 + \frac{H \mc{L}^2 \varepsilon^2}{20 d^2} ( 1 - B^2\varepsilon ) \geq \frac{1}{\varepsilon}$ for $\varepsilon \lesssim \frac{1}{B^3}$. Additionally, for a random initialization, Lemma~\ref{lemma:random-init-similarity} shows that $\cos \theta^{(0)} \geq \frac{1}{2d}$ which implies $\tan \theta^{(0)} \leq \sqrt{\sec^2\theta^{(0)} - 1 } \lesssim d$. Using Lemma \ref{lemma:correct-norm}, we have $\wnorm{ \mu^{(H)} } \geq \wnorm{ \mu^* } - \epsilon$ which implies $-2\wnorm{ \mu^{(H)} } \wnorm{ \mu^* } \cos \theta^{(H)} \leq -2\wnorm{ \mu^* }^2 \cos \theta^{(H)} + 2B \epsilon$ and $\wnorm{ \mu^{(H)} }^2 \leq \wnorm{ \mu^* }^2 + 3 B \epsilon$. Using this result, we obtain
\begin{align*}
    \wnorm{ \mu^{(H)} - \mu^* }^2 &= \wnorm{ \mu^{(H)} }^2 + \wnorm{ \mu^* }^2 - 2 \wnorm{ \mu^{(H)} } \wnorm{ \mu^* } \cos \theta^{(H)} \\
    &\lesssim 2 \wnorm{ \mu^* }^2 - 2 \wnorm{ \mu^* }^2 \cos \theta^{(H)} + 5 B \varepsilon  \lesssim 2B^2 \Big(1 - \frac{1}{\sqrt{1 + d^4 B^4 \varepsilon^2}}\Big) + 5 B \varepsilon \lesssim d^2 B^4 \varepsilon,
\end{align*}
where the last inequality follows from the fact that $\sqrt{1+x} \leq 1+\sqrt{x}$ for any $x > 0$. 
\end{proof}

\section{Learning mixtures of $K$ Gaussians from a warm start}
\label{appsec:proof-k-mog}

In this section, we provide details about our main result on learning mixtures of $K$ Gaussians. We start by describing our main theorem in this case.

\begin{theorem}[Formal version of Theorem \ref{thm:mog-k-main}]
\label{thm:k-mog-appendix}
Let $q$ be a mixture of Gaussians (in the form of Eq.~\eqref{eq:mog-k-pdf}) with center parameters $\theta^* = \{\mu_1^*, \mu_2^*, \ldots, \mu_K^*\}\in \mbb{R}^d$ satisfying the separation Assumption~\ref{asm:mog-k-seperation}, and suppose we have estimates $\theta$ for the centers such that the warm initialization Assumption~\ref{asm:mog-k-initialization} is satisfied. For any $\epsilon > \epsilon_0$ and noise scale $t$ where 
\begin{align*}
 \epsilon_0 =  1 / \mathrm{poly}(d) \;\; \text{and} \;\; t = \Theta(\epsilon)\,,
\end{align*}
gradient descent on the DDPM objective at noise scale $t'$ (Algorithm \ref{alg:denoise}) outputs $\Tilde{\theta} = \{ \Tilde{\mu}_1, \Tilde{\mu}_2, \ldots, \Tilde{\mu}_K \}$ such that $\min_i \wnorm{ \Tilde{\mu}_i - \mu_i^* } \leq \epsilon$ with high probability. The algorithm runs for $H \geq H'$ iterations and uses $n \geq n'$ number of samples where
\begin{align*}
    H' =  \Theta(\log( \epsilon^{-1}\log d))  \;\; \text{ and } \;\; n'= \Theta(K^4 d^5 B^6 / \epsilon^2)\,.
\end{align*}
\end{theorem}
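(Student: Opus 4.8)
The plan is to show that, at the small noise scale $t = \Theta(\epsilon)$, a single (empirical) gradient step on the DDPM objective coincides with one step of gradient EM up to a negligible additive error, and then to import the known local convergence of gradient EM for well-separated mixtures. \textbf{Step 1 (closed form of the gradient).} First I would compute $\nabla_{\mu_{i,t}} L_t(s_{\theta_t})$ in closed form, the $K$-component analogue of Lemma~\ref{lemma:update}. Using the softmax architecture $s_{\theta_t}(x) = \sum_j w_{j,t}(x)\mu_{j,t} - x$, the fact that the DDPM and score-matching objectives differ by a $\theta_t$-independent constant, and Gaussian integration by parts (Stein's lemma) to handle the terms containing $\nabla_x w_{j,t}$ and $\nabla_{\mu_{i,t}} w_{j,t}$ (both of which are built from products $w_{a,t}w_{b,t}$ and the difference vectors $\mu_{a,t}-\mu_{b,t}$), I would rewrite the negative gradient as a main term equal to the gradient-EM update direction $\mathbb{E}_{X_t}[w_{i,t}(X_t)(X_t-\mu_{i,t})]$ plus an explicit error vector $E_i(\theta_t,\theta_t^*)$, the $K$-component counterpart of the function $G$ from the $K=2$ analysis.

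\textbf{Step 2 (the main technical lemma, Lemma~\ref{lemma:population-GD-gradient-EM}).} The crux is to bound $\|E_i\|$ by $\mathrm{poly}(1/d)$ --- far below the target accuracy $\epsilon$ --- for every $\theta_t$ with $\|\mu_j - \mu_j^*\| \lesssim \sqrt{\log\min(K,d)}$, i.e.\ throughout the basin of Assumption~\ref{asm:mog-k-initialization}. The mechanism: for $X_t$ drawn from component $a$ of $q_t$, Assumption~\ref{asm:mog-k-seperation} (the separation is essentially unchanged at scale $t=\Theta(\epsilon)$ since $e^{-t}\approx 1$) makes $w_{a,t}(X_t) = 1 - \exp(-\Omega(\log\min(K,d)))$ and $w_{j,t}(X_t) = \exp(-\Omega(\log\min(K,d)))$ for $j\ne a$ outside a low-probability collision event, on which one falls back on $w_{j,t}\le 1$, $\|\mu_{a,t}-\mu_{b,t}\| \le \mathrm{poly}(d)$, and Gaussian tail bounds. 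Making this uniform over the $K$ components and obtaining the sharp $\min(K,d)$ dependence (rather than a bare $\sqrt{\log K}$) is the delicate point; I would follow the weight-concentration arguments of \cite{kwon2020algorithm, segol2021improved}, projecting onto the $O(\min(K,d))$-dimensional affine hull of the centers before union-bounding. \textbf{This is the main obstacle.}

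\textbf{Step 3 (finite-sample gradient).} Next I would show, via a $K$-component version of Lemma~\ref{lemma:sample-complexity-k-mog}, that the empirical gradient from $n \ge n' = \Theta(K^4 d^5 B^6/\epsilon^2)$ samples is within $O(\epsilon)$ of the population gradient: each coordinate of the per-sample gradient has sub-exponential norm $\mathrm{poly}(d,B)$ because the softmax weights are bounded, so Bernstein's inequality (Lemma~\ref{lemma:bernstein-inequality}) together with a union bound over $d$ coordinates and $K$ components suffices.

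\textbf{Step 4 (contraction and conclusion).} Combining Steps 1--3, one empirical DDPM gradient step equals one gradient-EM step plus a perturbation of norm $O(\epsilon)$ per component. The gradient-EM convergence theorem of \cite{segol2021improved} (refining \cite{kwon2020algorithm}) guarantees that under Assumption~\ref{asm:mog-k-seperation}, from any point inside the basin of Assumption~\ref{asm:mog-k-initialization}, the gradient-EM map stays in the basin and contracts $\max_i\|\mu_{i,t}-\mu_{i,t}^*\|$ by a factor $1-\gamma$ for a constant $\gamma>0$; since the basin radius $\Omega(\sqrt{\log\min(K,d)})$ dominates the $O(\epsilon)$ perturbation, the perturbed iteration also remains in the basin and satisfies $D^{(h+1)} \le (1-\gamma)D^{(h)} + O(\epsilon)$. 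Iterating for $H' = \Theta(\log(\epsilon^{-1}\log d))$ steps gives $\max_i\|\mu_{i,t}^{(H)} - \mu_{i,t}^*\| \le O(\epsilon)$, and rescaling $\tilde\mu_i \triangleq e^{t}\mu_{i,t}^{(H)}$ yields $\|\tilde\mu_i - \mu_i^*\| = e^{t}\|\mu_{i,t}^{(H)} - \mu_{i,t}^*\| \le O(\epsilon)$ since $t=\Theta(\epsilon)$ is tiny; adjusting the hidden constants then gives the statement. The one thing to double-check is that the cited gradient-EM theorem is stated with exactly the warm-start radius of Assumption~\ref{asm:mog-k-initialization} and that its proof tolerates the per-step $O(\epsilon)$ perturbation; if not, I would re-derive the one-step contraction directly from the per-component lemmas of \cite{segol2021improved}, carrying the extra slack through.
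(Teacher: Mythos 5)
Your proposal is correct and follows essentially the same route as the paper: an explicit gradient computation showing the DDPM gradient equals the gradient-EM direction plus extra terms (Lemma~\ref{lemma:GD-update-k-mog}), a $1/\mathrm{poly}(d)$ bound on those extra terms via the weight-concentration results of \cite{segol2021improved} under Assumptions~\ref{asm:mog-k-seperation}--\ref{asm:mog-k-initialization} (Lemma~\ref{lemma:population-GD-gradient-EM}), a sub-exponential/Bernstein concentration bound for the empirical gradient (Lemma~\ref{lemma:sample-complexity-k-mog}), and finally the gradient-EM contraction from \cite{segol2021improved} iterated for $H'=\Theta(\log(\epsilon^{-1}\log d))$ steps with the perturbations absorbed into the recursion. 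The points you flag to double-check (basin radius of the cited theorem, tolerance to the per-step perturbation, handling $d$ vs.\ $K$ by projecting onto the span of the centers) are exactly the ones the paper handles, so no gap.
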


\noindent We first give an overview of the proof for population gradient descent, and then show that the empirical gradients concentrate well around the population gradients. We start by simplifying the population gradient update for mixtures of $K$ Gaussians using Stein's lemma in Lemma~\ref{lemma:GD-update-k-mog}, which yields
\begin{align*}
    - \nabla_{\mu_{1,t}} L_t( s_{\theta_t} )  = \mbb{E} [ w_{1, t}(X_t) (X_t - \mu_{1, t}) ] + [\text{extra terms}]\,,
\end{align*}
recalling the notation of Eq.~\eqref{eq:score-function-definition}.
As discussed in the body of the paper, $\mbb{E} [ w_{1, t}(X_t) (X_t - \mu_{1, t}) ]$ is precisely the update for the gradient EM algorithm (see Fact~\ref{fact:gradient-EM}) and known results for the latter~\cite{kwon2020algorithm, segol2021improved} can be used to show that the distance $\| \mu_{1, t} - \mu_{1, t}^* \|$ contracts in each step when the separation Assumption~\ref{asm:mog-k-seperation} and the warm initialization Assumption~\ref{asm:mog-k-initialization} are satisfied. Therefore, showing that the ``extra terms'' do not disturb the progress coming from the gradient EM update is sufficient. We prove that the ``extra terms'' are $1/\mathrm{poly}(d)$ in Lemma~\ref{lemma:population-GD-gradient-EM} when the separation Assumption~\ref{asm:mog-k-seperation} and warm initialization Assumption~\ref{asm:mog-k-initialization} hold.

The intuition behind Lemma~\ref{lemma:population-GD-gradient-EM} is as follows: We start with a key observation that each of the ``extra terms'' either contains $w_{1, t}(X_t)(1 - w_{1, t}(X_t))$ or $w_{1, t}(X_t) w_{j, t}(X_t)$ where $j \neq 1$. Note that the $w_{1, t}(X_t)$ can be interpreted as the conditional probability of the underlying component being $\mc{N}(\mu_{1, t}, I)$ given $X_t$. When Assumption~\ref{asm:mog-k-seperation} and Assumption~\ref{asm:mog-k-initialization} are satisfied, Proposition 4.1 of \cite{segol2021improved} shows that 
\begin{align*}
    \mbb{E}_{X_t \sim \mc{N}(\mu_{1, t}^*, I)}[ w_{j, t}(X_t) ] \lesssim 1/\mathrm{poly}(d) \quad \text{for any $j \neq 1$}\,.
\end{align*}
This result can be extended to show both $\mbb{E}_{X_t} [w_{1, t}(X_t)(1 - w_{1, t}(X_t)) ] \lesssim 1/\mathrm{poly}(d)$ as well as $\mbb{E}_{X_t}[ w_{1, t}(X_t) w_{j, t}(X_t)]\lesssim  1/\mathrm{poly}(d)$ for any $j \neq 1$ (see Lemma~\ref{lemma:EM-initialization-properties} for the proof). Using these bounds, we conclude that $[``\text{extra terms}''] \lesssim 1/\mathrm{poly}(d)$ in Lemma~\ref{lemma:population-GD-gradient-EM}.

\subsection{EM and population gradient descent on DDPM objective}

We begin by writing out the gradient update explicitly:

\begin{lemma} 
\label{lemma:GD-update-k-mog}
For any noise scale $t > 0$, the gradient of the population DDPM objective $\mbb{E} [ L_t( s_{\theta_t}(X_t) ) ]$ with respect to parameter $\mu_{1, t}$ is given by
\begin{align*}
    \nabla_{\mu_{1, t}}  L_t( s_{\theta_t} )  &= \mbb{E} \Big[ - w_{1, t}(X_t) (X_t - \mu_{1, t}) + w_{1, t}(X_t) (X_t - \mu_{1, t}) \sum^K_{i=1} w_{i, t}(X_t) \mu_{i, t}^\top (X_t - \mu_{1, t}) \\
    & \hspace{13mm} + w_{1, t}(X_t) \mu_{1, t} - w_{1, t}(X_t) (X_t  - \mu_{1, t})^\top \mu_{1, t} (X_t - \mu_{1, t}) - w_{1, t}(X_t) \sum^K_{i=1} w_{i, t}(X_t)  \mu_{i, t} \\
    & \hspace{13mm} - w_{1, t}(X_t) \sum^K_{i=1} \nabla_{x} w_{i, t} (X_t)^\top \mu_{i, t} (X_t - \mu_{1, t}) \Big]
\end{align*}
where $w_{1, t}(x)$ and $\mu_{1, t}$ are defined in Eq.~\eqref{eq:score-function-definition}.
\end{lemma}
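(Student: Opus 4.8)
The plan is to differentiate $L_t(s_{\theta_t})$ directly, using the denoising--score-matching equivalence recorded in Section~\ref{subsec:preliminaries}: up to an additive constant independent of $\theta_t$,
\begin{equation*}
L_t(s_{\theta_t}) = \mathbb{E}_{X_t\sim q_t}\bigl[\| s_{\theta_t}(X_t) - \nabla_x\ln q_t(X_t)\|^2\bigr],
\end{equation*}
so that, since $q_t$ does not depend on $\theta_t$,
\begin{equation*}
\nabla_{\mu_{1,t}} L_t(s_{\theta_t}) = 2\,\mathbb{E}_{X_t\sim q_t}\Bigl[\bigl(\partial_{\mu_{1,t}} s_{\theta_t}(X_t)\bigr)^{\top}\bigl(s_{\theta_t}(X_t) - \nabla_x\ln q_t(X_t)\bigr)\Bigr],
\end{equation*}
where $\partial_{\mu_{1,t}} s_{\theta_t}(x)$ is the $d\times d$ Jacobian of the student network in the parameter block $\mu_{1,t}$. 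The two terms on the right will be handled separately: the first by a plain chain-rule computation, the second by Stein's lemma.

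First I would record the elementary softmax derivatives. Writing $w_i$ for $w_{i,t}$, differentiating $\log w_i$ gives $\nabla_{\mu_{1,t}} w_1(x) = w_1(x)\bigl(1-w_1(x)\bigr)(x-\mu_{1,t})$ and $\nabla_{\mu_{1,t}} w_i(x) = -w_1(x)w_i(x)(x-\mu_{1,t})$ for $i\neq 1$, and the spatial derivative $\nabla_x w_i(x) = w_i(x)\bigl(\mu_{i,t} - \sum_j w_j(x)\mu_{j,t}\bigr)$. Substituting the parametric derivatives into $\partial_{\mu_{1,t}} s_{\theta_t}(x) = w_1(x)\,\Id + \sum_i \mu_{i,t}\,\nabla_{\mu_{1,t}} w_i(x)^{\top}$ and contracting against $s_{\theta_t}(x) = \sum_i w_i(x)\mu_{i,t} - x$ produces the part of $\nabla_{\mu_{1,t}} L_t$ coming from $\nabla_{\mu_{1,t}}\|s_{\theta_t}\|^2$; collecting the $w_1(x)$-weighted terms and repeatedly using $\sum_i w_i = 1$, $\sum_i w_i(x)\mu_{i,t} = s_{\theta_t}(x)+x$, and $\sum_i \nabla_{\mu_{1,t}} w_i(x) = 0$ yields the terms $w_{1,t}(X_t)\mu_{1,t}$, $-w_{1,t}(X_t)\sum_i w_{i,t}(X_t)\mu_{i,t}$, and the quadratic-form terms $w_{1,t}(X_t)(X_t-\mu_{1,t})\sum_i w_{i,t}(X_t)\mu_{i,t}^{\top}(X_t-\mu_{1,t})$ and $-w_{1,t}(X_t)(X_t-\mu_{1,t})^{\top}\mu_{1,t}(X_t-\mu_{1,t})$ appearing in the statement.

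The remaining term $-2\,\mathbb{E}_{q_t}\bigl[(\partial_{\mu_{1,t}} s_{\theta_t})^{\top}\nabla_x\ln q_t\bigr]$ a priori involves the unknown centers $\mu_i^*$ through $\nabla_x\ln q_t$; here I would use Stein's lemma --- equivalently, Gaussian integration by parts applied componentwise to the mixture $q_t$, with no boundary term since the integrands are polynomials times Gaussian densities --- to rewrite it as $2\,\mathbb{E}_{q_t}\bigl[\nabla_{\mu_{1,t}}\bigl(\mathrm{div}_x\, s_{\theta_t}(X_t)\bigr)\bigr]$, eliminating the $\mu_i^*$ dependence. Since $\mathrm{div}_x\, s_{\theta_t}(x) = \sum_i \langle\nabla_x w_i(x),\mu_{i,t}\rangle - d$, differentiating in $\mu_{1,t}$ with the softmax derivatives above gives the leading EM-type term $-w_{1,t}(X_t)(X_t-\mu_{1,t})$ together with the last term $-w_{1,t}(X_t)\sum_i \nabla_x w_{i,t}(X_t)^{\top}\mu_{i,t}(X_t-\mu_{1,t})$. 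Adding the two contributions and simplifying gives the claimed six-term formula.

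I expect the main obstacle to be the bookkeeping in the last two steps rather than any single idea: the Jacobian contraction and the differentiation of the divergence each generate several products of softmax weights with their first- and second-order (spatial and parametric) derivatives, and collapsing these into the compact stated form requires careful and repeated use of the softmax identities above. A minor technical point is justifying the exchange of $\nabla_{\mu_{1,t}}$ with $\mathbb{E}_{q_t}$ and the integration by parts, which follows because each $w_{i,t}$ is smooth and bounded in $[0,1]$ and every integrand is dominated by a polynomial times the (sub-Gaussian) density $q_t$.
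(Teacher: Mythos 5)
Your proposal is correct and, underneath the repackaging, it is essentially the same computation as the paper's: the paper differentiates the DDPM form $\mbb{E}\|s_{\theta_t}(X_t)+Z_t/\beta_t\|^2$ and applies Stein's lemma to $Z_t$ (conditionally on $X_0$), whereas you first pass to the score-matching form and integrate by parts against $q_t$, writing the cross term as $\mbb{E}[\nabla_{\mu_{1,t}}\mathrm{div}_x\, s_{\theta_t}(X_t)]$; since $\mbb{E}[-Z_t/\beta_t\mid X_t]=\nabla_x\ln q_t(X_t)$, these are the identical Gaussian integration-by-parts step in different clothing, and the rest of the work in both routes is the same softmax bookkeeping via the identities of Lemma~\ref{lemma:gradient-wrt-x-w}. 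Two minor caveats, neither a real gap: (i) you carry the literal factor $2$, so your expression is twice the stated six-term formula; the paper itself silently works with the $\tfrac12$-normalized loss (its proof computes $\tfrac12\nabla\|s_{\theta_t}\|^2+(\partial_{\mu_{1,t}}s_{\theta_t})^\top Z_t/\beta_t$), so this is only a constant-convention mismatch that is immaterial downstream; (ii) your attribution of terms is looser than stated --- the divergence contribution by itself produces $\nabla_x w_{1,t}(X_t)$ and related pieces, and the EM-type term $-w_{1,t}(X_t)(X_t-\mu_{1,t})$ only appears after combining with the $w_{1,t}(X_t)\,s_{\theta_t}(X_t)$ piece coming from the quadratic part, using $\nabla_x w_{1,t}=-w_{1,t}(X_t-\mu_{1,t})-w_{1,t}s_{\theta_t}(X_t)$, exactly as in the paper's final collapse --- but since you explicitly add the two contributions before simplifying, this is a presentational point rather than an error.
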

\begin{proof}
Recall that the score function of mixture of Gaussians is given by
\begin{align}
    s_{\theta_t}(X_t) = \sum_i w_{i, t}( X_t ) \mu_{i, t} - X_t  \hspace{5mm}
\end{align}
Finding the gradient $\nabla_{\mu_{1, t}} w_{i,t}(X_t)$, we have 
\begin{align*}
    \nabla_{\mu_{1, t} } w_{i, t}(X_t) = \begin{cases} w_{1, t}(X_t) (1 - w_{1,t }(X_t)) (X_t - \mu_{1, t})  & \text{if } i=1 \\
    - w_{1, t}(X_t) w_{i, t}(X_t)  (X_t-\mu_{1, t})  & \text{otherwise}.
    \end{cases}
\end{align*}
The gradient of the score function is given by
\begin{align*}
    & \nabla_{\mu_{1, t}} s_{\theta_t}(X_t) =  \nabla_{\mu_{1, t}} \rb{ w_{1, t}(X_t) \mu_{1,t} } + \sum_{i=2}^K \nabla_{\mu_{1, t}} \rb{ w_{i, t}(X_t) \mu_{i, t} }  \\
    &= w_{1, t}(X_t)(1 - w_{1, t}(X_t)) \mu_{1, t} (X_t - \mu_{1, t})^\top + w_{1, t}(X_t) I - w_{1, t}(X_t) \sum_{i=2}^K w_{i, t}(X_t) \mu_{i, t} (X_t - \mu_{1, t})^\top  \\
    &=  w_{1, t}(X_t) \mu_{1, t} (X_t - \mu_{1, t})^\top  + w_{1, t}(X_t) I - w_{1, t}(X_t) \sum_{i=1}^K w_{i, t}(X_t) \mu_{i, t} (X_t - \mu_{1, t})^\top\,.
\end{align*}
The gradient of $ \frac{1}{2} \wnorm{s_{\theta_t}}^2$ is given by
\begin{align*}
    & \frac{1}{2} \nabla  \norm{s_{\theta_t}(X_t)}^2 = \sum_{j=1}^d [ s_{\theta_t}(X_t) ]_j [\nabla_{\mu_{1, t}} s_{\theta_t}(X_t) ]_j = \nabla_{\mu_{1, t}} s_{\theta_t}(X_t)^\top s_{\theta_t}(X_t) \\ 
    & \hspace{3cm} \text{ where } [\nabla_{\mu_{1, t}} s_{\theta_t}(X_t) ]_j \text{ is $j^{th}$ row of } \nabla_{\mu_{1, t}} s_{\theta_t}(X_t)\,.
\end{align*}
The gradient of this is given by
\begin{align}
    \frac{\nabla_{\mu_{1,t}} s_{\theta_t}(X_t)^\top Z_t}{ \beta_t } &= \frac{1}{ \beta_t } \Big( w_{1, t}(X_t) (X_t - \mu_{1, t}) \mu_{1, t}^\top Z_t + w_{1, t}(X_t) Z_t \\
    &\qquad\qquad\qquad - w_{1, t}(X_t) \sum_{i=1}^K w_{i, t}(X_t) (X_t - \mu_{1, t}) \mu_{i, t}^\top Z_t  \Big) \label{eq:loss-grad-second-term}
\end{align}
Applying Stein's lemma to the expectation of the first term in Eq.~\eqref{eq:loss-grad-second-term}, we have
\begin{equation}
    \label{eq:loss-grad-second-term-first}
    \begin{aligned}
        \mbb{E}_{X_0, Z_t} [ w_{1, t}(X_t) (X_t - \mu_{1, t}) \mu_{1, t}^\top Z_t ] &= \sum_{j=1}^d \mbb{E}_{X_0, Z_t} [ w_{1, t}(X_t) (X_t - \mu_{1, t}) \mu_{1, t, j} Z_{t, j} ] \\
        &= \sum_{j=1}^d \mbb{E}_{X_0, Z_t} [ w_{1, t}(X_t) \beta_t e_j \mu_{1, t, j} + \beta_t \nabla_{x} w_{1, t}(X_t)^\top e_j (X_t - \mu_{1, t}) \mu_{1, t, j} ] \\
        &= \mbb{E}_{X_0, Z_t} [ w_{1, t}(X_t) \beta_t \mu_{1, t} + \beta_t \nabla_{x} w_{1, t}(X_t)^\top \mu_{1, t} (X_t - \mu_{1, t}) ]
    \end{aligned}
\end{equation}
The expectation of the second term in Eq.~\eqref{eq:loss-grad-second-term} simplifies to $\beta_t \mbb{E}_{X_t}[ \nabla_x w_{1, t}(X_t) ]$ by Stein's Lemma. Each summand in the third term in Eq.~\eqref{eq:loss-grad-second-term} simplifies as following: 
\begin{align}
        \MoveEqLeft\mbb{E}_{X_0, Z_t} \sbr{ w_{1, t}(X_t) w_{i, t}(X_t) (X_t - \mu_{1, t}) \mu_{i, t}^\top Z_t  } \\ 
        &= \sum_{j=1}^d \mbb{E}_{X_0, Z_t} \sbr{ w_{1, t}(X_t) w_{i, t}(X_t) (X_t - \mu_{1, t}) \mu_{i, t, j} Z_{t, j} } \\
        &= \sum_j  \mu_{i, t, j} \mbb{E}_{X_0, Z_t} \Big[ w_{1, t}(X_t) w_{i, t}(X_t)  \beta_t e_{j}   + \beta_t w_{1, t}(X_t) \nabla_{x} w_{i, t} (X_t)^\top e_j (X_t - \mu_{1, t}) \\ 
        &\qquad\qquad\qquad\qquad\qquad+ \beta_t \nabla_x w_{1, t}(X_t)^\top e_j w_{i, t}(X_t) (X_t - \mu_{1, t})\Big] \\
        &= \beta_t\, \mbb{E}_{X_0, Z_t} \Big[ w_{1, t}(X_t) w_{i, t}(X_t)  \mu_{i, t} + w_{1, t}(X_t) \nabla_{x} w_{i, t} (X_t)^\top \mu_{i, t} (X_t - \mu_{1, t}) \\ 
        &\qquad\qquad\qquad\qquad\qquad+ \nabla_x w_{1, t}(X_t)^\top \mu_{i, t} w_{i, t}(X_t) (X_t - \mu_{1, t})\Big] \label{eq:loss-grad-second-term-third}
\end{align}
Combining the gradients of all the terms of Eq.~\eqref{eq:loss-grad-second-term-third}, we have
\begin{align*}
    \MoveEqLeft \nabla_{\mu_{1,t}} L_t( s_{\theta_t} ) \\
    &= \mbb{E} \Big[ w_{1, t}(X_t) (X_t - \mu_{1, t}) \mu_{1, t}^\top s_{\theta_t}(X_t) + w_{1, t}(X_t) s_{\theta_t}(X_t) - w_{1, t}(X_t) (X_t - \mu_{1, t}) \sum_i w_{i, t}(X_t) \mu_{i, t}^\top s_{\theta_t}(X_t) \\
    & \quad\quad\quad+ \nabla_x w_{1, t}(X_t) + w_{1, t}(X_t) \mu_{1, t} + \nabla_{x} w_{1, t}(X_t)^\top \mu_{1, t} (X_t - \mu_{1, t}) - w_{1, t}(X_t) \sum_i w_{i, t}(X_t)  \mu_{i, t} \\
    & \quad\quad\quad- w_{1, t}(X_t) \sum_i \nabla_{x} w_{i, t} (X_t)^\top \mu_{i, t} (X_t - \mu_{1, t}) - \sum_i \nabla_x w_{1, t}(X_t)^\top \mu_{i, t} w_{i, t}(X_t) (X_t - \mu_{1, t}) \Big] \\
    &= \mbb{E} \Big[ - w_{1, t}(X_t) (X_t - \mu_{1, t}) + w_{1, t}(X_t) (X_t - \mu_{1, t}) \sum_i w_{i, t}(X_t) \mu_{i, t}^\top (X_t - \mu_{1, t}) \\
    & \quad\quad\quad + w_{1, t}(X_t) \mu_{1, t} - w_{1, t}(X_t) (X_t  - \mu_{1, t})^\top \mu_{1, t} (X_t - \mu_{1, t}) - w_{1, t}(X_t) \sum_i w_{i, t}(X_t)  \mu_{i, t} \\
    & \quad\quad\quad - w_{1, t}(X_t) \sum_i \nabla_{x} w_{i, t} (X_t)^\top \mu_{i, t} (X_t - \mu_{1, t}) \Big]\,,
\end{align*}
where the last equality uses~Lemma \ref{lemma:gradient-wrt-x-w}. Specifically, it uses 
\begin{align*}
\nabla_x w_{1, t}(X_t) + w_{1, t}(X_t) s_{\theta_t}(X_t) &= - w_{1, t}(X_t)  (X_t - \mu_{1, t}) \\
(\nabla_{x} w_{1, t}(X_t) + w_{1, t}(X_t)s_{\theta_t}(X_t) )^\top \mu_{1, t} (X_t - \mu_{1, t}) &= -w_{1, t}(X_t)( X_t - \mu_{1,t} )^\top \mu_{1, t} (X_t - \mu_{1, t})\,. \qedhere
\end{align*}
\end{proof}

\noindent We will also need the following intermediate calculation:

\begin{lemma}
\label{lemma:gradient-wrt-x-w}
    For any $i \in [K]$, the gradient of $w_{i, t}(X_t)$ with respect to $X_t$ is given by
    \begin{align*}
        \nabla_x w_{i, t}(X_t) &= -w_{i, t}(X_t) (X_t - \mu_{i, t}) - w_{i, t}(X_t) s_{\theta_t}(X_t) \\ 
        &= - w_{i, t}(X_t) (1 - w_{i, t}(X_t) ) (X_t - \mu_{i, t}) + w_{i, t}(X_t) \cdot \sum_{j\in[K]: j\neq i} w_{j, t}(X_t) (X_t - \mu_{j, t} )\,.
    \end{align*}
\end{lemma}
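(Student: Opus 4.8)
The plan is to differentiate the softmax-type weights $w_{i,t}$ directly. Write $f_j(x) \triangleq \exp(-\|x-\mu_{j,t}\|^2/2)$ and $Z(x) \triangleq \sum_{j=1}^K f_j(x)$, so that $w_{i,t}(x) = f_i(x)/Z(x)$ by Eq.~\eqref{eq:score-function-definition}, and record the elementary fact $\nabla_x f_j(x) = -f_j(x)(x-\mu_{j,t})$.

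First I would apply the quotient rule and divide numerator and denominator by $Z(x)$:
\[
    \nabla_x w_{i,t}(x) = \frac{\nabla_x f_i(x)}{Z(x)} - w_{i,t}(x)\,\frac{\nabla_x Z(x)}{Z(x)} = -w_{i,t}(x)(x-\mu_{i,t}) + w_{i,t}(x)\sum_{j=1}^K w_{j,t}(x)(x-\mu_{j,t})\,.
\]
Then I would simplify the inner sum: since $\sum_{j} w_{j,t}(x) = 1$, we get $\sum_j w_{j,t}(x)(x-\mu_{j,t}) = x - \sum_j w_{j,t}(x)\mu_{j,t} = -s_{\theta_t}(x)$ by the definition of the score estimate in Eq.~\eqref{eq:score-function-definition}. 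Substituting gives the first claimed identity $\nabla_x w_{i,t}(x) = -w_{i,t}(x)(x-\mu_{i,t}) - w_{i,t}(x)s_{\theta_t}(x)$. For the second identity I would instead peel off the $j=i$ summand from $\sum_j w_{j,t}(x)(x-\mu_{j,t})$ and combine the two terms proportional to $(x-\mu_{i,t})$: turning $-w_{i,t}(x)(x-\mu_{i,t}) + w_{i,t}(x)^2(x-\mu_{i,t})$ into $-w_{i,t}(x)(1-w_{i,t}(x))(x-\mu_{i,t})$ and leaving $w_{i,t}(x)\sum_{j\neq i}w_{j,t}(x)(x-\mu_{j,t})$.

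There is no substantive obstacle here: the statement is a routine gradient computation. The only points requiring a little care are getting the softmax derivative right (in particular the cross-term $-w_{i,t}(x)\sum_j w_{j,t}(x)\nabla_x\log f_j(x)$) and recognizing that the resulting sum is exactly $-s_{\theta_t}(x)$, since it is this rewriting in terms of the score estimate that makes the identity directly usable in the proof of Lemma~\ref{lemma:GD-update-k-mog}.
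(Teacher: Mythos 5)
Your proposal is correct and follows essentially the same route as the paper: a direct quotient-rule differentiation of the softmax weights yielding $-w_{i,t}(x)(x-\mu_{i,t}) + w_{i,t}(x)\sum_j w_{j,t}(x)(x-\mu_{j,t})$, then either rewriting the sum as $-s_{\theta_t}(x)$ or peeling off the $j=i$ term. The only (harmless) difference is that you spell out the $\sum_j w_{j,t}=1$ step identifying the sum with $-s_{\theta_t}(x)$, which the paper leaves implicit.
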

\begin{proof}
    By taking the gradient of $w_{i, t}(X_t)$ and simplifying it, we get the result:
    \begin{align*}
        \nabla_x w_{i, t}(X_t) &= - \frac{ \exp \Big( -\frac{ \norm{X_t - \mu_{i, t} }^2 }{ 2 } \Big) (X_t - \mu_{i, t}) }{ \sum_{j=1}^K \exp \Big( -\frac{ \norm{X_t - \mu_{j, t} }^2 }{ 2\sigma^2 } \Big) } \\ 
        &\qquad\qquad\qquad + \frac{ \exp \big( -\frac{ \norm{X_t - \mu_{i, t} }^2 }{ 2 } \big)\cdot \sum_{j=1}^K \exp \Big( -\frac{ \norm{X_t - \mu_{j, t} }^2 }{ 2 } \Big) (X_t - \mu_{j, t}) }{ \rb{ \sum_{j=1}^K \exp \Big( -\frac{ \norm{X_t - \mu_{j, t} }^2 }{ 2 } \Big) }^2 } \\
        &= -w_{i, t}(X_t) (X_t - \mu_{i, t}) + w_{i, t}(X_t) \rb{ \sum_{j=1}^K w_{j, t}(X_t) (X_t - \mu_{j, t} ) } \\
        &= - w_{i, t}(X_t) (1 - w_{i, t}(X_t) ) (X_t - \mu_{i, t}) + w_{i, t}(X_t) \rb{ \sum_{j=1, j\neq i}^K w_{j, t}(X_t) (X_t - \mu_{j, t} ) }\,.\qedhere
    \end{align*}
\end{proof}

\noindent We are now ready to establish the connection between gradient descent on the DDPM objective and the gradient EM update, for mixtures of $K$ Gaussians:

\begin{lemma}
\label{lemma:population-GD-gradient-EM}
    Suppose the centers of the mixture of $K$ Gaussians are well-separated according to Assumption~\ref{asm:mog-k-seperation}, and the parameters $\theta = \{ \mu_1, \mu_2, \ldots, \mu_K \}$ that the student network is initialized to satisfy the warm start Assumption~\ref{asm:mog-k-initialization}. Then, for noise scale $t = O(1)$, gradient descent on the DDPM objective is close to the gradient EM update:
    \begin{align*}
        \big\| \nabla_{\mu_{1,t}} L_t( s_{\theta_t} ) + \mbb{E} [ w_{1, t}(X_t) (X_t - \mu_{1, t}) ] \big\| \lesssim \frac{K^2 B^2}{ d^{ c_r^2/4000 } } = \frac{1}{\mathrm{poly}(d)}\,,
    \end{align*}
    where $c_r$ is a large constant. 
\end{lemma}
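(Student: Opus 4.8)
The plan is to start from the closed form for $\nabla_{\mu_{1,t}} L_t(s_{\theta_t})$ supplied by Lemma~\ref{lemma:GD-update-k-mog}. Subtracting the gradient-EM term $-\mbb{E}[w_{1,t}(X_t)(X_t-\mu_{1,t})]$ from it, the quantity to be bounded is exactly the sum of the five remaining ``extra terms'' there:
\begin{align*}
E \;\triangleq\; \mbb{E}\Big[\,&w_{1,t}(X_t)(X_t - \mu_{1,t})\textstyle\sum_{i=1}^K w_{i,t}(X_t)\mu_{i,t}^\top(X_t - \mu_{1,t}) + w_{1,t}(X_t)\mu_{1,t} \\
&- w_{1,t}(X_t)(X_t - \mu_{1,t})^\top\mu_{1,t}(X_t - \mu_{1,t}) - w_{1,t}(X_t)\textstyle\sum_{i=1}^K w_{i,t}(X_t)\mu_{i,t} \\
&- w_{1,t}(X_t)\textstyle\sum_{i=1}^K \nabla_x w_{i,t}(X_t)^\top\mu_{i,t}(X_t - \mu_{1,t})\,\Big]\,.
\end{align*}
Before touching $E$ I would record that because $t = O(1)$, the rescaled parameters $\mu_{i,t} = e^{-t}\mu_i$ and $\mu_{i,t}^* = e^{-t}\mu_i^*$ still satisfy the separation Assumption~\ref{asm:mog-k-seperation} and the warm-start Assumption~\ref{asm:mog-k-initialization} up to a constant-factor change in $C$ and $C'$; this is what lets me invoke the EM-style concentration estimates at noise scale $t$.

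The key algebraic step is to regroup $E$ so that every summand displays an explicit factor of either $w_{1,t}(X_t)\bigl(1-w_{1,t}(X_t)\bigr)$ or $w_{1,t}(X_t)w_{j,t}(X_t)$ with $j\neq 1$. Concretely: the $i=1$ summand of the first (quadratic) term combines with the third term into $-w_{1,t}(1-w_{1,t})(X_t-\mu_{1,t})\mu_{1,t}^\top(X_t-\mu_{1,t})$; the second and fourth terms combine into $w_{1,t}(1-w_{1,t})\mu_{1,t} - \sum_{i\neq1}w_{1,t}w_{i,t}\mu_{i,t}$; and inside the last term I expand $\nabla_x w_{i,t}$ with Lemma~\ref{lemma:gradient-wrt-x-w}, after which each piece picks up a $w_{i,t}(1-w_{i,t})$ or $w_{i,t}w_{j,t}$ factor on top of the already-present $w_{1,t}$. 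The structural reason this is always possible is that $\sum_i w_{i,t}\equiv 1$, so $w_{1,t}$ times any expression of the shape ``$(\,\cdot\,) - \sum_i w_{i,t}(\,\cdot\,)$'' leaves only cross terms.

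After this regrouping $E$ is a sum of $O(K^2)$ terms, each of the form $\mbb{E}[w_{1,t}(X_t)w_{j,t}(X_t)\,P(X_t)]$ (or with $w_{1,t}(1-w_{1,t})$ in place of $w_{1,t}w_{j,t}$), where $P$ is a vector-valued polynomial in $X_t$ of degree at most two whose coefficients are products of the centers $\mu_{i,t}$. I would bound each such term by Cauchy--Schwarz over the mixture $q_t = \frac1K\sum_k\mc{N}(\mu_{k,t}^*,\Id)$,
\begin{align*}
\bigl\|\mbb{E}[w_{1,t}w_{j,t}\,P(X_t)]\bigr\| \;\le\; \sqrt{\mbb{E}\bigl[w_{1,t}(X_t)w_{j,t}(X_t)\bigr]}\cdot\sqrt{\mbb{E}\bigl[w_{1,t}(X_t)w_{j,t}(X_t)\,\|P(X_t)\|^2\bigr]}\,,
\end{align*}
and then: (i) for the first factor, invoke Lemma~\ref{lemma:EM-initialization-properties}, which extends Proposition~4.1 of \cite{segol2021improved} from $\mbb{E}_{X_t\sim\mc{N}(\mu_{1,t}^*,\Id)}[w_{j,t}(X_t)] \lesssim 1/\mathrm{poly}(d)$ to the two-weight quantities $\mbb{E}[w_{1,t}(1-w_{1,t})]$ and $\mbb{E}[w_{1,t}w_{j,t}]$, giving a bound of $d^{-\Omega(c_r^2)}$ under Assumptions~\ref{asm:mog-k-seperation}--\ref{asm:mog-k-initialization}; (ii) for the second factor, use that the surviving weight $w_{1,t}(X_t)$ confines $X_t$ to within $O(\sqrt d)$ of $\mu_{1,t}$, so the polynomial moment is $\mathrm{poly}(d)\cdot B^2$ rather than a higher power of $B$. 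Multiplying, summing the $O(K^2)$ contributions, and folding the polynomial-in-$d$ slack into the exponent (legitimate since $c_r$ is a large absolute constant) yields $\|E\|\lesssim K^2 B^2/d^{c_r^2/4000}$.

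I expect the main obstacle to be precisely step (ii) together with the input in step (i). For (ii), one has to be careful to extract only $B^2$ and not $B^3$ or worse: a crude bound on $\mbb{E}\|P(X_t)\|^2$ that discards the weight loses extra powers of $B$, so one must keep $w_{1,t}(X_t)$ (or $w_{j,t}(X_t)$) and use its Gaussian-tail decay to see that the factors $\|X_t-\mu_{1,t}\|$ appearing in $P$ contribute $\sqrt d$, while only the explicit $\mu_{i,t}$ coefficients contribute $B$. For (i), the two-weight estimate itself must track the $\Omega(\sqrt{\log\min(K,d)})$ separation through the $e^{-t}$ rescaling, and it is here --- in forcing the resulting $d^{-\Omega(c_r^2)}$ to dominate all the polynomial factors --- that the hypothesis ``$C$ a sufficiently large absolute constant'' is really used.
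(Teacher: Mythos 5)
Your proposal is correct and follows essentially the same route as the paper: subtract the gradient-EM term from the expression in Lemma~\ref{lemma:GD-update-k-mog}, observe (after expanding $\nabla_x w_{i,t}$ via Lemma~\ref{lemma:gradient-wrt-x-w}) that every leftover term carries a factor $w_{1,t}(1-w_{1,t})$ or $w_{1,t}w_{j,t}$, and kill each via Cauchy--Schwarz together with the two-weight bounds of Lemma~\ref{lemma:EM-initialization-properties}, absorbing the $\mathrm{poly}(d)\cdot B^2$ moment factors into $d^{-\Omega(c_r^2)}$. The only cosmetic differences are that the paper handles the $w_{1,t}(1-w_{1,t})$ quadratic term with the operator-norm bound of Lemma~\ref{lemma:segol-operator-norm-bounds} rather than Cauchy--Schwarz, and it explicitly notes that the argument is carried out for $d\le K$ with the $d>K$ case reduced by projecting onto the $O(K)$-dimensional span of the centers.
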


\begin{proof}
    Observe that the first term in the expression for the population gradient of the DDPM objective in Lemma \ref{lemma:GD-update-k-mog} is exactly the gradient EM update for the mixture of $K$ Gaussian in Fact \ref{fact:gradient-EM}. To prove the closeness between the GD update and the gradient EM update, we will show that the additional terms in Lemma \ref{lemma:GD-update-k-mog} are small. 
    
    Note that when the ground truth parameters $\theta^* = \{ \mu_1^*, \mu_2^*, \ldots, \mu_K^* \}$ satisfy Assumption~\ref{asm:mog-k-seperation}, $\theta_t^*$ also satisfies Assumption \ref{asm:mog-k-seperation} for $t = O(1)$. Similarly, it is straightforward to show that when the parameters $\theta$ satisfy Assumption \ref{asm:mog-k-initialization}, $\theta_t = \{\mu_{1, t}, \mu_{2, t}, \ldots, \mu_{K, t} \}$ also satisfies the assumption. 
    
    We focus on the $d \leq K$ case for this proof. A similar calculation with projection onto $O(K)$ dimensional subspace of $\mu_{i,t}^*$ will give the result for $d \geq K$ case~\cite{VEMPALA2004841, yan2017convergence}. 
    
    Using Lemma~\ref{lemma:segol-operator-norm-bounds} below, we have 
    \begin{equation*}
        \bigl\|\mbb{E}\big[ w_{1, t}(X_t)(1 - w_{1, t}(X_t)) (X_t - \mu_{1,t})(X_t - \mu_{1,t})^\top \big] \mu_{1, t} \bigr\| 
        \leq \frac{d^2 c_r^2 B}{ d^{ c_r^2/1000 } }, 
    \end{equation*}
    for any $i \in [K]$. We can simplify additional terms as 
    \begin{align*}
        \MoveEqLeft\biggl\| \sum_{i=2}^K \mbb{E} [ w_{1, t}(X_t) w_{i, t}(X_t) (X_t - \mu_{1, t}) (X_t - \mu_{1,t})^\top \mu_{i, t} ] \biggr\| \\ 
        &\leq \sum_{i=2}^K \mbb{E} [ \| w_{1, t}(X_t) w_{i, t}(X_t) (X_t - \mu_{1, t}) (X_t - \mu_{1,t})^\top \mu_{i, t} \| ]  \\ 
        &\leq \sum_{i=2}^K \sqrt{ \mbb{E} \big[ | w_{1, t}(X_t) w_{i, t} (X_t) |^2 \big] \cdot \mbb{E} \big[ \| (X_t - \mu_{1, t}) (X_t - \mu_{1,t})^\top \mu_{i, t} \|^2 \big] } \\
        &\leq \frac{K B^2}{ d^{ c_r^2/2000 } }\,,
    \end{align*}
    where in the last step we used the second part of Lemma~\ref{lemma:EM-initialization-properties}.
    This will allow us to prove that $\| \mbb{E}[w_{1, t}(X_t) (X_t - \mu_{1, t}) \sum_{i=1}^K w_{i, t}(X_t) \mu_{i, t}^\top (X_t - \mu_{1, t}) - w_{1, t}(X_t) (X_t  - \mu_{1, t})^\top \mu_{1, t} (X_t - \mu_{1, t}) ] \|$ is small. 
    
    Using the expression for $\nabla_{x} w_{i, t} (X_t)$ from Lemma \ref{lemma:gradient-wrt-x-w}, we have
    \begin{align*}
        \MoveEqLeft \sum_{i=1}^K w_{1, t}(X_t) \nabla_{x} w_{i, t} (X_t)^\top \mu_{i, t} (X_t - \mu_{1, t}) \\
        &= - \sum_{i=1}^K w_{1,t}(X_t) w_{i, t}(X_t) (1 - w_{i, t}(X_t) )  (X_t - \mu_{1, t}) (X_t - \mu_{i, t})^\top \mu_{i, t} \\ 
        &\quad\quad\quad + \sum_{i=1}^K \sum_{j=1, j\neq i}^K w_{1, t}(X_t) w_{i, t}(X_t) w_{j, t}(X_t) (X_t - \mu_{1, t} ) (X_t - \mu_{j, t} )^\top \mu_{i, t}\,.
    \end{align*}
    The first term can be simplified as follows:
    \begin{align*}
        \MoveEqLeft\biggl\| \sum_{i=1}^K \mbb{E} \Big[ w_{1,t}(X_t) w_{i, t}(X_t) (1 - w_{i, t}(X_t) )  (X_t - \mu_{1, t}) (X_t - \mu_{i, t})^\top \mu_{i, t} \Big] \biggr\| \\ 
        & \leq \sum_{i=1}^K \mbb{E} \big[ \big\|  w_{1,t}(X_t) w_{i, t}(X_t) (1 - w_{i, t}(X_t) )  (X_t - \mu_{1, t}) (X_t - \mu_{i, t})^\top \mu_{i, t} \big\| \big] \\
        &\leq \sum_{i=2}^K \sqrt{ \mbb{E} [ w_{1,t}(X_t)^2 w_{i, t}(X_t)^2 ] \cdot\mbb{E} \big[ (1 - w_{i, t}(X_t) )^2 \cdot\wnorm{ X_t - \mu_{1, t} }^2 \cdot\wnorm{ X_t - \mu_{i, t} }^2 \cdot\wnorm{ \mu_{i, t} }^2 \big] } \\
        & \lesssim \frac{K B^2}{ d^{ c_r^2/4000 } }\,,
    \end{align*}
    where the last inequality follows from $$\mbb{E}\big[ \| X_t - \mu_{1, t} \|^2 \wnorm{ X_t - \mu_{i, t} }^2 \big] \leq \sqrt{ \mbb{E}\big[\wnorm{ X_t - \mu_{1, t} }^4\big] \mbb{E}\big[\wnorm{ X_t - \mu_{i, t} }^4\big] } \lesssim B^2\,. $$
    Similarly, by simplifying the second term, we get
    \begin{align*}
        \MoveEqLeft\sum_{i=1}^K \sum_{j=1, j\neq i}^K \mbb{E} \big[ \big\| w_{1, t}(X_t) w_{i, t}(X_t) w_{j, t}(X_t) (X_t - \mu_{1, t} ) (X_t - \mu_{j, t} )^\top \mu_{i, t} \big\| \big] \\
        &\leq \sum_{i=1}^K \sum_{j=1, j\neq i}^K \sqrt{ \mbb{E} \big[ w_{i, t}^2(X_t) w_{j, t}^2(X_t) \big] \mbb{E} \big[ w_{1, t}^2(X_t) \wnorm{ (X_t - \mu_{1, t} ) (X_t - \mu_{j, t} ) \mu_{i, t} }^2 \big] } \lesssim \frac{K^2 B^2}{ d^{ c_r^2/4000 } }\,,
    \end{align*}
    where the last inequality uses Lemma \ref{lemma:EM-initialization-properties}. Simplifying the following term using Lemma \ref{lemma:EM-initialization-properties}, we have
    \begin{align*}
        \MoveEqLeft \Big\| \mbb{E}[ w_{1, t}(X_t) \mu_{1, t} - w_{1, t}(X_t) \sum_{i=1}^K w_{i, t}(X_t)  \mu_{i, t} ] \Big\| \\ 
        &\leq \sum_{i=2}^K \mbb{E} \big[ \big\| w_{1, t}(X_t) w_{i, t}(X_t) \mu_{i, t} \big\| \big] + \sum_{i=2}^K \mbb{E} \big[ \big\| w_{1, t}(X_t) w_{i, t}(X_t) \mu_{1, t} \big\| \big] \leq \frac{2 K B}{ d^{ c_r^2/200 } }\,.
    \end{align*}
    Combining all the results, we obtain the theorem statement. 
\end{proof}

\noindent The above proof made use of the following two helper lemmas which follow from prior work analyzing EM for learning mixtures of Gaussians:

\begin{lemma}
\label{lemma:EM-initialization-properties}
    There is some absolute constant $c_r > 0$ for which the following holds. For any $\theta = \{ \mu_1, \mu_2, \ldots, \mu_K \}$ such that $\| \mu_i - \mu_i^*  \| \leq \frac{c_r}{4} \sqrt{\log d}$ for all $i \in [K]$ and any $j$ such that $j \neq i$, we have
    \begin{align*}
        \mbb{E}_{X_t \sim \mc{N}( \mu^*_{i, t}, I ) }[ w_{j, t}(X_t) ] \leq \frac{1}{ d^{ c^2_r / 100 } }\,.
    \end{align*}
    Additionally, for any $j \neq k$ such that $j \in [K]$ and $k \in [K]$, we have
    \begin{align*}
        \mbb{E}_{X_t}[ w_{j, t}(X_t) w_{k, t}(X_t) ] \leq \frac{1}{ d^{ c^2_r / 200 } }\,.
    \end{align*}
\end{lemma}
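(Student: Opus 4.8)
The plan is to treat the two bullets separately, deriving the first (which is a rescaled version of Proposition~4.1 of \cite{segol2021improved}) by a direct sigmoid-plus-Gaussian-tail estimate, and then deducing the second from the first by conditioning on the mixture component. Throughout I would work in the regime $d \le K$, so that $\min(K,d) = d$; the case $d > K$ reduces to effective dimension $O(K)$ by projecting onto the span of the centers, as in \cite{yan2017convergence, VEMPALA2004841}. Since this lemma is only ever invoked under Assumption~\ref{asm:mog-k-seperation} (see the proof of Lemma~\ref{lemma:population-GD-gradient-EM}), I assume $\|\mu_i^* - \mu_j^*\| \ge C\sqrt{\log d}$, and I choose the absolute constant $c_r$ so that $C$ is a sufficiently large multiple of $c_r$ and $c_r$ itself exceeds a fixed threshold. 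Using Lemma~\ref{lemma:Xt-mog-pdf}, write $\delta_i \triangleq \mu_{i,t} - \mu^*_{i,t} = e^{-t}(\mu_i - \mu^*_i)$, so $\|\delta_i\| \le \tfrac{c_r}{4}\sqrt{\log d}$, and $b \triangleq \mu^*_{j,t} - \mu^*_{i,t} = e^{-t}(\mu^*_j - \mu^*_i)$, so $\|b\| = \Omega(C\sqrt{\log d})$ because $t = O(1)$.

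For the first bullet, I would start from the pointwise bound obtained by retaining only the $i$-th and $j$-th terms in the softmax denominator: since $\tfrac12(\|x-\mu_{i,t}\|^2 - \|x-\mu_{j,t}\|^2) = \langle x - \tfrac{\mu_{i,t}+\mu_{j,t}}{2},\, \mu_{j,t}-\mu_{i,t}\rangle$, this gives $w_{j,t}(x) \le \sigma(\langle x - \tfrac{\mu_{i,t}+\mu_{j,t}}{2},\, \mu_{j,t}-\mu_{i,t}\rangle)$, with $\sigma$ the logistic sigmoid. Plugging in $x = X_t = \mu^*_{i,t} + Z$, $Z \sim \mc{N}(0,\Id)$, and letting $a \triangleq \mu_{j,t}-\mu_{i,t} = b + \delta_j - \delta_i$, a short computation rewrites the argument of $\sigma$ as $\langle Z, a\rangle - c$ with $c \triangleq \tfrac12\|a\|^2 + \langle \delta_i, a\rangle$. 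Since $\|a\| \ge \|b\| - \|\delta_i\| - \|\delta_j\| = \Omega(C\sqrt{\log d})$ dominates $\|\delta_i\| \le \tfrac{c_r}{4}\sqrt{\log d}$, one gets $c = \Theta(\|a\|^2) = \Theta(C^2 \log d)$, with quantitative bounds $c_r^2\log d \le c \le \tfrac23\|a\|^2$ once $C/c_r$ is large enough. The key step is then to split the expectation of $\sigma(\langle Z, a\rangle - c)$ at the threshold $\langle Z,a\rangle = c/2$: on the typical event $\langle Z,a\rangle < c/2$ the argument of $\sigma$ is $< -c/2$ so $\sigma \le e^{-c/2}$, and on the complement I bound $\sigma \le 1$ and use the Gaussian tail $\Pr[\langle Z,a\rangle \ge c/2] \le e^{-c^2/(8\|a\|^2)}$. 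This yields
\begin{equation*}
    \mbb{E}_{X_t \sim \mc{N}(\mu^*_{i,t},\Id)}\big[ w_{j,t}(X_t) \big] \le e^{-c/2} + e^{-c^2/(8\|a\|^2)},
\end{equation*}
and since both exponents are $\Omega(c_r^2 \log d)$, taking $c_r$ (equivalently $C$) large enough makes the right side at most $d^{-c_r^2/100}$.

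For the second bullet I would condition on the component of $X_t \sim q_t = \tfrac1K\sum_{l\in[K]}\mc{N}(\mu^*_{l,t},\Id)$. For each $l$, since $j \ne k$, at least one of $j,k$ — call it $m$ — is $\ne l$, so $w_{j,t}(X_t) w_{k,t}(X_t) \le w_{m,t}(X_t)$ pointwise, and the first bullet applied with component $l$ and index $m$ gives $\mbb{E}_{X_t \sim \mc{N}(\mu^*_{l,t},\Id)}[w_{j,t}w_{k,t}] \le d^{-c_r^2/100}$; averaging over $l$ gives $\mbb{E}_{X_t}[w_{j,t}(X_t)w_{k,t}(X_t)] \le d^{-c_r^2/100} \le d^{-c_r^2/200}$.

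The main obstacle I anticipate is the first bullet, specifically being disciplined about not collapsing $\sigma$ too early: the crude bound $\sigma(y) \le e^y$ alone is useless here, because taking its expectation introduces a factor $e^{\|a\|^2/2}$ that exactly cancels the $e^{-\|a\|^2/2}$ gained from the deterministic shift in the margin, so one is forced to carve out the rare Gaussian-tail event $\langle Z,a\rangle \gtrsim \|a\|^2$ by hand and then verify, with explicit constants, that the separation constant $C$ can be taken large enough relative to $c_r$ to drive both error terms below $d^{-c_r^2/100}$. Everything else — the identity for the margin, the $d \le K$ reduction, and the conditioning argument for the second bullet — is routine bookkeeping.
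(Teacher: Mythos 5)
Your proposal is correct, but it takes a partially different route from the paper. For the first bound, the paper simply cites Proposition~4.1 of \cite{segol2021improved} verbatim, whereas you re-derive it from scratch: the pointwise reduction $w_{j,t}(x)\le\sigma\bigl(\langle x-\tfrac{\mu_{i,t}+\mu_{j,t}}{2},\,\mu_{j,t}-\mu_{i,t}\rangle\bigr)$, the rewriting of the margin as $\langle Z,a\rangle-c$ with $c=\tfrac12\|a\|^2+\langle\delta_i,a\rangle$, and the split of the expectation at $\langle Z,a\rangle=c/2$ (sigmoid bound on the typical event, Gaussian tail on the rare one) are all sound, and your insistence on not using $\sigma(y)\le e^y$ globally is exactly the right care. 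The price of self-containment is that you must make explicit the relationship between the separation constant $C$ of Assumption~\ref{asm:mog-k-seperation}, the radius constant $c_r$, and the noise scale $t=O(1)$ — a dependence the paper leaves buried in the citation (and states somewhat loosely); your handling of it is consistent with how the lemma is actually invoked in Lemma~\ref{lemma:population-GD-gradient-EM}. For the second bound, your argument is essentially the paper's: both condition on the mixture component and exploit that at least one of $j,k$ differs from that component's index; the paper then applies Cauchy--Schwarz together with $w^2\le w\le 1$, arriving at $d^{-c_r^2/200}$, while you use the pointwise inequality $w_{j,t}w_{k,t}\le w_{m,t}$, which is slightly more direct and in fact yields the stronger bound $d^{-c_r^2/100}$, from which the stated $d^{-c_r^2/200}$ follows trivially.
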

\begin{proof}
    Using Proposition 4.1 from \cite{segol2021improved}, for any $\theta = \{ \mu_1, \mu_2, \ldots, \mu_K \}$ such that $\| \mu_i - \mu_i^*  \| \leq \frac{c_r}{4} \sqrt{\log d}$ for all $i \in [K]$ and $j \neq i$, we have
    \begin{align*}
        \mbb{E}_{X_t \sim \mc{N}( \mu^*_{i, t}, I ) }[ w_{j, t}(X_t) ] \leq \frac{1}{ d^{ c^2_r / 100 } }.
    \end{align*}
    Computing the expectation of the product of the weights $w_{j,t}$ and $w_{k,t}$ for any distinct $j,k$, we have
    \begin{align*}
        \mbb{E}_{X_t}[ w_{j, t}(X_t) w_{k, t}(X_t) ] &= \sum_{i=1}^K \frac{1}{K} \mbb{E}_{x \sim \mc{N}( \mu^*_i, I )}[ w_{j, t}(x) w_{k, t}(x) ] \\ 
        &\leq \frac{1}{K} \sum_{i=1}^K \sqrt{ \mbb{E}_{x \sim \mc{N}( \mu^*_i, I )}[ w_{j, t}(x)^2 ] \mbb{E}_{x \sim \mc{N}( \mu^*_i, I )}[ w_{k, t}(x)^2 ] } \\
        &\leq \frac{1}{ d^{ c_r^2/200 } }
    \end{align*}
    where the last inequality uses the fact that either $i \neq j$ or $i \neq k$ and $w_{j, t}(x)^2 \leq w_{j, t}(x) \leq 1$.
\end{proof}

\begin{lemma}[Lemma 4.3 of \cite{segol2021improved}]
\label{lemma:segol-operator-norm-bounds}
    Suppose $X$ is distributed according to a mixture of $K$ Gaussians with centers $\theta^* = \{\mu^*_1,\ldots,\mu^*_K\}$ as in Eq.~\eqref{eq:mog-k-pdf}. For any $\theta = \{ \mu_1, \mu_2, \ldots, \mu_K \}$ such that $\| \mu_i - \mu_i^*  \| \leq \frac{c_r}{4} \sqrt{\log d}$ for all $i \in [K]$, then for any distinct $i, j \in [K]$, we have
    \begin{align*}
        \norm{ \mbb{E}_X[ w_i(X, \mu) (1 - w_i(X, \mu)) (X - \mu_i)(X - \mu_i)^\top ] }_{\mathsf{op}} &\leq \frac{d^2 c_r^2}{d^{c_r^2/1000}} \\
        \norm{ \mbb{E}_X[ w_i(X, \theta) w_j(x, \theta) (X - \mu_i) (X - \mu_j)^\top ] }_{\mathsf{op}} &\leq \frac{d^2 c_r^2}{d^{c_r^2/1000}}
    \end{align*}
\end{lemma}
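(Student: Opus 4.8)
The plan is to reduce both bounds to two ingredients: (i) the key estimate of \cite{segol2021improved} — Proposition 4.1 there, which is exactly the input already used in Lemma~\ref{lemma:EM-initialization-properties} above — namely that $\mbb{E}_{X\sim\mc{N}(\mu^*_i,\Id)}[w_j(X)]\le d^{-c_r^2/100}$ whenever $j\neq i$; and (ii) routine Gaussian moment bounds, using that under Assumption~\ref{asm:mog-k-initialization} every $\|\mu_i-\mu_i^*\|$ is $O(\sqrt{\log d})$ and every $\|\mu_i^*\|$ is $O(B)=\mathrm{poly}(d)$. Indeed the statement is literally Lemma 4.3 of \cite{segol2021improved}, and in our application the centers are the rescaled centers $\mu^*_{i,t}$ with $t=O(1)$, which changes all separations and radii only by a constant factor; so one may simply invoke it. For completeness I record how it is obtained.

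For the first bound, fix a unit vector $v$. Since $w_i(1-w_i)\le\min\{w_i,\,1-w_i\}$ and the data law is $\frac1K\sum_k\mc{N}(\mu^*_k,\Id)$, I would write
\[
v^\top\mbb{E}\big[w_i(1-w_i)(X-\mu_i)(X-\mu_i)^\top\big]v=\frac1K\sum_{k}\mbb{E}_{X\sim\mc{N}(\mu^*_k,\Id)}\!\big[w_i(1-w_i)\,\langle v,X-\mu_i\rangle^2\big]
\]
and treat each $k$ separately: if $k=i$ bound $w_i(1-w_i)\le\sum_{l\neq i}w_l$, and if $k\neq i$ bound $w_i(1-w_i)\le w_i$. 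In either case one Cauchy--Schwarz step produces a sum of terms $\sqrt{\mbb{E}[w_m(X)^2]}\cdot\sqrt{\mbb{E}[\langle v,X-\mu_i\rangle^4]}$ with $m\neq k$, so $\mbb{E}[w_m^2]\le\mbb{E}_{X\sim\mc{N}(\mu^*_k,\Id)}[w_m]\le d^{-c_r^2/100}$ by fact (i); and $\mbb{E}_{X\sim\mc{N}(\mu^*_k,\Id)}[\langle v,X-\mu_i\rangle^4]$ is the fourth moment of a unit-variance Gaussian with mean $\langle v,\mu^*_k-\mu_i\rangle$, hence $O\big((1+\|\mu^*_k-\mu_i\|^2)^2\big)=\mathrm{poly}(d)$. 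Summing the at most $K$ terms and absorbing the $K,B,\log d\le\mathrm{poly}(d)$ factors into the slack between the exponents $c_r^2/100$ and $c_r^2/1000$ gives $\le d^2c_r^2/d^{c_r^2/1000}$ for $c_r$ a sufficiently large constant.

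The second bound is the same argument with two unit vectors $u,v$: $u^\top\mbb{E}\big[w_iw_j(X-\mu_i)(X-\mu_j)^\top\big]v=\frac1K\sum_k\mbb{E}_{X\sim\mc{N}(\mu^*_k,\Id)}\big[w_i(X)w_j(X)\langle u,X-\mu_i\rangle\langle v,X-\mu_j\rangle\big]$. Since $i\neq j$, for each $k$ at least one of $i,j$ differs from $k$; picking such an index $l\in\{i,j\}$, bound $w_iw_j\le w_l$ and apply Cauchy--Schwarz as $\sqrt{\mbb{E}[w_l^2]}\cdot\sqrt{\mbb{E}[\langle u,X-\mu_i\rangle^2\langle v,X-\mu_j\rangle^2]}$; then $\mbb{E}[w_l^2]\le\mbb{E}_{X\sim\mc{N}(\mu^*_k,\Id)}[w_l]\le d^{-c_r^2/100}$ by fact (i) (applicable since $l\neq k$), and the other factor is again $\mathrm{poly}(d)$ by Gaussian moments and Cauchy--Schwarz. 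Summing over $k$ and enlarging the exponent's constant as before yields the claim. The main obstacle is fact (i) itself — that the posterior weight placed on a wrong, $\Omega(\sqrt{\log d})$-separated component is $d^{-\Omega(c_r^2)}$ in expectation. Proving this requires showing $X\sim\mc{N}(\mu^*_i,\Id)$ lands within $O(\sqrt{\log d})$ of $\mu_i$ with overwhelming probability, that on this event $\exp(-\|X-\mu_j\|^2/2)/\exp(-\|X-\mu_i\|^2/2)$ is exponentially small in the squared separation, and that the low-probability complement is controlled by $w_j\le 1$ together with a Gaussian tail bound; everything else is bookkeeping with the sub-Gaussian/sub-exponential moment bounds of Appendix~\ref{app:prelims}. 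Since fact (i) is exactly Proposition 4.1 of \cite{segol2021improved}, I would invoke it rather than reprove it.
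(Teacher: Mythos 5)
Your primary move here matches the paper exactly: the paper gives no proof of this lemma at all, importing it verbatim as Lemma 4.3 of \cite{segol2021improved}, and your decision to simply invoke that result (noting that the $t=O(1)$ rescaling of the centers only changes separations and radii by constant factors) is precisely what the paper does. Your supplementary reconstruction via Proposition 4.1, Cauchy--Schwarz, and Gaussian moment bounds is extra material the paper never records and is sound in outline, so there is nothing to flag.
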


\subsection{Closeness between population gradient descent and empirical gradient descent}

In this section, we show that the population gradient descent on the DDPM objective is close to the empirical gradient descent for mixtures of $K$ Gaussians. 

\begin{lemma}
\label{lemma:sample-complexity-k-mog}
    For any $\epsilon$ that is $\Theta(\frac{1}{\mathrm{poly}(d)})$ and noise scale $t > t'$ where $t' \lesssim 1$, the empirical estimate of gradient descent update on the DDPM objective with the number of samples $n > n'$ concentrates well to the population gradient descent update where $n' = O(\frac{K^4 d^5 B^6}{\epsilon^2})$. More specifically, the following inequality holds with probability at least $1 - \exp(-d^{0.99})$:
    \begin{align*}
        \norm{ \nabla_{\mu_{1,t}} \Big( \frac{1}{n} \sum_{i=1}^n L_t(s_{\theta_t}( x_{i, 0}, z_{i, t} )) \Big) - \nabla_{\mu_{1,t}}  L_t(s_{\theta_t} )  } \leq \epsilon.
    \end{align*}
\end{lemma}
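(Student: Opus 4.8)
The plan is to write the empirical gradient as an average of $n$ i.i.d.\ per-sample gradients, to show that each coordinate of a single per-sample gradient is sub-exponential with $\psi_1$-norm polynomial in $d,B,K$, and then to apply Bernstein's inequality (Lemma~\ref{lemma:bernstein-inequality}) coordinatewise followed by a union bound over the $d$ coordinates. \emph{Step 1 (the per-sample gradient).} I would first write $g_i \triangleq \nabla_{\mu_{1,t}} L_t(s_{\theta_t}(x_{i,0}, z_{i,t})) = 2\bigl(\nabla_{\mu_{1,t}} s_{\theta_t}(x_{i,t})\bigr)^\top\!\bigl(s_{\theta_t}(x_{i,t}) + z_{i,t}/\beta_t\bigr)$, where $x_{i,t} = \alpha_t x_{i,0} + \beta_t z_{i,t}$, and reuse the computations of $\nabla_{\mu_{1,t}} s_{\theta_t}$, $\nabla_x w_{i,t}$, and $s_{\theta_t}$ already carried out in the proofs of Lemma~\ref{lemma:GD-update-k-mog} and Lemma~\ref{lemma:gradient-wrt-x-w} --- but \emph{without} taking expectations or applying Stein's lemma --- to exhibit $g_i$ as a sum of $\mathrm{poly}(K)$ terms, each a product of softmax weights $w_{j,t}(x_{i,t}) \in [0,1]$ times a polynomial of degree at most $2$ in the entries of $(x_{i,t}, z_{i,t})$ whose coefficient vectors/matrices have operator norm $\mathrm{poly}(d,B)$ and which carries at most one factor of $1/\beta_t$ (which is $\mathrm{poly}(d)$ in the noise-scale regime of interest).

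\emph{Step 2 (sub-exponential tails of a single coordinate).} I would then condition on the mixture component $c$ that generates $x_{i,0}$. Conditionally, $(x_{i,t}, z_{i,t})$ is jointly Gaussian on $\R^{2d}$ with mean $(\mu^*_{c,t}, 0)$ of norm $\lesssim \sqrt{B}$ and covariance of operator norm $O(1)$: indeed $x_{i,t}\mid c \sim \mc{N}(\mu^*_{c,t}, \Id)$ by Lemma~\ref{lemma:Xt-mog-pdf}, and $\mathrm{Cov}(x_{i,t}, z_{i,t}\mid c) = \beta_t \Id$. A degree-$\le 2$ polynomial of a Gaussian vector is sub-exponential, and multiplying by the $[0,1]$-valued weights does not inflate the norm (property~5 of Lemma~\ref{def:subgaussian} and its $\psi_1$ analogue), so each $[g_i]_j$ is sub-exponential under the conditional law given $c$ with norm at most $M = \mathrm{poly}(d,B,K)$ \emph{uniformly in $c$}. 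Averaging the resulting moment-generating-function bound over the uniform choice of $c$ shows $[g_i]_j$ is unconditionally sub-exponential with $\psi_1$-norm $\le M$, and the centering property for sub-exponential norms then gives that $[g_i]_j - \mbb{E}[[g]_j]$ has $\psi_1$-norm $\lesssim M$, where $\mbb{E}[[g]_j]$ is the $j$-th coordinate of the population gradient $\nabla_{\mu_{1,t}} L_t(s_{\theta_t})$.

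\emph{Step 3 (Bernstein and union bound).} For each coordinate $j$ I would apply Lemma~\ref{lemma:bernstein-inequality} to the i.i.d.\ mean-zero variables $[g_i]_j - \mbb{E}[[g]_j]$ with deviation $\epsilon/\sqrt{d}$; since $\epsilon = \Theta(1/\mathrm{poly}(d))$ is small compared with $M$, the quadratic branch dominates and the failure probability is $\le 2\exp\!\bigl(-c\,n\,\epsilon^2/(d M^2)\bigr)$. Choosing $n \ge n' = \Theta(K^4 d^5 B^6/\epsilon^2)$ --- large enough, after the bookkeeping of Steps~1--2, that $c\,n'\epsilon^2/(dM^2) \ge d^{0.99} + \ln d$ --- and taking a union bound over the $d$ coordinates of $\tfrac1n\sum_{i=1}^n g_i - \nabla_{\mu_{1,t}} L_t(s_{\theta_t})$ bounds its $\ell_2$ deviation by $\epsilon$ with probability $\ge 1 - 2d\exp(-d^{0.99} - \ln d) \ge 1 - \exp(-d^{0.99})$.

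\emph{Main obstacle.} The delicate part is Step~2: faithfully enumerating the $\mathrm{poly}(K)$ summands of $g_i$ --- including the double sums $\sum_i\sum_{j\neq i} w_{i,t}w_{j,t}(\cdots)$ arising from $\nabla_x w_{i,t}$, which collapse only because the scalars they contain are bilinear --- and verifying that each summand really is (bounded) $\times$ (degree-$\le 2$ Gaussian polynomial with $\mathrm{poly}(d,B)$ coefficients), so that the final $\psi_1$-bound $M$, and hence the sample complexity $n'$, is genuinely polynomial. The lone $1/\beta_t$ factor must be tracked carefully, since it is the only place where the smallness of the noise scale $t$ enters.
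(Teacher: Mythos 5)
Your proposal follows essentially the same route as the paper's proof: express the per-sample gradient coordinatewise, bound each coordinate's $\psi_1$-norm by $\mathrm{poly}(K,d,B)/\beta_t$ using the decomposition of $x_t$ into a standard Gaussian plus a bounded component mean (your conditioning on $c$ is the paper's $\tilde{x}_t = z+\tau$ step in disguise), then apply Bernstein's inequality (Lemma~\ref{lemma:bernstein-inequality}) coordinatewise and union bound over the $d$ coordinates. The only small imprecision is that in the regime $t \gtrsim 1$ of the lemma $\beta_t$ is bounded below by an absolute constant, so the $1/\beta_t$ factor is $O(1)$ rather than $\mathrm{poly}(d)$, which is precisely what makes the stated sample complexity $n'$ attainable.
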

\begin{proof}
Recall that the population gradient is given by
\begin{equation}
    \nabla_{\mu_{1,t}} L_t(s_{\theta_t}) = \mbb{E} \Big[ \frac{1}{2} \nabla_{\mu_{1,t}} \norm{ s_{\theta_t}(X_t) }^2 + \frac{\nabla_{\mu_{1,t}} s_{\theta_t}(X_t)^\top Z_t}{ \beta_t }  \Big]\,,
\end{equation}
where
\begin{align} 
     \mbb{E} \Big[ \frac{1}{2} \nabla_{\mu_{1,t}} \norm{ s_{\theta_t}(X_t) }^2 \Big] &= \mbb{E} \bigg[ \Big( w_{1, t}(X_t) (X_t - \mu_{1, t}) \mu_{1, t}^\top   + w_{1, t}(X_t) \cdot \Id \\ 
     & \hspace{1cm} - w_{1, t}(X_t) \sum_{i=1}^K w_{i, t}(X_t) (X_t - \mu_{1, t}) \mu_{i, t}^\top \Big) \cdot  \sum_{i=1}^K \big(w_{i, t}( X_t ) \mu_{i, t} - X_t\big) \bigg]\,,
\end{align}
and 
\begin{align}
     \mbb{E} \Big[ \nabla_{\mu_{1,t}} s_{\theta_t}(X_t)^\top Z_t \Big] &= \mbb{E} \Big[ \Big( w_{1, t}(X_t)  (X_t - \mu_{1, t}) \mu_{1, t}^\top Z_t \\
     &\quad\quad\qquad + w_{1, t}(X_t) Z_t - w_{1, t}(X_t) \sum_{i=1}^K w_{i, t}(X_t) (X_t - \mu_{1, t}) \mu_{i, t}^\top Z_t \Big) \Big]\,. \label{eq:grad-k-mog-loss-recall}
\end{align}
We will prove that the sample estimate of each coordinate in Eq.~\eqref{eq:grad-k-mog-loss-recall} concentrates well around the expectation. We will prove the concentration of the first coordinate and a similar analysis holds for other coordinates. For the rest of the proof, we use $\Tilde{x}_t$ to denote the first coordinate of $X_t$ and $\Tilde{\mu}_{i, t}$ to indicate the first coordinate $\mu_{i, t}$. 
For any random variable $Y \in \mbb{R}$, we use $\| Y \|_{\psi_1}$ to denote the sub-exponential norm of $Y$ and $\| Y \|_{\psi_2}$ to denote the sub-gaussian norm of $Y$ (See lemma \ref{def:subgaussian} for details). Using properties of a sub-Gaussian random variable from Lemma \ref{def:subgaussian}, we get
\begin{align}
    & \Big{\|} \sum_{j=1}^K w_{1, t}(X_t) w_{j, t} (X_t) (\Tilde{x}_t - \Tilde{\mu}_{1, t}) \mu_{1, t}^\top \mu_{j, t} \Big{\|}_{\psi_2} \\ \lesssim  & \; \sum_{j=1}^K \Big\| w_{1, t}(X_t) w_{j, t} (X_t) (\Tilde{x}_t - \Tilde{\mu}_{1, t}) \mu_{1, t}^\top \mu_{j, t} \Big\|_{\psi_2} \tag*{(Using sum of sub-Gaussian random variables property in Lemma \ref{def:subgaussian})} \\ 
    \lesssim  & \; \sum_{j=1}^K \Big\| w_{1, t}(X_t) w_{j, t} (X_t) \mu_{1, t}^\top \mu_{j, t} z \Big\|_{\psi_2} + \Big\| w_{1, t}(X_t) w_{j, t} (X_t) \mu_{1, t}^\top \mu_{j, t} (\tau - \Tilde{\mu}_{1, t} ) \Big\|_{\psi_2} \\
    \lesssim & \; KB^2 + K B^3 \lesssim K B^3, \label{eq:first-term-grad-sgnorm}
\end{align}
where the third inequality follows by writing $\Tilde{x}_t = z + \tau$ where $z \sim \mc{N}(0, 1)$ and $\tau$ is a random variable that takes $\Tilde{\mu}_{i, t}^*$ for every $i \in [K]$ with probability $\frac{1}{K}$. The fourth inequality follows from the sub-Gaussian property of a bounded random variable and the product of a sub-Gaussian random variable with bounded random variable property in Lemma \ref{def:subgaussian}.
Using the sum of sub-Gaussian random variable property in Lemma \ref{def:subgaussian}, we have
\begin{equation}
\label{eq:second-term-grad-sgnorm}
    \Big\| \sum_{i=1}^K w_{1, t}(X_t) w_{i, t}(X) \Tilde{\mu}_{i, t} \Big\|_{\psi_2} \lesssim \sum_{i=1}^K \sgnorm{ w_{1, t}(X_t) w_{i, t}(X) \Tilde{\mu}_{i, t} } \lesssim K B.
\end{equation}
Using properties of the sub-Gaussian random variable from Lemma \ref{def:subgaussian} in a similar way of Eq.~\eqref{eq:first-term-grad-sgnorm}, we have
\begin{align}
    & \Big\| \sum_{i=1}^K \sum_{j=1}^K w_{1, t}(X_t) w_{i, t}(X_t) w_{j, t}(X_t) \mu_{i, t}^\top \mu_{j, t} ( \Tilde{x}_t - \Tilde{\mu}_{1, t}) \Big\|_{\psi_2} \\
    \leq & \; \sum_{i=1}^K \sum_{j=1}^K \Big\| w_{1, t}(X_t) w_{i, t}(X_t) w_{j, t}(X_t) \mu_{i, t}^\top \mu_{j, t} ( \Tilde{x}_t - \Tilde{\mu}_{1, t}) \Big\|_{\psi_2} \\
    \leq & \; \sum_{i=1}^K \sum_{j=1}^K \Big\| w_{1, t}(X_t) w_{i, t}(X_t) w_{j, t}(X_t) \mu_{i, t}^\top \mu_{j, t} z \Big\|_{\psi_2} + \Big\| w_{1, t}(X_t) w_{i, t}(X_t) w_{j, t}(X_t) \mu_{i, t}^\top \mu_{j, t} (\tau - \Tilde{\mu}_{i,t}) \Big\|_{\psi_2} \\
    \leq & \; K^2 B^2 + K^2 B^3 
    \lesssim \; K^2 B^3 \label{eq:third-term-grad-sgnorm}
\end{align}
We know that $\sgnorm{ w_{1, t}(X_t) \mu_{1, t}^\top X_t } \leq \sgnorm{ \sum_{i=1}^d \mu_{1, t}(i) X_t(i) } \lesssim d B^2$ and $\sgnorm{ \Tilde{x}_t - \Tilde{\mu}_{1,t} } \lesssim B$. Using the fact that the product of two sub-Gaussian random variables is a sub-exponential random variable, we have
\begin{align}
    \label{eq:fourth-term-grad-sgnorm}
    \wnorm{w_{1, t}(X_t) \mu_{1, t}^\top X_t (\Tilde{x}_t - \Tilde{\mu}_{1,t}) }_{\psi_1} &\leq \wnorm{ \Tilde{x}_t - \Tilde{\mu}_{1,t} }_{\psi_2} \wnorm{ w_{1, t}(X_t) \mu_{1, t}^\top X_t }_{\psi_2} \lesssim dB^3
\end{align}
The sub-gaussian norm of $w_{1, t}(X_t) \Tilde{x}_t$ term in the gradient is given by
\begin{align}
\label{eq:fifth-term-grad-sgnorm}
    \sgnorm{ w_{1, t}(X_t) \Tilde{x}_t } \leq \sgnorm{ X_t } \lesssim \sgnorm{ Z } + \sgnorm{ \tau } \lesssim B 
\end{align}
Using the property that the product of two sub-Gaussian random variables is a sub-exponential random variable, we obtain
\begin{align}
    & \Big\|  w_{1, t}(X_t) (\Tilde{x}_t - \Tilde{\mu}_{1,t} ) \Big( \sum_{i=1}^K w_{i, t}(X_t)  \mu_{i, t}^\top X_t  \Big) \Big\|_{\psi_1} \\ 
    & \lesssim   \sgnorm{ w_{1, t}(X_t) (\Tilde{x}_t - \Tilde{\mu}_{1,t} ) } \Big\| \Big( \sum_{i=1}^K w_{i, t}(X_t)  \mu_{i, t}^\top X_t  \Big) \Big\|_{\psi_2} \\
    & \lesssim K d B^3 \label{eq:sixth-term-grad-sgnorm}
\end{align}
For any random variable $Y$, we know that $\wnorm{X}_{\psi_1} \leq \sgnorm{X}$. Therefore, combining Eq. \eqref{eq:first-term-grad-sgnorm}, \eqref{eq:second-term-grad-sgnorm}, \eqref{eq:third-term-grad-sgnorm}, \eqref{eq:fourth-term-grad-sgnorm}, \eqref{eq:fifth-term-grad-sgnorm} and \eqref{eq:sixth-term-grad-sgnorm}, we have 
\begin{align}
    \wnorm{ [\nabla_{\mu_{1, t}} s_{\theta_t}(X_t)^\top s_{\theta_t}(X_t) ]_1 - \mbb{E}[\nabla_{\mu_{1, t}} s_{\theta_t}(X_t)^\top s_{\theta_t}(X_t) ]_1 }_{\psi_1} &\lesssim \wnorm{ [\nabla_{\mu_{1, t}} s_{\theta_t}(X_t)^\top s_{\theta_t}(X_t) ]_1 }_{\psi_1} \\ 
    &\lesssim K^2 d B^3 \label{eq:grad-combined-first-term-senorm}
\end{align}
    Now, we shift our focus on obtaining the sub-exponential norm of $\nabla_{\mu_{1,t}} s_{\theta_t}(X_t)^\top Z_t$. Using $\wnorm{ w_{1, t}(X_t)  (\Tilde{x}_t - \Tilde{\mu}_{1, t}) }_{\psi_2} \lesssim B$ and $\wnorm{ \mu_{1, t}^\top Z_t }_{\psi_2} \lesssim d B$, we obtain
    \begin{align}
    \label{eq:seventh-term-grad-sgnorm}
        \wnorm{ w_{1, t}(X_t) (\Tilde{x}_t - \Tilde{\mu}_{1, t}) \mu_{1, t}^\top Z_t }_{\psi_1} \leq \wnorm{ w_{1, t}(X_t) (\Tilde{x}_t - \Tilde{\mu}_{1, t}) }_{\psi_2} \wnorm{ \mu_{1, t}^\top Z_t }_{\psi_2} \lesssim d B^2
    \end{align}
    Using Lemma \ref{def:subgaussian}, we have $\sgnorm{ w_{1, t}(X_t) z_t } \leq \sgnorm{ z_t } \lesssim 1$. For the last term, we have
    \begin{align}
        \Big\| w_{1, t}(X_t) (\Tilde{x}_t - \Tilde{\mu}_{1, t}) \sum_{i=1}^K w_{i, t}(X_t) \mu_{i, t}^\top Z_t \Big\|_{\psi_1} &\leq \sgnorm{ w_{1, t}(X_t) (\Tilde{x}_t - \Tilde{\mu}_{1, t}) } \Big\| \sum_{i=1}^K w_{i, t}(X_t) \mu_{i, t}^\top Z_t \Big\|_{\psi_2} \\
        &\lesssim K d B^2 \label{eq:ninth-term-grad-sgnorm}
    \end{align}
    Combining Eq.~\eqref{eq:seventh-term-grad-sgnorm}, \eqref{eq:ninth-term-grad-sgnorm}, we have
    \begin{align}
    \label{eq:grad-combined-second-term-senorm}
        \Big\| \frac{ [\nabla_{\mu_{1,t}} s_{\theta_t}(X_t)^\top Z_t]_1 }{ \beta_t } - \frac{ \mbb{E}  [\nabla_{\mu_{1,t}} s_{\theta_t}(X_t)^\top Z_t]_1 }{ \beta_t } \Big\|_{\psi_1} \lesssim \Big\| \frac{ [\nabla_{\mu_{1,t}} s_{\theta_t}(X_t)^\top Z_t]_1 }{ \beta_t } \Big\|_{\psi_1} \lesssim \frac{K d B^2}{ \beta_t },
    \end{align}
    where $[\nabla_{\mu_{1,t}} s_{\theta_t}(X_t)^\top Z_t]_1$ denotes the first coordinate of $\nabla_{\mu_{1,t}} s_{\theta_t}(X_t)^\top Z_t$. Combining Eq.~\eqref{eq:grad-combined-first-term-senorm} and Eq.~\eqref{eq:grad-combined-second-term-senorm}, we have
    \begin{align*}
        \Big\| [\nabla_{\mu_{1,t}} L_t(s_{\theta_t}( X_t ))]_1 -  [\nabla_{\mu_{1,t}} L_t(s_{\theta_t} )]_1  \Big\|_{\psi_1} \lesssim \frac{K^2 d B^3}{ \beta_t }
    \end{align*}
    For each i.i.d. sample $x_{i, t}$, the term $[\nabla_{\mu_{1,t}} L_t(s_{\theta_t}( x_{i, t} ))]_1 -  [\nabla_{\mu_{1,t}} L_t(s_{\theta_t} )]_1 $ is also independent and identically distributed. Therefore, using Lemma \ref{lemma:bernstein-inequality}, for any $\epsilon$ that is $\Theta(\frac{1}{\text{poly}(d)})$, we have
    \begin{align*}
        \Pr \Big[ \Big| \frac{1}{n} \sum_{i=1}^n [\nabla_{\mu_{1,t}} L_t(s_{\theta_t}( x_{i, t} ))]_1 -  [\nabla_{\mu_{1,t}} L_t(s_{\theta_t} )]_1  \Big| \geq \epsilon \Big] \leq 2 \exp \Big( -\frac{n \epsilon^2 \beta_t^2 }{ K^4 d^2 B^6 } \Big).
    \end{align*}
    A similar analysis will give the concentration for each coordinate. Using the union bound and rescaling $\epsilon$ as $\frac{\epsilon}{d}$, with probability at least $1 - 2 d \exp \Big( -\frac{n \epsilon^2 \beta_t^2 }{ K^4 d^4 B^6 } \Big)$, we have 
    \begin{align*}
        \norm{ \nabla_{\mu_{1,t}} \Big( \frac{1}{n} \sum_{i=1}^n L_t(s_{\theta_t}( x_{i, t} )) \Big) - \nabla_{\mu_{1,t}}  L_t(s_{\theta_t})  } \leq \epsilon
    \end{align*}
    Note that for any $t = \Omega(1)$, $\beta_t \geq c$ for some constant $c$. Therefore, choosing $n$ provided in the Lemma \ref{lemma:sample-complexity-k-mog} statement, we obtain the result.
\end{proof}

\subsection{Proof of Theorem \ref{thm:k-mog-appendix}}

\begin{proof}[Proof of Theorem \ref{thm:k-mog-appendix}]
    For any training iteration $h$, assume that parameters $\theta_t^{(h)}$ are such that $\norm{ \mu_{i, t}^{(h)} - \mu_{i, t}^{*} } \leq \frac{c_r}{4} \sqrt{\log d} $ we can write the update on the DDPM objective as follows: 
    \begin{align*}
        \wnorm{ \mu_{1,t}^{(h+1)} - \mu_{1, t}^* } = & \; \Big\| \mu_{1,t}^{(h)} - \eta \nabla \Big( \frac{1}{n} \sum_{i=1}^n L_t( s_{\theta_t^{(h)}}(x_{i, 0}, z_{i, t}) ) \Big) - \mu_{1, t}^* \Big\| \\
        \leq & \; \big\| \mu_{1, t}^{(h)} + \eta \,\mbb{E} [ w_{1, t}(X_t) (X_t - \mu_{1, t}^{(h)}) ] - \mu_{1, t}^* \big\| \\
        & \quad\quad+ \eta \Big\| \rb{ - \nabla_{\mu_{1,t}}  L_t( s_{\theta_t} ) } - \mbb{E} [ w_{1, t}(X_t) (X_t - \mu_{1, t}^{(h)}) ] \Big\| \\ 
        & \quad\quad+  \eta \Big\| \rb{ \nabla_{\mu_{1,t}}  L_t( s_{\theta_t} ) }  - \nabla_{\mu_{1,t}} \Big( \frac{1}{n} \sum_{i=1}^n L_t( s_{\theta_t^{(h)}}(x_{i, 0}, z_{i, t}) ) \Big) \Big\|\,.
    \end{align*}
    Using Lemma \ref{lemma:population-GD-gradient-EM}, Lemma \ref{lemma:sample-complexity-k-mog} and Theorem 3.2 from \cite{segol2021improved}, for any $\eta \in (0, K)$, we have
    \begin{align*}
        \wnorm{ \mu_{1,t}^{(h+1)} - \mu_{1, t}^* } \leq & \; \rb{ 1 - \frac{3 \eta}{8K} } \wnorm{ \mu_{1, t}^{(h)} - \mu_{1, t}^* } + \frac{\eta K^2 B^2}{ d^{ \frac{c_r^2}{4000} } } + \eta \epsilon.
    \end{align*}
    Choosing $\eta = \frac{2 K}{3}$, $c_r$ to be sufficiently large constant and $\epsilon$ to be $\Theta(\frac{1}{\text{poly}(d)})$, we have
    \begin{align*}
        \wnorm{ \mu_{1,t}^{(h+1)} - \mu_{1, t}^* } \leq \frac{3}{4} \wnorm{ \mu_{1,t}^{(h)} - \mu_{1, t}^* } + \epsilon
    \end{align*}
    By assumption \ref{asm:mog-k-initialization}, $\wnorm{ \mu_{1,t}^{(0)} - \mu_{1, t}^* } \leq O(\sqrt{\log d})$ and therefore, choosing $H$ to be $\Omega( \log (\frac{ \log d }{ \epsilon } ) )$, we obtain the result. 
\end{proof}

\section{Additional proofs}
\label{sec:additional}

\subsection{Proof of Lemma \ref{lemma:update}}
\label{subsec:update-proof}

\begin{proof}[Proof of Lemma \ref{lemma:update}]
By calculating the negative gradient of the DDPM objective in Eq.~\eqref{eq:diffusion-loss}, we obtain
\begin{equation}
\label{eq:pure-GD-update-2-mog}
\begin{aligned}
    -\nabla_{\mu_t} L_t(s_{\mu_t}) %
    &= - \mbb{E}_{X_0, Z_t} [ ( \tanh(\mu_t^\top X_t) I + \tanh'( \mu_t^\top X_t ) X_t \mu_t^\top ) (  s_{\mu_t}(X_t) + \frac{Z_{t} }{ \beta_t } ) ] \\
    &= - \mbb{E}[ ( \tanh(\mu_t^\top X_t) I + \tanh'( \mu_t^\top X_t ) X_t \mu_t^\top ) (  \tanh( \mu_t^\top X_t ) \mu_t - X_t + \frac{Z_{t}}{ \beta_t } ) ] \\ 
    &= \mbb{E}[ -  \tanh^2(\mu_t^\top X_t) \mu_t - \tanh( \mu_t^\top X_t ) \tanh'( \mu_t^\top X_t ) X_t \norm{ \mu_t }^2 + \tanh (\mu_t^\top X_t) X_t \\
    & \quad + \tanh'( \mu_t^\top X_t ) \mu_t^\top X_t X_t - \tanh(\mu_t^\top X_t) \frac{Z_t}{ \beta_t } - \tanh'( \mu_t^\top X_t ) X_t \mu_t^\top \frac{Z_t}{ \beta_t } ] \\
\end{aligned}
\end{equation}
By simplifying the gradient terms involving $Z_t$ by the Stein's identity as in Lemma \ref{lemma:gradient-simplifying-zterms} and plugging it back in the gradient, we obtain
\begin{align*}
    -\nabla_{\mu_t} L_t(s_{\mu_t}) &= \mbb{E} \Big[ \rb{ \tanh (\mu_t^\top X_t) - \tanh( \mu_t^\top X_t ) \tanh'( \mu_t^\top X_t ) \norm{\mu_t}^2 + \tanh'( \mu_t^\top X_t ) \mu_t^\top X_t } X_t \Big]  \\
    & \quad - \mu_t - \mbb{E} \sbr{  \tanh''( \mu_t^\top X_t ) \norm{ \mu_t }^2 X_t  } - \mbb{E}\sbr{\tanh'( \mu_t^\top X_t ) \mu_t } \\
    & = \mbb{E} \Big[ \rb{ \tanh (\mu_t^\top X_t)  - 0.5 \tanh''( \mu_t^\top X_t ) \norm{ \mu_t }^2 + \tanh'( \mu_t^\top X_t ) \mu_t^\top X_t } X_t \Big] \\ 
    & \quad - \mu_t  - \mbb{E}\sbr{\tanh'( \mu_t^\top X_t ) \mu_t } 
\end{align*}
Observe that $\rb{ \tanh (\mu^\top x) - \frac{1}{2}  \tanh''( \mu^\top x ) \norm{ \mu }^2 + \tanh'( \mu^\top x ) \mu^\top x } x$ and $\tanh'( \mu^\top x )$ are even functions and $X_t$ is a symmetric distribution, therefore, for any even function $f$, we can write $\mbb{E}_{X_t}[ f( X_t ) ] = \frac{1}{2} \mbb{E}_{X_t \sim \mc{N}(\mu_t^*, \Id)}[ f( X_t ) ] + \frac{1}{2} \mbb{E}_{X_t \sim \mc{N}(-\mu_t^*, I)}[ f( X_t ) ] = \mbb{E}_{X_t \sim \mc{N}(\mu_t^*, \Id)}[ f( X_t ) ]$. Applying this property of the even function on the gradient update, we obtain the result. 
\end{proof}

\begin{lemma}
\label{lemma:gradient-simplifying-zterms}
     When random variable $X_t = \alpha_t X_0 + \beta_t Z_t$ where $Z_t \sim \mc{N}(0, I), \alpha_t = \exp(-t)$ and $\beta_t = \sqrt{1 - \exp(-2t)}$, then for any $t>0$, the following two equations hold. 
     \begin{align*}
        \mbb{E}_{X_0, Z_t} & \Big[  \tanh( \mu_t^\top X_t ) \frac{Z_t}{ \beta_t } + \tanh^2 ( \mu_t^\top X_t ) \mu_t \Big] = \mu_t \\
        \mbb{E}_{X_0, Z_t} & \Big[ \tanh'( \mu_t^\top X_t ) \frac{ \mu_t^\top Z_t }{ \beta_t } X_t \Big] = \mbb{E}_{X_0, Z_t} \sbr{ \tanh''( \mu_t^\top X_t ) \norm{ \mu_t }^2 X_t  + \tanh'( \mu_t^\top X_t ) \mu_t }
     \end{align*}
\end{lemma}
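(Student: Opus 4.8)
The plan is to prove both identities by Gaussian integration by parts (Stein's lemma) applied to the variable $Z_t$, conditionally on $X_0$. Since $X_0$ and $Z_t$ are independent and $Z_t\sim\mc{N}(0,\Id)$, for any $C^1$ map $g:\R^d\to\R$ of at most polynomial growth we have $\mbb{E}_{Z_t}[Z_{t,j}\,g(Z_t)]=\mbb{E}_{Z_t}[\partial_{Z_{t,j}}g(Z_t)]$, and afterwards one simply averages over $X_0$. The only derivatives needed come from $X_t=\alpha_t X_0+\beta_t Z_t$, namely $\nabla_{Z_t}(\mu_t^\top X_t)=\beta_t\mu_t$ and $\partial_{Z_{t,j}}X_{t,k}=\beta_t\delta_{jk}$. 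Since $\tanh,\tanh',\tanh''$ are all bounded with bounded derivatives, multiplying by the polynomially growing factors $X_{t,k}$ keeps everything integrable, so the integration by parts is valid without further comment.

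For the first identity, I would apply Stein's lemma coordinatewise to $\mbb{E}[Z_t\,\tanh(\mu_t^\top X_t)]$. Differentiating $\tanh(\mu_t^\top X_t)$ in $Z_{t,j}$ produces the factor $\beta_t\mu_{t,j}\tanh'(\mu_t^\top X_t)$, hence $\mbb{E}[Z_t\,\tanh(\mu_t^\top X_t)]=\beta_t\,\mu_t\,\mbb{E}[\tanh'(\mu_t^\top X_t)]$ and therefore $\mbb{E}[\tanh(\mu_t^\top X_t)\,Z_t/\beta_t]=\mu_t\,\mbb{E}[\tanh'(\mu_t^\top X_t)]$. Adding the remaining $\tanh^2$ term and invoking the pointwise identity $\tanh'(u)+\tanh^2(u)=1$, the expression collapses to $\mu_t\,\mbb{E}[\tanh'(\mu_t^\top X_t)+\tanh^2(\mu_t^\top X_t)]=\mu_t$, as claimed.

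For the second identity, I would expand $\mu_t^\top Z_t=\sum_j\mu_{t,j}Z_{t,j}$ and apply Stein's lemma to each summand of the $k$-th coordinate $\mbb{E}[Z_{t,j}\,\tanh'(\mu_t^\top X_t)\,X_{t,k}]$. The product rule yields two contributions: differentiating $\tanh'(\mu_t^\top X_t)$ gives $\beta_t\mu_{t,j}\,\mbb{E}[\tanh''(\mu_t^\top X_t)\,X_{t,k}]$, while differentiating $X_{t,k}$ gives $\beta_t\delta_{jk}\,\mbb{E}[\tanh'(\mu_t^\top X_t)]$. Multiplying by $\mu_{t,j}$, summing over $j$, and dividing by $\beta_t$ produces $\|\mu_t\|^2\,\mbb{E}[\tanh''(\mu_t^\top X_t)\,X_{t,k}]+\mu_{t,k}\,\mbb{E}[\tanh'(\mu_t^\top X_t)]$, which in vector form is precisely $\mbb{E}[\tanh''(\mu_t^\top X_t)\|\mu_t\|^2 X_t+\tanh'(\mu_t^\top X_t)\mu_t]$. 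There is no genuine obstacle here: the argument is purely Gaussian integration by parts plus the elementary identity $\tanh'=1-\tanh^2$, and the only step deserving a sentence of justification is the integrability needed to legitimize the integration by parts, which is immediate from the boundedness of the derivatives of $\tanh$ and the sub-Gaussian tails of $X_t$.
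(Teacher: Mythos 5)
Your proof is correct, and it rests on the same core tool as the paper: Stein's lemma (Gaussian integration by parts) applied to $Z_t$ conditionally on $X_0$, followed by the identity $\tanh'(u)=1-\tanh^2(u)$ for the first equation. The only real difference is in the second identity: you apply Stein's lemma once, coordinatewise, with the product rule acting on the full factor $\tanh'(\mu_t^\top X_t)X_{t,k}$, which immediately produces the two terms $\|\mu_t\|^2\,\mathbb{E}[\tanh''(\mu_t^\top X_t)X_t]$ and $\mathbb{E}[\tanh'(\mu_t^\top X_t)]\mu_t$. The paper instead first splits $X_t=\alpha_t X_0+\beta_t Z_t$, applies Stein's lemma separately to the $\alpha_t X_0$ and $\beta_t Z_t$ pieces, and then applies it once more; this detour generates terms involving $\tanh'''(\mu_t^\top X_t)$ that must be observed to cancel before recombining into $X_t$. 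Your route is shorter and avoids both the extra application of Stein's lemma and the cancellation step, while yielding exactly the same conclusion; the justification of integrability you give (boundedness of $\tanh$ and its derivatives together with the Gaussian/sub-Gaussian tails of $X_t$) is all that is needed.
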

\begin{proof}
Applying Stein's lemma on the first term, we get the first equation of the statement in the Lemma. 
\begin{align*}
    \mbb{E}_{X_0, Z_t} \sbr{ \tanh( \mu_t^\top X_t ) \frac{Z_t}{ \beta_t } } &= \mbb{E}_{X_0, Z_t} \sbr{ \tanh( \mu_t^\top ( \alpha_t X_0 + \beta_t Z_t ) ) \frac{Z_t}{ \beta_t } } = \mbb{E}_{X_0, Z_t} \sbr{ \tanh' ( \mu_t^\top X_t ) \mu_t } \\ 
    &= \mbb{E}_{X_0, Z_t} \sbr{ \rb{ 1 - \tanh^2 ( \mu_t^\top X_t ) } \mu_t }
\end{align*}
For the second term, we have
\begin{align*}
    \mbb{E} & \Big[ \tanh'( \mu_t^\top X_t ) \frac{ \mu_t^\top Z_t }{ \beta_t } X_t \Big]  = \mbb{E} \Big[ \tanh'( \mu_t^\top X_t ) \frac{ \mu_t^\top Z_t }{ \beta_t } \alpha_t X_0 \Big] + \mbb{E}\sbr{\tanh'( \mu_t^\top X_t ) \mu_t^\top Z_t Z_t } \\
    &= \sum_{i=1}^d \mbb{E} \Big[ \alpha_t X_0 \tanh'( \mu_t^\top X_t ) \frac{ \mu_t(i) Z_t(i) }{ \beta_t } \Big] + \mbb{E}\sbr{\tanh'( \mu_t^\top X_t ) \mu_t } + \mbb{E}\sbr{ \tanh''( \mu_t^\top X_t ) \mu_t^\top Z_t \beta_t \mu_t } \\
    &= \sum_{i=1}^d \mbb{E} \Big[ \alpha_t X_0 \tanh''( \mu_t^\top X_t ) \mu_t(i) \mu_t(i) \Big] + \mbb{E}\sbr{\tanh'( \mu_t^\top X_t ) \mu_t } + \mbb{E}\sbr{ \tanh''( \mu_t^\top X_t ) \mu_t^\top Z_t \beta_t \mu_t }
\end{align*}
where the second equality follows from the Stein's lemma on the $\mbb{E}[\tanh'( \mu_t^\top X_t ) \mu_t^\top Z_t Z_t ]$ and the last equality follows from the Stein's lemma on $\mbb{E} [ \alpha_t X_0 \tanh''( \mu_t^\top X_t ) \mu_t(i) Z_t(i) ]$. Applying Stein's inequality on the $\mbb{E}\sbr{ \tanh''( \mu_t^\top X_t ) \mu_t^\top Z_t \beta_t \mu_t }$, we obtain
\begin{align*}
    &= \mbb{E} \sbr{ \alpha_t X_0 \tanh''( \mu_t^\top X_t ) \norm{ \mu_t }^2  } + \mbb{E}\sbr{\tanh'( \mu_t^\top X_t ) \mu_t } + \sum_{i=1}^d \beta_t \mu_t  \mbb{E}\sbr{ \tanh'''( \mu_t^\top X_t ) \mu_t(i) \beta_t \mu_t(i) } \\
    &= \mbb{E} \sbr{ X_t \tanh''( \mu_t^\top X_t ) \norm{ \mu_t }^2  } - \mbb{E} \sbr{ \beta_t Z_t \tanh''( \mu_t^\top X_t ) \norm{ \mu_t }^2  } + \mbb{E}\sbr{\tanh'( \mu_t^\top X_t ) \mu_t } \\ & \quad + \beta_t^2 \norm{ \mu_t }^2 \mu_t  \mbb{E}\sbr{ \tanh'''( \mu_t^\top X_t )  } \\
    &= \mbb{E} \sbr{ X_t \tanh''( \mu_t^\top X_t ) \norm{ \mu_t }^2  } + \mbb{E}\sbr{\tanh'( \mu_t^\top X_t ) \mu_t }.
\end{align*}
\end{proof}

\subsection{Proof of Lemma \ref{lemma:G-contraction}}
\label{section:proof-G-contraction-lemma}

\begin{proof}[Proof of Lemma \ref{lemma:G-contraction}]
    Recall that the gradient update for any $\mu_t^*$ is given by
    \begin{align}
    \label{eq:reduced-mean-value-thm}
        - \nabla_{\mu^*_t} L_t(s_{\mu^*_t}) 
        &= G(\mu_t^*, \mu_t^*) + \eta \mbb{E}_{x \sim \mc{N}(\mu_t^*, \Id)}[ \tanh (\mu^{*\top}_t x) x ] - \eta \mu_t^*
    \end{align}
    We know that $\mbb{E}_{x \sim \mc{N}(\mu_t^*, \Id)}[ \tanh (\mu^{*\top}_t x) x ] = \mu_t^*$ (Eq.(2.1) of \cite{daskalakis2017ten}) and $\nabla_{\mu^*_t} L_t(s_{\mu^*_t})  = 0$ because $\mu^*_t$ is a stationary point of the regression objective of diffusion model. This implies that $G(\mu^*_t, \mu^*_t) = 0$ for any $\mu^*_t$.
    
    Note that this proof only talks about 1D case therefore, for the purpose of this proof, we use $a$ to denote $\mu$ and $b$ to denote $\mu^*$. In 1D, using Mean value theorem, we have
    \begin{align}
    \label{eq:mean-value-thm}
        \frac{ G( a, b) - G( a, a ) }{ b - a } &= \frac{d G(a, \xi)}{ d \xi } \text{ for some $\xi \in [a, b]$ (if $a < b$)} 
    \end{align} 
    Using the fact that $G(a, a) = 0$ in Eq.~\eqref{eq:mean-value-thm}, we have
    \begin{align*}
        \abs{ G(a, b) } &= \abs{\frac{d G(a, \xi)}{ d \xi }} \abs{ b - a }
    \end{align*}
    Observe that it suffices to prove $\abs{\frac{d G(a, \xi)}{ d \xi }} \leq 0.01$ to obtain the lemma. By computing the gradient of $G$, we obtain
    \begin{align*}
        \frac{ d G(a, \xi) }{ d \xi } =  \eta \mbb{E}_{x \sim \mc{N}(\xi, 1)} \Big[ 2 \tanh'(a x) a x + \tanh''(a x) \rb{ \frac{-3 a^2}{2} + a^2 x^2 } - \frac{1}{2} a^3 x \tanh'''( a x ) \Big]
    \end{align*}
    For the first term, we have
    \begin{align*}
        \mbb{E}_{x \sim \mc{N}(\xi, I)}[ \tanh'(a x) a x ] &= \frac{1}{\sqrt{2 \pi}} \int_{-\infty}^\infty \tanh'(a x) a x e^{-\frac{ (x - \xi)^2 }{2}} dx \\
        &= \frac{1}{\sqrt{2 \pi}} \int_0^\infty \tanh'(a x) a x \rb{ e^{-\frac{ (x - \xi)^2 }{2}} - e^{-\frac{ (x + \xi)^2 }{2}} } dx \\
        &\leq \frac{1}{\sqrt{2 \pi}} \int_0^\infty e^{-a x} a x e^{-\frac{ (x - \xi)^2 }{2}} dx \\
        &\leq \frac{a e^{\frac{a^2 - 2 a \xi}{2}} }{\sqrt{2 \pi}}  \int_0^\infty x e^{-\frac{(x - \xi + a)^2 }{2}} dx \\
        &\leq a e^{\frac{a^2 - 2 a \xi}{2}} ( \sqrt{\frac{2}{\pi} }e^{ - \frac{ (\xi - a)^2 }{2}} + (\xi - a) \text{erf} \rb{ \frac{ \xi - a }{ \sqrt{2} } } ) \\
        &\leq a e^{-\frac{\xi^2}{2}} + a \abs{\xi - a} e^{\frac{-2a(\xi - a) - a^2 }{2} } 
    \end{align*}
    Using Lemma 1 of \cite{daskalakis2017ten}, we know that $\mbb{E}_{x \sim \mc{N}(\xi, I)}[ \tanh'(a x) a x ] > 0$. Therefore, we have
    \begin{align}
    \label{eq:tanhp-bound}
        \abs{ \mbb{E}_{x \sim \mc{N}(\xi, I)}[ \tanh'(a x) a x ] } \leq a e^{-\frac{\xi^2}{2}} + a\abs{\xi - a} e^{\frac{-2a(\xi - a) - a^2 }{2} }
    \end{align}

For the second term, we have
\begin{align*}
    &\mbb{E}_{x \sim \mc{N}(\xi, 1)}[ \tanh''(a x) ( -\frac{3 a^2}{2} + a^2 x^2 ) ]  \\ 
    &= \frac{1}{\sqrt{2\pi}} \int_{0}^\infty a^2 \tanh''( a x ) ( -\frac{3}{2} +  x^2 ) \rb{ \exp( - \frac{ (x - \xi)^2 }{2} ) - \exp( - \frac{ (x + \xi)^2 }{2} ) } dx \\
    &\leq \frac{1}{\sqrt{2\pi}} \int_{0}^{\sqrt{\frac{3}{2}}}  a^2 e^{-2 a x} ( \frac{3}{2} - x^2 ) \exp( - \frac{ (x - \xi)^2 }{2} )  dx   \\
    &\leq \frac{3}{\sqrt{2\pi}} a^2 \exp( -\frac{a^2}{16} )
\end{align*}
Assuming $a \geq \sqrt{6}$, then when $\xi \geq a \geq \sqrt{6}$, we have $\exp( - \frac{ (x - \xi)^2 }{2} ) \leq \exp( - \frac{a^2}{4} )$ and when $\xi \leq a$, using $\xi \geq \frac{3 a}{4}$, we have $\exp( - \frac{ (x - \xi)^2 }{2} ) \leq \exp( - \frac{a^2}{16} )$. For the lower bound, we have
\begin{align*}
    &\mbb{E}_{x \sim \mc{N}(\xi, 1)}[ \tanh''(a x) ( -\frac{3 a^2}{2} + a^2 x^2 ) ] \\ 
    &= \frac{1}{\sqrt{2 \pi}} \int_{0}^\infty \tanh''( a x ) ( -\frac{3 a^2}{2} + a^2 x^2 ) \rb{ \exp( - \frac{ (x - \xi)^2 }{2} ) - \exp( - \frac{ (x + \xi)^2 }{2} ) } dx \\
    &\geq \frac{1}{\sqrt{2 \pi}}  \int_{ \sqrt{ \frac{3}{2} } }^\infty \tanh''( a x ) ( -\frac{3 a^2}{2} + a^2 x^2 ) \rb{ \exp( - \frac{ (x - \xi)^2 }{2} ) - \exp( - \frac{ (x + \xi)^2 }{2} ) } dx \\
    &\geq \frac{1}{\sqrt{2 \pi}}  \int_{ \sqrt{ \frac{3}{2} } }^\infty \tanh''( a x ) a^2 x^2 \rb{ \exp( - \frac{ (x - \xi)^2 }{2} ) - \exp( - \frac{ (x + \xi)^2 }{2} ) } dx \\
    &\geq - \frac{8 a^2}{\sqrt{2 \pi}}  \int_{ \sqrt{ \frac{3}{2} } }^\infty e^{-2 a x }  x^2 \rb{ \exp( - \frac{ (x - \xi)^2 }{2} ) - \exp( - \frac{ (x + \xi)^2 }{2} ) } dx \\
    &\geq - \frac{ 8 a^2 e^{- \sqrt{6} a } }{ \sqrt{2\pi} } \int_{ \sqrt{ \frac{3}{2} } }^\infty x^2 \exp( - \frac{ (x - \xi)^2 }{2} ) dx \geq - 8 a^2 e^{- \sqrt{6} a }
\end{align*} Using upper bound and lower bound, we have
\begin{align*}
    \abs{ \mbb{E}_{x \sim \mc{N}(\xi, 1)}[ \tanh''(a x) a^2 ( -\frac{3}{2} + x^2 ) ] } \leq  8 a^2 e^{- \sqrt{6} a }
\end{align*}
For the third term, we have 
\begin{align*}
    &\Big| \mbb{E}_{x \sim \mc{N}(\xi, 1)} [ \frac{a^3 x}{2} \tanh'''( a x ) ] \Big| \\ 
    &= \bigg| \frac{1}{32 \sqrt{2 \pi} } \int_0^\infty a^3 x \sigma(2 a x)(1 - \sigma(2 a x)) \rb{ 1 - 6 \sigma(2a x)(1 -  \sigma(2 a x) ) } \bigg( \exp \bigg( -\frac{ (x - \xi)^2 }{2} \bigg) \\
    & \quad\quad - \exp \bigg( -\frac{ (x + \xi)^2 }{2} \bigg) \bigg) dx \bigg| \\
    &\leq \bigg| \frac{3a^3}{16 \sqrt{2 \pi} } \int_0^\infty x \sigma^2(2 a x)(1 - \sigma(2 a x))^2 
    \bigg( \exp \bigg( -\frac{ (x - \xi)^2 }{2} \bigg) - \exp \bigg( -\frac{ (x + \xi)^2 }{2} \bigg) \bigg) dx \bigg| \\
    &\leq \frac{3a^3}{16 \sqrt{2 \pi} } \int_0^\infty x e^{-a x} \exp \bigg( -\frac{ (x - \xi)^2 }{2} \bigg) dx \\
    &\leq \frac{a^3}{10} e^{-\frac{\xi^2}{2}} + \frac{a^3}{10} \abs{\xi - a} e^{\frac{-2a(\xi - a) - a^2 }{2}}\,.
\end{align*}
We can lower bound the third term as follows:
\begin{align*}
    & \mbb{E}_{x \sim \mc{N}(\xi, 1)} [ \frac{a^3 x}{2} \tanh'''( a x ) ] \\ 
    &\geq  \frac{1}{2 \sqrt{2 \pi} } \int_0^c a^3 x \tanh'''(ax)  \bigg( \exp \bigg( -\frac{ (x + \xi)^2 }{2} \bigg) - \exp \bigg( -\frac{ (x - \xi)^2 }{2} \bigg) \bigg) dx \\
    &\geq  \frac{a^3}{2 \sqrt{2 \pi} } \int_0^c x \exp \bigg( -\frac{ (x - \xi)^2 }{2} \bigg) \rb{\exp \rb{ -2 \xi x } - 1 } dx  \\
    &\geq  -\frac{a^3 \xi}{ \sqrt{2 \pi} } \int_0^c x^2  \exp \bigg( -\frac{ (x - \xi)^2 }{2} \bigg) dx  \geq -\frac{\xi \exp( - \frac{\xi^2}{4}) }{ \sqrt{2 \pi} }
\end{align*}
Using all the bounds, we have
\begin{align*}
    \abs{\frac{ d G(a, \xi) }{ d \xi }} &\leq \frac{a^3}{10} e^{-\frac{\xi^2}{2}} + \frac{a^3}{10} \abs{\xi - a} e^{\frac{-2a(\xi - a) - a^2 }{2}} + 8 a^2 e^{- \sqrt{6} a } + a e^{-\frac{\xi^2}{2}} + a \abs{ \xi - a} e^{\frac{-2a(\xi - a) - a^2 }{2}}
\end{align*}
When $\xi \geq a$ and $a \geq c$ for some sufficiently large constant $c$ (for example, $c=25$), then, we have 
\begin{align*}
    \abs{\frac{ d G(a, \xi) }{ d \xi }} &\leq \frac{a^3}{10} e^{-\frac{a^2}{2}} + \frac{a^3}{10} \abs{\xi - a} e^{\frac{ - a^2 }{2}} + 8 a^2 e^{- \sqrt{6} a } + a e^{-\frac{a^2}{2}} + a \abs{\xi - a} e^{\frac{ - a^2 }{2}} \leq 0.01
\end{align*}
When $\frac{3 a}{4} \leq \xi \leq a$ and  $a > c$ for sufficiently large constant $c$ (for example, $c=25$), we have 
\begin{align*}
    \abs{\frac{ d G(a, \xi) }{ d \xi }} &\leq \frac{a^3}{10} e^{-\frac{9 a^2}{32}} + \frac{a^4}{40} e^{\frac{- a^2 }{4}} + 8 a^2 e^{- \sqrt{6} a } + a e^{-\frac{a^2}{2}} + \frac{a^2}{4} e^{\frac{- a^2 }{4}} \leq 0.01
\end{align*}
Pluggint the bound on $|\frac{ d G(a, \xi) }{ d \xi }|$ in Eq.~\eqref{eq:reduced-mean-value-thm}, we obtain the final result. 
\end{proof}

\subsection{Proof of Lemma \ref{lemma:mu-in-interval}}
\label{subsec:mu-in-interval-proof}

\begin{proof}[Proof of Lemma \ref{lemma:mu-in-interval}]

We will prove this by induction. For $h=0$, this is true because the algorithm initializes the gradient descent on the low noise regime with the output of gradient descent on the high noise regime, and the output is guaranteed to have $\dtp{\hat{\mu}_t^{(0)}}{\hat{\mu}_t^*}$ to be $\Omega(1)$ and by assumption $\wnorm{\mu_t^*} > c'$, therefore $\wnorm{\mu_t^{(0)}} \in [c, \frac{4 \dtp{ \hat{\mu}_t^{(0)} }{ \mu_t^* } }{3}]$. 

Suppose $\wnorm{\mu_t^{(h)}} \in [c, \frac{4 \dtp{ \hat{\mu}_t^{(h)} }{ \mu_t^* } }{3}]$, then we know that $\wnorm{ \mu_t^{(h+1)} - \mu_t^* } < \wnorm{ \mu_t^{(h)} - \mu_t^* }$. To prove $\wnorm{\mu_t^{(h+1)}} \in [c, \frac{4 \dtp{ \hat{\mu}_t^{(h+1)} }{ \mu_t^* } }{3}]$, first we will prove that $\dtp{ \hat{\mu}_t^{(h)} }{ \mu_t^{(r+1)} } \in [c, \frac{6 \dtp{ \hat{\mu}_t^{(h)} }{ \mu_t^* } }{5} ]$. Note that the update in the direction of $\dtp{\hat{\mu}_t}{\mu_t}$ works like 1D. Therefore, we have a contraction for it as follows.
    \begin{align*}
        & \abs{ \dtp{ \hat{\mu}_t^{(h)} }{ \mu_t^{(h+1)} } - \dtp{ \hat{\mu}_t^{(h)} }{ \mu_t^* } } < \abs{\dtp{ \hat{\mu}_t^{(h)} }{ \mu_t^{(h)} } - \dtp{ \hat{\mu}_t }{ \mu_t^* }}
    \end{align*}
    If $\wnorm{\mu_t^{(h)}} \leq \dtp{ \hat{\mu}_t^{(h)} }{ \mu_t^* }$, then using Lemma \ref{lemma:update-bound}, we know $ \dtp{\hat{\mu}_t^{(h)}}{ \mu_t^{(h+1)} } \leq \frac{6 \dtp{ \hat{\mu}_t^{(h)} }{ \mu_t^* } }{5}$ and $\dtp{\hat{\mu}_t^{(h)}}{ \mu_t^{(h+1)} } \geq \wnorm{\mu_t^{(h)}} \geq c$ because of the contraction. If $\wnorm{ \mu_t^{(h)} } \geq \dtp{ \hat{\mu}_t^{(h)} }{ \mu_t^* }$ and $\dtp{ \hat{\mu}_t^{(h)} }{ \mu_t^{(h+1)} } \geq \dtp{ \hat{\mu}_t^{(h)} }{ \mu_t^* } $, then $\dtp{ \hat{\mu}_t^{(h)} }{ \mu_t^{(h+1)} } \leq \|\mu_t^{(h)} \|$ because of the contraction. If $\wnorm{ \mu_t^{(h)} } \geq \dtp{ \hat{\mu}_t^{(h)} }{ \mu_t^* }$ and $\dtp{ \hat{\mu}_t^{(h+1)} }{ \mu_t^{(h)} } \leq \dtp{ \hat{\mu}_t^{(h)} }{ \mu_t^* }$, then using $\dtp{ \hat{\mu}_t^{(h+1)} }{ \mu_t^{(h)} } \geq \wnorm{ \mu_t^{(h)} } - \abs{U(\dtp{ \hat{\mu}_t^{(h)} }{ \mu_t^{(h)} }, \dtp{ \hat{\mu}_t^{(h)} }{ \mu_t^* })} \geq \frac{4 \dtp{ \hat{\mu}_t^{(h)} }{ \mu_t^* } }{5} \geq \frac{4 \dtp{ \hat{\mu}_t^{(0)} }{ \mu_t^* } }{5} \geq c$ from Lemma \ref{lemma:angle-decrease}, we get the result that $\dtp{ \hat{\mu}_t^{(h)} }{ \mu_t^{(h+1)} } \in [c, \frac{6 \dtp{ \hat{\mu}_t^{(h)} }{ \mu_t^* } }{5} ]$. Now, using Lemma \ref{lemma:angle-decrease}, we get
    \begin{align*}
        \dtp{ \hat{\mu}_t^{(h)} }{ \mu_t^{(h+1)} } \in  [c, \frac{6 \dtp{ \hat{\mu}_t^{(h)} }{ \mu_t^* } }{5} ] & \implies \wnorm{ \mu_t^{(h+1)} } \in \big[ \frac{c}{\cos \alpha_h }, \frac{ 6 \wnorm{ \mu_t^* } \cos \beta_h }{5 \cos \alpha_h } \big] \\ 
        &  \implies \wnorm{ \mu_t^{(h+1)} } \in \big[ c, \frac{ 4 \wnorm{ \mu_t^* } \cos \beta_{h+1} }{3} \big] \\
        & \implies \wnorm{ \mu_t^{(h+1)} } \in \big[ c, \frac{ 4 \dtp{ \hat{\mu}_t^{(h+1)} }{ \mu_t^* }  }{3} \big]\qedhere
    \end{align*}
\end{proof}

    \begin{lemma}
    \label{lemma:angle-decrease}
    Suppose the angle between $\mu^{(r)}$ and $\mu^*$ is $\beta_r$ and $\alpha_r$ is the angle between $\mu^{(r)}$ and $\mu^{(r+1)}$ and assume the contraction is true at time $r$. Assume that $\beta_0 \in (0, \frac{\pi}{2})$. Then:
    \begin{equation}
        \alpha_r \in (0, \pi/2) \;\; \forall r \qquad \text{and} \qquad
        \cos \beta_r  \leq \cos \beta_{r+1}
    \end{equation}
    which implies that 
    \begin{align*}
        \cos \beta_r \leq \cos \beta_{r+1} \;\; \forall r \implies \dtp{\hat{ \mu }^{(r)}}{ \mu^* } \geq \dtp{\hat{ \mu }^{(0)}}{ \mu^* }
    \end{align*}
\end{lemma}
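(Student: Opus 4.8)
The plan is to prove both claims by a shared induction on $r$, using the fact (established by the ``contraction'' hypothesis at each time $r$, which in turn relies on Lemma~\ref{lemma:mu-in-interval}/Lemma~\ref{lemma:update-bound}) that the update moves $\mu^{(r)}$ strictly closer to $\mu^*$. First I would set up the geometry: work in the two-dimensional plane spanned by $\mu^{(r)}$ and $\mu^*$ (recall from the proof of Lemma~\ref{lemma:projection-angle-decrease} that the iterate stays in this plane), and decompose $\mu^{(r)} = \dtp{\hat\mu^{(r)}}{\mu^*}\,\hat\mu^{(r)} + (\text{component along }\muhp)$. The key quantitative input is that the component of $\mu^{(r+1)}$ along $\hat\mu^{(r)}$ behaves like the one-dimensional update (this is exactly the observation used in Lemma~\ref{lemma:mu-in-interval}), so by Lemma~\ref{lemma:update-bound} it lands in $[c,\tfrac{6}{5}\dtp{\hat\mu^{(r)}}{\mu^*}]$ and in particular stays \emph{positive}; meanwhile Lemma~\ref{lemma:projection-angle-decrease} controls the orthogonal component.

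For the first claim, $\alpha_r \in (0,\pi/2)$: since $\mu^{(r+1)} = \mu^{(r)} - \eta\nabla L_t$, and the negative gradient is (up to small error) a positive combination along $\hat\mu^{(r)}$ plus a term in the $\muhp$ direction, the dot product $\dtp{\mu^{(r)}}{\mu^{(r+1)}}$ is strictly positive — the $\hat\mu^{(r)}$-component of $\mu^{(r+1)}$ has the same sign as $\wnorm{\mu^{(r)}}$ and is bounded below by $c$. Hence the angle between $\mu^{(r)}$ and $\mu^{(r+1)}$ is in $(0,\pi/2)$. I would be slightly careful here to rule out the degenerate case $\mu^{(r+1)} \parallel \mu^{(r)}$, which is fine because the angle to $\mu^*$ strictly decreases (by contraction combined with $\wnorm{\mu^{(r+1)}}\approx\wnorm{\mu^*}$-type control), so $\mu^{(r+1)}$ cannot be a positive multiple of $\mu^{(r)}$ unless $\beta_r$ was already $0$.

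For the second claim, $\cos\beta_r \le \cos\beta_{r+1}$: this is exactly the statement that $\dtp{\hat\mu^{(r+1)}}{\hat\mu^*} \ge \dtp{\hat\mu^{(r)}}{\hat\mu^*}$, i.e.\ $\theta' \le \theta$ in the notation of Lemma~\ref{lemma:angle-decrease-main-paper}/Lemma~\ref{lemma:projection-angle-decrease}. I would invoke that lemma directly: it gives $\tan\theta' = \max(\kappa_1\tan\theta,\kappa_2)$ with $\kappa_1 < 1$ and $\kappa_2$ negligible, and under the norm bounds guaranteed by Lemma~\ref{lemma:mu-in-interval} we have $\kappa_1 < 1$, so $\tan\theta' \le \tan\theta$, hence (both angles being in $[0,\pi/2)$) $\theta'\le\theta$ and $\cos\beta_{r+1}\ge\cos\beta_r$. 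Iterating from $r=0$ down to any $r$ gives $\dtp{\hat\mu^{(r)}}{\mu^*} = \wnorm{\mu^*}\cos\beta_r \ge \wnorm{\mu^*}\cos\beta_0 = \dtp{\hat\mu^{(0)}}{\mu^*}$, completing the induction. The main obstacle I anticipate is purely bookkeeping: making sure the ``contraction at time $r$'' hypothesis is legitimately available at each step of the induction — it is, because Lemma~\ref{lemma:mu-in-interval} (which this lemma feeds into) and the contraction estimate are proved by a joint induction, so I would state explicitly that at stage $r$ we assume both $\wnorm{\mu_t^{(r)}} \in [c,\tfrac43\dtp{\hat\mu_t^{(r)}}{\mu_t^*}]$ and the resulting one-step contraction, and verify the hypotheses of Lemma~\ref{lemma:projection-angle-decrease} and Lemma~\ref{lemma:update-bound} hold there.
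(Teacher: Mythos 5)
There is a genuine gap in the second half of your argument. Lemma~\ref{lemma:angle-decrease} lives in the \emph{low-noise} regime: it is invoked inside the proof of Lemma~\ref{lemma:mu-in-interval}, where $t=O(1)$ and the iterates satisfy $\wnorm{\mu^{(r)}}\in[c,\tfrac43\dtp{\hat\mu^{(r)}}{\mu^*}]$ for a \emph{large} constant $c$ (the companion Lemmas~\ref{lemma:positive-bound} and~\ref{lemma:update-bound} assume $a\in[30,\tfrac{4b}{3}]$). In that regime you cannot invoke Lemma~\ref{lemma:projection-angle-decrease}: that lemma rests on the power-iteration approximation of Lemma~\ref{lemma:2-mog-high-noise-grad-equivalence}, whose error is of order $\sqrt{d}\,\wnorm{\mu_t}^5+\wnorm{\mu_t}^3\wnorm{\mu_t^*}^2$ and is only negligible when $\wnorm{\mu_t},\wnorm{\mu_t^*}$ are themselves $O(1/B^2)$, i.e.\ at the high noise scale $t=O(\log d)$. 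With $\wnorm{\mu^{(r)}}\geq 30$ the quantities $\kappa_1<1$ and $\kappa_2\leq 1/\mathrm{poly}(d)$ simply fail (e.g.\ $\kappa_2$ has $\sqrt{d^3}\wnorm{\mu_t}^4/\wnorm{\mu_t^*}^2$ in it), so the claimed $\tan\theta'\leq\tan\theta$ does not follow from that lemma here. The paper's proof does not use the power-iteration picture at all: it writes the exact low-noise gradient (Lemma~\ref{lemma:update}), reduces to the plane of $\mu^{(r)}$ and $\mu^*$, shows $\dtp{\hat\mu^{(r)\perp}}{\mu^{(r+1)}}>0$ by a symmetry/positivity computation of the Gaussian expectation (this is what makes the \emph{signed} angle $\alpha_r$ positive, i.e.\ the iterate rotates toward $\mu^*$), and then proves $\cot\alpha_r>\cot\beta_r$ by the elementary fraction comparison $\frac{a+c'}{b+c}>\frac{a}{b}\iff c'>ac/b$, whose required positivity is exactly Lemma~\ref{lemma:positive-bound}. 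The inequality $\alpha_r<\beta_r$ is what rules out overshooting past $\mu^*$ and yields $\cos\beta_{r+1}\geq\cos\beta_r$.

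Your first part has a related, smaller issue: showing $\dtp{\mu^{(r)}}{\mu^{(r+1)}}>0$ (positivity of the $\hat\mu^{(r)}$-component, via Lemma~\ref{lemma:update-bound}) only bounds the unsigned angle between consecutive iterates by $\pi/2$; it does not identify the direction of rotation (toward versus away from $\mu^*$), which is the content of the paper's computation of $\dtp{\hat\mu^{(r)\perp}}{\mu^{(r+1)}}$, nor does it provide the comparison $\alpha_r<\beta_r$ needed for the monotonicity of $\cos\beta_r$. To repair the proposal you would need to replace the appeal to Lemma~\ref{lemma:projection-angle-decrease} with a direct analysis of the two components of $\mu^{(r+1)}$ under the exact low-noise gradient, along the lines of Lemmas~\ref{lemma:positive-bound} and~\ref{lemma:update-bound}.
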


\begin{proof}
    First, we will prove that if $\beta_r \in (0, \frac{\pi}{2})$ and $\wnorm{\mu^{(r)}} \in [c, \frac{4 \dtp{ \hat{\mu}_t^{(r)} }{ \mu_t^* } }{3}]$, then $\alpha_r \in (0, \beta_r)$ for any $r$. We denote $\alpha_r > 0$ if $\mu^{(r)}$ moves towards $\mu^{(r)\perp}$ and hence towards $\mu^*$. The following simple observation of $\dtp{ \hat{\mu}^{(r)\perp} }{\mu^{(r+1)}} \geq 0$ proves that $\alpha_r > 0$.
    \begin{align*}
        \MoveEqLeft \dtp{ \hat{\mu}^{(r)\perp} }{\mu^{(r+1)}} \\
        &= \mbb{E}_{x \sim \mc{N}(\mu^*, 1)} \Big[  \eta  \big( \tanh (\mu^{(r)\top} x) - \frac{1}{2} \tanh''( \mu^{(r)\top} x ) \wnorm{ \mu^{(r)} }^2 + \tanh'( \mu^{(r)\top} x ) \mu^{(r)\top} x \big)\cdot \dtp{ \hat{\mu}^{(r)\perp} }{ x } \Big] \\
        &= \mbb{E}_{x \sim \mc{N}(0, 1)} \Big[  \eta  \Big( \tanh (\mu^{(r)\top} (x + \mu^*)) - \frac{1}{2} \tanh''( \mu^{(r)\top} (x + \mu^*) ) \wnorm{ \mu^{(r)} }^2 \\ 
        & \quad\quad\quad + \tanh'( \mu^{(r)\top} (x + \mu^*) ) \mu^{(r)\top} (x + \mu^*) \Big) \cdot \dtp{ \hat{\mu}^{(r)\perp} }{ (x + \mu^*) } \Big] \\
        &= \mbb{E}_{\alpha_1, \alpha_2 \sim  \mc{N}( \dtp{ \hat{\mu}^{(r)} }{\mu^*} , 1)}  \Big[  \eta  \Big( \tanh ( \wnorm{ \mu^{(r)} } \alpha_1) - \frac{1}{2} \tanh''(\hspace{0.005em} \wnorm{ \mu^{(r)} } \alpha_1  ) \wnorm{ \mu^{(r)} }^2 \\
        &\quad\quad\quad +  \tanh'(\hspace{0.005em} \wnorm{ \mu^{(r)} } \alpha_1 ) \wnorm{ \mu^{(r)} } \alpha_1 \Big) (\alpha_2 + \dtp{ \hat{\mu}^{(r)\perp} }{ \mu^* } ) \Big] \\
        &= \mbb{E}_{\alpha_1, \alpha_2 \sim  \mc{N}( \dtp{ \hat{\mu}^{(r)} }{\mu^*} , 1)}  \Big[  \eta  \Big( \tanh ( \wnorm{ \mu^{(r)} } \alpha_1) - \frac{1}{2} \tanh''(\hspace{0.005em} \wnorm{ \mu^{(r)} } \alpha_1  ) \wnorm{ \mu^{(r)} }^2 \\
        &\qquad\qquad\qquad\qquad\qquad\qquad+ \tanh'(\hspace{0.005em} \wnorm{ \mu^{(r)} } \alpha_1 ) \wnorm{ \mu^{(r)} } \alpha_1 \Big) \cdot \dtp{ \hat{\mu}^{(r)\perp} }{ \mu^* } \Big] > 0\,,
    \end{align*}
    where in the last step we used the fact that $\dtp{ \hat{\mu}^{(r)} }{\mu^*} > 0$ and $\dtp{ \hat{\mu}^{(r)\perp} }{ \mu^* } > 0$.
    
    Now, we will prove that $\cot \alpha_r > \cot \beta_r$ which will prove that $\alpha_r \in (0, \beta_r)$. Note that
    \begin{align*}
        \cot \alpha_r &= \frac{ \dtp{ \hat{\mu}^{(r)} }{\mu^{(r+1)}} }{ \dtp{ \hat{\mu}^{(r)\perp} }{\mu^{(r+1)}} } \hspace{1cm} \text{where} \\
        \dtp{ \hat{\mu}^{(r)} }{\mu^{(r+1)}} = & \; (1-\eta) \wnorm{ \mu^{(r)} } + \eta \mbb{E}_{ \alpha_1 \sim \mc{N}(\hat{\mu }^{(r)\top} \mu^*, 1) }[ \tanh(\hspace{0.005em} \wnorm{ \mu^{(r)} } \alpha_1 ) \alpha_1 ] \\
        &+ \eta \mbb{E}_{\alpha_1 \sim \mc{N}(\hat{\mu }^{(r)\top} \mu^*, 1)}[ -\frac{1}{2} \tanh''(\wnorm{\mu^{(r)}} \alpha_1) \wnorm{\mu^{(r)}}^2 \alpha_1 + \tanh'(\hspace{0.005em} \wnorm{\mu^{(r)}} \alpha_1 ) \wnorm{\mu^{(r)}} \alpha_1^2 \\ 
        &- \tanh'(\hspace{0.005em} \wnorm{\mu^{(r)}} \alpha_1 ) \wnorm{ \mu^{(r)} } ] \\
        \dtp{ \hat{\mu}^{(r)\perp} }{\mu^{(r+1)}} =  & \; \eta \dtp{ \hat{\mu}^{(r)\perp} }{\mu^*} \mbb{E}_{\alpha_1 \sim \mc{N}( \hat{\mu}^{(r)\top} \mu^* , 1)} [ \tanh(\hspace{0.005em} \wnorm{\mu^{(r)}} \alpha_1 ) - \frac{1}{2} \tanh''( \; \wnorm{ \mu^{(r)} }  \alpha_1 ) \wnorm{ \mu^{(r)} }^2 \\ 
        & + \tanh'( \; \wnorm{ \mu^{(r)} } \alpha_1  ) \wnorm{ \mu^{(r)} } \alpha_1 ] \\
        \text{ and } \;\; & \cot \beta_r = \frac{ \dtp{ \hat{\mu}^{(r)} }{ \mu^* } }{ \dtp{ \hat{\mu}^{(r)\perp} }{ \mu^* } } 
    \end{align*}
    Observe the fact that to prove $\frac{a+c'}{b+c} - \frac{a}{b} > 0$, it is sufficient to prove $c' > \frac{ac}{b}$ for $b, c > 0$. Using this observation, to prove $\cot \alpha_r > \cot \beta_r$, it is sufficient to prove 
    \begin{align*}
        &\Big(1 - \eta - \eta \mbb{E} [ \tanh'(\wnorm{\mu^{(r)}} x)] \Big) \wnorm{ \mu^{(r)} } + \eta \mbb{E}_{x} \Big[ -\frac{1}{2} \tanh''(\hspace{0.005em} \wnorm{ \mu^{(r)} } x ) \wnorm{\mu^{(r)}}^2 (x - \dtp{ \hat{\mu}^{(r)} }{ \mu^* }) \\ 
        &+ \tanh'(\hspace{0.005em} \wnorm{\mu^{(r)}} x )(x^2 - \dtp{\hat{\mu}^{(r)} }{ \mu^*} x) + \tanh(\hspace{0.005em} \wnorm{\mu^{(r)}} x )(x - \dtp{\hat{\mu}^{(r)}}{ \mu^* } ) \Big] > 0, 
    \end{align*}
    where the expectation is wrt $\mc{N}( \dtp{\mu^{(r)}}{ \mu^* }, 1 )$. Lemma \ref{lemma:positive-bound} shows that this is indeed true. 
\end{proof}

\begin{lemma}
    \label{lemma:positive-bound} For any $\eta = \frac{1}{20}$, assuming $a \in [30, \frac{4b}{3}]$, we have
    \begin{align*}
        \MoveEqLeft(1 - \eta - \eta \mbb{E}_{x \sim \mc{N}(b, 1)}[ \tanh'(ax) ] ) a \\
        &+ \eta\, \mbb{E}_{x \sim \mc{N}(b, 1)} \Big[ -\frac{1}{2} \tanh''( a x ) a^2 (x - b) \tanh'(ax)(x^2 - bx) + \tanh(ax) (x - b) \Big] > 0\,.
    \end{align*}
\end{lemma}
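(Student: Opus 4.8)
The plan is to show the displayed quantity is positive by splitting it into a "main term" coming from the $(1-\eta)a$ contribution and an "error term" coming from everything that involves a derivative of $\tanh$ or the centered factors $(x-b)$, $(x^2-bx)$, and then bounding the error term by a small multiple of $a$ using the exponential decay of $\tanh'$, $\tanh''$ near the relevant range. Concretely, I would first rewrite the expression as
\[
(1-\eta)a \;-\; \eta a\,\mathbb{E}_{x\sim\mathcal N(b,1)}[\tanh'(ax)] \;+\; \eta\,\mathbb{E}_{x\sim\mathcal N(b,1)}[\Phi(x)],
\]
where $\Phi(x) = -\tfrac12\tanh''(ax)a^2(x-b) + \tanh'(ax)(x^2-bx) + \tanh(ax)(x-b)$. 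Since $(1-\eta)a = \tfrac{19}{20}a$ and $a\ge 30$, it suffices to show that the two expectation terms together are bounded in absolute value by, say, $\tfrac12 a$ (in fact much less), which will be dominated by the $\tfrac{19}{20}a$ main term.

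Next I would bound each piece of $\Phi$ and the $\mathbb{E}[\tanh'(ax)]$ term separately, reusing exactly the estimates already developed in the proof of Lemma~\ref{lemma:G-contraction} (Appendix~\ref{section:proof-G-contraction-lemma}). The key facts are: (i) $|\mathbb{E}_{x\sim\mathcal N(b,1)}[\tanh'(ax)]| \lesssim e^{-\xi^2/2}$-type bounds, which for $a\in[30,\tfrac{4b}{3}]$ (so $b\ge \tfrac34 a \ge 22.5$) is exponentially small in $a^2$; (ii) by the same splitting of the Gaussian integral over $x>0$ and $x<0$ and using $\tanh'(ax)\le e^{-2a|x|}$ for $x>0$ together with Cauchy--Schwarz against the polynomial factors $(x^2-bx)$, one gets $|\mathbb{E}[\tanh'(ax)(x^2-bx)]| \lesssim \mathrm{poly}(a) e^{-\Omega(a)}$; (iii) similarly $|\mathbb{E}[\tanh''(ax)a^2(x-b)]| \lesssim \mathrm{poly}(a)e^{-\Omega(a)}$ exactly as the second-term bound $8a^2 e^{-\sqrt6 a}$ in Lemma~\ref{lemma:G-contraction}; (iv) $|\mathbb{E}[\tanh(ax)(x-b)]|$ is bounded using Stein's lemma to convert it to $\mathbb{E}[\tanh'(ax)a]\cdot(\text{const})$ plus lower-order, again exponentially small. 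Summing, $|\mathbb{E}[\Phi(x)]| + a|\mathbb{E}[\tanh'(ax)]| \lesssim \mathrm{poly}(a)e^{-\Omega(a)}$, which for $a\ge 30$ is far below $\tfrac{19}{20}a$, so the whole expression is strictly positive.

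The one subtlety I would be careful about — and which I expect to be the main obstacle — is the regime where $b$ is at the low end of its range, $b \approx \tfrac34 a$: there the Gaussian $\mathcal N(b,1)$ is centered well away from $0$, so on the bulk of its mass $ax$ is large and all $\tanh$-derivatives are exponentially small, but one must confirm that the tail of the Gaussian reaching down near $x=0$ (where $\tanh'(ax)\approx 1$ and $\tanh''$ is largest) contributes a weight of only $e^{-(b)^2/2}\le e^{-\Omega(a^2)}$, which it does. A second point of care is that the inequality is claimed for the specific value $\eta=\tfrac1{20}$ and the explicit constant $30$; I would simply verify that the constants $c=25$ used in Lemma~\ref{lemma:G-contraction} propagate, with the mild slack from $a\ge 30$ rather than $a\ge 25$ giving room to absorb the extra $\tanh$-term that did not appear there. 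Modulo tracking these constants, the argument is a routine repackaging of the estimates already established.
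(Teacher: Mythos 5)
Your plan is correct and takes essentially the same route as the paper: isolate the dominant $(1-\eta)a=\tfrac{19}{20}a$ contribution and show the remaining expectations are exponentially small in the regime $b\ge\tfrac{3a}{4}$, $a\ge 30$, reusing the same $\tanh'$/$\tanh''$ decay estimates as in Lemma~\ref{lemma:G-contraction}. The only minor difference is the term $\mbb{E}_{x\sim\mc{N}(b,1)}[\tanh(ax)(x-b)]$, which you convert via Stein's lemma to $a\,\mbb{E}[\tanh'(ax)]\ge 0$ (small and positive), whereas the paper proves its positivity by a symmetry argument and simply drops it.
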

\begin{proof}
    First, we will find the upper bound on $\mbb{E}[\tanh''(ax)(x - b)]$. 
    \begin{align*}
        \mbb{E}[ \tanh''(ax) (x - b) ] &= \int_{-\infty}^\infty \tanh''(ax) (x - b) \exp \Big(-\frac{(x - b)^2}{2} \Big) dx \\
        &\leq \int_0^b \tanh''(ax)(x - b) \exp \Big(-\frac{(x - b)^2}{2} \Big) dx \\
        &\leq \int_0^b \tanh''(ax) x \exp \Big(-\frac{(x - b)^2}{2} \Big) dx \\
        &\leq \int_0^b \exp(-ax) x \exp \Big(-\frac{(x - b)^2}{2} \Big) dx \\ 
        &\leq \exp\Big(\frac{a^2 - 2ab}{2}\Big) \int_0^b x \exp \Big(-\frac{(x - b)^2 + 2a(x-b) + a^2 }{2} \Big) dx \\
        &\leq \exp(\frac{a^2 - 2ab}{2}) \int_0^\infty x \Big[ \exp \Big(-\frac{(x - b + a)^2 }{2} \Big) + \exp \Big(-\frac{(x + b - a)^2 }{2} \Big) \Big] dx \\
        &\leq \exp(-b^2/2) + \abs{a - b} \cdot \exp\Big(\frac{a^2 - 2ab}{2}\Big)\,.
    \end{align*}
    Now, for the second term, we have
    \begin{align*}
        \MoveEqLeft \mbb{E}_{x \sim \mc{N}(b, 1)}[\tanh'(ax)(x^2 - bx)] \\
        &= \int_{-\infty}^\infty \tanh'(ax) x (x - b) \exp \Big( -\frac{(x-b)^2}{2} \Big) dx \\
        &\geq - b \int_0^b x e^{-ax} \exp \Big( -\frac{(x-b)^2}{2} \Big) dx \\
        &\geq - b \exp\Big( \frac{ a^2 - 2ab }{2}\Big) \int_0^\infty x \Big[ \exp \Big(-\frac{(x - b + a)^2 }{2} \Big) + \exp \Big(-\frac{(x + b - a)^2 }{2} \Big) \Big] dx \\
        &\geq -b \exp(-b^2/2) - b \abs{a-b}\cdot \exp\Big( \frac{ a^2 - 2ab }{2}\Big)
    \end{align*}
    We can rewrite the last term as $\mbb{E}_{x \sim \mc{N}(0, 1)}[ \tanh(a(x + b))x ]$. Using the fact that $\tanh(a(x+b)) > \tanh(a(-x + b))$, we get that $\mbb{E}_{x \sim \mc{N}(0, 1)}[ \tanh(a(x + b))x ] > 0$. Finally, using the upper bound on $\mbb{E}[\tanh'(ax)]$, we get the following lower bound. 
    \begin{align*}
        \MoveEqLeft(1 - \eta - \eta \,\mbb{E}_{x \sim \mc{N}(b, 1)}[ \tanh'(ax) ] ) \,a + \eta \mbb{E}_{x \sim \mc{N}(b, 1)} \Big[ -\frac{1}{2} \tanh''( a x ) a^2 (x - b) + \tanh'(ax)(x^2 - bx) \Big] \\
        &\geq \frac{a}{20}(19 - 4 e^{\frac{a^2 - 2 a b}{2}}) + \frac{1}{20} \Big( -\frac{a^2}{2} \Big[ \exp(-b^2/2) + \abs{a - b} \exp\Big(\frac{a^2 - 2ab}{2}\Big) \Big] \\
        &\qquad\qquad\qquad\qquad\qquad\qquad\quad- b \exp(-b^2/2) - b \abs{a-b} \exp( \frac{ a^2 - 2ab }{2}) \Big) \geq 1\,.\qedhere
    \end{align*} 
\end{proof}

\begin{lemma}
\label{lemma:update-bound}
    For any $a, b > 0$ and $a \in [30, \frac{4b}{3}]$, the following holds. Define
    \begin{equation*}
        U(a, b) \triangleq \eta \mbb{E}_{x \sim \mc{N}(b, 1)} \Big[ \Big( \tanh (a x) - \frac{1}{2} \tanh''( a x ) a^2 + \tanh'( a x ) a x \Big) x \Big]  - \eta \mbb{E}_{x \sim \mc{N}(b, 1)} \sbr{\tanh'( a x ) a } - \eta a\,.
    \end{equation*}
    When the learning rate $\eta=\frac{1}{20}$, is given by, we have
    \begin{align*}
        \abs{U(a, b)} \leq \frac{a+b}{10}
    \end{align*}
\end{lemma}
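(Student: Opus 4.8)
The plan is to recognize $U(a,b)$ as, up to the factor $\eta$, the sum of a one-step EM update and the error term $G$ already controlled in Lemma~\ref{lemma:G-contraction}, and then to bound each piece with crude estimates. First I would expand the expectation defining $U(a,b)$ and regroup the $\tanh'$ and $\tanh''$ contributions: peeling off $\eta\,\mbb{E}_{x\sim\mc{N}(b,1)}[\tanh(ax)x]$ and $-\eta a$, what remains is exactly $\eta$ times the one-dimensional function $G$ of Eq.~\eqref{eq:one-d-g-def} evaluated at $(a,b)$. This yields the identity
\begin{equation*}
    U(a,b) = \eta\Big(\mbb{E}_{x\sim\mc{N}(b,1)}[\tanh(ax)x] - a + G(a,b)\Big).
\end{equation*}

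Next I would bound the two nontrivial pieces. For the EM piece, since $\tanh(ax)x \ge 0$ pointwise (as $a>0$) and $|\tanh|\le 1$, we get $0 \le \mbb{E}_{x\sim\mc{N}(b,1)}[\tanh(ax)x] \le \mbb{E}_{x\sim\mc{N}(b,1)}|x| \le b + \sqrt{2/\pi} < b+1$; combined with $a>0$ and the triangle inequality this gives $\big|\mbb{E}_{x\sim\mc{N}(b,1)}[\tanh(ax)x] - a\big| \le a+b+1$. For the $G$ piece, the hypothesis $a\in[30,\tfrac{4b}{3}]$ (which in particular forces $b \ge 22.5 > 0$) places us exactly in the regime of Lemma~\ref{lemma:G-contraction}, whose constant $c$ can be taken at most $30$, so $|G(a,b)| \le 0.01\,|a-b| \le 0.01(a+b)$.

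Finally I would combine the two bounds: $|U(a,b)| \le \eta\big((a+b+1) + 0.01(a+b)\big) = \eta\big(1.01(a+b) + 1\big)$, and since $a+b \ge 30$ we have $1.01(a+b)+1 \le 2(a+b)$, so with $\eta = \tfrac1{20}$ we conclude $|U(a,b)| \le \tfrac{2(a+b)}{20} = \tfrac{a+b}{10}$. The only point requiring any care is verifying that the interval $[30,\tfrac{4b}{3}]$ lies inside the range of validity of Lemma~\ref{lemma:G-contraction} (and that $b$ is bounded below); there is no genuinely hard step here, as every delicate exponential-decay estimate needed for $G$ was already carried out in the proof of Lemma~\ref{lemma:G-contraction}, and the EM term only needs the trivial bound $|\tanh|\le 1$.
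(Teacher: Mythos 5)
Your proof is correct, but it takes a different route from the paper. The paper's proof of this lemma does not invoke Lemma~\ref{lemma:G-contraction} at all: it bounds the three expectations $-\mbb{E}[\tanh''(ax)a^2x]$, $\mbb{E}[\tanh'(ax)ax^2]$ and $-\mbb{E}[a\tanh'(ax)]$ directly by fresh Gaussian-tail estimates (each is sign-definite and exponentially small, of total size at most $0.1$ when $a\ge 30$ and $a\le \tfrac{4b}{3}$), bounds $|\mbb{E}[\tanh(ax)x]|$ by roughly $2b$, and concludes $|U(a,b)|\le \eta(2b+a+0.1)\le \tfrac{a+b}{10}$. You instead observe that $U(a,b)/\eta = \mbb{E}[\tanh(ax)x]-a+G(a,b)$ with $G$ exactly the one-dimensional error term of Eq.~\eqref{eq:one-d-g-def}, delegate the delicate exponential estimates to the already-proved contraction $|G(a,b)|\le 0.01|a-b|$, and use only the crude bound $0\le \mbb{E}[\tanh(ax)x]\le b+1$; the arithmetic $\eta\bigl(1.01(a+b)+1\bigr)\le \tfrac{a+b}{10}$ then goes through since $a\ge 30$ forces $b\ge 22.5$. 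Your route buys economy (no re-derivation of tail bounds, and the decomposition makes the EM-plus-error structure explicit), at the price of a slightly looser intermediate constant and a dependence on Lemma~\ref{lemma:G-contraction}; this dependence is not circular, since that lemma is proved independently of the present one. Two small points to make explicit if you write this up: Lemma~\ref{lemma:G-contraction} is stated with a one-sided inequality $G(\mu,\mu^*)\le 0.01|\mu-\mu^*|$ and an unspecified constant $c$, whereas you need the two-sided bound and $c\le 30$; both are in fact supplied by its proof (the mean-value-theorem identity $|G(a,b)|=|\tfrac{dG(a,\xi)}{d\xi}|\,|b-a|$, valid with $c=25$), and the paper itself uses the absolute-value form in Lemma~\ref{lemma:multi-dimension-G-contraction}, so this is a citation nicety rather than a gap.
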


\begin{proof} We upper bound each term in $U(a, b)$ and they apply triangle inequality to get the result. We start with $|\mbb{E}_{x \sim \mc{N}(b, 1) } \sbr{ \tanh''(a x) a^2 x }|$:
    \begin{align*}
    -\mbb{E}_{x \sim \mc{N}(b, 1) } \sbr{ \tanh''(a x) a^2 x } &= \frac{a^2}{8 \sqrt{2 \pi} } \int_0^\infty  x \sigma(2 a x) (1 - \sigma(2 a x))(2 \sigma(2 a x) - 1) \rb{ e^{-\frac{(x - b)^2}{2}} + e^{-\frac{(x + b)^2}{2}} } dx  \\ 
    &\leq \frac{a^2}{4 \sqrt{2 \pi} } \int_0^\infty  x e^{-2a x} e^{-\frac{(x - b)^2}{2}}  dx \\
    &\leq \frac{a^2}{4 \sqrt{2 \pi} } \int_{0}^\infty e^{-a x} x e^{-\frac{(x - b)^2}{2}} dx \\
    &\leq \frac{a^2}{2} e^{-\frac{b^2}{2}} + \frac{a^2}{2}  \abs{b - a} e^{-\frac{-2a(b - a) - a^2 }{2}} \\
    \end{align*}
    \begin{align*}
        \mbb{E}_{x \sim \mc{N}(b, 1)} [ \tanh'(a x) a x^2 ] & = \frac{1}{\sqrt{2\pi}} \int_0^\infty \tanh'(a x) a x^2 \rb{ e^{- \frac{(x - b)^2}{2} } + e^{- \frac{(x + b)^2}{2} } } dx \\
        & \leq a \int_0^\infty e^{-a x} x^2 e^{-\frac{(x - b)^2}{2}} dx \\
        & \leq a e^{\frac{ a^2 - 2 a b }{2}} \int_0^\infty x^2 e^{-\frac{(x - b + a)^2}{2}} dx \\
        &\leq 2a (a - b)^2 e^{\frac{ a^2 - 2 a b }{2}}
    \end{align*}
    \begin{align*}
        -\mbb{E}_{x \sim \mc{N}(b, 1)}[ a \tanh'(a x) ] & = -\frac{a}{\sqrt{2\pi}} \int_0^\infty \tanh'(a x)  \rb{ e^{- \frac{(x - b)^2}{2} } + e^{- \frac{(x + b)^2}{2} } } dx \\
        & \geq -a \int_0^\infty e^{-a x} e^{-\frac{(x - b)^2}{2}} dx \\
        & \geq - a e^{\frac{ a^2 - 2 a b }{2}} \int_0^\infty e^{-\frac{(x - b + a)^2}{2}} dx \\
        & \geq - 4a e^{\frac{ a^2 - 2 a b }{2}}\,.
    \end{align*}
    Now, using the fact that $\tanh'(x)$ and $-\tanh''(x)x$ are always positive, we have the following upper bound. 
    \begin{align*}
        \abs{U(a, b)} &\leq \eta\, \Big| \mbb{E}_{x \sim \mc{N}(b, 1)}\Big[ \big( \tanh (a x) - \frac{1}{2} \tanh''( a x ) a^2 + \tanh'( a x ) a x \big)\cdot  x \Big] \Big| \\
        &\qquad\qquad + \eta \abs{ a } + \eta\, \big| - \mbb{E}_{x \sim \mc{N}(b, I)} \sbr{\tanh'( a x ) a } \big| \\
        & \leq \eta \Big(2 b + a + \frac{a^2}{2} e^{-\frac{b^2}{2}} + \frac{a^2}{2}  \abs{b - a} e^{\frac{-2a(b - a) - a^2 }{2}} + 2a (b - a)^2 e^{\frac{ a^2 - 2 a b }{2}} + 2a e^{\frac{ a^2 - 2 a b }{2}} \Big)
    \end{align*}
    If $b \geq a$ and $a \geq 30$, then we have
    \begin{align*}
        \abs{U(a, b)}
        & \leq \eta \rb{2 b + a + 0.1 }
    \end{align*}
    If $b \leq a \leq \frac{4b}{3}$ and $a \geq 30$, then 
    \begin{align*}
        \abs{U(a, b)}
        & \leq \eta \rb{2 b + a + 0.1 }
    \end{align*}
    Using $\eta = 1/20$ and for any $a > 30$, we have
    \begin{align*}
        \abs{ U(a, b) } \leq \frac{a + b}{10}. 
    \end{align*}
\end{proof}

\subsection{Additional proofs for mixtures of two Gaussians}
\label{subsec:additional-2-mog}
\begin{lemma}[]
\label{lemma:small-expectation} Suppose $a, b > 0$ satisfy $a \in [30, \frac{4 b }{3}]$, then the following inequality holds: 
\begin{align*}
    | \mbb{E}_{x \sim \mc{N}( b , 1)} [ - 0.5 \tanh''( a  x ) a^2 + \tanh'( a x ) a x ] | \leq 0.01
\end{align*}
\end{lemma}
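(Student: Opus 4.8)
Write the quantity to be bounded as the sum of two pieces,
\[
I_1 \triangleq -\tfrac{a^2}{2}\,\mathbb{E}_{x\sim\mathcal{N}(b,1)}[\tanh''(ax)]\,,\qquad
I_2 \triangleq a\,\mathbb{E}_{x\sim\mathcal{N}(b,1)}[\tanh'(ax)\,x]\,,
\]
so it suffices to show $|I_1|,|I_2|\le 0.005$. The plan is to exploit the two elementary pointwise facts that every derivative of $\tanh$ is globally bounded ($|\tanh''|\le 1$, $0<\tanh'\le 1$) and, more importantly, decays exponentially: $\tanh'(y)=\operatorname{sech}^2 y\le 4e^{-2|y|}$ and $|\tanh''(y)|\le 2\operatorname{sech}^2 y\le 8e^{-2|y|}$. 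Since $a\le \tfrac{4b}{3}$ gives $b\ge \tfrac34 a\ge \tfrac{45}{2}$, the Gaussian $\mathcal{N}(b,1)$ puts essentially all of its mass on $x\approx b$ where $ax\gtrsim ab\ge 675$, so all of these $\tanh$-derivative terms should be astronomically small; the only subtlety is the small-probability region $x<0$, where $|ax|$ is not large and the exponential-decay bounds are useless.

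Accordingly I would split each expectation at $x=0$. On $\{x\ge 0\}$, substitute the bounds $|\tanh''(ax)|\le 8e^{-2ax}$ and $\tanh'(ax)\le 4e^{-2ax}$ and use the completing-the-square inequality
\[
-2ax-\tfrac{(x-b)^2}{2}=-ax-\tfrac12(x+a-b)^2+\tfrac{a^2-2ab}{2}\le -ax+\tfrac{a^2-2ab}{2}\,,
\]
which reduces the integrals to $\int_0^\infty e^{-ax}\,dx=\tfrac1a$ and $\int_0^\infty x e^{-ax}\,dx=\tfrac1{a^2}$. Using $b\ge \tfrac34 a$ gives $\tfrac{a^2-2ab}{2}\le -\tfrac{a^2}{4}$, so the $x\ge 0$ contributions to $I_1$ and $I_2$ are at most $\tfrac{4a}{\sqrt{2\pi}}e^{-a^2/4}$ and $\tfrac{4}{\sqrt{2\pi}\,a}e^{-a^2/4}$ respectively. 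On $\{x<0\}$, use instead the trivial bounds $|\tanh''|\le 1$, $\tanh'\le 1$ together with the Gaussian tail $\Pr_{x\sim\mathcal{N}(b,1)}[x<0]\le e^{-b^2/2}\le e^{-9a^2/32}$ and $\mathbb{E}[|x|\,\mathbf 1_{x<0}]\le\sqrt{\mathbb{E}[x^2]\,\Pr[x<0]}\le\sqrt{(1+b^2)e^{-b^2/2}}$; this bounds those contributions by $\tfrac{a^2}{2}e^{-9a^2/32}$ and $a\sqrt{(1+b^2)e^{-b^2/2}}$. Summing the four estimates and plugging in $a\ge 30$ (and $b\ge \tfrac34 a\ge 22$) makes the total far below $0.005$, which completes the proof. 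Most of these integral bounds already appear verbatim inside the proofs of Lemmas~\ref{lemma:positive-bound} and~\ref{lemma:update-bound}, so I would either reuse them directly or re-derive them in a couple of lines.

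The only place requiring any care, rather than routine computation, is the $\{x<0\}$ region: there the exponential decay of $\tanh'$ and $\tanh''$ no longer helps, so one must argue purely from the exponentially small Gaussian mass there, which is why the hypothesis $a\ge 30$ (equivalently, a lower bound forcing $b$ to be a large constant) is essential.
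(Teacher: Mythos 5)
Your proposal is correct, and its main engine is the same as the paper's: on the region where $ax$ is large, replace $\tanh'(ax)$ and $\tanh''(ax)$ by $O(e^{-2ax})$, complete the square against the $\mathcal{N}(b,1)$ density, and use $b\ge \tfrac34 a$, $a\ge 30$ to make the resulting $e^{(a^2-2ab)/2}\le e^{-a^2/4}$ factor negligible. The difference is in how the half-line $\{x<0\}$ is treated. The paper never splits the domain: it exploits the oddness of $\tanh''(ax)$ and of $\tanh'(ax)\,x$ to fold the integral onto $(0,\infty)$ against the density difference $e^{-(x-b)^2/2}-e^{-(x+b)^2/2}$, which simultaneously shows each of the two expectations is \emph{positive} and reduces the upper bound to the $(0,\infty)$ computation; the absolute-value bound then follows from $0\le I_1,I_2\le 0.005$. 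You instead split at $x=0$ and control the negative region by the trivial bounds $|\tanh''|\le 1$, $\tanh'\le 1$ together with the Gaussian tail $\Pr[x<0]\le e^{-b^2/2}$ and a Cauchy--Schwarz bound on $\mathbb{E}[|x|\mathbf{1}_{x<0}]$, then sum $|I_1|+|I_2|$ by the triangle inequality without needing sign information. Both routes are sound; the paper's symmetrization is slightly slicker (one integral, positivity for free), while your version is more routine and also avoids the paper's two-case analysis ($a\le b$ versus $b\le a\le \tfrac{4b}{3}$) by using the uniform estimate $a^2-2ab\le -a^2/2$ from $b\ge\tfrac34 a$. Your numerical constants and the pointwise bounds $\tanh'(y)\le 4e^{-2|y|}$, $|\tanh''(y)|\le 8e^{-2|y|}$ check out, so the only remaining work is the (straightforward) final arithmetic you sketch.
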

\begin{proof}
    We first show that $\mbb{E}_{x \sim \mc{N}( b , 1)} [ - 0.5 \tanh''( a  x ) a^2 ] > 0$ for any $a, b > 0$. 
    \begin{align*}
        \mbb{E}_{x \sim \mc{N}( b , 1)} [ - 0.5 \tanh''( a  x ) a^2 ] &= -0.5 a^2 \int_{-\infty}^\infty \tanh''(ax) \exp( -0.5( x - b )^2 ) dx \\
        = -0.5 & a^2 \int_{0}^\infty \tanh''(ax) ( \exp( -0.5( x - b )^2 ) - \exp( -0.5( x + b )^2 ) ) dx > 0 
    \end{align*}
    where the last inequality follows from $\exp( -0.5( x - b )^2 ) > \exp( -0.5( x + b )^2 )$ and $\tanh''(ax) < 0$ for $x > 0$. We can upper bound $\mbb{E}_{x \sim \mc{N}( b , 1)} [ - 0.5 \tanh''( a  x ) a^2 ]$ as follows: 
    \begin{align*}
        \mbb{E}_{x \sim \mc{N}( b , 1)} [ - \frac{1}{2} \tanh''( a  x ) a^2 ]  & \leq -\frac{1}{2} a^2 \int_0^\infty \tanh''(ax)  \exp( - \frac{1}{2} ( x - b )^2 ) dx \\
        & \leq a^2 \int_0^\infty \exp(-ax)  \exp( - \frac{1}{2} ( x - b )^2 ) dx \\
        & \leq a^2 \exp( \frac{1}{2}(a^2 - 2 a b) ) \int_0^\infty  \exp( -\frac{1}{2} ( x - b + a )^2 ) dx \\
        & \leq a^2 \exp( \frac{1}{2}(a^2 - 2 a b) )
    \end{align*}
    When $a \leq b$, by writing $a^2 - 2 a b = -2a(b - a) - a^2 \leq - a^2$, we have $\mbb{E} [ - \frac{1}{2} \tanh''( a  x ) a^2 ] \leq 0.005$ for $a \geq 30$. When $a \in [b, \frac{4b}{3}]$,  $a^2 - 2 a b = \leq - \frac{2b^2}{9}$, we have $| \mbb{E} [ - \frac{1}{2} \tanh''(a x) a^2 ] | \leq 0.005$. Similar to the $\mbb{E}_{x \sim \mc{N}( b , 1)} [ - \frac{1}{2} \tanh''( a  x ) a^2 ]$, we prove $\mbb{E}_{x \sim \mc{N}( b , 1)} [ \tanh'( a  x ) ax ] > 0$ and $\mbb{E}_{x \sim \mc{N}( b , 1)} [ \tanh'( a  x ) ax ] < 0.005$. Combining bounds for $| \mbb{E} [ \tanh'( a  x ) ax ] | $ and $| \mbb{E} [ - \frac{1}{2} \tanh''( a  x ) a^2 ] |$ using triangle inequality, we obtain the result. 
\end{proof}

\end{document}